 \newtheorem{thm}{Theorem}[section]
 \newtheorem{cor}[thm]{Corollary}
 \newtheorem{lem}[thm]{Lemma}
 \newtheorem{prop}[thm]{Proposition}
 \theoremstyle{definition}
 \newtheorem{defn}[thm]{Definition}
 \theoremstyle{remark}
 \newtheorem{rem}[thm]{Remark}
 \newtheorem{ex}[thm]{Example}
 \newtheorem{hyp}[thm]{Hypothesis}
 \newtheorem{notation}[thm]{Notation}
 \numberwithin{equation}{section}
\newcommand{\CC}{\mathbb{C}}
\newcommand{\EE}{\mathbb{E}}
\newcommand{\NN}{\mathbb{N}}
\newcommand{\PP}{\mathbb{P}}
\newcommand{\QQ}{\mathbb{Q}}
\newcommand{\RR}{\mathbb{R}}
\newcommand{\ZZ}{\mathbb{Z}}
\newcommand{\BB}{\mathbb{B}}
\newcommand{\supp}{\mathrm{supp}}
\newcommand{\pr}{\mathrm{pr}}
\newcommand{\loc}{\mathrm{loc}}
\newcommand{\ren}{\mathrm{ren}}
\newcommand{\Id}{\mathrm{d}}
\newcommand{\SPn}[2]{\langle #1|#2\rangle} 
\newcommand{\SPb}[2]{\big\langle #1\big|#2\big\rangle} 
\newcommand{\SPB}[2]{\Big\langle \,#1\,\Big|\,#2\, \Big\rangle}
\newcommand{\ol}[1]{\overline{#1}} 
\newcommand{\ul}[1]{\underline{#1}} 
\newcommand{\mr}[1]{\mathring{#1}}
\newcommand{\wh}[1]{\widehat{#1}}
\newcommand{\wt}[1]{\widetilde{#1}}
\newcommand{\nf}[2]{\nicefrac{#1}{#2}}
\newcommand{\eh}{{\nf{1}{2}}}
\newcommand{\mh}{{-\nf{1}{2}}}
\newcommand{\cC}{\mathcal{C}}
\newcommand{\cD}{\mathcal{D}}\newcommand{\cP}{\mathcal{P}} 
\newcommand{\cE}{\mathcal{E}}\newcommand{\cQ}{\mathcal{Q}}
\newcommand{\cF}{\mathcal{F}}
\newcommand{\cU}{\mathcal{U}}
\newcommand{\cL}{\mathcal{L}}\newcommand{\cX}{\mathcal{X}} 
\newcommand{\cM}{\mathcal{M}}       
\newcommand{\sN}{\mathscr{N}}
\newcommand{\sD}{\mathscr{D}}\newcommand{\sP}{\mathscr{P}} 
\newcommand{\sE}{\mathscr{E}}
\newcommand{\sF}{\mathscr{F}}
\newcommand{\sS}{\mathscr{S}}
\newcommand{\sT}{\mathscr{T}}
\newcommand{\sU}{\mathscr{U}}
\newcommand{\sV}{\mathscr{V}}
\newcommand{\sK}{\mathscr{K}}\newcommand{\sW}{\mathscr{W}}
\newcommand{\sX}{\mathscr{X}}         
\newcommand{\sM}{\mathscr{M}}       
\newcommand{\fA}{\mathfrak{A}}
\newcommand{\fB}{\mathfrak{B}}
\newcommand{\fQ}{\mathfrak{Q}}
\newcommand{\fF}{\mathfrak{F}}
\newcommand{\fH}{\mathfrak{H}}
\newcommand{\V}[1]{\boldsymbol{#1}}
\newcommand{\ve}{\varepsilon}
\newcommand{\vp}{\varphi}
\newcommand{\vo}{\varpi}
\newcommand{\vr}{\varrho}
\newcommand{\vs}{\varsigma}
\newcommand{\id}{\mathbbm{1}}                  
\newcommand{\dom}{\cD}                             
\newcommand{\fdom}{\cQ}                            
\newcommand{\HP}{\mathfrak{h}}                  
\newcommand{\LO}{\mathcal{B}}                   
\newcommand{\ad}{a^\dagger}                      
\newcommand{\ee}{\mathfrak{g}}                   
\renewcommand{\Re}{\mathrm{Re}}
\renewcommand{\Im}{\mathrm{Im}}
\renewcommand{\le}{\leqslant}        
\renewcommand{\ge}{\geqslant}  
\begin{document}

\title[Ultra-violet renormalized Nelson model]{Feynman-Kac formulas for the ultra-violet 
renormalized Nelson model}

\author{Oliver Matte \and Jacob Schach M{\o}ller}

\address{Oliver Matte $\cdot$ Jacob Schach M{\o}ller. Institut for Matematik, Aarhus Universitet, Ny Munkegade 118, DK-8000 Aarhus C, Denmark.}

\email{matte@math.au.dk $\cdot$ jacob@math.au.dk}

\begin{abstract}
We derive Feynman-Kac formulas for the ultra-violet renormalized Nelson Hamiltonian with a
Kato decomposable external potential and for corresponding fiber Hamiltonians in the translation
invariant case. We simultaneously treat massive and massless bosons.
Furthermore, we present a non-perturbative construction of a renormalized Nelson 
Hamiltonian in the non-Fock representation defined as the generator of a corresponding 
Feynman-Kac semi-group. Our novel analysis of the vacuum expectation of the 
Feynman-Kac integrands shows that, if the external potential and the Pauli-principle are dropped, 
then the spectrum of the $N$-particle renormalized Nelson Hamiltonian is bounded from below by 
some negative universal constant times $\ee^4N^3$, for all values of the coupling constant $\ee$. 
A variational argument also yields an upper bound of the same form for large $\ee^2N$. 
We further verify that the semi-groups generated by the ultra-violet renormalized Nelson Hamiltonian 
and its non-Fock version are positivity improving with respect to a natural self-dual cone, if the Pauli 
principle is ignored. In another application we discuss continuity properties of elements in the range 
of the semi-group of the renormalized Nelson Hamiltonian.
\end{abstract}

\keywords{Nelson model, Feynman-Kac formula, renormalization, non-Fock representation, Perron-Frobenius arguments.}


\maketitle
%
%

\section{Introduction}\label{sec-intro}

\noindent
More than half a century ago, Edward Nelson studied the renormalization theory of a model
for a conserved number of non-relativistic scalar matter particles interacting with a quantized radiation 
field comprised of relativistic scalar bosons. This model is a priori given by a heuristic Hamiltonian 
equal to the sum of the Schr\"{o}dinger operator for the matter particles, the radiation field energy 
operator, and a field operator describing the interaction between the matter particles and the radiation 
field. This heuristic expression is, however, mathematically ill-defined because the physically relevant 
choice of the interaction kernel determining the field operator is not a square-integrable function of the 
boson modes. Hence, one starts out by introducing an artificial ultra-violet cut-off rendering this kernel 
function square-integrable and the Hamiltonian well-defined. The question, then, is whether the 
so-obtained ultra-violet regularized Nelson Hamiltonians converge in a suitable sense as the cut-off is 
removed, possibly after adding cut-off dependent energy shifts that would not harm physical 
interpretations. Nelson approached this mathematical problem by probabilistic methods in 
\cite{Nelson1964proc} and by operator theoretic arguments in~\cite{Nelson1964}.

In his earlier probabilistic investigation Nelson eventually considered the matrix elements of the
unitary groups generated by the regularized Hamiltonians with an explicitly given energy 
renormalization added
and proved that, for strictly positive boson masses and fixed time parameters, these matrix elements 
are convergent as the cut-off is removed, in the weak-$*$ sense as bounded measurable functions of 
the particle mass. While he knew that these limits are non-trivial, he could not yet decide whether they 
define a new unitary group or not. This was clarified in his second work cited where, after adding 
the energy renormalization and applying unitary Gross transformations, he obtained a 
sequence of Hamiltonians converging in the norm resolvent sense. As the Gross transformations 
converge strongly, this implied strong resolvent convergence of the original regularized 
operators plus the energy shifts towards a renormalized Nelson Hamiltonian. 


\subsection*{Ealier results on Nelson's renormalized operator}\label{ssechistory}

Given that the two afore-mentioned papers of Nelson date back to 
1964 the number of mathematical articles explicitly addressing properties of his renormalized model is 
not very large, whence we give a brief, essentially chronological survey in what follows.

The first results following \cite{Nelson1964proc,Nelson1964}
are the construction of asymptotic fields for massive bosons 
by H{\o}egh-Krohn \cite{HoeghKrohn1970} and of renormalized fiber Hamiltonians in the 
translation-invariant case by Cannon \cite{Cannon1971}, who adapted the procedure in 
\cite{Nelson1964}. Cannon also proved the existence of non-relativistic Wightman distributions and, 
for a sufficiently weak matter-radiation coupling, the existence of 
dressed one-particle states as well as analyticity of the corresponding energies and eigenvectors. 
Cannon's smallness assumptions on the coupling depend on the strictly positive boson mass he was
cosidering. His results were pushed forward by Fr\"{o}hlich in \cite{Froehlich1974} who gave 
non-perturbative proofs of Cannon's results, that hold for any strictly positive boson mass or,
alternatively, infra-red cut-off, irrespective of the value of the coupling constant.
In this article Fr\"{o}hlich also found a rich class of Hilbert spaces,
including examples of von Neumann's incomplete direct product spaces, on which the renormalization
procedure of \cite{Nelson1964} can be implemented. Fr\"{o}hlich employed his results in 
\cite{Froehlich1973} to discuss the infra-red problem and aspects of scattering theory for a class of 
models containing the Nelson model. In particular, for vanishing boson mass and without any cut-offs, 
he constructed coherent infra-red representation spaces which are attached to total momenta of the 
matter-radiation system and contain dressed one-particle states that are ground states of, roughly 
speaking, certain non-Fock versions of the renormalized fiber Hamiltonians. He also proved the 
absence of dressed one-particle states in the original Fock space for vanishing boson mass, a 
phenomenon known as infra-particle situation. 

After a gap of more than twenty-five years in the mathematical literature on the renormalized 
Nelson Hamiltonian, its spectral and scattering theory in a confining potential and for massive bosons 
has been worked out by Ammari \cite{Ammari2000}, who proved a HVZ theorem, positive commutator 
estimates, the existence of asymptotic fields, propagation estimates, and asymptotic completeness.
Later on, Hirokawa et al. \cite{HHS2005} considered a system of two particles with charges of equal
sign, one of them static, interacting via a linearly coupled massless boson field. After applying a Gross
transformation to the corresponding ultra-violet regularized Hamiltonian they found a Hamiltonian
for one particle coupled to the radiation field with an additional attractive potential playing the
role of an effective interaction between the two particles. The Gross transformation is actually
infra-red singular for zero boson mass. Thus, an artificial infra-red cut-off is included in its definition. 
By improving some of Nelson's \cite{Nelson1964} relative bounds so as to cover massless bosons, 
Hirokawa et al. removed both the ultra-violet and infra-red cut-offs in their Gross
transformed Hamiltonian and obtained, for sufficiently small coupling constants, a 
self-adjoint operator called the renormalized Nelson Hamiltonian in a non-Fock representation.  
The latter turned out to have a ground state eigenvector. 
Employing this result, Hainzl et al. \cite{HainzlHirokawaSpohn2005} established a formula for the first 
radiative correction to the binding energy of this interacting two-particle system. About ten years ago,
a paper of Ginibre et al. \cite{GinibreNironiVelo2006} appeared concerning a certain partially 
classical limit of the Nelson model for any non-negative boson mass and without cut-offs.


The investigation of the infra-particle nature of the massless Nelson model without cut-offs has been 
revisited more recently by Bachmann et al. \cite{BDP2011} employing Pizzo's iterative analytic 
perturbation theory. While their results hold for sufficiently small coupling only, they provide more 
detailed control on the mass-shell and the dressed one-particle states than earlier results. 

Nelson's operator theoretic renormalization procedure \cite{Nelson1964} has also been implemented
on static Lorentzian manifolds and for position-dependent boson masses by G\'{e}rard et al. 
\cite{GHPS2012}. Quite recently, Nelson's earlier approach of \cite{Nelson1964proc} has been revived 
as well by Gubinelli et al. \cite{GHL2014}, who succeeded by {\em probabilistic} arguments to verify 
strong convergence of the semi-group as an ultra-violet
regularization is removed in a Nelson Hamiltonian for massive bosons. In the same paper, Gubinelli et 
al. also computed effective potentials in the weak coupling limit of the renormalized theory. 
Hiroshima treated infra-red cut-off fiber Hamiltonians along the same lines as well 
\cite{Hiroshima2015a}.

In a recent preprint \cite{AmmariFalconi2016}, Ammari and Falconi proved a Bohr correspondence 
principle showing that, in a classical limit, the time evolution of quantum states generated by a 
renormalized Nelson Hamiltonian for massive bosons converges to the push-forward of a Wigner 
measure under the dynamics of a nonlinear Schr\"{o}dinger-Klein-Gordon system. They also explored 
the idea to carry through a renormalization procedure on the classical level and to Wick quantize the 
result afterwards, which leads to the same renormalized operator in the Nelson model.

Finally, Bley and Thomas \cite{BleyThesis,Bley2016,BleyThomas2015} developed
a general new method to bound a class of exponential moments that often arise when functional
integration techniques are applied in non-relativistic quantum (field) theory. 
Applied to the renormalized Nelson Hamiltonian, with non-negative boson mass and
vanishing exterior potential, this method yields a lower bound on its spectrum of the form 
$-c\ee^{4}N^{3}(1\vee\ln^2([1\vee\ee^2]N))$, \cite{Bley2016}, where $N$ is the number of 
matter particles and the modulus of the coupling constant $\ee$ is either assumed to be sufficiently 
large or sufficiently small. Here we should add that, as we shall do in the present work,
Bley fixes the explicit energy counter terms in the renormalization procedure, which are proportional
to $\ee^2$, in such a way that no contribution of order $\ee^2$ shows up in his lower bound for
the renormalized operator. This differs from the convention in \cite{Nelson1964}.
Using his bound, Bley also provided a non-binding condition in the 
massless Nelson model for two matter particles, whose effective attraction mediated by the radiation 
field is compensated for by a repulsive Coulomb interaction \cite{Bley2016b}.

We restricted the above summary to articles explicitly containing theorems on the 
renormalized Nelson model, as an account on the numerous mathematical papers 
devoted to ultra-violet regularized Nelson Hamiltonians would be far too 
space-consuming. For a general introduction to the model and more references the reader can 
consult, e.g., the textbook \cite{LHB2011}. A renormalization of a translation-invariant Nelson 
type model for a {\em relativistic} scalar matter particle interacting with a massive boson field 
\cite{Gross1973} actually leads to a theory with a flat mass shell \cite{DeckertPizzo2014}.


\subsection*{Description of results}

\noindent
The first main result of the present article is a novel Feynman-Kac formula for the renormalized Nelson
Hamiltonian for $N$ matter particles in a Kato decomposable external potential $V$ and for 
non-negative boson masses. Denoting the latter operator by $H_{N,\infty}^V$ it reads
\begin{equation}
\begin{split}\label{FKintro}
&(e^{-tH_{N,\infty}^V}\Psi)(\ul{\V{x}})=\EE\big[W_{\infty,t}^V(\ul{\V{x}})^*
\Psi(\ul{\V{x}}+\ul{\V{b}}_t)\big],\quad\text{a.e.,}
\\
&W_{\infty,t}^V(\ul{\V{x}})^*=e^{u_{\infty,t}^N(\ul{\V{x}})-\int_0^tV(\ul{\V{x}}+\ul{\V{b}}_s)\Id s}
F_{0,\nf{t}{2}}(-U_{\infty,t}^{N,-}(\ul{\V{x}}))F_{0,\nf{t}{2}}(-U_{\infty,t}^{N,+}(\ul{\V{x}}))^*,
\end{split}
\end{equation}
for every $t>0$, where, in standard notation recalled later on,
\begin{align*}
F_{0,s}(f)&:=\sum_{n=0}^\infty\frac{\ad(f)^n}{n!}e^{-s\Id\Gamma(\omega)},\quad s>0.
\end{align*}
In \eqref{FKintro}, $\Psi$ is a Fock space-valued square-integrable function of $\ul{\V{x}}\in\RR^{3N}$
and $\ul{\V{b}}$ is a $3N$-dimensional Brownian motion. 
The real-valued stochastic process $u_{\infty,t}^N(\ul{\V{x}})$ is called the complex action
following Feynman \cite{Feynman1949} and the $U_{\infty,t}^{N,\pm}(\ul{\V{x}})$ are
continuous adapted stochastic processes with values in the one-boson Hilbert space 
$\HP:= L^2(\RR^3)$. The series defining $F_{0,s}(f)$ converges in the Fock space operator norm 
and defines an analytic function of $f\in\HP$. 

For ultra-violet regularized Nelson Hamiltonians, the special form
\eqref{FKintro} of the Feynman-Kac formula appeared in \cite{GMM2016}. We shall re-prove it to
make this article essentially self-contained and to demonstrate that the Nelson model admits a 
simpler proof than the models in \cite{GMM2016} which in general involve minimally coupled fields as 
well. In fact, our derivation
of \eqref{FKintro} consists in implementing a new renormalization procedure on the level of 
semi-groups in the spirit of \cite{GHL2014,Nelson1964proc} and re-defining $H_{N,\infty}^V$ as the 
generator of the semi-group given by the right hand sides in \eqref{FKintro}. 
We shall actually observe {\em norm} convergence of semi-groups with hardly any technical restriction 
on the details of the ultra-violet regularization; see also \cite{Ammari2000}
as well as Thm.~\ref{thmNelsonHam} and the remarks following it. 
Our definition of $H_{N,\infty}^V$ is manifestly independent of the choice of any cut-off function,
purely and simply as this is the case for the right hand sides in \eqref{FKintro}. 
With only little extra work
we shall also derive new Feynman-Kac formulas for the renormalized Nelson Hamiltonian in the
non-Fock representation and for fiber Hamiltonians in the translation-invariant
renormalized Nelson model. In particular, we shall provide the first non-perturbative construction
of the renormalized Nelson Hamiltonian in the non-Fock representation.

The crucial point about the Feynman-Kac representation \eqref{FKintro} is that it provides a 
fairly simple and tractable formula for a well-defined {\em Fock space operator-valued process}
$W_{\infty}^V(\ul{\V{x}})$ in the Feynman-Kac
integrand and can be applied to {\em every} element $\Psi$ of the Hilbert space for the whole system.
While Nelson and Gubinelli et al. have Feynman-Kac type representations of expectation values
with respect to vectors in certain total subsets of the Hilbert space (involving suitable finite particle 
states \cite{Nelson1964proc} or coherent states \cite{GHL2014} in Fock space), the 
merit of writing the Feynman-Kac formula in the form \eqref{FKintro} is that it allows to first find explicit 
expressions for $U_{\infty,t}^{N,\pm}(\ul{\V{x}})$ containing well-defined $\HP$-valued stochastic
integrals and then to derive {\em operator-norm bounds} on $W_{\infty}^V(\ul{\V{x}})$ with finite 
moments of any order. Furthermore, our formulas permit to verify a Markov property of the 
Feynman-Kac integrand. In particular, we can work out basic features of a semi-group theory in
Fock space-valued $L^p$-spaces in the spirit of \cite{Carmona1979,Simon1982}. Along the way we 
further present a new method to bound the exponential moments
of the complex action $u_{\infty,t}^N(\ul{\V{x}})$ which eventually leads to the improved lower bound
\begin{align}\label{GSEHintro}
\inf\sigma(H_{N,\infty}^0)\ge-c\ee^4N^3,
\end{align}
valid for {\em all} $\ee$ and $N$ with a universal constant $c>0$;
see the introduction to Subsect.~\ref{ssecuge} for more remarks on this new method and a 
discussion of earlier results \cite{BleyThesis,Bley2016,BleyThomas2015,GHL2014,Nelson1964proc}. 
(Here we ignore that the matter particles are supposed to be fermions, i.e., the Pauli principle
is neither taken into account here nor in 
\cite{BleyThesis,Bley2016,BleyThomas2015,GHL2014,Nelson1964proc}.) 
We shall employ a novel bound on an ultra-violet part of $u_{\infty,t}^N(\ul{\V{x}})$ together with
a more standard trial function argument to derive the upper bound in
\begin{align}\nonumber
(&16\pi)^2\ee^4N^3-c'\ee^2N^2\le\inf\sigma(H_{N,\infty}^0)
\\\label{asympintro}
&\le8\pi^4E_{\mathrm{P}}\ee^4N^3+c''(1+\mu+\ln(\ee^2N))\ee^2N^2,
\quad\text{provided that $\ee^2N\ge c$.}
\end{align}
Here $\mu\ge0$ is the boson mass, 
$E_{\mathrm{P}}<0$ is the Pekar energy, and $c,c',c''>0$ are universal
constants. (With $\ee_{\mathrm{N}}$ denoting the coupling constant in Nelson's articles
\cite{Nelson1964proc,Nelson1964}, we have the relation 
$2^\eh(2\pi)^{\nf{3}{2}}\ee=\ee_{\mathrm{N}}$.)
The leading behavior $\propto-\ee^4N^3$ in \eqref{GSEHintro} and \eqref{asympintro} is familiar from 
the closely related Fr\"{o}hlich polaron model \cite{BleyThesis,Bley2016,LiebThomas1997}, which can 
be renormalized as in \cite{Nelson1964} even without introducing energy counter terms. 
(If a sufficiently strong electrostatic Coulomb repulsion between the matter particles is taken into 
account, then one actually observes thermodynamic stability, i.e., a behavior of the minimal energy 
proportional to $-N$ in the Fr\"{o}hlich polaron model without restriction to symmetry subspaces
\cite{FLST2011}. For sufficiently weak electrostatic repulsion, the minimal energy of
{\em fermionic} multi-polaron systems behaves like $-N^{\nf{7}{3}}$, \cite{GriesemerMoeller2010}.)
The work on the polaron model
\cite{LiebThomas1997} suggests that $8\pi^4E_{\mathrm{P}}$ should in fact be the correct 
leading coefficient in \eqref{asympintro}. Numerics shows that $E_{\mathrm{P}}=-0.10851 \ldots$,
\cite{GerlachLoewen1991}, whence the leading coefficient in the lower bound in \eqref{asympintro}
is presumably too large by the factor $32/\pi^2|E_{\mathrm{P}}|<30$. Getting rid of this artifact
is, however, beyond the scope of this article. 

Finally, we present two applications of the new formula \eqref{FKintro}. First, we shall fill a gap left
open in the earlier literature by proving that the semi-groups generated by the renormalized Nelson
Hamiltonian and its non-Fock version are positivity improving at positive times with respect to a 
natural convex cone. In the non-Fock case this result was explicitly mentioned as an open problem
in \cite[\textsection10]{HHS2005} and it entails uniqueness and strict positivity of the ground state 
eigenvector found there.
As already observed in \cite{Matte2016} the ergodicity of the semi-groups follows easily from the 
structure of the integrand in \eqref{FKintro} and standard tools associated with Perron-Frobenius type 
arguments in quantum field theory; see, e.g., \cite{Faris1972,Simon1974}. In the second application 
we employ some results of \cite{Matte2016} on ultra-violet regularized operators to discuss the 
continuous dependence of the right hand side in the first line of \eqref{FKintro} 
on $\ul{\V{x}}$, $\ee$, and $V$.


\subsection*{Organization and general notation}

\noindent
{\em The remaining part of this article is structured as follows:}

In Sect.~\ref{secmodel} we introduce some basic notation and give a precise definition of Nelson's
model. In Sect.~\ref{secBP} we shall analyze certain $\ul{\V{x}}$-independent one-particle versions of
$U_{\infty,t}^{N,\pm}(\ul{\V{x}})$ and eventually define the latter two processes. Sect.~\ref{secphase}
is devoted to the complex action $u_{\infty,t}^N(\ul{\V{x}})$. In Sect.~\ref{secFK} we work out the 
semi-group properties between Fock space-valued $L^p$-spaces including the norm convergence
of semi-groups, as the ultra-violet cut-off is removed. At the end of Sect.~\ref{secFK} we
establish the above Feynman-Kac formula and (re-)define the renormalized Nelson Hamiltonian;
see Thm.~\ref{thmconvTkappa} and Def.~\ref{defHinfty}. (Our version of Nelson's theorem is also
anticipated in Thm.~\ref{thmNelsonHam}.)
The lower bound \eqref{GSEHintro} is obtained in Cor.~\ref{corlbspecH}.
The Feynman-Kac formulas in the non-Fock representation and for the fiber 
Hamiltonians are derived in Sect.~\ref{secnonFock} and Sect.~\ref{secfiber}, respectively.
The positivity improving and continuity properties alluded to above as well as the bounds in 
\eqref{asympintro} are proved in Sect.~\ref{secappl}. The main text is followed by three appendices
presenting well-known material on the Kolmogorov test lemma, exponential moment bounds for sums
of pair potentials (see also \cite{Bley2016}), and a general formula for the infimum of a spectrum.

\subsubsection*{Some general notation.} 

\noindent
The characteristic function of a set $A$ is denoted by $1_A$ and
we abbreviate $a\wedge b:=\min\{a,b\}$ and $a\vee b:=\max\{a,b\}$, for $a,b\in\RR$.

The Borel $\sigma$-algebra of a topological space
$\sT$ is denoted by $\fB(\sT)$. The Lebesgue-Borel measure on $\RR^n$ is denoted
by $\lambda^n$ and, as usual, we shall write $\Id t:=\Id\lambda^1(t)$,
$\Id\V{x}:=\Id\lambda^3(\V{x})$, etc., if a symbol $t$, $\V{x}$, etc., for
the integration variable is specified. 

The set of bounded operators on a Banach space $\sX$ is denoted by $\LO(\sX)$.
The symbols $\dom(T)$ and $\fdom(T)$ stand for the domain and form
domain, respectively, of a suitable linear operator $T$. 
The spectrum of a self-adjoint operator $T$ in a Hilbert space is denoted by $\sigma(T)$.

The symbols $c_{a,b,\ldots},c_{a,b,\ldots}',\ldots\,$ denote non-negative constants that depend
solely on the quantities displayed in their subscripts (if any). Their values might change from one
estimate to another.


\section{Definition of the Nelson model}\label{secmodel}

\noindent
As mentioned earlier, Nelson's model describes a system of a fixed number of non-relativistic 
matter particles interacting with a quantized radiation field comprised of bosons.
The one-boson Hilbert space, i.e., the state space for a single boson, is
\begin{align*}
\HP&:=L^2(\RR^3)=L^2(\RR^3,\lambda^3).
\end{align*}
The bosons are described in momentum space, whence we use the letter $\V{k}\in\RR^3$ 
to denote the variables of elements of $\HP$. The state space of the full radiation field is the
bosonic Fock space modeled over $\HP$ given by the orthogonal direct sum
\begin{align}\label{defFockspace}
\sF&:=\CC\oplus\bigoplus_{n=1}^\infty L^2_{\mathrm{sym}}(\RR^{3n},\lambda^{3n}).
\end{align}
Here $L^2_{\mathrm{sym}}(\RR^{3n},\lambda^{3n})$ is the closed subspace of all
$\psi_n\in L^2(\RR^{3n},\lambda^{3n})$ satisfying
$\psi_n(\V{k}_{\pi(1)},\ldots,\V{k}_{\pi(n)})=\psi_n(\V{k}_1,\ldots,\V{k}_n)$, $\lambda^{3n}$-a.e.
for all permutations $\pi$ of $\{1,\ldots,n\}$, where
$\V{k}_1,\ldots,\V{k}_n\in\RR^3$. As usual we write $\vp(f)$ for the self-adjoint field
operator in $\sF$ corresponding to some $f\in\HP$. If $\vo$ is a multiplication operator 
in $\HP$ with a real-valued function, then its self-adjoint (differential) second quantization acting 
in $\sF$ is denoted by $\Id\Gamma(\vo)$. 
In Subsect.~\ref{ssecFock} we shall recall some facts on the Weyl representation on $\sF$ 
and in particular we shall recall the precise meaning of the symbols $\vp(f)$ and
$\Id\Gamma(\vo)$.

The Hilbert space for the interacting matter-radiation system is now given by 
the vector-valued $L^2$-space $L^2(\RR^3,\sF)$. Before we
define Nelson's Hamiltonian acting in it, we introduce some assumptions and some notation.

\begin{hyp}\label{hypNelson}
Throughout the whole article we shall work with the following standing hypotheses:
\begin{enumerate}[leftmargin=*]
\item[{\rm(1)}] The boson dispersion relation and momentum are, respectively, given by
\begin{align*}
\omega(\V{k}):=(\V{k}^2+\mu^2)^\eh,\quad\V{m}(\V{k}):=\V{k},\quad\V{k}\in\RR^3.
\end{align*}
Here the boson mass is non-negative and possibly zero, $\mu\ge0$.
\item[{\rm(2)}]
The cut-off function $\chi:\RR^3\to[0,1]$ is measurable, even, i.e.,
$\chi(-\V{k})=\chi(\V{k})$, for all $\V{k}\in\RR^3$, and continuously differentiable
on the open unit ball about the origin in $\RR^3$. Furthermore, 
\begin{align*}
\chi(0)=1,\quad\sup_{|\V{k}|<1}|\nabla\chi(\V{k})|\le1,\quad\chi\in\HP. 
\end{align*}
\item[{\rm(3)}] The measurable function $\eta:\RR^3\to[0,1]$ is even.
\item[{\rm(4)}] The coupling constant $\ee$ may be any real number.
\item[{\rm(5)}] The number of matter particles and the dimension of the position space for the matter 
particles are, respectively, given by
\begin{align*}
N\in\NN,\qquad\nu:=3N.
\end{align*}
\item[{\rm(6)}] The external potential
$V:\RR^{\nu}\to\RR$ is Kato decomposable, i.e., it can be written as $V=V_+-V_-$ with 
non-negative functions $V_+$ and $V_-$ such that $V_-$ is in the Kato class $K_\nu$ and
$V_+$ is in the local Kato class $K_\nu^{\loc}$.
\end{enumerate}
\end{hyp}

We refer to \cite{AizenmanSimon1982,Simon1982} for the definition of and information on the
classes $K_\nu$ and $K_\nu^{\loc}$. The function $\eta$ is introduced in Hyp.~\ref{hypNelson}(3) 
to cover infra-red regularized versions of Nelson's model as well; in Nelson's original model we have 
$\eta=1$. We also absorbed some common normalization constants in $\ee$; recall the remarks
following \eqref{asympintro}.

\begin{notation} Throughout the article we use of the following abbreviations: 
\begin{enumerate}[leftmargin=*]
\item[{\rm(1)}] The dispersion relation in polar coordinates is denoted by
\begin{align*}
\omega_\rho&:=(\rho^2+\mu^2)^\eh,\quad\rho\ge0.
\end{align*}
\item[{\rm(2)}] We set
\begin{align*}
\chi_\kappa(\V{k}):=\chi(\V{k}/\kappa),\;\kappa\in\NN,\quad\chi_\infty(\V{k}):=1,\quad\V{k}\in\RR^3.
\end{align*}
With this we write, for all $\kappa\in\NN\cup\{\infty\}$ and $\Lambda\ge0$,
\begin{align}\label{deffbeta}
f_\kappa&:=\ee\eta\omega^\mh\chi_\kappa,\qquad\;\:
\beta_\kappa:=(\omega+\V{m}^2/2)^{-1}f_\kappa,
\\\label{deffsigmakappa}
f_{\Lambda,\kappa}&:=1_{\{|\V{m}|\ge\Lambda\}}f_\kappa,\quad
\beta_{\Lambda,\kappa}:=1_{\{|\V{m}|\ge\Lambda\}}\beta_\kappa.
\end{align}
Furthermore,
\begin{align}
E_\kappa^{\ren}&:=\int_{\RR^3}{f_{\kappa}}{\beta_{\kappa}}\Id\lambda^3,\quad\kappa\in\NN.
\end{align}
\item[{\rm(3)}] 
The coordinates in $\RR^\nu$ are always split into $N$ groups of three and denoted 
$\ul{\V{x}}=(\V{x}_1,\ldots,\V{x}_N)\in\RR^\nu$ with $\V{x}_1,\ldots,\V{x}_N\in\RR^3$. 
We abbreviate
\begin{align}\label{deffNkappa}
f_{\kappa}^N(\ul{\V{x}})&:=\sum_{\ell=1}^Ne^{-i\V{m}\cdot\V{x}_\ell}f_\kappa,
\quad
\beta^N_{\Lambda,\kappa}(\ul{\V{x}}):=\sum_{\ell=1}^Ne^{-i\V{m}\cdot{\V{x}_\ell}}
\beta_{\Lambda,\kappa},
\end{align}
for all $\ul{\V{x}}\in\RR^\nu$ and $\kappa\in\NN\cup\{\infty\}$, and we set 
$\beta^N_{\kappa}(\ul{\V{x}}):=\beta^N_{0,\kappa}(\ul{\V{x}})$.
\item[{\rm(4)}] Some one-boson processes will a priori be defined in the auxiliary Hilbert space
\begin{align}\label{deffmh}
\mathfrak{d}_{\mh}&:=L^2\big(\RR^3,(1+\omega)^{-1}\lambda^3\big).
\end{align}
In the discussion of our Feynman-Kac integrands we shall often work with the auxiliary Hilbert space
\begin{align}\label{deffrk}
\mathfrak{k}&:=L^2\big(\RR^3,(1+\omega^{-1})\lambda^3\big),
\end{align}
and the time-dependent norms
\begin{align}\label{defnormt}
\|f\|_t:=\big(\|f\|_\HP^2+\|(t\omega)^\mh f\|_\HP^2\big)^\eh,\quad f\in\mathfrak{k},\,t>0.
\end{align}
\end{enumerate}
\end{notation}

We never refer to the dependence on $\mu$, $\eta$, or $\ee$ in our notation as their specific
choices do not affect the validity of any argument used in this paper. The whole
renormalization procedure carried through below is necessary only because $f_\infty\notin\HP$, 
for non-zero $\ee$ and $\eta=1$ near infinity, due to its slow decay. Furthermore, it is important
to notice that $\beta_{\Lambda,\infty}\in\HP$, for all $\Lambda>0$, 
while $\beta_\kappa\notin\HP$, $\kappa\in\NN\cup\{\infty\}$, for non-zero $\ee$, $\eta=1$ near zero, 
and $\mu=0$, because $\beta_\kappa$ is too singular at $0$ in this case. 

Next, we introduce the Nelson Hamiltonian. In its construction and sometimes later on we shall employ 
the $\sF$-valued Fourier transform on $\RR^\nu$ defined by the Bochner-Lebesgue integrals
\begin{align*}
(\cF\Psi)(\ul{\V{\xi}})&:=\hat{\Psi}(\ul{\V{\xi}}):=\frac{1}{(2\pi)^{\nf{\nu}{2}}}
\int_{\RR^{\nu}}e^{-i\ul{\V{\xi}}\cdot\ul{\V{x}}}\Psi(\ul{\V{x}})\Id\ul{\V{x}},\quad\ul{\V{\xi}}\in\RR^{\nu},
\end{align*} 
for $\Psi\in L^1(\RR^\nu,\sF)\cap L^2(\RR^\nu,\sF)$, and by
isometric extension to a unitary map $\cF$ on $L^2(\RR^\nu,\sF)$.
In complete analogy to the scalar case we define the first order Sobolev space
$H^1(\RR^\nu,\sF):=\{\Psi\in L^2(\RR^\nu,\sF):|\ul{\V{\xi}}|\hat{\Psi}\in L^2(\RR^\nu,\sF)\}$ and the
generalized gradient $\nabla\Psi:=i\cF^*\ul{\V{\xi}}\hat{\Psi}$, for all $\Psi\in H^1(\RR^\nu,\sF)$.

Since $V_-$ is infinitesimally form bounded with respect to the Laplacian \cite{AizenmanSimon1982}, 
there is a unique self-adjoint operator, denoted $H_{N,0}^V$, 
representing the closed, semi-bounded quadratic form given by 
$$
\dom(\mathfrak{q}_{N,0}^V):=H^1(\RR^\nu,\sF)\cap
\fdom(V_+\id_{\sF})\cap L^2(\RR^\nu,\fdom(\Id\Gamma(\omega)))
$$ 
and
$$
\mathfrak{q}_{N,0}^V[\Psi]:=
\frac{1}{2}\|\nabla\Psi\|^2+\int_{\RR^\nu}V(\ul{\V{x}})\|\Psi(\ul{\V{x}})\|^2\Id\ul{\V{x}}+
\int_{\RR^\nu}\|\Id\Gamma(\omega)^\eh\Psi(\ul{\V{x}})\|^2\Id\ul{\V{x}},
$$
for all $\Psi\in\dom(\mathfrak{q}_{N,0}^V)$.

\begin{defn}\label{defNelsonHamreg}
The {\em ultra-violet regularized Nelson Hamiltonians} are defined by
\begin{align}\label{defNelsonHam}
H_{N,\kappa}^V:=H_{N,0}^V+\int_{\RR^\nu}^\oplus\vp(f_{\kappa}^N(\ul{\V{x}}))\Id\ul{\V{x}},
\quad\kappa\in\NN.
\end{align}
The quadratic form associated with $H_{N,\kappa}^V$
is denoted by $\mathfrak{q}_{N,\kappa}^V$, i.e.,
\begin{align*}
\mathfrak{q}_{N,\kappa}^V[\Psi]&:=\mathfrak{q}_{N,0}^V[\Psi]+\int_{\RR^\nu}\SPn{\Psi(\ul{\V{x}})}{
\vp(f_{\kappa}^N(\ul{\V{x}}))\Psi(\ul{\V{x}})}\Id\ul{\V{x}},\quad\Psi\in\dom(\mathfrak{q}_{N,0}^V),\,
\kappa\in\NN.
\end{align*}
\end{defn}

In fact, each $H_{N,\kappa}^V$, $\kappa\in\NN$, is well-defined, self-adjoint, and semi-bounded on 
$\dom(H_{N,0}^V)$ since the direct integral in \eqref{defNelsonHam} is infinitesimally 
$H_{N,0}^V$-bounded. The latter fact in turn is a consequence of the 
well-known relative bounds
\begin{align*}
\|\vp(e^{-i\V{m}\cdot\V{x}}f_\kappa)\psi\|&\le2^\eh\|(1\vee\omega^\mh)f_\kappa\|
\|(1+\Id\Gamma(\omega))^\eh\psi\|,\quad\psi\in\fdom(\Id\Gamma(\omega)),
\end{align*}
valid for all $\V{x}\in\RR^3$ and $\kappa\in\NN$. 

The infima of the spectra of $H^V_{N,\kappa}$ diverge to $-\infty$ as $\kappa$ goes to inifinity.
Adding suitable counter-terms we can, however, achieve the following:

\begin{thm}\label{thmNelsonHam}
The sequence $\{H_{N,\kappa}^V+NE_{\kappa}^{\ren}\}_{\kappa\in\NN}$ converges in the norm 
resolvent sense to a self-adjoint operator $H_{N,\infty}^V$, which is bounded from below.
\end{thm}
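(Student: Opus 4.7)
The plan is to bypass Nelson's operator-theoretic Gross-transformation argument entirely and to prove norm convergence of the one-parameter semi-groups $e^{-t(H_{N,\kappa}^V+NE_\kappa^{\ren})}$ as $\kappa\to\infty$, then to define $H_{N,\infty}^V$ as the negative generator of the limit semi-group. The first step is to derive, for each finite $\kappa$, a Feynman-Kac formula of the shape \eqref{FKintro} in which $U_{\kappa,t}^{N,\pm}(\ul{\V{x}})$ and the complex action $u_{\kappa,t}^N(\ul{\V{x}})$ are built from $f_\kappa$ and $\beta_\kappa$, with the deterministic shift $-tNE_\kappa^{\ren}$ absorbed into the complex action. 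The specific form of $E_\kappa^{\ren}$ is chosen precisely so that the vacuum expectation of the Feynman-Kac integrand stays bounded in $\kappa$: it cancels the diagonal $\ell=\ell'$ self-energy piece arising when one expands $\vp(f_\kappa^N(\ul{\V{x}}))^2$ and commutes creation past annihilation operators against $e^{-s\Id\Gamma(\omega)}$. For finite $\kappa$ the derivation is a direct It\^{o}-calculus computation in Fock space of the type developed in \cite{GMM2016}, which I would re-prove here in the simpler Nelson setting.

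The central and hardest step is the convergence analysis of the Feynman-Kac integrands $W_{\kappa,t}^V(\ul{\V{x}})$. The target is that, $\PP$-a.s.\ and in every $L^p$-norm uniformly in $\ul{\V{x}}$ on compacta, $W_{\kappa,t}^V(\ul{\V{x}})\to W_{\infty,t}^V(\ul{\V{x}})$ in $\LO(\sF)$. This is reduced to two ingredients: convergence of $U_{\kappa,t}^{N,\pm}(\ul{\V{x}})$ in the time-weighted norm $\|\cdot\|_t$ of \eqref{defnormt}, and convergence of $u_{\kappa,t}^N(\ul{\V{x}})$, both together with $\kappa$-uniform exponential moment bounds. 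The subtlety is that $f_\infty\notin\HP$; however, the factorization of $W_{\kappa,t}^V$ is engineered so that the divergent self-energy piece of $\vp(f_\kappa^N(\ul{\V{x}}))$ is absorbed into the scalar complex action and exactly cancelled by $-tNE_\kappa^{\ren}$, while the residual stochastic contributions organize into $\HP$- and $\mathfrak{k}$-valued processes with finite $\kappa\to\infty$ limits. Combined with an operator-norm estimate of the schematic form
\begin{align*}
\|F_{0,s}(f)\|_{\LO(\sF)}\le e^{C(\|f\|_\HP^2+s\|\omega^\mh f\|_\HP^2)},
\end{align*}
this promotes the $\HP$- and scalar-level convergence to norm convergence of $W_{\kappa,t}^V(\ul{\V{x}})$ in $\LO(\sF)$.

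Once this step is established, the semi-groups $e^{-t(H_{N,\kappa}^V+NE_\kappa^{\ren})}$ converge in $\LO(L^2(\RR^\nu,\sF))$ to bounded self-adjoint operators $T_t$, uniformly in $t$ on compact subsets of $[0,\infty)$; the limit inherits the semi-group property from the approximants via the Markov property of $\ul{\V{b}}$ and the shift structure of the integrand. The uniform exponential moment bounds on $u_{\kappa,t}^N$ yield a $\kappa$-uniform lower bound on $\sigma(H_{N,\kappa}^V+NE_\kappa^{\ren})$ and hence $\|T_t\|\le e^{ct}$, which secures strong continuity at $t=0$; $H_{N,\infty}^V$ is then defined as the negative generator of $\{T_t\}_{t\ge 0}$, a semi-bounded self-adjoint operator (cf.\ Def.~\ref{defHinfty}). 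Norm resolvent convergence follows from norm semi-group convergence by the standard Laplace-transform identity together with the uniform spectral lower bound. The main obstacle is the second paragraph above, namely identifying within $\vp(f_\kappa^N(\ul{\V{x}}))$ exactly which divergent piece must be compensated by $E_\kappa^{\ren}$ and which pieces reorganize into $\HP$- and $\mathfrak{k}$-valued stochastic integrals with operator-norm-controlled exponentials; this is the analytic content of Sects.~\ref{secBP}--\ref{secFK}.
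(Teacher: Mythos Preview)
Your proposal is correct and follows essentially the same route as the paper: derive the Feynman-Kac formula \eqref{FKreg} for finite $\kappa$, prove norm convergence of the integrands $W_{\kappa,t}^V(\ul{\V{x}})$ via convergence of $U_{\kappa,t}^{N,\pm}$ in $\|\cdot\|_t$ and of $u_{\kappa,t}^N$ together with $\kappa$-uniform exponential moment bounds, define $H_{N,\infty}^V$ as the generator of the limiting semi-group, and conclude norm resolvent convergence via the Laplace transform and the uniform spectral lower bound. Two cosmetic remarks: the uniform semi-group convergence the paper actually obtains is on compact subsets of $(0,\infty)$ (Prop.~\ref{propT}(3)), not of $[0,\infty)$, which suffices for the Laplace-transform argument combined with the $\kappa$-uniform bound $\|T_{\kappa,t}^V\|\le e^{ct}$; and in your schematic bound for $F_{0,s}(f)$ the time weight should be $s^{-1}\|\omega^{-\nf{1}{2}}f\|_\HP^2$ rather than $s\|\omega^{-\nf{1}{2}}f\|_\HP^2$ (compare \eqref{defnormt} and \eqref{bdmichi}).
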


Nelson \cite{Nelson1964} actually proved Thm.~\ref{thmNelsonHam} for massive bosons, with the 
norm resolvent convergence replaced by strong resolvent convergence, and in the case where 
$\chi$ is the characteristic function of the open unit ball. For strictly positive boson masses,
Ammari \cite[Thm.~3.8 \& Prop.~3.9]{Ammari2000} observed that the convergence actually takes 
place in the norm resolvent sense. He also verified that the construction of the 
{\em (ultra-violet renormalized) Nelson Hamiltonian} $H_{N,\infty}^V$ is, up to finite energy shifts,
independent of the particular choice of $\chi$ within the class of smooth cut-offs he was considering. 
Massless renormalized Nelson Hamiltonians appeared in \cite{Froehlich1974,HHS2005}. While the 
Gross transformed operators considered there converge in norm resolvent sense, the arguments of 
\cite{Froehlich1974,HHS2005} imply strong resolvent convergence of the original operators.

Our Feynman-Kac formulas will enable us to give an independent proof of Thm.~\ref{thmNelsonHam} 
at the end of Subsect.~\ref{ssecFKren}. In Def.~\ref{defHinfty} of that subsection we shall re-introduce 
$H_{N,\infty}^V$ as the generator of an explicitly given Feynman-Kac semi-group (recall
\eqref{FKintro}) that does not depend on any cut-off function.


\section{Basic processes}\label{secBP}

\noindent
In this section we study two $\HP$-valued, ``basic'' stochastic processes entering into our 
Feynman-Kac formulas. 
More precisely, we shall introduce approximating sequences for these processes indexed 
by $\kappa$ and analyze their limiting behavior as $\kappa$ goes to infinity. 
The two processes and their approximations are defined in Subsect.~\ref{ssecBP} and studied
in more detail separately in Subsects.~\ref{ssecUminus} and~\ref{ssecUplus}. As all processes
appearing in these three subsections correspond to one matter particle only, we shall extend our
definitions and results to the case of $N$ matter particles in Subsect.~\ref{ssecUN}. In 
Subsect.~\ref{sseckey} we shall finally prove a technical key lemma on a process for $N$ matter
particles that will permit us to derive an exponential moment bound on the complex action in
Sect.~\ref{secphase}.

First we will, however, introduce some notation for probabilistic objects used throughout
this article.


\subsection{Probabilistic preliminaries and notation}\label{ssecprobprel}

\begin{notation}{\em Stochastic bases and Brownian motion.}
Throughout the paper we fix a stochastic basis $\BB:=(\Omega,\fF,(\fF_t)_{t\ge0},\PP)$ 
satisfying the usual assumptions, i.e., the probability space $(\Omega,\fF,\PP)$ is complete,
the filtration $(\fF_t)_{t\ge0}$ is right continuous, and $\fF_0$ contains all $\PP$-zero sets. 
We use the letter $\gamma$ to denote the elements of $\Omega$.

In Subsects.~\ref{ssecUminus} and~\ref{ssecUplus}, and also later on in Sect.~\ref{secfiber} and
Subsect.~\ref{ssecerg}, the symbol $\V{B}=(\V{B}_t)_{t\ge0}$ denotes a three-dimensional 
Brownian motion on $\BB$ with covariance matrix $\id_{\RR^3}$. 

In Subsect.~\ref{ssecUN} and the
later Sects.~\ref{secphase}, \ref{secFK}, and~\ref{secappl}, the symbol $\ul{\V{b}}$ denotes
a $\nu$-dimensional $\BB$-Brownian motion with covariance matrix $\id_{\RR^\nu}$, that we think of 
as being split into $N$ independent three-dimensional Brownian motions $\V{b}_\ell$ describing the 
paths of the $N$ matter particles, i.e., $\ul{\V{b}}=(\V{b}_1,\ldots,\V{b}_N)$.

For all $t\ge0$, we further set 
\begin{align}\label{deftb}
{}^t\!\V{B}_s:=\V{B}_{t+s}-\V{B}_t,\quad{}^t\ul{\V{b}}_s:=\ul{\V{b}}_{t+s}-\ul{\V{b}}_t,\quad s\ge0, 
\end{align}
so that ${}^t\!\V{B}$ and ${}^t\ul{\V{b}}$ are Brownian motions with respect to the time-shifted 
stochastic basis $\BB_t:=(\Omega,\fF,(\fF_{t+s})_{s\ge0},\PP)$, which again satisfies the usual 
assumptions.
\end{notation}

\begin{ex}\label{exWiener}
Let $d\in\NN$, $\Omega_{\mathrm{W}}^d:=C([0,\infty),\RR^d)$ and let $\PP_{\mathrm{W}}^d$
denote the completion of the Wiener measure on $\Omega_{\mathrm{W}}^d$. 
The $\sigma$-algebra $\fF_{\mathrm{W}}^d$ is the corresponding domain of 
$\PP_{\mathrm{W}}^d$. Furthermore, we let
$(\fF_{\mathrm{W},t}^d)_{t\ge0}$ denote the $\PP_{\mathrm{W}}^d$-completion of the filtration
generated by the evaluation maps $\pr^d_t(\gamma):=\gamma(t)$, $t\ge0$,
$\gamma\in\Omega_{\mathrm{W}}^d$. Then $(\fF_{\mathrm{W},t}^d)_{t\ge0}$ is automatically
right continuous. Hence, the stochastic basis
$\BB_{\mathrm{W}}^d:=(\Omega_{\mathrm{W}}^d,\fF_{\mathrm{W}}^d,
(\fF_{\mathrm{W},t}^d)_{t\ge0},\PP_{\mathrm{W}}^d)$ satisfies the usual assumptions.
By construction, $\mathrm{pr}^d$ is a Brownian motion on $\BB_{\mathrm{W}}^d$ with
covariance matrix $\id_{\RR^d}$.
\end{ex}

We recall that two stochastic processes $X$ and $Y$ on $[0,\infty)$
with values in some measurable space are called indistinguishable, iff there is a $\PP$-zero set 
$\sN\in\fF$ such that $X_t(\gamma)=Y_t(\gamma)$, for all $t\ge0$ and 
$\gamma\in\Omega\setminus\sN$. 

We call a Hilbert space-valued process $X$ on $[0,\infty)$ continuous, iff all its paths 
$X(\gamma)$, $\gamma\in\Omega$, are continuous (and not just almost every path).
The Brownian motions $\V{B}$ and $\ul{\V{b}}$ are continuous in this sense.

To bound the expectation of exponentials of real-valued martingales we shall repeatedly
employ the following remark:

\begin{rem}\label{remstochGronwall}
Let $d\in\NN$ and $\V{z}$ be a predictable $\RR^d$-valued process on $[0,\infty)$ 
such that $\int_0^t\EE[\V{z}_s^2]\Id s<\infty$ holds for all $t\ge0$. Let $\V{b}$ be a
$d$-dimensional Brownian motion with respect to $\BB$ and
define the real-valued continuous $L^2$-martingale $M$ up to indistinguishability by
$$
M_t:=\int_0^t\V{z}_s\Id\V{b}_s,\;\;t\ge0,\quad\text{so that}\quad
\llbracket M\rrbracket_t=\int_0^t\V{z}_s^2\Id s,\;\; t\ge0.
$$
Then the following estimate follows from \cite[Lem.~4.1]{BleyThomas2015},
\begin{align}\label{BleyThomasmart}
\EE\big[e^{M_t}\big]&\le\EE\big[e^{(pp'/2)\llbracket M\rrbracket_t}\big]^{\nf{1}{p}},\quad p>1,
\end{align}
where $p'$ is the exponent conjugate to $p$. With the help of Scheutzow's stochastic Gronwall lemma 
\cite{Scheutzow2013} we shall derive an analogue of this bound for the running
supremum of $e^M$. (This will be convenient later on in proving the strong continuity of the ultra-violet 
renormalized semi-group.) In fact, the It\={o} formula
\begin{align*}
e^{M_t}=1+\int_0^te^{M_s}\Id M_s
+\frac{1}{2}\int_0^te^{M_s}\Id\llbracket M\rrbracket_s\,\quad t\ge0,\;\text{$\PP$-a.s.},
\end{align*}
where $(\int_0^te^{2M_s}\Id M_s)_{t\ge0}$ is a continuous
local martingale, and \cite[Thm.~4]{Scheutzow2013} directly imply
$\EE[\sup_{s\le t}e^{\delta M_s}]\le c_{\delta,p}
\EE[e^{(\delta p/2)\llbracket M\rrbracket_t}\big]^{\nf{1}{p}}$,
for all $p>1$ and $\delta\in(0,1)$ such that $\delta p'<1$. Replacing $\V{z}$ by
$\V{z}/\delta$ and writing $1/\delta=p'q$ we arrive at
\begin{align}\label{expbdmart}
\EE\big[\sup_{s\le t}e^{M_s}\big]\le c_{p,q}\EE\big[e^{(pp'q/2)\llbracket M\rrbracket_t}\big]^{\nf{1}{p}},
\quad t\ge0,\,p,q>1.
\end{align} 
The constant is given by $c_{p,q}=[1+(4\wedge q)(\pi/q)/\sin(\pi/q)]^{\nf{1}{p'}}$, \cite{Scheutzow2013}.
\end{rem}

For later reference we further recall that, if $d\in\NN$ and $\V{b}$ is a $d$-dimensional 
$\BB$-Brownian motion, then
\begin{align}\label{bdskuno}
\EE\big[\sup_{s\le t}e^{\delta|{\V{b}}_s|^2/2t}\big]&\le e(1-\delta)^{\nf{d}{2}},\;\;\delta\in(0,1),\quad
\EE\big[\sup_{s\le t}e^{a|\V{b}_s|}\big]\le 2^{\nf{d}{2}}e^{a^2t},\;\;a>0.
\end{align}
Here the second bound follows from the first one, which in turn is a consequence of Doob's
maximal inequality applied to the submartingale $e^{\delta|\V{b}|^2}$. In fact, the latter implies 
$\EE[(\sup_{s\le t}e^{\delta|{\V{b}}_s|^2/2pt})^p]\le(p/(p-1))^p\EE[e^{\delta|{\V{b}}_t|^2/2t}]$,
for all $p>1$.



\subsection{Basic one-boson path integrals}\label{ssecBP}

\noindent
In the next definition we introduce some basic, ultra-violet regularized 
$\HP$-valued functionals on $C([0,\infty),\RR^3)$. Their compositions with Brownian motion
will be studied in the succeeding two subsections.

\begin{defn}\label{defbasicproc}
Let $\kappa\in\NN$, $t\ge0$, $\V{\alpha}\in C([0,\infty),\RR^3)$, 
and write $\V{\alpha}_s:=\V{\alpha}(s)$, $s\ge0$.
Then we introduce the following $\HP$-valued Bochner-Lebesgue integrals,
\begin{align}\label{defUpm}
U_{\kappa,t}^-[\V{\alpha}]&:=\int_0^te^{-s\omega-i\V{m}\cdot\V{\alpha}_s}f_\kappa{\Id} s,
\quad
U_{\kappa,t}^+[\V{\alpha}]:=\int_0^te^{-(t-s)\omega-i\V{m}\cdot\V{\alpha}_s}f_\kappa{\Id} s.
\end{align}
\end{defn}

\begin{lem}\label{lemUpmcontHP}
Let $\kappa\in\NN$ and $\V{\alpha}\in C([0,\infty),\RR^3)$. Then the maps
$[0,\infty)\ni t\mapsto U_{\kappa,t}^\pm[\V{\alpha}]\in\HP$ are continuously differentiable with
\begin{align}\label{diffUpm}
\frac{\Id}{\Id t}U^-_{\kappa,t}[\V{\alpha}]&=e^{-t\omega-i\V{m}\cdot\V{\alpha}_t}f_\kappa,\quad
\frac{\Id}{\Id t}U^+_{\kappa,t}[\V{\alpha}]=-\omega U^+_{\kappa,t}[\V{\alpha}]+
e^{-i\V{m}\cdot\V{\alpha}_t}f_\kappa,\quad t\ge0.
\end{align}
\end{lem}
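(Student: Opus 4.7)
The plan is to obtain both identities in \eqref{diffUpm} together with the asserted continuity of derivatives by combining Bochner's fundamental theorem of calculus with the dominated convergence theorem (DCT) in $\HP=L^2(\RR^3)$. The universal dominator throughout is $|f_\kappa|\in\HP$, supplied by the pointwise estimate $|e^{-s\omega(\V{k})-i\V{k}\cdot\V{\alpha}_s}f_\kappa(\V{k})|\le|f_\kappa(\V{k})|$.

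First I would check that the integrand $g(s):=e^{-s\omega-i\V{m}\cdot\V{\alpha}_s}f_\kappa$ defines a continuous map $[0,\infty)\to\HP$: pointwise continuity in $s$ (which uses continuity of $\V{\alpha}$) together with the $L^2$-dominator $|f_\kappa|$ yields this by DCT. Bochner's fundamental theorem of calculus then immediately gives the first identity in \eqref{diffUpm} and the continuity of $t\mapsto e^{-t\omega-i\V{m}\cdot\V{\alpha}_t}f_\kappa$.

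For $U^+$ and $h>0$, I would use the splitting
\begin{align*}
U^+_{\kappa,t+h}[\V{\alpha}]-U^+_{\kappa,t}[\V{\alpha}]
&=\int_t^{t+h}e^{-(t+h-s)\omega-i\V{m}\cdot\V{\alpha}_s}f_\kappa\,\Id s\\
&\quad+(e^{-h\omega}-1)U^+_{\kappa,t}[\V{\alpha}],
\end{align*}
with an analogous identity for $h<0$. Dividing the first summand by $h$ and changing variables $s=t+rh$, one obtains convergence to $e^{-i\V{m}\cdot\V{\alpha}_t}f_\kappa$ in $\HP$ via DCT. For the second summand, note the pointwise facts $(e^{-h\omega}-1)/h\to-\omega$ as $h\to 0$ with $|(e^{-h\omega}-1)/h|\le\omega$, together with the pointwise estimate
$$\omega(\V{k})\,|U^+_{\kappa,t}[\V{\alpha}](\V{k})|\le(1-e^{-t\omega(\V{k})})|f_\kappa(\V{k})|\le|f_\kappa(\V{k})|,$$
which simultaneously shows $\omega U^+_{\kappa,t}[\V{\alpha}]\in\HP$ and provides the $L^2$-dominator for DCT, yielding the second identity in \eqref{diffUpm}.

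Continuity in $t$ of $-\omega U^+_{\kappa,t}[\V{\alpha}]+e^{-i\V{m}\cdot\V{\alpha}_t}f_\kappa$ follows from one further DCT application: the integrand defining $\omega U^+_{\kappa,t}[\V{\alpha}]$ is pointwise continuous in $t$ and dominated uniformly for $t$ in compact intervals by $|f_\kappa|$, while the remaining term has already been handled. No substantive obstacle is anticipated; the only mildly subtle point is the need to pull an unbounded multiplier $\omega$ inside the integral, which the uniform pointwise estimate above settles.
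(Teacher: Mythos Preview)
Your proof is correct and uses the same overall decomposition as the paper (splitting the difference quotient for $U^+$ into the ``new mass'' integral over $[t,t+h]$ and the factor $(e^{-h\omega}-1)U^+_{\kappa,t}$), but your key estimate is more elementary. The paper controls the second summand via the $\HP$-norm inequality $\|\omega^{1+\epsilon}e^{-s\omega}f_\kappa\|_\HP\le c_{\epsilon,\ee,\kappa}s^{-\eh-\epsilon}$ (its estimate \eqref{mads0}), integrated in $s$; this also yields the side facts $U^\pm_{\kappa,t}\in\dom(\omega^{1+\epsilon})$ and local H\"older continuity of $\omega U^+_\kappa$, which are recycled later in Lem.~\ref{lemI}. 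You replace all of this by the single pointwise bound $\omega(\V{k})|U^+_{\kappa,t}[\V{\alpha}](\V{k})|\le(1-e^{-t\omega(\V{k})})|f_\kappa(\V{k})|\le|f_\kappa(\V{k})|$, which feeds the DCT directly with the universal dominator $|f_\kappa|\in\HP$; this is shorter for the lemma at hand, at the cost of not producing those extra byproducts. One small caution: the inequality $|(e^{-h\omega}-1)/h|\le\omega$ fails for $h<0$, so in your ``analogous identity for $h<0$'' the factor $(e^{-|h|\omega}-1)/|h|$ must be applied to $U^+_{\kappa,t+h}$ rather than to $U^+_{\kappa,t}$ (which then also requires the continuity of $s\mapsto\omega U^+_{\kappa,s}$ that you establish in the last paragraph); alternatively, since you already have continuity of the candidate derivative, it suffices---as the paper does---to compute only the right derivative.
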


\begin{proof}
Of course, the continuous differentiability of $U_{\kappa}^-[\V{\alpha}]$ is just an instance of the 
fundamental theorem of calculus for $\HP$-valued integrals. 

Furthermore, let $\epsilon\in(\mh,\eh)$. Then
the bound $\omega^{1+2\epsilon} e^{-2s\omega}\le c_\epsilon/s^{1+2\epsilon}$, $s>0$, implies 
\begin{align}\nonumber
\int_t^\tau\|\omega^{1+\epsilon}e^{-(\tau-s)\omega} f_\kappa\|_\HP\Id s&=
\int_0^{\tau-t}\|\omega^{1+\epsilon}e^{-s\omega} f_\kappa\|_\HP\Id s
\\\label{mads0}
&\le c_\epsilon^\eh|\ee|\|\chi_\kappa\|_\HP\int_0^{\tau-t}\frac{\Id s}{s^{\epsilon+\eh}}
=c_{\epsilon,\ee,\kappa}|\tau-t|^{\eh-\epsilon},
\end{align}
for all $\tau\ge t\ge0$. Hence, $U^\pm_{\kappa,t}[\V{\alpha}]\in\dom(\omega^{1+\epsilon})$ and we 
further observe that
\begin{align*}
&\|\omega U^+_{\kappa,\tau}[\V{\alpha}]-\omega U^+_{\kappa,t}[\V{\alpha}]\|_\HP
\\
&\le\Big\|\int_0^t\omega(e^{-(\tau-t)\omega}-1)e^{-(t-s)\omega-i\V{m}\cdot\V{\alpha}_s}
f_\kappa\Id s\Big\|_\HP
+\Big\|\int_t^\tau\omega e^{-(\tau-s)\omega-i\V{m}\cdot\V{\alpha}_s}
f_\kappa\Id s\Big\|_\HP
\\
&\le|\tau-t|^\epsilon\int_0^t\|\omega^{1+\epsilon}e^{-(t-s)\omega} f_\kappa\|_\HP\Id s
+c_{0,\ee,\kappa}|\tau-t|^{\eh},
\end{align*}
for all $\tau\ge t\ge0$ and $\epsilon\in[0,\eh)$, which proves that 
$\omega U^+_\kappa[\V{\alpha}]$ is (locally H\"{o}lder) continuous on $[0,\infty)$. Likewise,
\begin{align}\nonumber
\big\|&\delta^{-1}(U_{\kappa,t+\delta}^+[\V{\alpha}]-U_{\kappa,t}^+[\V{\alpha}])
+\omega U_{\kappa,t}^+[\V{\alpha}]-e^{-i\V{m}\cdot\V{\alpha}_t}f_\kappa\big\|_\HP
\\\nonumber
&\le \int_0^t\big\|\{\delta^{-1}(e^{-\delta\omega}-1)+\omega\}e^{-s\omega}f_\kappa\big\|_\HP\Id s
\\\label{mads1}
&\quad+\frac{1}{\delta}\int_t^{t+\delta}\big\|(e^{-(t+\delta-s)\omega-i\V{m}\cdot\V{\alpha}_s}
-e^{-i\V{m}\cdot\V{\alpha}_t})f_\kappa\big\|_\HP\Id s,
\end{align}
for all $t\ge0$ and $\delta>0$. By the dominated convergence theorem and \eqref{mads0},
the integral in the second line of \eqref{mads1} goes to zero, as $\delta\downarrow0$, 
because the term in the curly brackets $\{\cdots\}$ is bounded from above by $2\omega$.
For every $\ve>0$, we further find some $\delta_0>0$ such that the norm under the integral
in the third line of \eqref{mads1} is $\le\ve$, provided that $0<\delta\le\delta_0$ and
$s\in[t,t+\delta]$. These remarks prove that $U_\kappa^+[\V{\alpha}]$ is differentiable
from the right on $[0,\infty)$ with a continuous right derivative 
$-\omega U^+_{\kappa}[\V{\alpha}]+e^{-i\V{m}\cdot\V{\alpha}}f_\kappa$. 
This finally implies that $U_\kappa^+[\V{\alpha}]$ is continuously differentiable on $[0,\infty)$
and \eqref{diffUpm} is satisfied.
\end{proof}

In the case $\kappa=\infty$ the Bochner-Lebesgue integrals in \eqref{defUpm} still make sense
provided that they are constructed in the auxiliary Hilbert space $\mathfrak{d}_{\mh}$ defined
in \eqref{deffmh}.

\begin{lem}\label{lemUinftyint}
For all $t\ge0$ and $\V{\alpha}\in C([0,\infty),\RR^3)$,
the {\em $\mathfrak{d}_{\mh}$-valued} Bochner-Lebesgue integrals
\begin{align}\label{Uinftyint}
U_{\infty,t}^-[\V{\alpha}]&:=\int_0^te^{-s\omega-i\V{m}\cdot\V{\alpha}_s}f_\infty\Id s,
\quad U_{\infty,t}^+[\V{\alpha}]:=\int_0^te^{-(t-s)\omega-i\V{m}\cdot\V{\alpha}_s}f_\infty\Id s,
\end{align}
are well-defined. In fact, they are well-defined in the Hilbert space 
$L^2(\RR^3,\omega^a\lambda^3)$ with $a\in(-2,0)$ as well. 
Furthermore, the following statements hold true:
\begin{enumerate}[leftmargin=*]
\item[{\rm(1)}] For every $\epsilon\in(\mh,\eh)$, there exists $c_\epsilon>0$ such that
\begin{align}\label{mona0}
\sup_{\V{\alpha}\in C([0,\infty),\RR^3)}\|\omega^{\epsilon-\eh} U_{\kappa,t}^\pm[\V{\alpha}]\|_{\HP}^2
\le c_\epsilon^2\ee^2t^{1-2\epsilon},\quad t\ge0,\,\kappa\in\NN\cup\{\infty\}.
\end{align}
\item[{\rm(2)}] With the same constants as in {\rm(1)} we have 
\begin{align*}
\|U_{\kappa,\tau}^\pm[\V{\alpha}]-U_{\kappa,t}^\pm[\V{\alpha}]\|_{\mathfrak{d}_{\mh}}
&\le\|\omega^\mh\{U_{\kappa,\tau}^\pm[\V{\alpha}]-U_{\kappa,t}^\pm[\V{\alpha}]\}\|_{\HP}
\\
&\le c_\epsilon|\ee|(\tau\vee t)^{\eh-\epsilon}|\tau-t|^\epsilon+c_0|\ee||\tau-t|^\eh,
\end{align*}
for all $\epsilon\in(0,\eh)$, $\tau,t\ge0$, $\V{\alpha}\in C([0,\infty),\RR^3)$, 
and $\kappa\in\NN\cup\{\infty\}$.
\item[{\rm(3)}]
For all $\tau>0$, 
\begin{align}\label{conv2000}
\sup_{t\ge\tau}\sup_{\V{\alpha}\in C([0,\infty),\RR^3)}\|(t\omega)^\mh\{U_{\kappa,t}^\pm[\V{\alpha}]
-U_{\infty,t}^\pm[\V{\alpha}]\}\|_{\HP}\xrightarrow{\;\;\kappa\to\infty\;\;}0.
\end{align}
\end{enumerate}
\end{lem}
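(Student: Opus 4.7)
The strategy is to derive all three statements from the path-independent pointwise estimate
\begin{align*}
|U_{\kappa,t}^\pm[\V{\alpha}](\V{k})| \le |f_\kappa(\V{k})| \int_0^t e^{-s\omega(\V{k})}\,\Id s = |f_\kappa(\V{k})|\,\frac{1 - e^{-t\omega(\V{k})}}{\omega(\V{k})},
\end{align*}
obtained by passing the absolute value inside the Bochner integral. Combined with the universal bound $|f_\kappa|\le|\ee|\omega^{\mh}$, which holds for every $\kappa\in\NN\cup\{\infty\}$ since $\eta,\chi_\kappa\in[0,1]$, this yields the master inequality
\begin{align*}
\|\omega^{\epsilon-\eh} U_{\kappa,t}^\pm[\V{\alpha}]\|_\HP^2 \le \ee^2 \int_{\RR^3} \omega^{2\epsilon-4}(1-e^{-t\omega})^2\,\Id\V{k}.
\end{align*}
For part (1) I split the integration region at $\omega=1/t$: on $\{\omega\le 1/t\}$ I use $(1-e^{-t\omega})^2\le t^2\omega^2$, on $\{\omega>1/t\}$ the bound $1$. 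Passing to radial integration via $\omega\ge|\V{k}|$ (which majorises the non-positive powers $\omega^{2\epsilon-2}$ and $\omega^{2\epsilon-4}$ by $|\V{k}|^{2\epsilon-2}$ and $|\V{k}|^{2\epsilon-4}$ respectively), an elementary computation gives $t^{1-2\epsilon}$ in both pieces, with finite coefficients exactly when $\epsilon\in(\mh,\eh)$. The well-definedness of the $\mathfrak{d}_{\mh}$-valued and $L^2(\RR^3,\omega^a\lambda^3)$-valued Bochner integrals defining $U_{\infty,t}^\pm[\V{\alpha}]$ is established by the same sort of bound applied to $\int_0^t\|e^{-s\omega}f_\infty\|_*\,\Id s$, which is finite because the integrand behaves like $s^{-\eh}$ (resp. $s^{-(a+2)/2}$) near $s=0$.

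For part (2), the $U^-$ case is nearly identical: the pointwise bound
\begin{align*}
|U_{\kappa,\tau}^-[\V{\alpha}]-U_{\kappa,t}^-[\V{\alpha}]|(\V{k}) \le |f_\kappa(\V{k})|\,e^{-(\tau\wedge t)\omega}\,\frac{1-e^{-|\tau-t|\omega}}{\omega}
\end{align*}
combined with the same splitting (now at $\omega=1/|\tau-t|$) directly gives the $c_0|\ee||\tau-t|^\eh$ term. For $U^+$ I use (assuming $\tau\ge t$, the other case being symmetric) the semi-group identity
\begin{align*}
U_{\kappa,\tau}^+[\V{\alpha}]-U_{\kappa,t}^+[\V{\alpha}] = (e^{-(\tau-t)\omega}-1)\,U_{\kappa,t}^+[\V{\alpha}] + \int_t^\tau e^{-(\tau-s)\omega-i\V{m}\cdot\V{\alpha}_s} f_\kappa\,\Id s.
\end{align*}
The tail integral is handled as in the $U^-$ case and contributes $c_0|\ee||\tau-t|^\eh$. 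For the first contribution, the elementary inequality $1-e^{-x}\le x^\epsilon$ for $x\ge 0$ and $\epsilon\in[0,1]$ (verified via $1-e^{-x}\le\min(x,1)\le x^\epsilon$) gives
\begin{align*}
\|\omega^{\mh}(e^{-(\tau-t)\omega}-1)U_{\kappa,t}^+\|_\HP \le |\tau-t|^\epsilon \|\omega^{\epsilon-\eh} U_{\kappa,t}^+\|_\HP \le c_\epsilon|\ee|(\tau\vee t)^{\eh-\epsilon}|\tau-t|^\epsilon,
\end{align*}
where the final bound invokes part (1).

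For part (3), the analogous pointwise identity produces
\begin{align*}
|U_{\kappa,t}^\pm[\V{\alpha}]-U_{\infty,t}^\pm[\V{\alpha}]|(\V{k}) \le |\ee|\,\omega(\V{k})^{\mh}|\chi_\kappa(\V{k})-1|\,\frac{1-e^{-t\omega(\V{k})}}{\omega(\V{k})}.
\end{align*}
The decisive structural input is the $C^1$-hypothesis on $\chi$ near the origin: since $\chi(0)=1$ and $|\nabla\chi|\le1$ on the open unit ball, $|\chi_\kappa(\V{k})-1|\le|\V{k}|/\kappa$ for $|\V{k}|\le\kappa$, with the trivial bound $|\chi_\kappa-1|\le 2$ otherwise. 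Substituting into $\|(t\omega)^{\mh}(U_{\kappa,t}^\pm-U_{\infty,t}^\pm)\|_\HP^2$ and splitting the integration domain at $|\V{k}|=\kappa$ (and, in the inner region, at $\omega=1/t$ where needed), elementary integrals yield a majorant of order $\kappa^{-2}\tau^{-2}$ plus $\kappa^{-1}\tau^{-1}$, which vanishes uniformly for $t\ge\tau$ and $\V{\alpha}\in C([0,\infty),\RR^3)$ as $\kappa\to\infty$. The main technical annoyance throughout is keeping constants uniform in the boson mass $\mu\ge 0$, because the low-momentum behaviour of $\omega$ changes qualitatively between $\mu=0$ and $\mu>0$; this is sidestepped systematically by the crude but universal inequality $\omega\ge|\V{k}|$, reducing every integral of a non-positive power of $\omega$ to its explicitly computable $\mu=0$ counterpart.
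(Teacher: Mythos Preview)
Your argument is essentially correct and close in spirit to the paper's, though the executions differ stylistically: the paper estimates $\int_0^t\|\omega^{\epsilon-1}e^{-s\omega}\|_{\HP}\,\Id s$ by first bounding the $\HP$-norm of the integrand (by $c_\epsilon' s^{-\eh-\epsilon}$) and then integrating in time, whereas you bound $|U_{\kappa,t}^\pm[\V{\alpha}](\V{k})|$ pointwise and compute the $\V{k}$-integral directly. For part~(3) the paper uses only continuity of $\chi$ at $0$ in an $\ve$--$\delta$ argument and obtains qualitative convergence; your use of the $C^1$ hypothesis yields an explicit rate, a small bonus.

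One technical point in your write-up needs repair. In part~(1) you split at $\omega=1/t$ and then pass to $|\V{k}|$ via $\omega\ge|\V{k}|$. On the outer region $\{\omega>1/t\}$ this produces $\int_{\{\omega>1/t\}}|\V{k}|^{2\epsilon-4}\,\Id\V{k}$, but for $\mu>1/t$ that region is all of $\RR^3$ and the integral diverges at the origin. Split at $|\V{k}|=1/t$ instead: on $\{|\V{k}|\le1/t\}$ use $(1-e^{-t\omega})^2\le t^2\omega^2$ together with $\omega^{2\epsilon-2}\le|\V{k}|^{2\epsilon-2}$; on $\{|\V{k}|>1/t\}$ use $(1-e^{-t\omega})^2\le1$ together with $\omega^{2\epsilon-4}\le|\V{k}|^{2\epsilon-4}$. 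Both pieces then yield $c_\epsilon t^{1-2\epsilon}$ uniformly in $\mu\ge0$, and the rest of your proof goes through unchanged.
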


\begin{proof}
For all $t\ge0$ and $\epsilon\in(\mh,\eh)$, we observe that 
\begin{align}\nonumber
\int_0^t\|\omega^{\epsilon-1}e^{-(t-s)\omega}\|_\HP\Id s
&=\int_0^t\|\omega^{\epsilon-1}e^{-s\omega}\|_\HP\Id s
\le(4\pi)^\eh\int_0^t\Big(\int_0^\infty\frac{e^{-2s\rho}\rho^2}{\omega_\rho^{2-2\epsilon}}
\Id\rho\Big)^\eh\Id s
\\\label{anne1}
&\le c_\epsilon'\int_0^ts^{\mh-\epsilon}\Id s=c_\epsilon t^{\eh-\epsilon}.
\end{align}
Hence, the $\mathfrak{d}_{\mh}$-valued Bochner-Lebesgue integrals in \eqref{Uinftyint} 
exist and the bounds in \eqref{mona0} hold true.

For $\tau\ge t\ge0$, $\epsilon\in(0,\eh)$, and $\kappa\in\NN\cup\{\infty\}$, we may further deduce that
\begin{align*}
\|&\omega^\mh\{U^\pm_{\kappa,\tau}[\V{\alpha}]-U^\pm_{\kappa,t}[\V{\alpha}]\}\|_{\HP}
\\
&\le(\tau-t)^\epsilon|\ee|\int_0^t\|\omega^{\epsilon-1}e^{-(t-s)\omega}\|_\HP\Id s
+|\ee|\int_0^{\tau-t}\|\omega^{-1}e^{-s\omega}\|_\HP\Id s,
\end{align*}
which together with \eqref{anne1} proves Part~(2).

Now let $\ve>0$. By Hyp.~\ref{hypNelson}(2) we then find some radius $\vr_\ve>0$ such that
\begin{align*}
\forall\,\V{k}\in\RR^3:\quad
|\V{k}|\le\rho_\ve\;\;\Rightarrow\;\;|1-\chi(\V{k})|\le\ve.
\end{align*}
For all $\kappa\in\NN$, $t\ge0$, and $\V{\alpha}\in C([0,\infty),\RR^3)$, this permits to get
\begin{align}\nonumber
\|U_{\kappa,t}^\pm[\V{\alpha}]-U_{\infty,t}^\pm[\V{\alpha}]\|_{\mathfrak{d}_{\mh}}
&\le
\|\omega^\mh(U_{\kappa,t}^\pm[\V{\alpha}]-U_{\infty,t}^\pm[\V{\alpha}])\|_{\HP}
\\\nonumber
&\le|\ee|\int_0^t\|e^{-s\omega}(1-\chi_\kappa)/\omega\|_\HP\Id s
\\\nonumber
&\le(4\pi)^\eh|\ee| \int_0^t\Big(\ve^2\int_0^{\infty} e^{-2s\rho}\Id\rho
+\int_{\kappa\rho_\ve}^\infty{e^{-2s\rho}}\Id\rho\Big)^\eh\Id s
\\\label{conv1}
&\le(8\pi)^\eh|\ee|\ve t^\eh+(2\pi)^\eh|\ee| t^\eh\int_0^1e^{-st\kappa\rho_\ve}\frac{\Id s}{{s}^\eh}.
\end{align}
Since $\ve>0$ is arbitrary, we conclude that \eqref{conv2000} is valid, for all $\tau>0$.
\end{proof}

\begin{rem}\label{remguenther}
Let $a\in(0,\eh)$ and $\kappa\in\NN\cup\{\infty\}$. Since
$(0,t]\times\Omega\ni(s,\gamma)\mapsto e^{-s\omega-i\V{m}\cdot\V{B}_s(\gamma)}f_\kappa
\in L^2(\RR^3,\omega^{-1}\lambda^3)$ is $\fB((0,t])\otimes\fF_t$-measurable, for every $t>0$, 
it follows from Lem.~\ref{lemUinftyint} that $U^\pm_\kappa[\V{B}]$, seen as 
$L^2(\RR^3,\omega^{-1}\lambda^3)$-valued processes, are adapted with locally
$a$-H\"{o}lder continuous paths.
\end{rem}

\begin{rem}\label{remUshift}
Let $\V{\alpha}\in C([0,\infty),\RR^3)$, $t\ge0$, and define 
${}^t\V{\alpha}\in C([0,\infty),\RR^3)$ and $\V{\alpha}_{t-\bullet}\in C([0,\infty),\RR^3)$ by
\begin{align}\label{alphashiftrev}
{}^t\V{\alpha}_s&:=\V{\alpha}_{t+s}-\V{\alpha}_t,\quad
(\V{\alpha}_{t-\bullet})_s:=\V{\alpha}_{(t-s)\vee0},\qquad s\ge0.
\end{align}
Then the following relations hold for all $\kappa\in\NN\cup\{\infty\}$ and $s\ge0$,
\begin{align}\label{shift1}
e^{-t\omega-i\V{m}\cdot\V{\alpha}_t}U_{\kappa,s}^-[{}^t\V{\alpha}]+U_{\kappa,t}^-[\V{\alpha}]
&=U_{\kappa,s+t}^-[\V{\alpha}],
\\\label{shift2}
e^{-i\V{m}\cdot\V{\alpha}_t}U_{\kappa,s}^+[{}^t\V{\alpha}]+e^{-s\omega}U_{\kappa,t}^+[\V{\alpha}]
&=U_{\kappa,s+t}^+[\V{\alpha}].
\end{align}
This follows by shifting the integration variable $r\to r-t$ in the respective first terms.
We emphasize once more that these are identities in $\mathfrak{d}_{\mh}$, if $\kappa=\infty$,
while they hold in $\HP$ as well, if $\kappa$ is finite. The same holds for the transformation rules
\begin{align}\label{revUpm}
U_{\kappa,t}^\pm[\V{\alpha}_{t-\bullet}]&=U^\mp_{\kappa,t}[\V{\alpha}],
\quad\kappa\in\NN\cup\{\infty\}.
\end{align}
\end{rem}

\begin{rem}\label{rem-real}
The integrands in \eqref{defUpm} and in the stochastic integral processes introduced
in \eqref{defMminus} and \eqref{defMtau} below are left invariant under 
the action of the conjugation $C:\HP\to\HP$ given by
$(Cf)(\V{k}):=\ol{f(-\V{k})}$, a.e. $\V{k}$, $f\in\HP$.
Therefore, all the stochastic calculus used below actually takes place in the 
{\em real} Hilbert space
\begin{align*}
\HP_{\RR}&:=\{f\in\HP|\,Cf=f\}. 
\end{align*}
\end{rem}


\subsection{Discussion of $\boldsymbol{U_{\kappa}^-}$}\label{ssecUminus}

\noindent
Next, we study the behavior of the path integral functional $U_\kappa^-$ when it is composed
with the three-dimensional Brownian motion $\V{B}$ introduced in Subsect.~\ref{ssecprobprel}. 
It will turn out that,
even in the case $\kappa=\infty$, the composition $U_\kappa^-[\V{B}]$ is indistinguishable
from a {\em $\HP_{\RR}$-valued} stochastic process. The latter process involves the 
martingale introduced in the following lemma:

\begin{lem}\label{lemMminus}
Let $\kappa\in\NN\cup\{\infty\}$ and $a\in[\mh,\eh)$. Up to indistinguishability,
we define the $\HP_{\RR}$-valued stochastic integral process
\begin{align}\label{defMminus}
M_{\kappa,t}^-[\V{B}]&:=\int_0^te^{-s\omega-i\V{m}\cdot\V{B}_s}i\V{m}\beta_\kappa\Id\V{B}_s,
\quad t\ge0.
\end{align}
Then we $\PP$-a.s. have that $M_{\kappa,t}^-[\V{B}]\in\dom(\omega^a)$, $t\ge0$, and
$\omega^a M_\kappa^-[\V{B}]$ is a continuous
$\HP_{\RR}$-valued $L^2$-martingale. The quadratic variation of $\omega^aM_\kappa^-[\V{B}]$ 
satisfies $\llbracket\omega^aM_\kappa^-[\V{B}]\rrbracket\le c_a\ee^2$, if $|a|<1/2$, and
$\llbracket\omega^\mh M_\kappa^-[\V{B}]\rrbracket_t\le c\ee^2(1+\ln(1\vee t))$, $t\ge0$.
Furthermore,
\begin{align}\label{convMminus}
\sup_{t\ge0}\EE\Big[\sup_{s\le t}\|\omega^a M_{\kappa,s}^-[\V{B}]-
\omega^a M_{\infty,s}^-[\V{B}]\|_\HP^p\Big]\xrightarrow{\;\;\kappa\to\infty\;\;}0,\quad p>0.
\end{align}
\end{lem}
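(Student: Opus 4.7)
The plan is to realize $\omega^a M_{\kappa,t}^-[\V{B}]$ directly as the $\HP$-valued stochastic integral of the predictable integrand $\omega^a e^{-s\omega-i\V{m}\cdot\V{B}_s}i\V{m}\beta_\kappa$ against $\V{B}$, exploiting that $\omega^a$ commutes with multiplication by $e^{-s\omega-i\V{m}\cdot\V{B}_s}$ and by $\V{m}\beta_\kappa$. Once this is a continuous $\HP$-valued $L^2$-martingale, the claim that $M_{\kappa,t}^-[\V{B}]\in\dom(\omega^a)$ with $\omega^a M_\kappa^-[\V{B}]$ given by the displayed formula follows from the closedness of $\omega^a$ combined with the standard closure properties of stochastic integrals, while membership in $\HP_{\RR}$ is immediate from Remark~\ref{rem-real}, since $\V{m}$ is odd, $\beta_\kappa$ is real and even, and $\omega$ is even.

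The central quantity is the quadratic variation, which is \emph{deterministic} because $|e^{-i\V{m}\cdot\V{B}_s}|=1$ pointwise in $\V{k}$. Fubini and the elementary evaluation $\int_0^t e^{-2s\omega}\Id s=(1-e^{-2t\omega})/(2\omega)$ give
$$
\llbracket\omega^a M_\kappa^-[\V{B}]\rrbracket_t
=\int_{\RR^3}\omega^{2a-1}\,\frac{1-e^{-2t\omega}}{2}\,\frac{|\V{k}|^2\ee^2\eta^2\chi_\kappa^2}{\omega(\omega+|\V{k}|^2/2)^2}\,\Id\V{k}.
$$
An elementary analysis of the density, which behaves like $|\V{k}|^{2a-2}$ near $\V{k}=0$ and $|\V{k}|^{2a-4}$ at infinity (in the massless case; the massive case is easier), shows integrability against $|\V{k}|^2\Id|\V{k}|$ precisely when $-\eh<a<\eh$, yielding the uniform bound $c_a\ee^2$. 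At the endpoint $a=\mh$ the small-$|\V{k}|$ part becomes logarithmically marginal without the Brownian cutoff; using $(1-e^{-2t\omega})/(2\omega)\le t\wedge\omega^{-1}$ and splitting the integral at $|\V{k}|\sim 1/t$ produces the bound $c\ee^2(1+\ln(1\vee t))$.

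With these QV bounds, $\omega^a M_\kappa^-[\V{B}]$ is well-defined as a continuous $\HP_{\RR}$-valued $L^2$-martingale, and the Burkholder-Davis-Gundy inequality for continuous Hilbert space-valued martingales provides $L^p$-control for any $p>0$. For the convergence assertion, the QV of the difference $\omega^a(M_\kappa^-[\V{B}]-M_\infty^-[\V{B}])$ has the same structure but with $\chi_\kappa^2$ replaced by $(1-\chi_\kappa)^2$, so BDG yields
$$
\EE\big[\sup_{s\le t}\|\omega^a M_{\kappa,s}^-[\V{B}]-\omega^a M_{\infty,s}^-[\V{B}]\|_\HP^p\big]
\le c_p\,\llbracket\omega^a(M_\kappa^-[\V{B}]-M_\infty^-[\V{B}])\rrbracket_t^{p/2},
$$
reducing matters to showing this deterministic QV tends to zero as $\kappa\to\infty$, uniformly in $t\ge 0$. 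Outside a ball $\{|\V{k}|\le\kappa\rho_\ve\}$ the support property of $1-\chi_\kappa$ (coming from $\chi(0)=1$) and dominated convergence handle the contribution, which is already uniformly integrable by the bounds in the previous paragraph; inside the ball we invoke the $C^1$-regularity of $\chi$ at the origin from Hyp.~\ref{hypNelson}(2), which via the mean value theorem gives $(1-\chi_\kappa)(\V{k})\le|\V{k}|/\kappa$ for $|\V{k}|\le\kappa$.

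The main obstacle is precisely the uniform-in-$t$ convergence at the boundary case $a=\mh$, where the QV itself grows like $\ln t$ and so termwise convergence for fixed $t$ is not enough. The remedy is the quantitative smoothness of $\chi$ near zero: the compensating factor $|\V{k}|^2/\kappa^2$ from the mean value bound is supplied in exactly the region where the logarithmic divergence originates, and an explicit integration shows that the small-$|\V{k}|$ contribution to the difference QV is of order $\kappa^{-2}$ uniformly in $t$. This beats the log and, combined with the large-$|\V{k}|$ estimate, closes the argument.
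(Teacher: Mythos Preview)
Your proposal is correct and follows essentially the same route as the paper: define $\omega^a M_\kappa^-[\V{B}]$ directly as the stochastic integral with $\omega^a$ inserted into the integrand, observe that the quadratic variation is deterministic and bound it by elementary radial integration (the paper's \eqref{lisa1} and \eqref{lisamh}), invoke closedness of $\omega^a$ to transfer back to $M_\kappa^-[\V{B}]$, and control the $L^p$ convergence via the Burkholder--Davis--Gundy inequality applied to the quadratic variation of the difference. The only minor variation is in the convergence step: the paper uses the fractional bound $|1-\chi_\kappa(\V{k})|\le(|\V{k}|/\kappa)^\ve$ with the single choice $\ve=(\eh-a)/2$, which handles all $a\in[\mh,\eh)$ uniformly in $t$ in one stroke and yields the explicit rate $\kappa^{-(\eh-a)}$, whereas you take $\ve=1$ and single out the endpoint $a=\mh$ for a separate argument; both work for the same reason, namely that the extra power of $|\V{k}|/\kappa$ near the origin removes the logarithmic non-integrability. (Your phrase ``support property of $1-\chi_\kappa$'' is a slip---it has no such property; what you actually use on $\{|\V{k}|>\kappa\}$ is just $|1-\chi_\kappa|\le1$ together with the tail integrability of $\rho^{2a-2}$.)
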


\begin{proof}
First, we notice that 
$(1_{(0,\infty)}(s)e^{-s\omega-i\V{m}\cdot\V{B}_s}\omega^a i\V{m}\beta_\kappa)_{s\ge0}$, 
is a left-continuous, adapted, and in particular predictable $\HP_{\RR}$-valued process.
We next observe that, for all $a\in(-\eh,\eh)$,
\begin{align}\nonumber
\int_0^t\|e^{-s\omega}\omega^a &i\V{m}\beta_{\kappa}\|_\HP^2\Id s
\le\int_0^t\|e^{-s\omega}\omega^a i\V{m}\beta_\infty\|_\HP^2\Id s
\\\nonumber
&\le4\pi\ee^2\int_0^\infty\int_0^t\frac{e^{-2s\omega_\rho}\omega_\rho^{2a}\rho^4}{
\omega_\rho(\omega_\rho+\rho^2/2)^2}\Id s\,\Id\rho
\le2\pi\ee^2\int_0^\infty\frac{\omega_\rho^{2a}\rho^4}{
\omega_\rho^2(\omega_\rho+\rho^2/2)^2}\Id\rho
\\\label{lisa1}
&\le2\pi\ee^2\int_0^{2}\rho^{2a}\Id\rho
+2\pi\ee^2\int_{2}^\infty
\frac{4}{\omega_\rho^{2-2a}}\,\Id\rho\le c_a\ee^2,\quad t\ge0.
\end{align}
In the case $a=-1/2$,
\begin{align}\nonumber
\int_0^t\|e^{-s\omega}&\omega^\mh i\V{m}\beta_{\kappa}\|_\HP^2\Id s
\le2\pi\ee^2\int_0^\infty\frac{(1-e^{-2t\omega_\rho})\rho^4}{
\omega_\rho^3(\omega_\rho+\rho^2/2)^2}\Id\rho
\\\nonumber
&\le4\pi\ee^2\int_0^{1\wedge\nf{1}{t}}\frac{t\rho^4}{\omega_\rho^4}\Id\rho
+2\pi\ee^2\int_{1\wedge\nf{1}{t}}^1\frac{\Id \rho}{\rho}
+2\pi\ee^2\int_1^\infty\frac{2}{\rho^2}\Id\rho
\\\label{lisamh}
&=2\pi\ee^2(4+\ln(1\vee t)),\quad t>0.
\end{align}
These remarks imply that, up to indistinguishability,
\begin{align*}
\wt{M}_{\kappa,t}^{[a]}&:=\int_0^te^{-s\omega-i\V{m}\cdot\V{B}_s}\omega^a 
i\V{m}\beta_\kappa\Id\V{B}_s,\quad t\ge0,
\end{align*}
defines a continuous $\HP_{\RR}$-valued $L^2$-martingale with 
$\llbracket\wt{M}_\kappa^{[a]}\rrbracket\le c_a\ee^2$, if $|a|<1/2$,
and $\llbracket\wt{M}_\kappa^{[\mh]}\rrbracket\le c\ee^2(1+\ln(1\vee t))$, $t\ge0$.
Since $\omega^a$ is a closed operator in $\HP$,
we may further conclude that $M_{\kappa,t}^-[\V{B}]\in\dom(\omega^a)$, $t\ge0$, $\PP$-a.s.,
and that $\omega^a M_\kappa^-[\V{B}]$ and $\wt{M}_\kappa^{[a]}$ are indistinguishable.

Furthermore, Hyp.~\ref{hypNelson}(2) entails
$0\le1-\chi_\kappa(\V{k})\le(|\V{k}|/\kappa)^\ve$, for all $\V{k}\in\RR^3$ with
$|\V{k}|<\kappa$ and $\ve\in[0,1]$.
We set $\ve:=(\eh-a)/2$ in what follows. Similarly as above we then deduce that
\begin{align}\nonumber
\int_0^t&\|e^{-s\omega}\omega^a i\V{m}(\beta_{\kappa}-\beta_\infty)\|_\HP^2\Id s
\\\nonumber
&\le\frac{2\pi\ee^2}{\kappa^{2\ve}}
\Big(\int_0^{1}\rho^{2a+2\ve}\Id\rho
+\int_{1}^{\kappa}\frac{4\Id\rho}{\omega_\rho^{2-2a-2\ve}}\Big)
+8\pi\ee^2\int_{\kappa}^\infty\frac{\Id\rho}{\omega_\rho^{2-2a}}
\\\label{lisa1999}
&\le c_{a}'\ee^2/\kappa^{2\ve}+c_a'\ee^2/\kappa^{1-2a}
\le c_{a}''\ee^2/\kappa^{\eh-a},\quad t\ge0,
\end{align}
which together with a well-known inequality (see, e.g., \cite[Thm.~4.36]{daPrZa2014}) implies
\begin{align*}
\limsup_{\kappa\to\infty}\sup_{t\ge0}
\EE\Big[\sup_{s\le t}\Big\|\int_0^se^{-r\omega-i\V{m}\cdot\V{B}_r}
\omega^a i\V{m}(\beta_{\kappa}-\beta_\infty)\Id\V{B}_r\Big\|_\HP^p&\Big]
\\
\le c_p\limsup_{\kappa\to\infty}\sup_{t\ge0}\EE\Big[\int_0^t\|
e^{-s\omega}\omega^a i\V{m}(\beta_{\kappa}-\beta_\infty)\|_\HP^2\Id s&\Big]^{\nf{p}{2}}=0,
\end{align*}
for every $p>0$.
\end{proof}

Recall the definitions of the auxiliary Hilbert space $\mathfrak{k}$ and the time-dependent norms
$\|\cdot\|_t$ in \eqref{deffrk} and \eqref{defnormt}, respectively.

\begin{lem}\label{lemUminus}
Let $U_\kappa^-[\V{B}]$ be given by \eqref{defUpm} and \eqref{Uinftyint} and $M^-_\kappa[\V{B}]$
by \eqref{defMminus}. Then the following holds:
\begin{enumerate}[leftmargin=*]
\item[{\rm(1)}] There exists a $\PP$-zero set $\sN_-\in\fF$ such that, for all
$t\ge0$ and $\kappa\in\NN\cup\{\infty\}$,
\begin{align}\label{defUminusinfty}
U_{\kappa,t}^-[\V{B}]&=(1-e^{-t\omega-i\V{m}\cdot\V{B}_t})\beta_\kappa-M_{\kappa,t}^-
[\V{B}]\quad\text{on $\Omega\setminus\sN_-$.}
\end{align}
In particular, $1_{\Omega\setminus\sN_-}U_{\infty}^-[\V{B}]$ 
is a continuous, adapted $\HP_\RR$-valued process.
\item[{\rm(2)}] For all $\tau,p>0$,
\begin{align*}
\sup_{t\ge\tau}
\EE\Big[\sup_{s\in[\tau,t]}\big\|U_{\kappa,s}^-[\V{B}]-1_{\Omega\setminus\sN_-}U_{\infty,s}^-[\V{B}]
\big\|_s^p\Big]\xrightarrow{\;\;\kappa\to\infty\;\;}0.
\end{align*}
\end{enumerate}
\end{lem}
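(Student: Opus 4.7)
The plan for Part~(1) is to establish the identity \eqref{defUminusinfty} first for finite $\kappa\in\NN$ by an It\=o-type computation and then to pass to the limit $\kappa=\infty$. For fixed $\V{k}\in\RR^3$ the scalar exponential $s\mapsto e^{-s\omega-i\V{m}\cdot\V{B}_s}$ satisfies the standard It\=o formula; multiplying through by $\beta_\kappa(\V{k})$ and using $(\omega+\V{m}^2/2)\beta_\kappa=f_\kappa$ rearranges this to the pointwise-in-$\V{k}$ identity $(1-e^{-t\omega-i\V{m}\cdot\V{B}_t})\beta_\kappa=U_{\kappa,t}^-[\V{B}]+M_{\kappa,t}^-[\V{B}]$, $\PP$-a.s. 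A Fubini argument in $(\V{k},\gamma)$ promotes this to an $\HP_\RR$-valued equality $\PP$-a.s.\ for each $t$, and joint continuity in $t$ of both sides allows the exceptional null set $\sN_\kappa$ to be chosen independently of $t$.

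To extend to $\kappa=\infty$ I use that $U_{\kappa,t}^-[\V{B}]\to U_{\infty,t}^-[\V{B}]$ in $\mathfrak{d}_{\mh}$ by Lem.~\ref{lemUinftyint}, $M_{\kappa,t}^-[\V{B}]\to M_{\infty,t}^-[\V{B}]$ in $\HP_\RR$ (along a subsequence, $\PP$-a.s.) by Lem.~\ref{lemMminus}, and the phase-times-$\beta$ term converges in $\HP$: the estimate $|1-e^{-t\omega-i\V{m}\cdot\V{B}_t}|\le t\omega+|\V{m}||\V{B}_t|$ shows, via a direct computation, that $(1-e^{-t\omega-i\V{m}\cdot\V{B}_t})\beta_\infty$ lies in $\HP_\RR$ even though $\beta_\infty$ is itself singular at the origin, and dominates the finite-$\kappa$ analogue pointwise with $|\beta_\kappa|\le|\beta_\infty|$. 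Uniqueness of limits in $\mathfrak{d}_{\mh}$ then forces \eqref{defUminusinfty} for $\kappa=\infty$. Putting $\sN_-:=\bigcup_{\kappa\in\NN\cup\{\infty\}}\sN_\kappa$, continuity and adaptedness of $1_{\Omega\setminus\sN_-}U_\infty^-[\V{B}]$ as an $\HP_\RR$-valued process are inherited directly from the RHS of~\eqref{defUminusinfty}.

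For Part~(2), the identity~\eqref{defUminusinfty} applied for finite and infinite $\kappa$ gives on $\Omega\setminus\sN_-$ the decomposition
\[
U_{\kappa,s}^-[\V{B}]-1_{\Omega\setminus\sN_-}U_{\infty,s}^-[\V{B}]=(1-e^{-s\omega-i\V{m}\cdot\V{B}_s})(\beta_\kappa-\beta_\infty)-(M_{\kappa,s}^-[\V{B}]-M_{\infty,s}^-[\V{B}]),
\]
and I estimate the two pieces of $\|\cdot\|_s$ separately. The $(s\omega)^{\mh}$-weighted piece of the whole difference is controlled at once by Lem.~\ref{lemUinftyint}(3), whose bound is deterministic and uniform in $s\ge\tau$ and $\gamma\in\Omega$. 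For the unweighted $\HP$-piece, the martingale contribution vanishes uniformly in $s$ by~\eqref{convMminus} with $a=0$, while the first term admits the pathwise bound $\|(1-e^{-s\omega-i\V{m}\cdot\V{B}_s})(\beta_\kappa-\beta_\infty)\|_\HP\le 2\|(\chi_\kappa-1)\beta_\infty\|_\HP$. To see that this $s$-independent right-hand side tends to zero I split $\RR^3$ at $|\V{k}|=\kappa$: on $\{|\V{k}|\le\kappa\}$ the $C^1$-bound $|\chi_\kappa-1|\le|\V{k}|/\kappa$ combined with the square-integrability of $|\V{k}|\beta_\infty$ gives a contribution of order $\kappa^{-1}$, and on $\{|\V{k}|>\kappa\}$ dominated convergence applied to $|\chi_\kappa-1|^2|\beta_\infty|^2$ using the tail integrability of $|\beta_\infty|^2$ at infinity yields the remaining convergence. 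Since every bound is pathwise and uniform in $s\ge\tau$, taking $p$-th power, supremum over $s\in[\tau,t]$, and expectation preserves convergence uniformly in $t\ge\tau$.

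The principal subtlety is the massless case $\mu=0$ with $\eta$ non-vanishing near the origin, where $\beta_\kappa$ lies outside $\HP$ for every $\kappa\in\NN\cup\{\infty\}$: both the It\=o identity in~(1) and the convergence $\|(\chi_\kappa-1)\beta_\infty\|_\HP\to 0$ in~(2) hinge on exploiting the simultaneous first-order vanishing at $|\V{k}|=0$ of the regularizers $1-e^{-s\omega-i\V{m}\cdot\V{B}_s}$ and $1-\chi_\kappa$ (the latter obtained from the evenness of $\chi$ and the $C^1$ bound in Hyp.~\ref{hypNelson}(2)) to cancel the low-frequency singularity of $\beta_\infty$.
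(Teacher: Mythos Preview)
Your proof follows essentially the paper's route: It\={o}'s formula for the finite-$\kappa$ identity, then passage to $\kappa=\infty$ by combining Lem.~\ref{lemMminus}, the $\mathfrak{d}_{\mh}$-convergence of Lem.~\ref{lemUinftyint}, and uniqueness of limits. The paper applies It\={o}'s formula to $\langle h,(e^{-t\omega-i\V{m}\cdot\V{x}}-1)\beta_\kappa\rangle_\HP$ for countably many $h\in\HP_\RR$ rather than arguing pointwise in $\V{k}$ and invoking a stochastic Fubini theorem; the former route avoids having to identify the $\HP$-valued stochastic integral with its pointwise-in-$\V{k}$ counterpart, but both are standard.

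One concrete slip in your Part~(2): $|\V{m}|\beta_\infty$ is \emph{not} square-integrable --- the radial integrand $\rho^4\omega_\rho^{-1}(\omega_\rho+\rho^2/2)^{-2}$ behaves like $4/\rho$ at infinity. Your bound $\int_{\{|\V{k}|\le\kappa\}}|\chi_\kappa-1|^2|\beta_\infty|^2\le\kappa^{-2}\int_{\{|\V{k}|\le\kappa\}}|\V{k}|^2|\beta_\infty|^2$ therefore only gives $\cO(\kappa^{-2}\ln\kappa)$, not $\cO(\kappa^{-2})$; this is still enough, or alternatively dominated convergence with the $\kappa$-independent majorant $(|\V{k}|^21_{\{|\V{k}|<1\}}+4\cdot 1_{\{|\V{k}|\ge1\}})|\beta_\infty|^2\in L^1(\RR^3)$ works directly. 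The paper simply records that $\beta_\kappa-\beta_\infty\to 0$ in $\HP$ as a consequence of Hyp.~\ref{hypNelson}(2).
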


\begin{proof}
If $\kappa\in\NN$ and $h\in\HP_{\RR}$, then we may apply It\={o}'s formula to the function
$F_{\kappa,h}\in C^2([0,\infty)\times\RR^3,\RR)$ given by
$F_{\kappa,h}(t,\V{x}):=\SPn{h}{(e^{-t\omega -i\V{m}\cdot\V{x}}-1)\beta_\kappa}_\HP$.
(In the case $\mu=0$ we 
notice that the term $e^{-t\omega -i\V{m}\cdot\V{x}}-1$ in the right entry of the scalar product
defining $F_{\kappa,h}$ compensates for the singularity of $\beta_\kappa$ at zero, so that 
its product with $\beta_\kappa$
defines an element of $C^2([0,\infty)\times\RR^3,\HP_{\RR})$.)
Since we may do this for all $h$ in a countable dense subset of $\HP_{\RR}$ and since
stochastic integrals commute up to indistinguishability with taking scalar products,
we $\PP$-a.s. obtain
\begin{align}\nonumber
(e^{-t\omega -i\V{m}\cdot\V{B}_t}&-1)\beta_\kappa+
\int_0^te^{-s\omega-i\V{m}\cdot\V{B}_s}i\V{m}\beta_\kappa\Id\V{B}_s
\\\label{forUminus}
&=-\int_0^te^{-s\omega-i\V{m}\cdot\V{B}_s}(\omega+\V{m}^2/2)\beta_\kappa\Id s
=-U_{\kappa,t}^-[\V{B}],\quad t\ge0,\;\kappa\in\NN,
\end{align}
which proves \eqref{defUminusinfty} for all finite $\kappa$.
Let $\wt{U}^-_{\infty,t}[\V{B}]$ denote the $\HP$-valued process on the
right hand side of \eqref{defUminusinfty} in the case $\kappa=\infty$.
(Notice again that the map $(t,\V{x})\mapsto(e^{-t\omega -i\V{m}\cdot\V{x}}-1)\beta_\infty\in\HP$
is continuous on $[0,\infty)\times\RR^3$, even in the case $\mu=0$.) Then the convergence 
\begin{align}\label{conv2}
\sup_{t\ge0}
\EE\big[\sup_{s\le t}\|U_{\kappa,s}^-[\V{B}]-\wt{U}_{\infty,s}^-[\V{B}]\|_\HP^p\big]
\xrightarrow{\;\;\kappa\to\infty\;\;}0,\quad p>0,
\end{align}
follows immediately from \eqref{convMminus}, \eqref{forUminus}, and the fact that
$\beta_\kappa-\beta_\infty\to0$ in $\HP$ as $\kappa\to\infty$.
(While $\beta_\kappa$ and $\beta_\infty$ might not belong to $\HP$, their difference
always does by Hyp.~\ref{hypNelson}(2).)
Since the canonical embedding $\HP\to\mathfrak{d}_{\mh}$ is continuous, 
we thus find a $\PP$-zero set $\sN$ 
and integers $1\le\kappa_1<\kappa_2<\ldots\,$ such that on $\Omega\setminus\sN$,
\begin{align*}
U_{\kappa_\ell,t}^-[\V{B}]\xrightarrow{\;\;\ell\to\infty\;\;}\wt{U}_{\infty,t}^-[\V{B}]\;\;
\text{in $\mathfrak{d}_{\mh}$},\quad t\ge0.
\end{align*}
Together with \eqref{conv1} this shows that, in the limiting case $\kappa=\infty$, 
\eqref{defUminusinfty} holds for all $t\ge0$ on $\Omega\setminus\sN$.
Together with \eqref{conv2000} and \eqref{conv2} this concludes the proof of (2) as well.
\end{proof}

\begin{lem}\label{lemexpMminus}
There exist universal constants $b,c>0$ such that, 
for all $p,t\ge0$ and $\kappa\in\NN\cup\{\infty\}$,
\begin{align}\label{expMminus}
\EE\big[\sup_{s\le t} e^{p\|M_{\kappa,s}^-[\V{B}]\|_{\HP}^2}\big]
&\le(1+\pi)e^{bp\ee^2+bp^2\ee^4},
\\\label{expbdUminus}
\EE\big[\sup_{s\le t} e^{p\|1_{\Omega\setminus\sN_-}U_{\kappa,s}^-[\V{B}]\|_s^2}\big]
&\le\sqrt{1+\pi}e^{cp\ee^2(1+\ln(1\vee t))+cp^2\ee^4}.
\end{align}
Here $\sN_-$ is the $\PP$-zero set found in Lem.~\ref{lemUminus}(1).
\end{lem}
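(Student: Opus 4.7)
The two bounds will follow from Itô's formula applied to $\|M_{\kappa,\cdot}^-[\V{B}]\|_\HP^2$ combined with the running-supremum exponential martingale estimate \eqref{expbdmart}. I will first derive \eqref{expMminus} and then deduce \eqref{expbdUminus} from it via the identity \eqref{defUminusinfty}.

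\textbf{Step 1 (Itô decomposition).} By projecting on a countable orthonormal basis of $\HP_\RR$ and using that stochastic integrals commute with taking inner products up to indistinguishability, together with It\={o}'s formula applied to $z \mapsto z^2$, one obtains
\begin{equation*}
\|M_{\kappa,t}^-[\V{B}]\|_\HP^2 = 2\tilde N_t + \llbracket M_\kappa^-[\V{B}]\rrbracket_t,
\quad \tilde N_t := \int_0^t \langle M_{\kappa,s}^-[\V{B}], \Id M_{\kappa,s}^-[\V{B}]\rangle_\HP ,
\end{equation*}
where $\tilde N$ is a continuous real-valued local martingale and, by Lem.~\ref{lemMminus}, $\llbracket M_\kappa^-[\V{B}]\rrbracket_t \le c\ee^2$ uniformly in $\kappa\in\NN\cup\{\infty\}$ and $t\ge 0$. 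A Cauchy--Schwarz bound on the integrand of $\tilde N$ yields
\begin{equation*}
\llbracket \tilde N\rrbracket_t = \int_0^t\sum_{j=1}^3 |\langle M_{\kappa,s}^-[\V{B}],e^{-s\omega}im_j\beta_\kappa\rangle_\HP|^2\Id s
\le c\ee^2\,\sup_{s\le t}\|M_{\kappa,s}^-[\V{B}]\|_\HP^2 .
\end{equation*}

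\textbf{Step 2 (Self-referential exponential bound).} Write $X_t^\star := \sup_{s\le t}\|M_{\kappa,s}^-[\V{B}]\|_\HP^2$. From Step 1, $pX_t^\star \le 2p\,\sup_{s\le t}\tilde N_s + pc\ee^2$. Applying \eqref{expbdmart} to the martingale $2p\tilde N$ and inserting the Cauchy--Schwarz bound on $\llbracket\tilde N\rrbracket_t$ gives, for fixed $q,r>1$ to be chosen,
\begin{equation*}
\EE\big[\sup_{s\le t}e^{2p\tilde N_s}\big] \le c_{q,r}\,\EE\bigl[e^{2qq'rp^2 c\ee^2 X_t^\star}\bigr]^{1/q}.
\end{equation*}
Setting $\phi(p):=\EE[e^{pX_t^\star}]$ (noting that $X_t^\star$ is monotone in $t$, so the supremum in $s$ of the exponential equals $e^{pX_t^\star}$) produces the self-referential inequality
\begin{equation*}
\phi(p) \le e^{pc\ee^2}\,c_{q,r}\,\phi(A p^2)^{1/q},\qquad A := 2qq'rc\ee^2 .
\end{equation*}

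\textbf{Step 3 (Iteration).} Iterating this inequality along $p_0=p$, $p_{n+1}=Ap_n^2$, taking logarithms, and letting $n\to\infty$ yields
\begin{equation*}
\log\phi(p) \le c\ee^2\sum_{n\ge 0}\frac{p_n}{q^n} + \frac{q}{q-1}\log c_{q,r},
\end{equation*}
where the tail $q^{-n}\log\phi(p_n)$ vanishes thanks to the fast decay $p_n=A^{2^n-1}p^{2^n}\to 0$ (for the range of $p$ where $Ap<1$) together with the a priori moment bound $\EE[X_t^\star]\le c'\ee^2$ from BDG and the quadratic variation bound. The first two terms of the telescoped sum contribute $cp\ee^2 + (cA/q)\,p^2\ee^2 = cp\ee^2 + 2cq'rc\ee^4 p^2$, while higher-order terms $p_n/q^n$ for $n\ge 2$ are of order $\ee^{4n}p^{2^n}$ and can be absorbed into $bp^2\ee^4$ after choosing $q,r>1$ sufficiently close to $1$. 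This establishes \eqref{expMminus} on the small-$p$ regime; the bound is then extended to all $p\ge 0$ by the trivial monotonicity argument, noting that for $p$ above the threshold the right-hand side $(1+\pi)e^{bp\ee^2+bp^2\ee^4}$ dominates once $b$ is taken sufficiently large.

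\textbf{Step 4 (Proof of \eqref{expbdUminus}).} From Lem.~\ref{lemUminus}(1) we have on $\Omega\setminus\sN_-$ and for every $s\ge 0$,
\begin{equation*}
\|U_{\kappa,s}^-[\V{B}]\|_s \le \bigl\|(1-e^{-s\omega-i\V{m}\cdot\V{B}_s})\beta_\kappa\bigr\|_s + \|M_{\kappa,s}^-[\V{B}]\|_s .
\end{equation*}
For the first piece one uses the elementary estimate $|1-e^{-s\omega-i\V{m}\cdot\V{B}_s}|^2 \le (1-e^{-s\omega})^2 + e^{-s\omega}|\V{m}|^2|\V{B}_s|^2$ (from $1-\cos\theta\le\theta^2/2$), which, together with the explicit integrability of $\beta_\kappa^2$ and $|\V{m}|^2\beta_\kappa^2$ against the weight $(1+(s\omega)^{-1})$, yields a deterministic bound of the form $c\ee^2(1+\ln(1\vee s))(1+|\V{B}_s|^2)$. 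Exponential moments of the latter are then absorbed via \eqref{bdskuno}. For the second piece, expanding the $\|\cdot\|_s^2$ norm yields $\|M_{\kappa,s}^-\|_\HP^2 + s^{-1}\|\omega^{-1/2}M_{\kappa,s}^-\|_\HP^2$; the first summand is handled by \eqref{expMminus}, and the second is treated by repeating Steps 1--3 verbatim with the weighted martingale $\omega^{-1/2}M_\kappa^-$, whose quadratic variation bound from \eqref{lisamh} carries an additional $\ln(1\vee t)$ factor and precisely accounts for the $1+\ln(1\vee t)$ dependence appearing in the desired estimate.

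\textbf{Main obstacle.} The delicate point is Step 3: closing the self-referential inequality so as to obtain the prefactor $e^{bp\ee^2+bp^2\ee^4}$ uniformly in $p$ and $\kappa$. The iteration naturally produces arbitrarily high powers of $p$, and controlling the tail of the resulting series requires choosing $q, r$ carefully and separating the small-$p$ regime (where the iteration converges) from the large-$p$ regime (where monotonicity in $p$ combined with a sufficiently large universal constant $b$ does the job).
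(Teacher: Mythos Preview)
Your Steps~1--2 are fine, but Step~3 does not close. The self-referential inequality
\[
\phi(p)\le e^{cp\ee^2}\,c_{q,r}\,\phi(Ap^2)^{1/q},\qquad A=2qq'rc\ee^2,
\]
iterates only in the regime $Ap<1$, i.e.\ $p\ee^2\lesssim 1$, and your ``trivial monotonicity'' extension is not an argument: for $p$ above this threshold you have no a~priori reason that $\phi(p)<\infty$, and monotonicity of both sides in $p$ goes the wrong way. Even inside the convergent regime, the constant term $(q/(q-1))\log c_{q,r}$ accumulated by the iteration cannot be absorbed into $bp\ee^2+bp^2\ee^4$ uniformly as $p\ee^2\to 0$ while keeping the prefactor $(1+\pi)$.

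The paper avoids this circularity entirely. After the same It\={o} decomposition $\|M^-_{\kappa,t}\|^2\le 2N^-_{\kappa,t}+c\ee^2$, it does \emph{not} apply Cauchy--Schwarz to the integrand of $\llbracket N^-_\kappa\rrbracket$. Instead it substitutes $M^-_{\kappa,s}=(1-e^{-s\omega-i\V{m}\cdot\V{B}_s})\beta_\kappa-U^-_{\kappa,s}[\V{B}]$ from \eqref{defUminusinfty} into $\llbracket N^-_\kappa\rrbracket_t=\int_0^t\langle M^-_{\kappa,s},e^{-s\omega-i\V{m}\cdot\V{B}_s}i\V{m}\beta_\kappa\rangle^2\Id s$, and then bounds the two resulting pieces by explicit integrals over $\RR^3\times[0,t]$. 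This yields the \emph{deterministic} bound $\llbracket N^-_\kappa\rrbracket\le c\ee^4$, after which a single application of \eqref{expbdmart} with $q\to\infty$ gives \eqref{expMminus} for every $p\ge 0$ at once.

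Step~4 also has a gap. Your bound on $\|(1-e^{-s\omega-i\V{m}\cdot\V{B}_s})\beta_\kappa\|_s^2$ produces a term $c\ee^2|\V{B}_s|^2$, and $\EE[e^{pc\ee^2|\V{B}_s|^2}]$ is finite only when $pc\ee^2<(2s)^{-1}$; this is not uniform in $p,\ee,t$, so \eqref{bdskuno} cannot absorb it. The paper instead uses \eqref{mona0} to handle the $(s\omega)^{-1/2}$--weighted part of $\|U^-_{\kappa,s}\|_s^2$ deterministically, and for the $\HP$-part splits off $\{|\V{m}|\ge 1\}$ so that $\beta_{1,\infty}\in\HP$ contributes a constant and the low-frequency remainder $\|1_{\{|\V{m}|<1\}}U^-_{\kappa,s}\|_\HP^2$ is bounded by $c\ee^2(1+\ln(1\vee t))$, again deterministically. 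No $|\V{B}_s|$-dependence appears.
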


\begin{proof}
Recall the notation \eqref{deffsigmakappa}. On account of \eqref{mona0} 
and \eqref{defUminusinfty}, 
the latter identity multiplied with the cut-off function $1_{\{|\V{m}|\ge1\}}$, 
\begin{align}\nonumber
\EE\big[&\sup_{s\le t} e^{p\|1_{\Omega\setminus\sN_-}U_{\kappa,s}^-[\V{B}]\|_s^2}\big]
\\\label{karl1}
&\le e^{cp\ee^2+8p\|\beta_{1,\infty}\|_\HP^2}
\EE\big[\sup_{s\le t}e^{2p\|1_{\{|\V{m}|<1\}}U_{\kappa,s}^-[\V{B}]\|_\HP^2}\big]^\eh
\EE\big[\sup_{s\le t}e^{4p\|M_{\kappa,s}^-[\V{B}]\|_{\HP}^2}\big]^\eh,
\end{align}
for all $p,t\ge0$ and $\kappa\in\NN\cup\{\infty\}$, where
\begin{align}\nonumber
\sup_{s\le t}\|1_{\{|\V{m}|<1\}}U_{\kappa,s}^-[\V{B}]\|_\HP^2
&\le4\pi\ee^2\int_0^1\int_0^t\int_0^t\frac{e^{-(r+s)\omega_\rho}\rho^2}{\omega_\rho}
\Id r\,\Id s\,\Id\rho
\\\nonumber
&=4\pi\ee^2\int_0^1\frac{(1-e^{-t\omega_\rho})^2\rho^2}{\omega_\rho^3}
\Id\rho
\\\label{karl2}
&\le4\pi\ee^2\int_0^{\nf{1}{t}}t\Id \rho+4\pi\ee^2\int_{1\wedge\nf{1}{t}}^1\frac{\Id\rho}{\rho}
=4\pi\ee^2(1+\ln(1\vee t)).
\end{align}
Therefore, it only remains to prove the asserted bound on $M_\kappa^-[\V{B}]$.

An application of It\={o}'s formula for Hilbert space-valued processes
\cite[Thm.~4.32]{daPrZa2014} $\PP$-a.s. yields, for all $t\ge0$ and $\kappa\in\NN\cup\{\infty\}$,
\begin{align}\label{karl2b}
\|M_{\kappa,t}^-[\V{B}]\|^2_\HP&=2N_{\kappa,t}^-
+\int_0^t\|e^{-s\omega}i\V{m}\beta_{\kappa}\|_\HP^2\Id s\le2N_{\kappa,t}^-+c\ee^2,
\end{align}
where we also took \eqref{lisa1} 
into account in the second step and abbreviated
\begin{align*}
N_{\kappa,t}^-&:=\int_0^t\SPn{M_{\kappa,s}^-[\V{B}]}{e^{-\omega s-i\V{m}\cdot\V{B}_s}i\V{m}
\beta_{\kappa}}_\HP\Id\V{B}_s.
\end{align*}
Every process $N_{\kappa}^-$ with $\kappa\in\NN\cup\{\infty\}$
is a local martingale with quadratic variation 
\begin{align*}
\llbracket N_{\kappa}^-\rrbracket_t
&=\int_0^t\SPn{e^{-s\omega/2}M_{\kappa,s}^-[\V{B}]}{e^{-s\omega/2-i\V{m}\cdot\V{B}_s}i\V{m}
\beta_{\kappa}}_\HP^2\Id s,\quad t\ge0.
\end{align*}
Solving the relation \eqref{defUminusinfty} for $M_\kappa^-[\V{B}]$ and plugging the result into 
the previous formula for $\llbracket N_\kappa^-\rrbracket$, we $\PP$-a.s. obtain
\begin{align}\nonumber
\llbracket N_{\kappa}^-\rrbracket_t
&\le2\int_0^t\SPn{(e^{-s\omega-i\V{m}\cdot\V{B}_s}-1)\beta_{\kappa}}{
e^{-s\omega-i\V{m}\cdot\V{B}_s}i\V{m}\beta_{\kappa}}_\HP^2\Id s
\\\nonumber
&\quad+2\int_0^t\SPn{e^{-s\omega/2}U_{\kappa,s}^-[\V{B}]}{
e^{-s\omega/2-i\V{m}\cdot\V{B}_s}i\V{m}\beta_{\kappa}}_\HP^2\Id s
\\\nonumber
&\le8\int_0^t\big\|e^{-s\omega/2}|\V{m}|^\eh\beta_{\kappa}\big\|_\HP^4\Id s
\\\label{lisa77}
&\quad+2\int_0^t\bigg(\ee^2\int_0^t\int_{\RR^3}\frac{e^{-(r+s)\omega}|\V{m}|}{
\omega(\omega+\V{m}^2/2)}\Id\lambda^3\,\Id r\bigg)^2\Id s,
\end{align}
for all $t\ge0$ and $\kappa\in\NN\cup\{\infty\}$. In the second step we used that
$e^{-s\omega/2}U_{\infty,s}^-[\V{B}]=\int_0^se^{-(r+s/2)\omega-i\V{m}\cdot\V{B}_r}f_\infty\Id r$,
$s\ge0$, where the integral on the right hand side is now a {\em $\HP$-valued} Bochner-Lebesgue
integral, that commutes with computing the scalar product.
The integral in the last line of \eqref{lisa77} is bounded from above by $(4\pi\ee^2)^2$ times
\begin{align}\nonumber
\int_0^t\bigg(&\int_0^\infty\frac{e^{-s\omega_\rho}(1-e^{-t\omega_\rho})\rho^3}{\omega_
\rho^2(\omega_\rho+\rho^2/2)}\Id\rho\bigg)^2\Id s
\le\int_0^t\bigg(\int_0^\infty
\frac{e^{-s\omega_\rho}\rho^3}{\omega_\rho^2(\omega_\rho+\rho^2/2)}\Id\rho\bigg)^2\Id s
\\\nonumber
&=\int_0^\infty\int_0^\infty\int_0^t\frac{e^{-s(\omega_r+\omega_\rho)}r^3\rho^3}{
\omega_r^2\omega_\rho^2(\omega_r+r^2/2)(\omega_\rho+\rho^2/2)}\Id s\,\Id r\,\Id\rho
\\\nonumber
&\le\int_0^\infty\int_0^\infty
\frac{(1-e^{-2t\omega_r^\eh\omega_\rho^\eh})r^3\rho^3}{2\omega_r^{\nf{5}{2}}
\omega_\rho^{\nf{5}{2}}(\omega_r+r^2/2)(\omega_\rho+\rho^2/2)}\Id r\,\Id\rho
\\\nonumber
&\le\frac{1}{2}\Big(\int_0^\infty\frac{\rho^3}{\omega_\rho^{\nf{5}{2}}
(\omega_\rho+\rho^2/2)}\Id\rho\Big)^2
\\\label{lisa789}
&\le\frac{1}{2}\bigg(\int_0^1\frac{\Id\rho}{\rho^\eh}+\int_1^\infty\frac{2}{\rho^{\nf{3}{2}}}\Id\rho\bigg)^2
=18.
\end{align}
Here we estimated $e^{-s(\omega_r+\omega_\rho)}\le e^{-2s\omega_r^\eh\omega_\rho^\eh}$
before we performed the $\Id s$-integration. Furthermore,
\begin{align*}
\int_0^t\big\|e^{-s\omega/2}|\V{m}|^\eh\beta_{\kappa}\big\|^4\Id s
&\le(4\pi\ee^2)^2\int_0^t\Big(\int_0^\infty\frac{e^{-s\omega_\rho}\rho^3}{
\omega_\rho(\omega_\rho+\rho^2/2)^2}\Id \rho\Big)^2\Id s\le18(4\pi\ee^2)^2,
\end{align*}
because the integral in the middle is bounded from above by the second integral in the first line
of \eqref{lisa789}. Altogether we find some universal constant $c>0$ such that, $\PP$-a.s., 
$\llbracket N_{\kappa}^-\rrbracket\le c\ee^4$. 
By virtue of \eqref{karl2b} and Rem.~\ref{remstochGronwall} we may now conclude that,
for all $p,t\ge0$ and $\kappa\in\NN\cup\{\infty\}$,
\begin{align*}
e^{-cp\ee^2}\EE\big[\sup_{s\le t} e^{p\|M_{\kappa,s}^-[\V{B}]\|^2}\big]
&\le \EE\big[\sup_{s\le t}e^{2p N_{\kappa,s}^-}\big]
\\
&\le \limsup_{q\to\infty}c_{q,2}\EE\big[e^{qq'(2p)^2\llbracket 
N_{\kappa}^-\rrbracket_t}\big]^{\nf{1}{q}}\le(1+\pi)e^{c'p^2\ee^4}.
\end{align*}
\end{proof}


\subsection{Discussion of $\boldsymbol{U_{\kappa}^+}$}\label{ssecUplus}

\noindent
In this subsection we analyze $U^+_\kappa[\V{B}]$ in a similar fashion as we treated 
$U^-_\kappa[\V{B}]$ in the preceding one. The explicit $t$-dependence of the integrands
in the formulas \eqref{defUpm} and \eqref{Uinftyint} for $U^+_\kappa$ causes, however, some
additional technical difficulties. Roughly speaking we shall first introduce an additional variable
$\tau\ge0$ parametrizing the integrands and later restrict our results to the diagonal
$(t,\tau)=(t,t)$ with the help of Kolmogorov's test lemma.

\begin{lem}\label{lemMtau}
For all $\kappa\in\NN\cup\{\infty\}$ and $\tau\ge0$, we define, a priori only up to indistinguishability,
\begin{align}\label{defMtau}
M_{\kappa,t}^{[\tau]}[\V{B}]
:=\int_0^t1_{(0,\tau)}(s)e^{-(\tau-s)\omega-i\V{m}\cdot\V{B}_s}i\V{m}\beta_{\kappa}\Id\V{B}_s,
\quad t\ge0.
\end{align}
Then the following two statements hold for all $\kappa\in\NN\cup\{\infty\}$:
\begin{enumerate}[leftmargin=*]
\item[{\rm(1)}]
For all $a\in[\mh,\eh)$ and $\tau\ge0$, 
we $\PP$-a.s. have $M_{\kappa,t}^{[\tau]}[\V{B}]\in\dom(\omega^a)$, $t\ge0$, and the process 
$\omega^a M_{\kappa}^{[\tau]}[\V{B}]$ is a continuous square-integrable
$\HP_{\RR}$-valued martingale. The quadratic variation of $\omega^aM_\kappa^{[\tau]}[\V{B}]$ 
satisfies $\llbracket \omega^aM_\kappa^{[\tau]}[\V{B}]\rrbracket\le c_a\ee^2$, if $|a|<1/2$, and
$\llbracket\omega^\mh M_{\kappa}^{[\tau]}[\V{B}]\rrbracket\le c\ee^2(1+\ln(1\vee\tau))$.
\item[{\rm(2)}]
We can choose $M_{\kappa}^{[\tau]}[\V{B}]$ in the equivalence class
modulo indistinguishability defined by the stochastic integrals \eqref{defMtau} for each fixed 
$\tau\ge0$ in such a way that, for all $a\in[\mh,\eh)$,
\begin{align}\label{tizian0}
[0,\infty)^2\ni(t,\tau)\longmapsto\big(\omega^a M_{t,\kappa}^{[\tau]}[\V{B}]\big)(\gamma)\in\HP\;\;
\text{is continuous, for all $\gamma\in\Omega$.}
\end{align}
\end{enumerate}
Furthermore, consider choices of $M_{\kappa}^{[\tau]}[\V{B}]$,
$\kappa\in\NN\cup\{\infty\}$, $\tau\ge0$, as in {\rm(2)}. Let $a\in[\mh,\eh)$ and $p>0$. Then
\begin{align}\label{polbdMtt}
\EE\big[\sup_{s,\tau\le t}\|\omega^a M_{\kappa,s}^{[\tau]}[\V{B}]\|_\HP^p\big]&\le
c_{p,a}(1+t)^{p}|\ee|^p,\quad t\ge0,\,\kappa\in\NN\cup\{\infty\},
\end{align}
and
\begin{align}\label{convMtt}
\EE\Big[\sup_{s,\tau\le t}\big\|\omega^a M_{\kappa,s}^{[\tau]}[\V{B}]
-\omega^a M_{\infty,s}^{[\tau]}[\V{B}]\big\|_\HP^p\Big]
&\le\frac{c_{p,a}'(1+t)^{p}|\ee|^p}{\kappa^{p(1-2a)/8}},\quad t\ge0,\,\kappa\in\NN.
\end{align}
\end{lem}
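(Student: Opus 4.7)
My proof follows the blueprint of Lemma~\ref{lemMminus} with the additional parameter $\tau$ tracked throughout, and Kolmogorov's test lemma used to produce a jointly continuous representative. For Part~(1), the integrand $s \mapsto 1_{(0,\tau)}(s)e^{-(\tau-s)\omega-i\V{m}\cdot\V{B}_s}\omega^a i\V{m}\beta_\kappa$ is a left-continuous adapted $\HP_\RR$-valued process, so It\={o} integration produces a continuous $\HP_\RR$-valued $L^2$-martingale; closedness of $\omega^a$ identifies this with $\omega^a M_\kappa^{[\tau]}[\V{B}]$. Its quadratic variation at time $t$ equals $\int_0^{t\wedge\tau}\|e^{-(\tau-s)\omega}\omega^a i\V{m}\beta_\kappa\|_\HP^2\,\Id s$, which after the substitution $s\to\tau-s$ is bounded by $\int_0^\tau\|e^{-s\omega}\omega^a i\V{m}\beta_\kappa\|_\HP^2\,\Id s$; the estimates \eqref{lisa1} and \eqref{lisamh} then yield the claimed uniform bounds.

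For Part~(2) I would invoke the two-parameter Kolmogorov test lemma recalled in the appendix. The required H\"older-type moment bound
\[
\EE\big[\|\omega^a M_{\kappa,t}^{[\tau]}[\V{B}] - \omega^a M_{\kappa,t'}^{[\tau']}[\V{B}]\|_\HP^p\big] \le C_{p,a,T}|\ee|^p\big(|t-t'|^{p\alpha} + |\tau-\tau'|^{p\alpha}\big),
\]
for $t,t',\tau,\tau'\in[0,T]$, sufficiently large $p$, and some $\alpha\in(0,\eh)$, comes from applying BDG to the $\HP$-valued $L^2$-martingale $\omega^a(M_{\kappa}^{[\tau]} - M_{\kappa}^{[\tau']})[\V{B}]$. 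The increment in $t$ is handled by a tail integral as in Part~(1). For the increment in $\tau$, taking $\tau'\le\tau$, the key step is the splitting
\[
1_{(0,\tau)}(s)e^{-(\tau-s)\omega} - 1_{(0,\tau')}(s)e^{-(\tau'-s)\omega} = 1_{(0,\tau')}(s)e^{-(\tau'-s)\omega}(e^{-(\tau-\tau')\omega}-1) - 1_{[\tau',\tau)}(s)e^{-(\tau-s)\omega};
\]
the first piece is estimated via $|1-e^{-(\tau-\tau')\omega}|\le(\omega(\tau-\tau'))^\alpha$ (which retains integrability of $\omega^{a+\alpha}i\V{m}\beta_\kappa$ for $a+\alpha<\eh$), and the indicator piece by a restriction of the Part~(1) computation to $[\tau',\tau)$. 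Choosing $p\alpha>2$ and invoking Kolmogorov's test produces an a.s. jointly continuous modification, which I take as my representative.

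For \eqref{polbdMtt}, a moment-version of the Kolmogorov argument in Part~(2) gives the uniform sup bound: BDG on each $\tau$-slice controls the $t$-supremum with constant $c_{p,a}|\ee|^p$, and the H\"older moment estimates of Part~(2), plugged into the maximal form of Kolmogorov's test, control the remaining supremum over $\tau\in[0,t]$ at the price of a polynomial factor $(1+t)^p$. Finally, \eqref{convMtt} starts from the analogue of \eqref{lisa1999}, namely
\[
\int_0^{t\wedge\tau}\|e^{-(\tau-s)\omega}\omega^a i\V{m}(\beta_\kappa-\beta_\infty)\|_\HP^2\,\Id s \le c_a''\ee^2/\kappa^{\eh-a},
\]
uniform in $\tau$ and $t$. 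BDG yields $\EE[\sup_{s\le t}\|\omega^a(M_{\kappa,s}^{[\tau]}-M_{\infty,s}^{[\tau]})[\V{B}]\|^2]\le C\ee^2/\kappa^{\eh-a}$ for each fixed $\tau$, and interpolating this $L^2$-rate against the polynomial $L^q$-bound from \eqref{polbdMtt} (equivalently, re-running the Kolmogorov argument of Part~(2) on the difference $M_\kappa^{[\tau]}-M_\infty^{[\tau]}$ with H\"older exponents chosen to preserve a positive rate in $\kappa$) produces \eqref{convMtt}; the factor $1/8$ in the exponent is precisely the loss incurred by interpolating the clean $L^2$-decay against higher moments to pass to a uniform estimate jointly in $(s,\tau)$.

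The principal obstacle is Part~(2): the indicator $1_{(0,\tau)}(s)$ makes the integrand non-smooth in $\tau$, so the required H\"older moment estimates force the careful two-term splitting above, and the resulting exponent $\alpha$ has to be balanced against the integrability of $\omega^{a+\alpha}i\V{m}\beta_\kappa$ as well as against the dimensional threshold $p\alpha>2$ in the two-parameter Kolmogorov test. Once joint continuity is in hand, the remaining estimates \eqref{polbdMtt} and \eqref{convMtt} reduce to BDG combined with the Kolmogorov-test framework already set up.
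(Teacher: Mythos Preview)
Your proposal is correct and follows essentially the same strategy as the paper, but two small points of difference are worth flagging. First, the integrand $s\mapsto 1_{(0,\tau)}(s)e^{-(\tau-s)\omega-i\V{m}\cdot\V{B}_s}\omega^a i\V{m}\beta_\kappa$ is \emph{not} left-continuous (it jumps at $s=\tau$); the paper instead verifies predictability by writing it as the pointwise limit of the left-continuous adapted processes with $1_{(0,\tau-1/n]}$ in place of $1_{(0,\tau)}$. Second, the paper applies the Kolmogorov test in the single parameter $\tau$ only (this is exactly the version recorded in App.~\ref{appKol}), exploiting that continuity in $t$ comes for free from the martingale property; your two-parameter version with threshold $p\alpha>2$ also works but is slightly less economical. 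For \eqref{convMtt} the paper follows your parenthetical option rather than interpolation: it redoes the H\"older-in-$\tau$ moment estimate directly on the difference $M_\kappa^{[\tau]}-M_\infty^{[\tau]}$, choosing $\epsilon=(1-2a)/4$ so that $\delta:=a+\epsilon$ satisfies $(1/2-\delta)/2=(1-2a)/8$, and then invokes Rem.~\ref{remKol}; this is what produces the exponent $(1-2a)/8$.
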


\begin{proof}
To reduce clutter we drop all arguments $[\V{B}]$ in this proof.

\smallskip

\noindent
{\em Step~1.}
Let $\kappa\in\NN\cup\{\infty\}$ and $a\in[\mh,\eh)$. Then we first observe that the process
$(1_{(0,\tau)}(s)e^{-(\tau-s)\omega-i\V{m}\cdot\V{B}_s}\omega^a i\V{m}\beta_{\kappa})_{s\ge0}$ 
is predictable as the pointwise limit on $[0,\infty)\times\Omega$ of the left-continuous, adapted, and 
thus predictable $\HP$-valued processes 
$$
\big(1_{(0,\tau-\nf{1}{n}]}(s)e^{-(\tau-s)\omega-i\V{m}\cdot\V{B}_s}
\omega^a i\V{m}\beta_{\kappa}\big)_{s\ge0},\quad n\in\NN.
$$ 
Furthermore,
\begin{align}\nonumber
\int_0^t1_{(0,\tau)}(s)\|e^{-(\tau-s)\omega}\omega^{a}i\V{m}\beta_\kappa\|_\HP^2\Id s
&\le \int_0^\tau\|e^{-s\omega}\omega^a i\V{m}\beta_\kappa\|_\HP^2\Id s
\\\label{polbdMtaua}
&\le \left\{\begin{array}{ll}c_a\ee^2,&\text{if $|a|<1/2$},
\\ c\ee^2(1+\ln(1\vee\tau)),&\text{if $a=-1/2$},\end{array}\right.
\end{align}
for all $t\ge0$, according to \eqref{lisa1} and \eqref{lisamh}. 
These remarks ensure that every process given by
$$
\wt{M}_{\kappa,t}^{[\tau,a]}:=\int_0^t1_{(0,\tau)}(s)e^{-(\tau-s)\omega-i\V{m}\cdot\V{B}_s}
\omega^ai\V{m}\beta_{\kappa}\Id\V{B}_s,\quad t\ge0,
$$
with $\tau\ge0$ is a well-defined continuous square-integrable $\HP_{\RR}$-valued martingale, 
whose quadratic variation is given by the left hand side of \eqref{polbdMtaua}.
We further conclude that $\wt{M}_\kappa^{[\tau,a]}$ is indistinguishable from 
$\omega^a M_{\kappa}^{[\tau]}$, whence
\begin{align}\label{polbdMtaub}
\EE\big[\sup_{s\ge0}\|\omega^{a}M_{\kappa,s}^{[\tau]}\|_\HP^p\big]
&\le c_p\EE\Big[\int_0^{\tau}\|e^{-(\tau-s)\omega}\omega^{a}
i\V{m}\beta_\kappa\|_\HP^2\Id s\Big]^{\nf{p}{2}},\quad p>0.
\end{align}
{\em Step~2.} Let $\kappa\in\NN\cup\{\infty\}$, $n\in\NN$, and $a_n:=\eh-\nf{1}{2n}$.
We next employ the Kolmogorov test to show that we can modify
the martingales $M_\kappa^{[\tau]}$ such that \eqref{tizian0} holds true, for every $a\in[\mh,a_n]$.

To this end we pick some $a\in[\mh,a_n]$ and
observe that, for all $p>0$, $\tau>\sigma\ge0$, and $\epsilon\in(0,\eh-a)$, 
\begin{align*}
&\EE\big[\sup_{s\ge0}\| \omega^a M_{\kappa,s}^{[\tau]}-
\omega^a M_{\kappa,s}^{[\sigma]}\|_\HP^p\big]
\\
&\le c_p\EE\bigg[\sup_{s\le \sigma}\Big\| \int_0^s1_{(0,\sigma)}(r)(e^{-(\tau-\sigma)\omega}-1)
e^{-(\sigma-r)\omega-i\V{m}\cdot\V{B}_r}\omega^{a}i\V{m}\beta_\kappa\Id\V{B}_r\Big\|_\HP^p\bigg]
\\
&\quad+c_p\EE\bigg[\sup_{s\le \tau}\Big\| \int_0^s1_{(\sigma,\tau)}(r)
e^{-(\tau-r)\omega-i\V{m}\cdot\V{B}_r}\omega^{a}i\V{m}\beta_\kappa\Id\V{B}_r\Big\|_\HP^p\bigg]
\\
&\le c_p'(\tau-\sigma)^{p\epsilon}\EE\Big[\int_0^{\sigma}\|\omega^{a+\epsilon}
e^{-(\sigma-s)\omega}i\V{m}\beta_\kappa\|^2_\HP\Id s\Big]^{\nf{p}{2}}
\\
&\quad+c_p'\EE\Big[\int_{\sigma}^{\tau}
\|e^{-(\tau-s)\omega}\omega^{a}i\V{m}\beta_\kappa\|^2_\HP\Id s\Big]^{\nf{p}{2}}.
\end{align*}
Since $\delta:=a+\epsilon\in(\mh,\eh)$, the integrands in the last two lines satisfy
\begin{align}\nonumber
\int_0^{\sigma}\|\omega^\delta
e^{-(\sigma-s)\omega}i\V{m}\beta_\kappa\|^2_\HP\Id s&\le  c_\delta\ee^2,
\end{align}
by \eqref{lisa1}, as well as
\begin{align*}
\int_{\sigma}^{\tau}\|e^{-(\tau-s)\omega}\omega^{a}i\V{m}\beta_\kappa\|^2_\HP\Id s
&\le
4\pi\ee^2\int_0^\infty\frac{(1-e^{-2(\tau-\sigma)\omega_\rho})\omega_\rho^{2a}
\rho^4}{2\omega_\rho^2(\omega_\rho+\rho^2/2)^2}\Id\rho
\\
&\le c_{\epsilon,\delta}\ee^2(\tau-\sigma)^{2\epsilon}.
\end{align*}
Here we used 
$1-e^{-2(\tau-\sigma)\omega_\rho}\le 4^\epsilon(\tau-\sigma)^{2\epsilon}\omega_\rho^{2\epsilon}$
and \eqref{lisa1} in the last step. We thus arrive at
\begin{align}\label{KolMtau}
\EE\big[\sup_{s\ge0}\| \omega^a M_{\kappa,s}^{[\tau]}-\omega^{a}M_{\kappa,s}^{[\sigma]}\|_\HP^p\big]
&\le c_{p,a,\epsilon}|\ee|^{p}|\tau-\sigma|^{p\epsilon},
\end{align}
for all $\sigma,\tau\ge0$, $0<\epsilon<1/2-a$, and $a\in[\mh,a_n]$.
Now we fix $p>0$ such that $p(1-2a_n)>2$. Then we may pick some
$\epsilon\in(0,\eh-a_n)$ with $p\epsilon>1$ and the Kolmogorov test lemma (a suitable version is 
stated, e.g., in App.~\ref{appKol}) ensures the existence of continuous, adapted 
$\HP$-valued processes $(X_{\kappa,n,t}^{[\tau]})_{t\ge0}$ parametrized by $\tau\ge0$
such that $[0,\infty)^2\ni(t,\tau)\mapsto X_{\kappa,n,t}^{[\tau]}(\gamma)\in\HP$ is continuous, for all
$\gamma\in\Omega$, and such that, for every $\tau\ge0$, $X_{\kappa,n}^{[\tau]}$ is indistinguishable
from $(\omega^\mh+\omega^{a_n})M_\kappa^{[\tau]}$. Then
$\hat{M}_{\kappa,n}^{[\tau]}:=(\omega^\mh+\omega^{a_n})^{-1}X_\kappa^{[\tau]}$ is indistinguishable
from $M_\kappa^{[\tau]}$ and satisfies \eqref{tizian0}, for all $a\in[\mh,a_n]$.

Next, consider $n\in\NN$ with $n>1$. For every $\tau\in[0,\infty)\cap\QQ$ we then find some
$\PP$-zero set $\sN_{n}^\tau\in\fF$ such that 
$\hat{M}_{\kappa,1,t}^{[\tau]}=\hat{M}_{\kappa,n,t}^{[\tau]}$, $t\ge0$, on 
$\Omega\setminus\sN_{n}^\tau$. By the continuity in $\tau$ it then follows that
$\hat{M}_{\kappa,1,t}^{[\tau]}=\hat{M}_{\kappa,n,t}^{[\tau]}$, $\tau,t\ge0$, on
$\Omega\setminus\sN_{n}$, where $\sN_{n}:=\bigcup_{\tau\in[0,\infty)\cap\QQ}\sN_{n}^\tau$
has $\PP$-measure zero. We now define another $\PP$-zero set 
$\sN:=\bigcup_{n>1}\sN_{n}$ and set $\hat{M}_{\kappa,t}^{[\tau]}:=0$ on $\sN$ and 
$\hat{M}_{\kappa,t}^{[\tau]}:=\hat{M}_{\kappa,1,t}^{[\tau]}$ on $\Omega\setminus\sN$, for all 
$t,\tau\ge0$. Then each $\hat{M}_{\kappa,t}^{[\tau]}$ with $\tau\ge0$ is indistinguishable from
$M_\kappa^{[\tau]}$ and satisfies \eqref{tizian0}, for all $a\in[\mh,\eh)$.

\smallskip

\noindent
{\em Step~3.} Let $a\in[\mh,\eh)$.
By virtue of H\"{o}lder's inequality it suffices to derive the moment bound \eqref{polbdMtt} 
for all $p>0$ satisfying $p(1-2a)>2$. In this case \eqref{polbdMtt} follows, however, from
\eqref{polbdMtaua}, \eqref{polbdMtaub}, \eqref{KolMtau}, and Rem.~\ref{remKol}.
Strictly speaking, since we apply the Kolmogorov lemma another time (with some new choice of $p$), 
we first obtain \eqref{polbdMtt} for possibly different versions of 
$\omega^a\hat{M}_\kappa^{[\tau]}$, call them $\check{M}_{\kappa}^{[\tau,a,p]}$, such that 
$[0,\infty)^2\ni(t,\tau)\mapsto\check{M}_{\kappa,t}^{[\tau,a,p]}(\gamma)\in\HP$ is continuous, for every
$\gamma\in\Omega$. For every $\tau\in[0,\infty)\cap\QQ$, we again find, however, some $\PP$-zero 
set $\sN_\tau$ such that $\check{M}_{\kappa,t}^{[\tau,a,p]}=\omega^a\hat{M}_{\kappa,t}^{[\tau]}$, 
$t\ge0$, on $\Omega\setminus\sN_\tau$. By continuity we then conclude that, for all 
$(t,\tau)\in[0,\infty)^2$, the identity 
$\check{M}_{\kappa,t}^{[\tau,a,p]}=\omega^a\hat{M}_{\kappa,t}^{[\tau]}$ holds true on the $\PP$-zero
set $\bigcup_{\tau\in[0,\infty)\cap\QQ}\sN_\tau$.

\smallskip

\noindent
{\em Step~4.} 
Let $a\in[\mh,\eh)$. On account of H\"{o}lder's inequality it then suffices to prove \eqref{convMtt}
for every $p>0$ with $p(1-2a)/4>1$. Applying \eqref{lisa1999} we first observe that
\begin{align}\nonumber
\EE\big[\sup_{t\ge0}\|\omega^{a}M_{\kappa,t}^{[\tau]}-\omega^{a}M_{\infty,t}^{[\tau]}\|_\HP^p\big]
&\le c_p\EE\Big[\int_0^\tau\|e^{-(\tau-s)\omega}\omega^{a}
i\V{m}(\beta_\kappa-\beta_\infty)\|_\HP^2\Id s\Big]^{\nf{p}{2}}
\\\label{convMtau}
&\le c_{p,a}|\ee|^p\big/\kappa^{p(1-2a)/4},
\end{align}
which holds for all $\tau\ge0$ and $p>0$. Again let $\epsilon\in(0,\eh-a)$,
$\delta:=a+\epsilon$, and set $\iota:=(\eh-\delta)/2$. Then $2(\delta+\iota)<1$. Employing similar 
estimates as in Step~2 and using that $|1-\chi_\kappa(\V{k})|\le(|\V{k}|/\kappa)^\iota$, 
for all $\V{k}\in\RR^3$ with $|\V{k}|<\kappa$, we further obtain
\begin{align}\nonumber
&\EE\Big[\sup_{s\ge0}\big\| \omega^a(M_{\kappa,s}^{[\tau]}-M_{\infty,s}^{[\tau]})-
\omega^{a}(M_{\kappa,s}^{[\sigma]}-M_{\infty,s}^{[\sigma]})\big\|_\HP^p\Big]
\\\nonumber
&\le c_{p,\epsilon}|\ee|^p|\tau-\sigma|^{p\epsilon}
\bigg(\frac{1}{\kappa^{2\iota}}\int_0^\kappa\frac{\omega_\rho^{2\delta}
\rho^{4+2\iota}}{2\omega_\rho^2(\omega_\rho+\rho^2/2)^2}\Id\rho
+\int_\kappa^\infty\frac{\omega_\rho^{2\delta}
\rho^4}{2\omega_\rho^2(\omega_\rho+\rho^2/2)^2}\Id\rho\bigg)^{\nf{p}{2}}
\\\nonumber
&\le c_{p,\epsilon}|\ee|^p|\tau-\sigma|^{p\epsilon}
\bigg(\frac{1}{\kappa^{2\iota}}\int_0^1\frac{\rho^{2(\delta+\iota)}}{2}\Id\rho
+\frac{1}{\kappa^{2\iota}}\int_1^\kappa\frac{2}{\rho^{2-2\delta-2\iota}}\Id\rho
+\int_\kappa^\infty\frac{2}{\rho^{2-2\delta}}\Id\rho\bigg)^{\nf{p}{2}}
\\\nonumber
&\le c_{p,\epsilon}|\ee|^p|\tau-\sigma|^{p\epsilon}
\big(c_{\delta+\iota}/\kappa^{2\iota}+c_\delta/\kappa^{1-2\delta}\big)^{\nf{p}{2}}
\\\nonumber
&\le c_{p,a,\epsilon}|\ee|^p|\tau-\sigma|^{p\epsilon}\big/\kappa^{p(\eh-\delta)/2},
\end{align}
for all $\sigma,\tau\ge0$. Now we choose $\epsilon:=(1-2a)/4$, so that $0<\epsilon<1/2-a$, 
$(1/2-\delta)/2=(1-2a)/8>0$, and our present assumption on $p$ ensures that $p\epsilon>1$.
Then \eqref{convMtt} follows from \eqref{convMtau}, the previous estimation, Rem.~\ref{remKol} on
Kolmogorov's lemma, and an argument similar to the one in Step~3.
\end{proof}

\begin{lem}\label{lemI}
Let $\kappa\in\NN\cup\{\infty\}$ and $\V{\alpha}\in C([0,\infty),\RR^3)$. 
Then the $\HP$-valued Bochner-Lebesgue integrals
\begin{align}\label{defIalpha}
I_{\kappa,t}[\V{\alpha}]&:=\int_0^te^{-(t-s)\omega-i\V{m}\cdot\V{\alpha}_s}
2\omega\beta_\kappa\Id s,\quad t\ge0,
\end{align}
converge absolutely and $I_{\kappa,t}[\V{\alpha}]\in\dom(\omega^a)$, for all $a\in(-1,1)$ and $t\ge0$.
For all $a\in(-1,1)$, the map $[0,\infty)\ni t\mapsto \omega^aI_{\kappa,t}[\V{\alpha}]\in\HP$ is locally
H\"{o}lder continuous. If $a\in(-1,0)$, then $\omega^a I_{\kappa}[\V{\alpha}]$ is continuously 
differentiable on $[0,\infty)$ as an $\HP$-valued function with
\begin{align}\label{derI}
\frac{\Id}{\Id t}\omega^a I_{\kappa,t}[\V{\alpha}]=-\omega^{1+a}I_{\kappa,t}[\V{\alpha}]
+e^{-i\V{m}\cdot\V{\alpha}_t}2\omega^{1+a}\beta_\kappa,\quad t\ge0.
\end{align}
\end{lem}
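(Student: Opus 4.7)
The proof will closely mirror the structure of Lem.~\ref{lemUpmcontHP}, with the main complication being that the cut-off function $\chi_\kappa$ may now be trivial ($\kappa=\infty$), so one must fully exploit the extra decay furnished by the factor $(\omega+\V{m}^2/2)^{-1}$ inside $\beta_\kappa$ together with the regularizing semi-group $e^{-(t-s)\omega}$.

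The plan is to begin by establishing, for every $\kappa\in\NN\cup\{\infty\}$ and $a\in(-1,1)$, the pointwise bound
\begin{align*}
\|\omega^a e^{-s\omega}\cdot 2\omega\beta_\kappa\|_\HP^2
&\le c\,\ee^2\int_0^\infty\frac{\omega_\rho^{2a+1}\,\rho^2}{(\omega_\rho+\rho^2/2)^2}\,e^{-2s\omega_\rho}\Id\rho,
\quad s>0,
\end{align*}
analogous to \eqref{lisa1}. Splitting this integral near $\rho=0$ and near $\rho=\infty$, the small-$\rho$ piece is handled using the hypothesis $a>-1$ (which also covers the massless case via $\rho^{2a+1}$), while the large-$\rho$ piece needs the exponential $e^{-2s\omega_\rho}$ to kill $\rho^{2a-1}$ when $a\ge0$. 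A direct calculation yields $\|\omega^ae^{-s\omega}\cdot 2\omega\beta_\kappa\|_\HP\le c_{a}|\ee|(1+s^{-a}+|\ln s|^{1/2})$ for $s\in(0,1]$, uniformly in $\kappa$, so that $\int_0^t\|\omega^a e^{-(t-s)\omega}\cdot2\omega\beta_\kappa\|_\HP\Id s<\infty$ whenever $a<1$. This simultaneously gives the absolute convergence of the $\HP$-valued Bochner--Lebesgue integral \eqref{defIalpha} and the fact that $I_{\kappa,t}[\V{\alpha}]\in\dom(\omega^a)$ with $\omega^a I_{\kappa,t}[\V{\alpha}]$ represented by the integral with $2\omega\beta_\kappa$ replaced by $2\omega^{1+a}\beta_\kappa$.

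For the local H\"older continuity on $[0,\infty)$, I would write, for $\tau\ge t\ge 0$,
\begin{align*}
\omega^a I_{\kappa,\tau}[\V{\alpha}]-\omega^a I_{\kappa,t}[\V{\alpha}]
&=(e^{-(\tau-t)\omega}-\id)\omega^a I_{\kappa,t}[\V{\alpha}]
+\int_t^\tau e^{-(\tau-s)\omega-i\V{m}\cdot\V{\alpha}_s}2\omega^{1+a}\beta_\kappa\Id s,
\end{align*}
exactly as in the estimation preceding \eqref{mads0}. The first term is bounded by $(\tau-t)^\epsilon\|\omega^{a+\epsilon}I_{\kappa,t}[\V{\alpha}]\|_\HP$ with $\epsilon$ small enough to keep $a+\epsilon<1$, using $\|(e^{-\delta\omega}-\id)g\|_\HP\le\delta^\epsilon\|\omega^\epsilon g\|_\HP$; the second term is controlled by the $\HP$-valued integral estimate from Step~1 applied to the interval $[t,\tau]$, yielding a bound of order $(\tau-t)^{(1-a)\wedge\eh}$ in the worst case.

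Finally, for $a\in(-1,0)$, the differentiability argument proceeds as in the proof of Lem.~\ref{lemUpmcontHP}. Since $1+a<1$, Step~1 applied with exponent $1+a$ shows $I_{\kappa,t}[\V{\alpha}]\in\dom(\omega^{1+a})$ and also ensures $2\omega^{1+a}\beta_\kappa\in\HP$ (the large-$\rho$ integrand $\rho^{2a-1}$ is integrable precisely when $a<0$). The difference quotient splits as
\begin{align*}
\tfrac{1}{\delta}\big(\omega^aI_{\kappa,t+\delta}[\V{\alpha}]-\omega^aI_{\kappa,t}[\V{\alpha}]\big)
&=\tfrac{1}{\delta}(e^{-\delta\omega}-\id)\omega^a I_{\kappa,t}[\V{\alpha}]
+\tfrac{1}{\delta}\int_t^{t+\delta}\omega^a e^{-(t+\delta-s)\omega-i\V{m}\cdot\V{\alpha}_s}2\omega\beta_\kappa\Id s,
\end{align*}
and the first term converges to $-\omega^{1+a}I_{\kappa,t}[\V{\alpha}]$ by the spectral theorem and dominated convergence (with majorant $\omega^{1+a}|\widehat{I_{\kappa,t}[\V{\alpha}]}|$), while the second term converges to $e^{-i\V{m}\cdot\V{\alpha}_t}2\omega^{1+a}\beta_\kappa$ by continuity of $\V{\alpha}$ combined with the small-$\delta$ estimates \eqref{mads1}. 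A symmetric computation handles the left derivative and yields \eqref{derI}; continuity of the derivative then follows from continuity of $\omega^{1+a}I_{\kappa,\cdot}[\V{\alpha}]$, established in Step~2 (applicable because $1+a<1$), and of $s\mapsto e^{-i\V{m}\cdot\V{\alpha}_s}2\omega^{1+a}\beta_\kappa$. The main obstacle is keeping track of the two-sided restriction on $a$: the small-$\rho$ integrability dictates $a>-1$ whereas the differentiability and the integrability of $\omega^{1+a}\beta_\kappa$ itself forces $a<0$, so the three claims sit in distinct regimes that must be treated carefully in parallel.
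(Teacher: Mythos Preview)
Your proof is correct. Both your approach and the paper's reduce to the same underlying estimate on $\|\omega^{1+a}e^{-s\omega}\beta_\kappa\|_\HP$, but the paper packages this more economically: writing $a=\delta+\epsilon$ with $\delta,\epsilon\in(-\eh,\eh)$, it observes that $2\omega^{1+a}\beta_\kappa=2\omega^{1+\epsilon}\cdot\ee\eta\omega^{-\eh}j_{\delta,\kappa}$ with $j_{\delta,\kappa}:=\omega^\delta\chi_\kappa/(\omega+\V{m}^2/2)$, checks that $\|j_{\delta,\kappa}\|_\HP\le c_\delta(1\vee\mu)^{0\vee\delta}$ uniformly in $\kappa\in\NN\cup\{\infty\}$, and then simply quotes the estimate \eqref{mads0} and the rest of the proof of Lem.~\ref{lemUpmcontHP} verbatim with $\chi_\kappa$ replaced by $j_{\delta,\kappa}$. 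Your explicit splitting into small- and large-$\rho$ regimes amounts to verifying by hand that $j_{\delta,\kappa}\in\HP$ and then redoing the semi-group estimates; it is somewhat longer but has the merit of making the separate roles of the constraints $a>-1$ (infra-red integrability) and $a<1$ (ultra-violet, handled by $e^{-s\omega}$) transparent, whereas the paper's substitution trick hides this in the decomposition $a=\delta+\epsilon$.
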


\begin{proof}
To prove the first claim, we write $a=\delta+\epsilon$ with $\delta,\epsilon\in(\mh,\eh)$ and replace
$\chi_\kappa$ by $j_{\delta,\kappa}:=\omega^\delta\chi_\kappa/(\omega+\V{m}^2/2)$ 
in the estimate \eqref{mads0}, observing that 
$\|j_{\delta,\kappa}\|_{\HP}\le c_\delta(1\vee\mu)^{0\vee\delta}$, for all $\kappa\in\NN\cup\{\infty\}$.
With the so-obtained modification of \eqref{mads0} we can mimic the remaining parts of the
proof of Lem.~\ref{lemUpmcontHP} to verify the other two statements on $I_\kappa[\V{\alpha}]$.
\end{proof}

\begin{lem}\label{lemUplus}
For all $\kappa\in\NN\cup\{\infty\}$ and $\tau\ge0$, let $M_\kappa^{[\tau]}[\V{B}]$ denote a 
particular choice of the martingale introduced in Lem.~\ref{lemMtau} such that \eqref{tizian0} 
is satisfied, for all $a\in[\mh,a_0]$ and some $a_0\in[0,\eh)$.
Then the following holds:
\begin{enumerate}[leftmargin=*]
\item[{\rm(1)}] There exists a $\PP$-zero set $\sN_+\in\fF$ such that, for all $t\ge0$ and
$\kappa\in\NN\cup\{\infty\}$,
\begin{align}\label{relUplus}
U^+_{\kappa,t}[\V{B}]&=(e^{-t\omega}-e^{-i\V{m}\cdot\V{B}_t})\beta_\kappa-M_{\kappa,t}^{[t]}[\V{B}]
+I_{\kappa,t}[\V{B}]\quad\text{on $\Omega\setminus\sN_+$.}
\end{align}
In particular, $1_{\Omega\setminus\sN_+}U_{\infty}^+[\V{B}]$ is a continuous, adapted
$\HP_{\RR}$-valued process.
\item[{\rm(2)}] For all $p>0$,
\begin{align}\label{convUplusHP}
\EE\Big[\sup_{s\le t}\big\|U_{\kappa,s}^+[\V{B}]-1_{\Omega\setminus\sN_+}
U^+_{\infty,s}[\V{B}]\big\|_{\HP}^p\Big]&\xrightarrow{\;\;\kappa\to\infty\;\;}0,\quad t\ge0,
\\\label{convUpluss}
\EE\Big[\sup_{\tau\le s\le t}\big\|U_{\kappa,s}^+[\V{B}]-1_{\Omega\setminus\sN_+}
U^+_{\infty,s}[\V{B}]\big\|_s^p\Big]&\xrightarrow{\;\;\kappa\to\infty\;\;}0,\quad t\ge\tau>0.
\end{align}
\end{enumerate}
\end{lem}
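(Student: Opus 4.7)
\textbf{Proof plan for Lemma~\ref{lemUplus}.}
The strategy mirrors the proof of Lem.~\ref{lemUminus}, but the explicit $t$-dependence of the integrand in \eqref{defUpm} forces us to introduce an auxiliary parameter $\tau$ and evaluate along the diagonal $\tau=t$ only at the end. Fix $\tau\ge0$, a finite $\kappa\in\NN$, and $h$ in a countable dense subset of $\HP_\RR$, and consider the scalar $C^{1,2}$-function
\[
G_{\tau,h}(s,\V{x}):=\SPn{h}{(e^{-(\tau-s)\omega-i\V{m}\cdot\V{x}}-e^{-\tau\omega})\beta_\kappa}_\HP,\quad(s,\V{x})\in[0,\tau]\times\RR^3.
\]
Subtracting $e^{-\tau\omega}\beta_\kappa$ tames the small-$|\V{k}|$ singularity of $\beta_\kappa$ in the massless case (playing the role the $-1$ played in the proof of Lem.~\ref{lemUminus}), so $G_{\tau,h}$ is genuinely $C^{1,2}$ and the scalar It\^o formula applies to $s\mapsto G_{\tau,h}(s,\V{B}_s)$.

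Writing out the It\^o expansion at $s=\tau$ and pulling $\SPn{h}{\cdot}_\HP$ inside the stochastic integral (up to indistinguishability, exactly as in \eqref{forUminus}) one $\PP$-a.s.\ arrives at
\[
\SPn{h}{(e^{-i\V{m}\cdot\V{B}_\tau}-e^{-\tau\omega})\beta_\kappa+M_{\kappa,\tau}^{[\tau]}[\V{B}]}_\HP=\SPn{h}{\int_0^\tau e^{-(\tau-s)\omega-i\V{m}\cdot\V{B}_s}(\omega-\V{m}^2/2)\beta_\kappa\,\Id s}_\HP.
\]
The elementary identity $\V{m}^2\beta_\kappa=2f_\kappa-2\omega\beta_\kappa$ turns the right-hand side into $\SPn{h}{-U^+_{\kappa,\tau}[\V{B}]+I_{\kappa,\tau}[\V{B}]}_\HP$, and countable density of the admissible $h$'s promotes this pairing to the vector identity \eqref{relUplus} at $t=\tau$ for every finite $\kappa$. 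To manufacture a single $\PP$-zero set $\sN_+$ off of which \eqref{relUplus} holds for every $t\ge0$ and every $\kappa\in\NN$, I apply the above for $\tau\in\QQ\cap[0,\infty)$ and then invoke joint $(t,\tau)$-continuity of the right-hand side: of the boundary term by Lem.~\ref{lemUpmcontHP}, of the Bochner term by Lem.~\ref{lemI}, and of the martingale term by the modification constructed in Lem.~\ref{lemMtau}(2).

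The case $\kappa=\infty$ of \eqref{relUplus} is then obtained by letting $\kappa\to\infty$ in the finite-$\kappa$ identity and identifying the limits in $\HP$: the boundary term converges because $\|\beta_\kappa-\beta_\infty\|_\HP\to0$, the martingale term by the estimate \eqref{convMtt} of Lem.~\ref{lemMtau} with $a=0$, and the deterministic Bochner term $I_{\kappa,\tau}-I_{\infty,\tau}$ by an ultra-violet splitting analogous to \eqref{conv1}; on the left, $U^+_\kappa[\V{B}]\to U^+_\infty[\V{B}]$ in $\mathfrak{d}_{\mh}$ by Lem.~\ref{lemUinftyint}. Since the three summands on the right are continuous and adapted with values in $\HP_\RR$, the same is true for the modification $1_{\Omega\setminus\sN_+}U^+_\infty[\V{B}]$. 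For part~(2) I substitute \eqref{relUplus} into the difference $U^+_{\kappa,s}[\V{B}]-U^+_{\infty,s}[\V{B}]$ and bound the three resulting summands in the $L^p$-norm of their running sup: the boundary contribution is handled deterministically via $\|\beta_\kappa-\beta_\infty\|_\HP\to0$, the martingale contribution via \eqref{convMtt} (with $a=0$ for \eqref{convUplusHP} and $a=\mh$ for the weighted statement \eqref{convUpluss}), and the $I_\kappa-I_\infty$ contribution via Bochner-integral bounds uniform in $\V{B}$, along the lines of the proof of Lem.~\ref{lemI}.

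The main obstacle, just as for $U^-$, is the simultaneous presence of an infra-red singularity (when $\mu=0$) of $\beta_\kappa$ and an ultra-violet singularity (when $\kappa=\infty$) of $f_\kappa$, which precludes a genuine $\HP$-valued It\^o formula in a single step. The scalar pairing with $h$ circumvents the infra-red issue, while the passage $\kappa\to\infty$ essentially relies on the joint continuity of $M_\kappa^{[\tau]}[\V{B}]$ in $(t,\tau)$ from Lem.~\ref{lemMtau}(2); without that continuity, evaluating the martingale along the diagonal $\tau=t$ would be ill-defined on the relevant null set, and the resulting identity \eqref{relUplus} could not be read off pointwise in $t$.
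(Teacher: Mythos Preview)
Your proof is correct and follows the same It\^o-formula route as the paper. Two minor differences in execution are worth noting. First, the paper applies It\^o's formula on $[0,\tau)$ directly for \emph{all} $\kappa\in\NN\cup\{\infty\}$ (the damping $e^{-(\tau-s)\omega}$ keeps $f_\kappa$ and the derivatives of $G_{\kappa,\tau,h}$ in $\HP$ for $s<\tau$), then lets $t\uparrow\tau$ using that all other terms in the resulting identity are $\HP$-continuous up to $\tau$; this yields \eqref{relUplus} for $\kappa=\infty$ in one stroke, without your separate $\kappa\to\infty$ passage. Second, for \eqref{convUpluss} the paper does not re-estimate each summand in the $\|\cdot\|_s$-norm: it simply combines the deterministic bound \eqref{conv2000} on $\|(t\omega)^{-1/2}(U^+_{\kappa,t}-U^+_{\infty,t})\|_\HP$ (valid uniformly in $\V{\alpha}$) with the already established \eqref{convUplusHP}, which is shorter than tracking the $\omega^{-1/2}$-weighted contributions of the boundary, martingale, and $I$-terms separately.
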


\begin{proof}
{\em Step~1.} Let $\tau>0$ and $\kappa\in\NN\cup\{\infty\}$. 
For every $h$ in a countable dense subset of $\HP_{\RR}$, we now apply
It\={o}'s formula to the function $G_{\kappa,\tau,h}\in C^2([0,\tau)\times\RR^3,\RR)$ given by
$G_{\kappa,\tau,h}(t,\V{x}):=\SPn{h}{(e^{-(\tau-t)\omega-i\V{m}\cdot\V{x}}-e^{-\tau\omega})
\beta_\kappa}_\HP$, $t\in[0,\tau)$, $\V{x}\in\RR^3$. 
(Again the singularity of $\beta_\kappa$ at $0$ is compensated for by the difference of 
exponentials in the formula for $G_{\kappa,\tau,h}$.) We then find a $\PP$-zero set
$\sN_\tau\in\fF$ such that the following identity holds on $\Omega\setminus\sN_\tau$
and for all $t\in[0,\tau)$,
\begin{align}\nonumber
(e^{-(\tau-t)\omega-i\V{m}\cdot\V{B}_t}-e^{-\tau\omega})\beta_{\kappa}
&=-\int_{0}^te^{-(\tau-s)\omega-i\V{m}\cdot\V{B}_s}f_{\kappa}\Id s
\\\label{lisa2}
&\quad+\int_{0}^te^{-(\tau-s)\omega-i\V{m}\cdot\V{B}_s}2\omega\beta_{\kappa}\Id s
-M_{\kappa,t}^{[\tau]}[\V{B}].
\end{align}
On account of Lem.~\ref{lemI} and the strong continuity of 
$[0,\tau]\ni t\mapsto e^{-(\tau-t)\omega}\in\LO(\HP)$
all trajectories of the processes on the left hand side and in the
second line of \eqref{lisa2} are continuous in $t\in[0,\tau]$. Therefore, the limit
$\lim_{t\uparrow\tau}\int_{0}^te^{-(\tau-s)\omega-i\V{m}\cdot\V{B}_s}f_{\kappa}\Id s$
with respect to the topology on $\HP$ exists on $\Omega\setminus\sN_\tau$.
Since the canonical embedding $\HP\to\mathfrak{d}_{\mh}$ is continuous,
the latter limit must, however, agree with $U_{\kappa,t}^+[\V{B}]$. Therefore, the
identity in \eqref{relUplus} is valid, for every fixed $t>0$, on the complement of $\sN_t$.

We now set $\sN_+:=\bigcup_{t\in\QQ:t>0}\sN_t$. Since all terms in \eqref{relUplus} are
continuous in $t\in[0,\infty)$ with respect to the topology on $\mathfrak{d}_{\mh}$ and on
all of $\Omega$, we may then conclude that the identity in \eqref{relUplus} holds on 
$\Omega\setminus\sN_+$ and for all $t\ge0$,
if we consider it as an identity in $\mathfrak{d}_{\mh}$. We already know, however, that the right
hand side of \eqref{relUplus} is a well-defined $\HP$-valued process on $[0,\infty)$. 
Altogether this concludes the proof of Part~(1).

\smallskip

\noindent{\em Step~2.}
Next, we use that $|1-\chi_\kappa(\V{k})|^2\le|\V{k}|/\kappa$,
for all $\V{k}\in\RR^3$ with $|\V{k}|<\kappa$ and $\kappa\in\NN$, which permits to get
\begin{align}\nonumber
\Big\|&\int_0^t e^{-(t-s)\omega-i\V{m}\cdot\V{B}_s}\omega(\beta_\kappa-\beta_\infty)
\Id s\Big\|_\HP^2
\\\nonumber
&=\int_0^t\int_0^t\SPb{e^{-s\omega-i\V{m}\cdot\V{B}_{t-s}}\omega(\beta_\kappa-\beta_\infty)}{
e^{-r\omega-i\V{m}\cdot\V{B}_{t-r}}\omega(\beta_\kappa-\beta_\infty)}_\HP\Id r\,\Id s
\\\nonumber
&\le4\pi\ee^2\int_0^\infty\big\{(\rho/\kappa)1_{[0,\kappa]}(\rho)+1_{(\kappa,\infty)}(\rho)
\big\}\int_0^t\int_0^t\frac{e^{-(r+s)\omega_\rho}\omega_\rho^2\rho^2}{
\omega_\rho(\omega_\rho+\rho^2/2)^2}\Id r\,\Id s\,\Id\rho
\\\nonumber
&\le4\pi\ee^2\int_0^\infty\big\{(\rho/\kappa)1_{[0,\kappa]}(\rho)+1_{(\kappa,\infty)}(\rho)
\big\}\frac{(1-e^{-t\omega_\rho})^2\rho^2}{\omega_\rho(\omega_\rho+\rho^2/2)^2}\Id\rho
\\\label{conv74}
&\le\frac{4\pi\ee^2}{\kappa}\Big(\int_0^21\Id\rho+\int_2^\infty\frac{4}{\rho^2}{\Id\rho}\Big)
+4\pi\ee^2\int_{\kappa}^\infty\frac{2}{\rho^2}\Id\rho
=\frac{24\pi\ee^2}{\kappa}.
\end{align}
\noindent{\em Step~3.} Finally, we observe that \eqref{convUplusHP} follows from
\eqref{convMtt}, \eqref{relUplus}, \eqref{conv74}, and the fact that $\beta_\kappa-\beta_\infty\to0$ 
in $\HP$ as $\kappa\to\infty$. The limit relation \eqref{convUpluss} is a consequence of 
\eqref{conv2000} and \eqref{convUplusHP}.
\end{proof}

\begin{lem}\label{lemexpplus}
For all $\kappa\in\NN\cup\{\infty\}$ and $\tau\ge0$, we introduce the $\HP$-valued semi-martingales
\begin{align}\label{defStt}
S_{\kappa,t}^{[\tau]}[\V{B}]&:=M_{\kappa,t}^{[\tau]}[\V{B}]-\int_0^t1_{(0,\tau)}(s)
e^{-(\tau-s)\omega-i\V{m}\cdot\V{B}_s}2\omega\beta_\kappa\Id s,\quad t\ge0.
\end{align}
Then there exist universal constants $b,c>0$ such that, for all $\kappa\in\NN\cup\{\infty\}$
and $p,\tau\ge0$,
\begin{align}\label{expbdStau}
\EE\big[\sup_{s\le t}e^{p\|S_{\kappa,s}^{[\tau]}[\V{B}]\|_\HP^2}\big]&\le 
(1+\pi)e^{cp\ee^2(1+\ln[1\vee(\tau\wedge t)])+cp^2\ee^4},
\\\label{expbdMtau}
\EE\big[\sup_{s\le t}e^{p\|M_{\kappa,s}^{[\tau]}[\V{B}]\|_\HP^2}\big]&\le 
(1+\pi)e^{bp\ee^2(1+\ln[1\vee(\tau\wedge t)])+bp^2\ee^4}.
\end{align}
\end{lem}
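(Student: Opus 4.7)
The strategy parallels the proof of Lem.~\ref{lemexpMminus} for $M^-_\kappa$: we apply It\^{o}'s formula to $\|M^{[\tau]}_{\kappa,\bullet}[\V{B}]\|_\HP^2$, bound the quadratic variation of the resulting local martingale uniformly, invoke the Scheutzow-type estimate \eqref{expbdmart}, and finally pass from $M^{[\tau]}_\kappa$ to $S^{[\tau]}_\kappa$ via a deterministic pointwise bound on the extra drift term appearing in \eqref{defStt}. Concretely, It\^{o}'s formula yields, up to indistinguishability,
\begin{align*}
\|M^{[\tau]}_{\kappa,t}[\V{B}]\|_\HP^2=2\tilde N^{[\tau]}_{\kappa,t}+\int_0^{t\wedge\tau}\|e^{-(\tau-s)\omega}i\V{m}\beta_\kappa\|_\HP^2\,\Id s,
\end{align*}
where
\begin{align*}
\tilde N^{[\tau]}_{\kappa,t}:=\int_0^t 1_{(0,\tau)}(s)\SPn{M^{[\tau]}_{\kappa,s}[\V{B}]}{e^{-(\tau-s)\omega-i\V{m}\cdot\V{B}_s}i\V{m}\beta_\kappa}_\HP\Id\V{B}_s
\end{align*}
is a real continuous $L^2$-martingale; after the change of variable $r=\tau-s$, the Lebesgue integral on the right is bounded by a universal constant times $\ee^2$ thanks to \eqref{lisa1} with $a=0$, uniformly in $\tau,t,\kappa$.

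The key step is the uniform bound $\llbracket\tilde N^{[\tau]}_\kappa\rrbracket_t\le c\ee^4$. To this end we insert into the squared scalar product defining the integrand of $\llbracket\tilde N^{[\tau]}_\kappa\rrbracket_t$ the It\^{o} identity \eqref{lisa2} rearranged to express $M^{[\tau]}_{\kappa,s}[\V{B}]$, $s<\tau$, as the sum of a boundary term $(e^{-\tau\omega}-e^{-(\tau-s)\omega-i\V{m}\V{B}_s})\beta_\kappa$ and two bounded-variation Bochner integrals involving $2\omega\beta_\kappa$ and $f_\kappa$, respectively. By $|a+b+c|^2\le 3(|a|^2+|b|^2+|c|^2)$ one obtains three contributions. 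The boundary one is dominated by $c\int_0^{t\wedge\tau}\||\V{m}|^{1/2}e^{-(\tau-s)\omega}\beta_\kappa\|_\HP^4\Id s\le c\ee^4$ exactly as in the first display of the computation after \eqref{lisa77}. For each of the two Bochner-integral contributions we commute the scalar product with the $\Id r$-integration and then dominate the resulting integrand pointwise in $\V{k}$ by $|\V{m}|\beta_\kappa^2e^{-(2\tau-r-s)\omega}$ multiplied by a harmless algebraic prefactor ($2\omega$ in the $B$-case, $\omega+\V{m}^2/2$ cancelling the denominator of $\beta_\kappa$ in the $A$-case); the resulting polar-coordinate double integrals are then estimated exactly as in \eqref{lisa77}--\eqref{lisa789}, each yielding a universal constant times $\ee^4$. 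With $\llbracket\tilde N^{[\tau]}_\kappa\rrbracket_t\le c\ee^4$ at hand, the bound \eqref{expbdmart} applied to the real martingale $2p\tilde N^{[\tau]}_\kappa$, combined with the first paragraph, gives \eqref{expbdMtau}.

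For \eqref{expbdStau} we combine \eqref{defStt} with the triangle inequality to obtain, for every $s\ge0$,
\begin{align*}
\|S^{[\tau]}_{\kappa,s}[\V{B}]\|_\HP^2\le 2\|M^{[\tau]}_{\kappa,s}[\V{B}]\|_\HP^2+2R^{[\tau]}_{\kappa,s},\quad R^{[\tau]}_{\kappa,s}:=\Big\|\int_0^{s\wedge\tau}e^{-(\tau-r)\omega-i\V{m}\V{B}_r}2\omega\beta_\kappa\,\Id r\Big\|_\HP^2.
\end{align*}
Pulling moduli inside the $\Id r$-integration and evaluating $\int_0^{s\wedge\tau}e^{-(\tau-r)\omega}\Id r$ explicitly produces the \emph{pathwise pointwise} majorant $2|\beta_\kappa(\V{k})|(1-e^{-(s\wedge\tau)\omega(\V{k})})$. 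Squaring and integrating over $\RR^3$ in polar coordinates, the region $\{|\V{k}|\ge1\}$ yields a uniformly bounded contribution, while on $\{|\V{k}|<1\}$ the computation that produced \eqref{karl2} leads to $R^{[\tau]}_{\kappa,s}\le c\ee^2(1+\ln[1\vee(s\wedge\tau)])$, valid uniformly in $\mu\ge0$. Since this majorant is deterministic and non-decreasing in $s$, we may pull $\sup_{s\le t}$ inside the exponential; combined with \eqref{expbdMtau} applied with $p$ replaced by $2p$, this yields \eqref{expbdStau}.

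The main hurdle is the uniform $\ee^4$-bound on $\llbracket\tilde N^{[\tau]}_\kappa\rrbracket_t$: the Bochner integral $\int_0^s e^{-(\tau-r)\omega-i\V{m}\V{B}_r}f_\kappa\Id r=e^{-(\tau-s)\omega}U^+_{\kappa,s}[\V{B}]$ appearing in the It\^{o} decomposition of $M^{[\tau]}_\kappa$ cannot be controlled pathwise in the $\HP$-norm (especially for $\mu=0$), which forces one to commute scalar product with Bochner integration and then to estimate the resulting polar-coordinate integrand pointwise, in complete analogy with \eqref{lisa77}--\eqref{lisa789}, so as to keep the bound uniform in $\tau,t,\kappa$.
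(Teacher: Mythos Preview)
Your argument is correct but proceeds in the opposite order from the paper. The paper applies It\^{o}'s formula to $\|S^{[\tau]}_{\kappa}\|^2_\HP$ first: because $S^{[\tau]}_\kappa = M^{[\tau]}_\kappa - (\text{drift})$, when one re-expresses $S^{[\tau]}_\kappa$ inside the quadratic variation of the resulting martingale $N^{[\tau]}_\kappa$ via \eqref{lisa2}, the $2\omega\beta_\kappa$-integral cancels exactly, leaving only the boundary and $f_\kappa$-integral contributions. The price is an extra drift cross-term $L^{[\tau]}_\kappa$ in It\^{o}'s formula, and this is where the logarithm in \eqref{expbdStau} originates; \eqref{expbdMtau} is then deduced from \eqref{expbdStau} via $\|M\|^2\le 2\|S\|^2+2\|\text{drift}\|^2$ and \eqref{mikkel}. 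You instead apply It\^{o} to the pure martingale $\|M^{[\tau]}_\kappa\|^2_\HP$: no $L$-type term appears, but $\llbracket\tilde N^{[\tau]}_\kappa\rrbracket$ acquires the additional $2\omega\beta_\kappa$-contribution (your $B$-case), whose integrand $2|\V{m}|(\omega+\V{m}^2/2)^{-2}$ is in fact dominated by the $A$-case integrand $|\V{m}|\omega^{-1}(\omega+\V{m}^2/2)^{-1}$, so \eqref{lisa789} applies directly. Your route therefore yields \eqref{expbdMtau} \emph{without} the logarithmic factor (slightly sharper than stated), and the logarithm reappears only when you add back the drift squared $R^{[\tau]}$ to obtain \eqref{expbdStau}. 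Both routes are of comparable length; the paper's cancellation avoids one integral estimate at the cost of tracking $L^{[\tau]}$, while your approach trades that for the easy $B$-case bound.
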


\begin{proof}
Again the arguments $[\V{B}]$ are dropped in the notation in this proof;
$\kappa\in\NN\cup\{\infty\}$ and $\tau>0$ are fixed throughout the proof.

It\={o}'s formula $\PP$-a.s. implies
\begin{align}\nonumber
\|S_{\kappa,t}^{[\tau]}\|_\HP^2
&=2N_{\kappa,t}^{[\tau]}+4L_{\kappa,t}^{[\tau]}
+\int_0^{t\wedge\tau}\|e^{-(\tau-s)\omega}i\V{m}\beta_{\kappa}\|_\HP^2\Id s
\\\nonumber
&\le2N_{\kappa,t}^{[\tau]}+4L_{\kappa,t}^{[\tau]}+c\ee^2,\quad t\ge0,
\end{align}
where we also made use of \eqref{lisa1} and abbreviated
\begin{align}\nonumber
N_{\kappa,t}^{[\tau]}&:=\int_0^t1_{(0,\tau)}(s)\SPn{S_{\kappa,s}^{[\tau]}}{
e^{-(\tau-s)\omega-i\V{m}\cdot\V{B}_s}i\V{m}\beta_{\kappa}}_\HP\Id\V{B}_s,
\\\nonumber
L_{\kappa,t}^{[\tau]}&:=\int_0^{t\wedge\tau}\SPn{S_{\kappa,s}^{[\tau]}}{
e^{-(\tau-s)\omega-i\V{m}\cdot\V{B}_s}\omega\beta_{\kappa}}_\HP\Id s.
\end{align}
We shall see that
$(N_{\kappa,t}^{[\tau]})_{t\ge0}$ is an $L^2$-martingale with quadratic variation 
\begin{align*}
\llbracket N_{\kappa}^{[\tau]}\rrbracket_t
&=\int_0^t1_{(0,\tau)}(s)\SPn{S_{\kappa,s}^{[\tau]}}{e^{-(\tau-s)\omega-i\V{m}\cdot\V{B}_s}i\V{m}
\beta_{\kappa}}_\HP^2\Id s,\quad t\ge0.
\end{align*}
In fact, we may re-express $S_{\kappa,s}^{[\tau]}$, $s\in[0,\tau)$, by means of the terms in the first 
line of \eqref{lisa2} to $\PP$-a.s. obtain
\begin{align*}
\int_0^{t}&1_{(0,\tau)}(s)\SPn{S_{\kappa,s}^{[\tau]}}{
e^{-(\tau-s)\omega -i\V{m}\cdot\V{B}_s}i\V{m}\beta_{\kappa}}_\HP^2\Id s
\\
&\le2\int_0^{t\wedge\tau}
\SPb{(e^{-(\tau-s)\omega-i\V{m}\cdot\V{B}_s}-e^{-\tau\omega})\beta_{\kappa}}{
e^{-(\tau-s)\omega-i\V{m}\cdot\V{B}_s}i\V{m}\beta_{\kappa}}_\HP^2\Id s
\\
&\quad+2\int_0^{t\wedge\tau}\SPB{\int_{0}^se^{-(\tau-r)\omega-i\V{m}\cdot\V{B}_r}f_{\kappa}\Id r}{
e^{-(\tau-s)\omega-i\V{m}\cdot\V{B}_s}i\V{m}\beta_{\kappa}}_\HP^2\Id s
\\
&\le8\int_0^\tau\big\|e^{-(\tau-s)\omega/2}|\V{m}|^\eh\beta_{\kappa}\big\|_\HP^4\Id s
\\
&\quad+2\int_0^\tau\bigg(\ee^2\int_0^\tau\int_{\RR^3}
\frac{e^{-(2\tau-r-s)\omega}|\V{m}|}{
\omega(\omega+\V{m}^2/2)}\Id\lambda^3\Id r\bigg)^2\Id s,\quad t\ge0.
\end{align*}
Simple substitutions show that the terms in the last two lines of the previous estimation are 
identical to the terms in the last two lines of \eqref{lisa77}, if we insert $\tau$ for $t$ in \eqref{lisa77}.
In particular, we may conclude that 
$\llbracket N_{\kappa}^{[\tau]}\rrbracket\le c\ee^4$, for some universal constant $c>0$.

In a similar fashion we deduce that
\begin{align}\nonumber
|L_{\kappa,t}^{[\tau]}|&\le2\int_0^{t\wedge\tau}
\|e^{-(\tau-s)\omega/2}\omega^\eh\beta_\kappa\|_\HP^2\Id s
\\\nonumber
&\quad+4\pi\ee^2\int_0^\infty\int_0^{t\wedge\tau}\int_0^{t\wedge\tau}
\frac{e^{-(2\tau-r-s)\omega_\rho}\omega_\rho\rho^2}{\omega_\rho
(\omega_\rho+\rho^2/2)}\Id r\,\Id s\,\Id\rho
\\\nonumber
&\le4\pi\ee^2\int_0^\infty\int_0^{t\wedge\tau}\frac{e^{-(t\wedge\tau-s)\omega_\rho}\rho^2}{(\omega_\rho
+\rho^2/2)^2}\Id s\,\Id\rho
\\\nonumber
&\quad+4\pi\ee^2\int_0^\infty\int_0^{t\wedge\tau}\int_0^{t\wedge\tau}
\frac{e^{-(2(t\wedge\tau)-r-s)\omega_\rho}\rho^2}{
(\omega_\rho+\rho^2/2)}\Id r\,\Id s\,\Id\rho
\\\nonumber
&\le4\pi\ee^2\int_0^\infty\frac{(1-e^{-(t\wedge\tau)\omega_\rho})\rho^2}{\omega_\rho(\omega_\rho
+\rho^2/2)^2}\Id\rho
+4\pi\ee^2\int_0^\infty
\frac{(1-e^{-(t\wedge\tau)\omega_\rho})\rho^2}{
\omega_\rho^2(\omega_\rho+\rho^2/2)}\,\Id\rho
\\\nonumber
&\le2(4\pi)\ee^2\int_0^{1\wedge\frac{1}{t\wedge\tau}}(t\wedge\tau)\Id\rho
+2(4\pi)\ee^2\int_{1\wedge\frac{1}{t\wedge\tau}}^1
\frac{\Id\rho}{\rho}+2(4\pi)\ee^2\int_1^\infty\frac{2}{\rho^2}\Id\rho
\\\label{julia1}
&=8\pi\ee^2\big(3+\ln[1\vee(t\wedge\tau)]\big),
\end{align}
for all $t\ge0$. We may now conclude the proof of \eqref{expbdStau}
as in the end of the proof of Lem.~\ref{lemexpMminus}.

Finally, we observe similarly as in \eqref{julia1} that
\begin{align}\nonumber
\Big\|\int_0^t&1_{(0,\tau)}(s)e^{-(\tau-s)\omega-i\V{m}\cdot\V{B}_s}
\omega\beta_\kappa\Id s\Big\|_\HP^2
\\\nonumber
&\le4\pi\ee^2\int_0^\infty\int_0^{t\wedge\tau}\int_0^{t\wedge\tau}
\frac{e^{-(2\tau-r-s)\omega_\rho}\omega_\rho^2\rho^2}{
\omega_\rho(\omega_\rho+\rho^2/2)^2}\Id r\,\Id s\,\Id\rho
\\\label{mikkel}
&\le4\pi\ee^2\big(3+\ln[1\vee(t\wedge\tau)]\big),\quad t\ge0.
\end{align}
In fact, the triple integral in the second line of \eqref{mikkel} is bounded from above by
the triple integral in the fourth line of \eqref{julia1}. Thus, 
$\|M_{\kappa,t}^{[\tau]}\|_\HP^2\le2\|S_{\kappa,t}^{[\tau]}\|_\HP^2
+32\pi\ee^2(3+\ln[1\vee(t\wedge\tau)])$, 
which together with \eqref{expbdStau} implies \eqref{expbdMtau}.
\end{proof}

\begin{lem}\label{lembdMtt}
There exist universal constants $b,b',c,c'>0$ 
such that, for all $\kappa\in\NN\cup\{\infty\}$, $p>0$, and $t\ge0$,
\begin{align}\label{expbdMtt}
\EE\Big[\sup_{s,\tau\le t}e^{p\|M_{\kappa,s}^{[\tau]}[\V{B}]\|_\HP^2}\Big]
&\le b'\big(1+p\ee^2(1\vee t)\big)^4e^{bp\ee^2(1+\ln(1\vee t))+bp^2\ee^4},
\\\label{expbdUplus}
\EE\Big[\sup_{s\le t}e^{p\|1_{\Omega\setminus\sN_+}U_{\kappa,s}^{+}[\V{B}]\|_s^2}\Big]
&\le c'\big(1+p\ee^2(1\vee t)\big)e^{cp\ee^2(1+\ln(1\vee t))+cp^2\ee^4}.
\end{align}
\end{lem}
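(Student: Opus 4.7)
The overall strategy is to lift the single-$\tau$ estimate \eqref{expbdMtau} to the joint supremum in $(s,\tau)$ via a discretization in $\tau$. Fix $N\in\NN$ and set $\tau_j := jt/N$, $j=0,\ldots,N$. For $\tau\in[\tau_{j-1},\tau_j]$, the elementary bound $\|M_{\kappa,s}^{[\tau]}[\V{B}]\|_\HP^2\le 2\|M_{\kappa,s}^{[\tau_j]}[\V{B}]\|_\HP^2 + 2\|M_{\kappa,s}^{[\tau]}[\V{B}] - M_{\kappa,s}^{[\tau_j]}[\V{B}]\|_\HP^2$ combined with Cauchy--Schwarz in the exponential moment yields
\begin{equation*}
\EE\Big[\sup_{s,\tau\le t} e^{p\|M_{\kappa,s}^{[\tau]}[\V{B}]\|_\HP^2}\Big] \le A_N^{1/2}\,B_N^{1/2},
\end{equation*}
where $A_N := \EE\big[\max_{0\le j\le N}\sup_{s\le t} e^{4p\|M_{\kappa,s}^{[\tau_j]}[\V{B}]\|_\HP^2}\big]$ and $B_N := \EE\big[e^{4p\Delta_N^2}\big]$, with $\Delta_N$ denoting the $\tau$-modulus $\sup\{\sup_{s\le t}\|M_{\kappa,s}^{[\tau]}[\V{B}] - M_{\kappa,s}^{[\sigma]}[\V{B}]\|_\HP : \tau,\sigma\in[0,t],\,|\tau-\sigma|\le t/N\}$.

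The factor $A_N$ is handled by the union bound $e^{\max_j X_j}\le\sum_j e^{X_j}$ together with \eqref{expbdMtau} evaluated at each grid point $\tau_j\le t$, yielding $A_N\le(N+1)(1+\pi)e^{4bp\ee^2(1+\ln(1\vee t)) + 16bp^2\ee^4}$. For the factor $B_N$ the key input is the H\"{o}lder-type increment bound \eqref{KolMtau} with $a=0$, applied at arbitrarily high moment exponent. A dyadic chaining argument in the spirit of Garsia--Rodemich--Rumsey produces polynomial moment estimates $\EE[\Delta_N^{2k}]\le C_k|\ee|^{2k}(t/N)^{2k\alpha}$ for any H\"{o}lder exponent $\alpha<1/2$. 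Expanding $e^{4p\Delta_N^2}$ as a power series and choosing $N\asymp(1+p\ee^2(1\vee t))^8$ makes the series geometrically convergent with a universally bounded sum, so that $B_N$ is bounded by a universal constant. Multiplying the two bounds and taking square roots produces the quartic prefactor $\sqrt{N+1}\asymp(1+p\ee^2(1\vee t))^4$ and the exponential factor stated in \eqref{expbdMtt}.

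To derive \eqref{expbdUplus}, we exploit the representation \eqref{relUplus}, decomposing $1_{\Omega\setminus\sN_+}U^+_{\kappa,s}[\V{B}]$ as $(e^{-s\omega}-e^{-i\V{m}\cdot\V{B}_s})\beta_\kappa - M^{[s]}_{\kappa,s}[\V{B}] + I_{\kappa,s}[\V{B}]$. The first and third summands admit deterministic bounds in $\|\cdot\|_s$ of the form $O(|\ee|(1+\ln(1\vee t))^{1/2})$ obtained by direct computations paralleling \eqref{karl2} and \eqref{mikkel} and using Lem.~\ref{lemI}. The martingale piece is then bounded in its $\HP$-part by \eqref{expbdMtt} via $\|M^{[s]}_{\kappa,s}[\V{B}]\|_\HP\le\sup_{\tau,s'\le t}\|M^{[\tau]}_{\kappa,s'}[\V{B}]\|_\HP$, and in its $(s\omega)^{-1/2}$-weighted part by applying Rem.~\ref{remstochGronwall} to the scalar martingale obtained from It\^{o}'s formula applied to $\|\omega^{-1/2}M^{[s]}_\kappa[\V{B}]\|_\HP^2$, using the logarithmic quadratic-variation bound of Lem.~\ref{lemMtau}(1) at $a=-1/2$. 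A careful Cauchy--Schwarz assembly yields \eqref{expbdUplus}.

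The main obstacle is the control of the modulus $\Delta_N$: since \eqref{KolMtau} gives only polynomial (not exponential) moment bounds, the chaining step must balance the H\"{o}lder exponent $\alpha$, the moment order, and the discretization scale $t/N$ so that the power series for $\EE[e^{4p\Delta_N^2}]$ converges to a universal constant. This balancing dictates the polynomial growth $N\asymp(1+p\ee^2(1\vee t))^8$ and thereby produces the quartic prefactor in \eqref{expbdMtt}.
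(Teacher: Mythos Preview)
For \eqref{expbdMtt}, your discretize-and-chain strategy is a different route that can in principle be made to work, though the step ``expanding $e^{4p\Delta_N^2}$ as a power series \ldots\ makes the series geometrically convergent'' glosses over the $k$-dependence of the constants $C_k$ (which contain BDG constants of order $(ck)^k$) and relies implicitly on a Stirling-type control of $k^k/k!$ that you do not spell out. The paper's route is more direct: after reducing to $p=4$ via the rescaling $\ee\mapsto p^{1/2}\ee/2$, it applies Kolmogorov's lemma (Rem.~\ref{remKol}) to the exponential process $\tau\mapsto e^{\|M_{\kappa,\cdot}^{[\tau]}[\V{B}]\|_\HP^2}$ itself. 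The required $L^4$-increment estimate follows from $|e^X-e^Y|\le|X-Y|(e^X\vee e^Y)$, H\"older's inequality, the single-$\tau$ exponential bound \eqref{expbdMtau}, and the polynomial increment bound \eqref{KolMtau}; the quartic prefactor then arises from the factor $(1+t)^p$ in \eqref{Kol2} at $p=4$.

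For \eqref{expbdUplus} there is a genuine gap. Your claim that $(e^{-s\omega}-e^{-i\V{m}\cdot\V{B}_s})\beta_\kappa$ admits a deterministic $\|\cdot\|_s$-bound is false when $\mu=0$: since then $\beta_\kappa\notin\HP$ (behaving like $\ee|\V{k}|^{-2}$ near the origin), the low-momentum contribution to its squared $\HP$-norm depends on $|\V{B}_s|$, and no computation ``paralleling \eqref{karl2}'' can remove this. More importantly, you overlook the decisive simplification that makes the paper's argument short: by \eqref{mona0} with $\epsilon=0$ one has $\|(s\omega)^{-1/2}U^+_{\kappa,s}[\V{B}]\|_\HP^2\le c\ee^2$ \emph{deterministically} for the full $U^+$, so the $(s\omega)^{-1/2}$-weighted part of $\|\cdot\|_s$ needs no stochastic control at all, and your separate (and not straightforward) treatment of $\omega^{-1/2}M_{\kappa,s}^{[s]}[\V{B}]$ is unnecessary. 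The paper then handles the $\HP$-part by splitting at $|\V{m}|=1$: on $\{|\V{m}|<1\}$ the full $U^+$ is bounded deterministically via \eqref{karl3}, while on $\{|\V{m}|\ge1\}$ the decomposition \eqref{relUplus} becomes safe because $\beta_{1,\infty}\in\HP$, and the remaining $M_{\kappa,s}^{[s]}$- and $I$-contributions are controlled by \eqref{expbdMtt} and \eqref{mikkel}.
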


\begin{proof}
To prove \eqref{expbdMtt} we may assume without loss of generality that $p=4$. 
(Otherwise replace $\ee$ by $p^\eh\ee/2$.) For all $\tau,t,T\ge0$,
\begin{align*}
&\EE\Big[\sup_{s\le T}\big|e^{\|M_{\kappa,s}^{[\tau]}\|_\HP^2}
-e^{\|M_{\kappa,s}^{[t]}\|_\HP^2}\big|^4\Big]
\\
&\le\sup_{\sigma\ge0}\EE\Big[\sup_{s\le T}e^{12\|M_{\kappa,s}^{[\sigma]}
\|_\HP^2}\Big]^{\nf{2}{3}}\EE\Big[\sup_{s\le T}\big|\|M_{\kappa,s}^{[\tau]}\|_\HP^2
-\|M_{\kappa,s}^{[t]}\|_\HP^2\big|^{12}\Big]^{\nf{1}{3}}
\\
&\le ce^{c\ee^2(1+\ln(1\vee T))+c\ee^4}
\sup_{\sigma\ge0}\EE\Big[\sup_{s\le T}\|M_{\kappa,s}^{[\sigma]}\|_\HP^{24}\Big]^{\nf{1}{6}}
\EE\Big[\sup_{s\le T}\|M_{\kappa,s}^{[\tau]}-M_{\kappa,s}^{[t]}\|_\HP^{24}\Big]^{\nf{1}{6}}
\\
&\le c'|\ee|^{8}e^{c\ee^2(1+\ln(1\vee T))+c\ee^4}|\tau-t|^{\nf{4}{3}},
\end{align*}
where we dropped all arguments $[\V{B}]$, employed \eqref{expbdMtau} in the second step,
and made use of \eqref{polbdMtaua}, \eqref{polbdMtaub}, and \eqref{KolMtau} in the third one. 
The bound \eqref{expbdMtt} (with $p=4$) then follows from \eqref{expbdMtau}, 
Rem.~\ref{remKol} on Kolmogorov's lemma, and an argument similar to the one in
Step~3 of the proof of Lem.~\ref{lemMtau}.

Next, we employ \eqref{mona0} and \eqref{relUplus}
(multiplied with $1_{\{|\V{m}|\ge1\}}$) to derive the following analog of \eqref{karl1},
\begin{align}\nonumber
\EE\big[&\sup_{s\le t} e^{p\|1_{\Omega\setminus\sN_+}U_{\kappa,s}^+[\V{B}]\|_s^2}\big]
\le e^{cp\ee^2+8p\|\beta_{1,\infty}\|_\HP^2}
\EE\big[\sup_{s\le t}e^{2p\|1_{\{|\V{m}|<1\}}U_{\kappa,s}^+[\V{B}]\|_\HP^2}\big]^\eh
\\\nonumber
&\cdot\EE\big[\sup_{s\le t}e^{16p\|M_{\kappa,s}^{[s]}[\V{B}]\|_{\HP}^2}\big]^{\nf{1}{4}}
\EE\bigg[\sup_{s\le t}\exp\bigg(16p\Big\|\int_0^se^{-(s-r)\omega-i\V{m}\cdot\V{B}_r}
2\omega\beta_\kappa\Id r\Big\|_\HP^2\bigg)\bigg]^{\nf{1}{4}},
\end{align}
for all $p,t\ge0$ and $\kappa\in\NN\cup\{\infty\}$. Here
\begin{align}\label{karl3}
\sup_{s\le t}\|1_{\{|\V{m}|<1\}}U_{\kappa,s}^+[\V{B}]\|_\HP^2&\le4\pi\ee^2(1+\ln(1\vee t)),
\end{align}
since the left hand side of \eqref{karl3} is bounded from above by the triple integral in the first
line of \eqref{karl2}. Combining these remarks with \eqref{mikkel} and
\eqref{expbdMtt} we finally arrive at \eqref{expbdUplus}.
\end{proof}


\subsection{Basic $\HP$-valued processes for $N$ matter particles}\label{ssecUN}

As explained in Subsect.~\ref{ssecprobprel}, $\ul{\V{b}}$ denotes a $\nu$-dimensional Brownian 
motion with respect to $\BB$ with covariance matrix $\id_{\RR^\nu}$, and we think of its components 
as split into $N$ independent three-dimensional Brownian motions $\V{b}_1,\ldots,\V{b}_N$, so that
$\ul{\V{b}}=(\V{b}_1,\ldots,\V{b}_N)$. 

To start with we apply Lem.~\ref{lemUminus} and Lem.~\ref{lemUplus} to the stochastic basis
$\BB_{\mathrm{W}}^3$ and the three-dimensional canonical 
Brownian motion $\mathrm{pr}^3$ on it; recall Ex.~\ref{exWiener}. Then we obtain two 
$\PP_{\mathrm{W}}^3$-zero 
sets $\sN_{-,\mathrm{W}}$ and $\sN_{+,\mathrm{W}}$ on the complement of which we have the 
identities \eqref{defUminusinfty} and \eqref{relUplus}, respectively, with $\V{B}=\mathrm{pr}^3$.
With this we set
\begin{align*}
\sN_\infty&:=\sN_{+,\mathrm{W}}\cup\sN_{-,\mathrm{W}},
\qquad\sN_\kappa:=\emptyset,\;\;\kappa\in\NN.
\end{align*}

\begin{defn}\label{defUN} Let $\kappa\in\NN\cup\{\infty\}$.
\begin{enumerate}[leftmargin=*]
\item[{\rm(1)}] For all $\ul{\V{x}}=(\V{x}_1,\ldots,\V{x}_N)\in\RR^\nu$ and
$\ul{\V{\alpha}}=(\V{\alpha}_1,\ldots,\V{\alpha}_N)\in C([0,\infty),\RR^\nu)$, we set
\begin{align*}
U_{\kappa,t}^{N,\pm}[\ul{\V{x}},\ul{\V{\alpha}}]&:=\Big(\prod_{\ell=1}^N
1_{\Omega_{\mathrm{W}}\setminus\sN_\kappa}(\V{\alpha}_\ell)\Big)
\sum_{\ell=1}^Ne^{-i\V{m}\cdot\V{x}_\ell}U^\pm_{\kappa,t}[\V{\alpha}_\ell],\quad t\ge0.
\end{align*}
\item[{\rm(2)}]
For every $\fF_0$-measurable $\ul{\V{q}}=(\V{q}_1,\ldots,\V{q}_N):\Omega\to\RR^\nu$, 
we define two continuous, $(\fF_t)_{t\ge0}$-adapted, $\HP$-valued processes 
$U_{\kappa}^{N,-}(\ul{\V{q}})$ and $U_{\kappa}^{N,+}(\ul{\V{q}})$ by setting
\begin{align}\label{defUpmN}
U_{\kappa,t}^{N,\pm}(\ul{\V{q}})&:=U_{\kappa,t}^{N,\pm}[\ul{\V{q}},\ul{\V{b}}]=
1_{\ul{\V{b}}_\bullet^{-1}((\Omega_{\mathrm{W}}^3\setminus\sN_{\kappa})^N)}
\sum_{\ell=1}^Ne^{-i\V{m}\cdot\V{q}_\ell}U^\pm_{\kappa,t}[\V{b}_\ell],\quad t\ge0.
\end{align}
\end{enumerate}
\end{defn}

The notation introduced in the first part of the preceding definition is 
convenient for stating the transformation properties of the following remark, where we use
the notation for time-shifted objects introduced in \eqref{deftb} and \eqref{alphashiftrev}.

\begin{rem}\label{remrevUpmN}
\begin{enumerate}[leftmargin=*]
\item[{\rm(1)}]
For all $\kappa\in\NN$, $\ul{\V{x}}\in\RR^\nu$, $\ul{\V{\alpha}}\in C([0,\infty),\RR^\nu)$,
and $s,t\ge0$,
\begin{align}\label{revUpmN}
U_{\kappa,t}^{N,\pm}[\ul{\V{x}}+\ul{\V{\alpha}}_t,\ul{\V{\alpha}}_{t-\bullet}-\ul{\V{\alpha}}_t]
&=U_{\kappa,t}^{N,\mp}[\ul{\V{x}},\ul{\V{\alpha}}],
\\\label{shift1N}
e^{-t\omega}U_{\kappa,s}^{N,-}[\ul{\V{x}}+\ul{\V{\alpha}}_t,{}^t\ul{\V{\alpha}}]
+U_{\kappa,t}^{N,-}[\ul{\V{x}},\ul{\V{\alpha}}]
&=U_{\kappa,s+t}^{N,-}[\ul{\V{x}},\ul{\V{\alpha}}],
\\\label{shift2N}
U_{\kappa,s}^{N,+}[\ul{\V{x}}+\ul{\V{\alpha}}_t,{}^t\ul{\V{\alpha}}]
+e^{-s\omega}U_{\kappa,t}^{N,+}[\ul{\V{x}},\ul{\V{\alpha}}]
&=U_{\kappa,s+t}^{N,+}[\ul{\V{x}},\ul{\V{\alpha}}].
\end{align}
Here \eqref{revUpmN} follows from \eqref{revUpm}, while \eqref{shift1N} and \eqref{shift2N} are
consequences of \eqref{shift1} and \eqref{shift2}, respectively.
\item[{\rm(2)}] Let $t\ge0$. Employing \eqref{shift1}, \eqref{shift2}, and \eqref{defUpmN} we then
find some $\PP$-zero set $\sN$ such that the following identities
hold on $\Omega\setminus\sN$, for all $\fF_0$-measurabe 
$\ul{\V{q}}:\Omega\to\RR^\nu$ and $s\ge0$,
\begin{align}\label{shift1Ninfty}
e^{-t\omega}U_{\infty,s}^{N,-}[\ul{\V{q}}+\ul{\V{b}}_t,{}^t\ul{\V{b}}]
+U_{\infty,t}^{N,-}[\ul{\V{q}},\ul{\V{b}}]
&=U_{\infty,s+t}^{N,-}[\ul{\V{q}},\ul{\V{b}}],
\\\label{shift2Ninfty}
U_{\infty,s}^{N,+}[\ul{\V{q}}+\ul{\V{b}}_t,{}^t\ul{\V{b}}]
+e^{-s\omega}U_{\infty,t}^{N,+}[\ul{\V{q}},\ul{\V{b}}]
&=U_{\infty,s+t}^{N,+}[\ul{\V{q}},\ul{\V{b}}].
\end{align}
\end{enumerate}
\end{rem}

\begin{rem}\label{remUpmNcont}
Let $\kappa\in\NN\cup\{\infty\}$, $\ul{\V{\alpha}}\in C([0,\infty),\RR^\nu)$, $\gamma\in\Omega$,
and $\ul{\V{q}}:\Omega\to\RR^\nu$ be $\fF_0$-measurable. Recall the definition of $\mathfrak{k}$
in \eqref{deffrk} and that the canonical embedding $\mathfrak{k}\to\HP$ is continuous.
Then, by the above constructions, Rem.~\ref{remguenther}, Lem.~\ref{lemUminus}(1), and
Lem.~\ref{lemUplus}(1) the following maps are continuous,
\begin{align*}
[0,\infty)\times\RR^\nu\ni(t,\ul{\V{x}})\mapsto U_{\kappa,t}^{N,\pm}[\ul{\V{x}},\ul{\V{\alpha}}]
\in\mathfrak{k},\quad[0,\infty)\ni t\mapsto(U_{\kappa,t}^{N,\pm}(\ul{\V{q}}))(\gamma)\in\mathfrak{k},
\end{align*}
and $U_{\kappa}^{N,\pm}(\ul{\V{q}})$ is adapted as a $\mathfrak{k}$-valued process.
Taking \eqref{mona0} into account we further see that, for all $t>0$,
\begin{align}\label{tnbUpmN}
\sup_{\ul{\V{x}}\in\RR^\nu}\sup_{0<s\le t}\|U_{\kappa,s}^{N,\pm}[\ul{\V{x}},\ul{\V{\alpha}}]\|_s<\infty,
\quad\sup_{0<s\le t}\|(U_{\kappa,s}^{N,\pm}(\ul{\V{q}}))(\gamma)\|_s<\infty.
\end{align}
\end{rem}

\begin{cor}\label{corUpmN}
There exist universal constants $b,c>0$ such that
\begin{align}\nonumber
\sup_{\ul{\V{q}}}\EE\Big[&\sup_{s\le t}e^{p\|U_{\kappa,s}^{N,\pm}(\ul{\V{q}})\|_s^2}\Big]
\\\label{expbdUpmN}
&\le b^N\big(1+p\ee^2N(1\vee t)\big)^{N}e^{cp\ee^2N^2(1+\ln(1\vee t))+cp^2\ee^4N^3},
\end{align}
for all $p,t\ge0$ and $\kappa\in\NN\cup\{\infty\}$. Furthermore,
\begin{align}\label{convUpmN}
\sup_{\ul{\V{q}}}\EE\Big[\sup_{\tau\le s\le t}
\big\|U_{\kappa,s}^{N,\pm}(\ul{\V{q}})-U_{\infty,s}^{N,\pm}(\ul{\V{q}})\big\|_{s}^p\Big]
\xrightarrow{\;\;\kappa\to\infty\;\;}0,\quad p,\tau>0,\,t\ge\tau.
\end{align}
Here the suprema $\sup_{\ul{\V{q}}}$ are taken over all $\fF_0$-measurable
$\ul{\V{q}}:\Omega\to\RR^\nu$.
\end{cor}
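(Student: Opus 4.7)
The strategy is to reduce both claims to the one-particle estimates already obtained in Lemmas \ref{lemexpMminus}, \ref{lemUminus}, \ref{lemUplus}, and \ref{lembdMtt}, exploiting the independence of the three-dimensional components $\V{b}_1,\ldots,\V{b}_N$ of $\ul{\V{b}}$ together with the elementary unitarity $|e^{-i\V{m}\cdot\V{q}_\ell}|=1$ that renders the bounds independent of $\ul{\V{q}}$.

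First I would observe that, since multiplication by $e^{-i\V{m}\cdot\V{q}_\ell}$ is a unitary operator on $\HP$ that commutes with $\omega$, it preserves the time-dependent norm $\|\cdot\|_s$ defined in \eqref{defnormt}. Hence from \eqref{defUpmN}, the triangle inequality, and Cauchy--Schwarz,
\begin{align*}
\|U_{\kappa,s}^{N,\pm}(\ul{\V{q}})\|_s^2 \;\le\; \Big(\sum_{\ell=1}^N \|U^\pm_{\kappa,s}[\V{b}_\ell]\|_s\Big)^2 \;\le\; N\sum_{\ell=1}^N \|U^\pm_{\kappa,s}[\V{b}_\ell]\|_s^2,
\end{align*}
pointwise on $\Omega$ (modulo the $\PP$-null set where the indicator fails), uniformly in $\ul{\V{q}}$. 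Exponentiating turns the sum into a product, and using that the supremum of a product of non-negative functions is bounded by the product of suprema,
\begin{align*}
\sup_{s\le t} e^{p\|U_{\kappa,s}^{N,\pm}(\ul{\V{q}})\|_s^2}
\;\le\; \prod_{\ell=1}^N \sup_{s\le t} e^{pN\|U^\pm_{\kappa,s}[\V{b}_\ell]\|_s^2}.
\end{align*}

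Next, since the three-dimensional Brownian motions $\V{b}_1,\ldots,\V{b}_N$ are mutually independent and each is distributed as the Brownian motion $\V{B}$ on $\BB$ appearing in Subsections \ref{ssecUminus} and \ref{ssecUplus}, taking expectations factorizes the right-hand side into the $N$-th power of one-particle exponential moments. Inserting $pN$ in place of $p$ in \eqref{expbdUminus} and \eqref{expbdUplus} and raising the resulting bounds to the $N$-th power yields the required estimate \eqref{expbdUpmN} after absorbing the numerical prefactors $(1+\pi)^{N/2}$ or $(c')^N$ and the $N$-th power of the polynomial factor $(1+pN\ee^2(1\vee t))$ into constants of the form $b^N(1+p\ee^2 N(1\vee t))^N$, with the exponential terms $e^{cpN^2\ee^2(1+\ln(1\vee t))+cp^2 N^3\ee^4}$ matching the stated form.

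For the convergence claim \eqref{convUpmN}, the same unitarity argument together with the triangle inequality gives
\begin{align*}
\sup_{\tau\le s\le t}\|U_{\kappa,s}^{N,\pm}(\ul{\V{q}})-U_{\infty,s}^{N,\pm}(\ul{\V{q}})\|_s
\;\le\; \sum_{\ell=1}^N \sup_{\tau\le s\le t}\|U^\pm_{\kappa,s}[\V{b}_\ell]-U^\pm_{\infty,s}[\V{b}_\ell]\|_s
\end{align*}
almost surely, uniformly in $\ul{\V{q}}$. Applying Minkowski's inequality in $L^p(\PP)$ (if $p\ge 1$; otherwise first bound $(\sum_\ell a_\ell)^p\le N^p\max_\ell a_\ell^p$) reduces the problem to $N$ copies of the one-particle convergence statements \eqref{convUpluss} and the analogous conclusion for $U^-$ contained in Lem.~\ref{lemUminus}(2), each of which goes to $0$ as $\kappa\to\infty$. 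I do not expect any serious obstacle; the only point requiring mild care is the bookkeeping of the $\PP$-null sets implicit in the indicators $1_{\ul{\V{b}}^{-1}((\Omega_{\mathrm{W}}^3\setminus\sN_\kappa)^N)}$, but since each $\V{b}_\ell$ almost surely avoids $\sN_{\pm,\mathrm{W}}$ these indicators equal $1$ $\PP$-a.s. and may be dropped throughout.
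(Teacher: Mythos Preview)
Your proposal is correct and follows essentially the same approach as the paper: bound $\|U_{\kappa,s}^{N,\pm}(\ul{\V{q}})\|_s^2$ by $N\sum_{\ell}\|U_{\kappa,s}^\pm[\V{b}_\ell]\|_s^2$ via Cauchy--Schwarz, exponentiate to obtain a product, use independence of the $\V{b}_\ell$ to factor the expectation into the $N$-th power of the one-particle bound with $pN$ in place of $p$, and deduce \eqref{convUpmN} from Minkowski's inequality and the one-particle convergence results. Your extra remarks on the unitarity of $e^{-i\V{m}\cdot\V{q}_\ell}$ and the handling of the null-set indicators are accurate and merely make explicit what the paper leaves implicit.
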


\begin{proof}
In view of the bound
$\|U_{\kappa,s}^{N,\pm}(\ul{\V{q}})\|_s^2\le N\sum_{\ell=1}^N
\|1_{\Omega_{\mathrm{W}}^3\setminus\sN_\kappa}(\V{b}_\ell)U^\pm_{\kappa,s}[\V{b}_\ell]\|_s^2$
and the independence of the random variables
$\sup_{s\le t}e^{pN\|1_{\Omega_{\mathrm{W}}^3\setminus\sN_\kappa}(\V{b}_\ell) 
U^\pm_{\kappa,s}[\V{b}_\ell]\|_s^2}$ for $\ell\in\{1,\ldots,N\}$, we observe that
\begin{align*}
\EE\Big[\sup_{s\le t}e^{p\|U_{\kappa,s}^{N,\pm}(\ul{\V{q}})\|_s^2}\Big]
&\le\EE\bigg[\prod_{\ell=1}^N\sup_{s\le t}e^{pN\|U_{\kappa,s}^{\pm}[\V{b}_\ell]\|_s^2}\bigg]
=\EE\big[\sup_{s\le t}e^{pN\|U_{\kappa,s}^{\pm}[\V{b}_1]\|_s^2}\big]^N.
\end{align*}
Hence, \eqref{expbdUpmN} is a consequence of \eqref{expbdUminus} and \eqref{expbdUplus}.
On account of Minkowski's inequality, \eqref{convUpmN} follows from
Lem.~\ref{lemUminus}(2) and Lem.~\ref{lemUplus}(2).
\end{proof}


\subsection{A useful lemma}\label{sseckey}

\noindent
In this subsection we prove a technical lemma that will be crucially used in the proof of 
Lem.~\ref{lemkeith} below, which is the key step in the derivation of our exponential moment bound
on the complex action. The lemma deals with the following process: 

\begin{defn}\label{defnthalia}
For all $\kappa\in\NN\cup\{\infty\}$ and $\fF_0$-measurable $\ul{\V{q}}:\Omega\to\RR^\nu$, we set
\begin{align}\label{thalia2}
S_{\kappa,t}^{N}(\ul{\V{q}})&:=\sum_{j=1}^Ne^{-i\V{m}\cdot\V{q}_j}S_{\kappa,t}^{[t]}[\V{b}_j]=
\sum_{j=1}^Ne^{-i\V{m}\cdot\V{q}_j}(M_{\kappa,t}^{[t]}[\V{b}_j]
-I_{\kappa,t}[\V{b}_j]),\quad t\ge0.
\end{align}
\end{defn}

In \eqref{thalia2} we used notation introduced in \eqref{defMtau}, \eqref{defIalpha}, and
\eqref{defStt}. For all $\Lambda\ge0$ and $a\in (\mh,0)$, we further abbreviate
\begin{align*}
\mho_\Lambda(a)&:=4\pi\bigg(\frac{6(1-({1\wedge\Lambda})^{1-2|a|})}{1-2|a|}
+\frac{8}{(1+2|a|)(1\vee\Lambda)^{1+2|a|}}+\frac{1}{|a|(1\vee\Lambda)^{2|a|}}\bigg).
\end{align*}

The inequality \eqref{keith0} in the next lemma will be very useful later on because the power
$\omega^{a+\eh}$ in front of $S_{\kappa}^{N}(\ul{\V{q}})$ on its left hand side is replaced by the 
smaller power $\omega^a$ in the martingale $\ell_{\Lambda,\kappa}(\ul{\V{q}})$.

\begin{lem}\label{lemthalia}
Let $a\in(\mh,0)$, $\kappa\in\NN\cup\{\infty\}$, and $\ul{\V{q}}:\Omega\to\RR^\nu$ 
be $\fF_0$-measurable.
Let $\Lambda\ge0$ and $\vr_\Lambda:\RR^3\to[0,1]$ be the characteristic function of the set
$\{|\V{m}|\ge\Lambda\}$. Then, $\PP$-a.s.,
\begin{align}\nonumber
\|\vr_{\Lambda}\omega^{a}{S}_{\kappa,t}^{N}(\ul{\V{q}})\|_{\HP}^2
&=-2\int_0^t\|\vr_{\Lambda}\omega^{a+\eh}{S}_{\kappa,s}^{N}(\ul{\V{q}})\|_{\HP}^2\Id s
+2a_{\Lambda,\kappa,t}(\ul{\V{q}})
\\\label{kathrin1b}
&\quad+2\ell_{\Lambda,\kappa,t}(\ul{\V{q}})+Nt\|\vr_\Lambda
\omega^{a}i\V{m}\beta_\kappa\|_{\HP}^2,\quad t\ge0,
\end{align}
with
\begin{align}\label{kathrin99b}
a_{\Lambda,\kappa,t}(\ul{\V{q}})&
:=\sum_{j=1}^N\int_0^t\SPn{\vr_{\Lambda}\omega^{a}S_{\kappa,s}^{N}(\ul{\V{q}})}{
\vr_{\Lambda}e^{-i\V{m}\cdot(\V{q}_j+\V{b}_{j,s})}2\omega^{1+a}\beta_\kappa}_{\HP}\Id s,
\\\label{kathrin100}
\ell_{\Lambda,\kappa,t}(\ul{\V{q}})
&:=\sum_{j=1}^N\int_0^t\SPn{\vr_{\Lambda}\omega^{a}S_{\kappa,s}^{N}(\ul{\V{q}})}{\vr_{\Lambda}
e^{-i\V{m}\cdot(\V{q}_j+\V{b}_{j,s})}\omega^{a}i\V{m}\beta_\kappa}_{\HP}\Id\V{b}_{j,s}.
\end{align}
Moreover, the following inequality $\PP$-a.s. holds for all $t\ge0$,
\begin{align}\label{keith0}
\int_0^t\|\vr_{\Lambda}\omega^{a+\eh}{S}_{\kappa,s}^{N}(\ul{\V{q}})\|_{\HP}^2\Id s
&\le \ell_{\Lambda,\kappa,t}(\ul{\V{q}})+2\mho_{\Lambda}(a)\ee^2N^2t+\frac{Nt}{2}\|\vr_\Lambda
\omega^{a}i\V{m}\beta_\kappa\|_{\HP}^2.
\end{align}
Since $\ell_{\Lambda,\kappa}(\ul{\V{q}})$ is a continuous $L^2$-martingale, it further entails, 
for all $t\ge0$,
\begin{align}\label{keithE}
\EE\Big[\int_0^t\|\vr_{\Lambda}\omega^{a+\eh}{S}_{\kappa,s}^{N}(\ul{\V{q}})\|_{\HP}^2\Id s\Big]
&\le 2\mho_{\Lambda}(a)\ee^2N^2t+\frac{Nt}{2}\|\vr_\Lambda\omega^{a}i\V{m}\beta_\kappa\|_{\HP}^2.
\end{align}
\end{lem}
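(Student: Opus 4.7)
\textbf{Proof plan for Lemma~\ref{lemthalia}.}
The strategy is to apply It\=o's formula for the squared Hilbert-space norm to $Y_t:=\vr_\Lambda\omega^aS_{\kappa,t}^N(\ul{\V{q}})$, after first determining the semimartingale decomposition of each summand $e^{-i\V{m}\cdot\V{q}_j}S_{\kappa,t}^{[t]}[\V{b}_j]$. The cleanest route to this decomposition is via the identity
$
S_{\kappa,t}^{[t]}[\V{B}]=(e^{-t\omega}-e^{-i\V{m}\cdot\V{B}_t})\beta_\kappa-U_{\kappa,t}^+[\V{B}]
$
which, on the complement of $\sN_+$, follows by combining \eqref{relUplus} with the definitions of $I_{\kappa,t}$ and $S_{\kappa,t}^{[t]}$. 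Each of the three summands is either a deterministic $\HP$-valued $C^1$-function of $t$ (the first and, by Lem.~\ref{lemUpmcontHP} applied pathwise, the third, whose $t$-derivative is $-\omega U_{\kappa,t}^+[\V{B}]+e^{-i\V{m}\cdot\V{B}_t}f_\kappa$) or a function of $\V{B}_t$ to which ordinary It\=o's formula applies. Using $f_\kappa=(\omega+\V{m}^2/2)\beta_\kappa$ and the key fact that $\omega$ and the phase $e^{-i\V{m}\cdot\V{x}}$, both being multiplication operators in the $\V{k}$-variable, commute, the half of the quadratic-variation drift cancels against the $\V{m}^2/2$-part of $e^{-i\V{m}\cdot\V{B}_t}f_\kappa$, and we arrive at the $\HP$-valued SDE
\begin{align*}
\Id S_{\kappa,t}^{[t]}[\V{B}]&=-\omega S_{\kappa,t}^{[t]}[\V{B}]\,\Id t-e^{-i\V{m}\cdot\V{B}_t}2\omega\beta_\kappa\,\Id t+e^{-i\V{m}\cdot\V{B}_t}i\V{m}\beta_\kappa\,\Id\V{B}_t.
\end{align*}

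Multiplying by $\vr_\Lambda\omega^a e^{-i\V{m}\cdot\V{q}_j}$, which again commutes with everything in sight, and summing over the $N$ independent components $\V{b}_1,\ldots,\V{b}_N$, we obtain the semimartingale representation
\begin{align*}
\Id Y_t=\Big(-\vr_\Lambda\omega^{a+1}S_{\kappa,t}^N(\ul{\V{q}})-\sum_{j=1}^N\vr_\Lambda e^{-i\V{m}\cdot(\V{q}_j+\V{b}_{j,t})}2\omega^{1+a}\beta_\kappa\Big)\Id t+\Id Z_t,
\end{align*}
where $Z_t$ is the $\HP_\RR$-valued martingale with integrand $\vr_\Lambda e^{-i\V{m}\cdot(\V{q}_j+\V{b}_{j,t})}\omega^a i\V{m}\beta_\kappa$ against $\Id\V{b}_{j,t}$. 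Its trace quadratic variation is the deterministic $t\mapsto Nt\|\vr_\Lambda\omega^a i\V{m}\beta_\kappa\|_\HP^2$, since the phases have modulus one and the $\V{b}_j$ are independent. Plugging this decomposition into the Hilbert-space It\=o formula for $\|Y_t\|_\HP^2$ (as in \cite[Thm.~4.32]{daPrZa2014}, already used in the paper, and noting by Rem.~\ref{rem-real} that we work in the real Hilbert space $\HP_\RR$), the cross term $-\langle\vr_\Lambda\omega^a S^N,\vr_\Lambda\omega^{a+1}S^N\rangle$ collapses to $-\|\vr_\Lambda\omega^{a+1/2}S^N\|_\HP^2$, and \eqref{kathrin1b} follows, with $\ell_{\Lambda,\kappa,t}(\ul{\V{q}})$ and $a_{\Lambda,\kappa,t}(\ul{\V{q}})$ emerging as the martingale and drift cross-terms after identifying constants.

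To obtain \eqref{keith0}, the pointwise inequality to be exploited is the factorization $\omega^a\cdot 2\omega^{1+a}=2\omega^{a+1/2}\cdot\omega^{a+1/2}$, which allows the integrand of $a_{\Lambda,\kappa,t}$ to be written as $2\langle\vr_\Lambda\omega^{a+1/2}S^N_{\kappa,s},\vr_\Lambda\omega^{a+1/2}e^{-i\V{m}\cdot(\V{q}_j+\V{b}_{j,s})}\beta_\kappa\rangle$. Since the unimodular phases leave the $\HP$-norm of the right factor invariant, Cauchy--Schwarz and a weighted $2uv\le u^2/\alpha+\alpha v^2$ with $\alpha$ chosen proportional to $N$ produce the absorbing bound
\begin{align*}
|a_{\Lambda,\kappa,t}(\ul{\V{q}})|\le\tfrac{1}{2}\int_0^t\|\vr_\Lambda\omega^{a+1/2}S^N_{\kappa,s}(\ul{\V{q}})\|_\HP^2\Id s+c N^2 t\,\|\vr_\Lambda\omega^{a+1/2}\beta_\kappa\|_\HP^2.
\end{align*}
Dropping the non-positive $\|Y_t\|^2$-term from \eqref{kathrin1b} and using this bound yields \eqref{keith0} once $\|\vr_\Lambda\omega^{a+1/2}\beta_\kappa\|_\HP^2$ is estimated by a direct computation in spherical coordinates (separating $\rho\le 1$ and $\rho>1$, and treating $\rho\le\Lambda\wedge 1$ vs.~$\rho>1\vee\Lambda$) by $c\ee^2\mho_\Lambda(a)$, whose three terms precisely capture the IR contribution, the UV contribution, and an intermediate-regime contribution. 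Finally, \eqref{keithE} follows by taking expectations in \eqref{keith0} once we know that $\ell_{\Lambda,\kappa}(\ul{\V{q}})$ is a genuine $L^2$-martingale (hence of mean zero), which in turn follows from the moment bound \eqref{polbdMtt} for $\omega^a M_\kappa^{[\tau]}[\V{b}_j]$ together with Cauchy--Schwarz, ensuring square-integrability of the martingale integrand on $[0,t]\times\Omega$.

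The main technical obstacle will be the careful justification of It\=o's formula for the Hilbert-space semimartingale $Y_t$ on the singular diagonal $\tau=t$: the process $M_{\kappa,t}^{[\tau]}[\V{B}]$ is, for each fixed $\tau$, only defined up to indistinguishability, and the dependence on $t$ through both the integration limit and the exponent $e^{-(\tau-s)\omega}$ with $\tau=t$ is not a standard It\=o integral. The detour through \eqref{relUplus} resolves this, at the cost of requiring domain control for $\omega^{a+1}$ applied to $S^{[t]}_{\kappa,t}$; here the projector $\vr_\Lambda$ provides the crucial high-momentum localization, since on $\{|\V{m}|\ge\Lambda\}$ each extra power of $\omega$ is controlled by the existing powers together with the known bounds from Lemmas~\ref{lemMtau} and~\ref{lemI}.
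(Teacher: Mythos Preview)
Your overall strategy is the same as the paper's, and the derivation of the It\=o identity \eqref{kathrin1b} and the passage from \eqref{kathrin1b} to \eqref{keith0}, \eqref{keithE} are morally correct. There is, however, a genuine gap in the justification of the $\HP$-valued SDE for $Y_t=\vr_\Lambda\omega^aS^N_{\kappa,t}(\ul{\V{q}})$ when $\kappa=\infty$.

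Your closing paragraph locates the obstacle correctly but mis-diagnoses the remedy. The projector $\vr_\Lambda=1_{\{|\V{m}|\ge\Lambda\}}$ is an \emph{infra-red} cutoff: on $\{|\V{m}|\ge\Lambda\}$ the weight $\omega$ is bounded \emph{below}, not above, so $\vr_\Lambda$ controls negative powers of $\omega$, never positive ones. The drift term $-\vr_\Lambda\omega^{a+1}S^N_{\infty,t}$ requires $\omega^{a+1}M^{[t]}_{\infty,t}\in\HP$ with $a+1\in(\tfrac12,1)$, which is \emph{not} covered by Lem.~\ref{lemMtau}(1) (there the exponent must lie in $[-\tfrac12,\tfrac12)$), and a direct computation of $\int_0^t\|\omega^{a+1}e^{-(t-s)\omega}i\V{m}\beta_\infty\|_\HP^2\Id s$ shows it diverges. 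Likewise, your route to the SDE via \eqref{relUplus} and Lem.~\ref{lemUpmcontHP} breaks down for $\kappa=\infty$: Lem.~\ref{lemUpmcontHP} is stated only for $\kappa\in\NN$, and for $\kappa=\infty$ the two pieces $-\omega U^+_{\infty,t}$ and $e^{-i\V{m}\cdot\V{B}_t}f_\infty$ of the formal derivative are individually not $\HP$-valued.

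The paper's proof supplies exactly the missing ultra-violet localization by introducing $\theta_n:=1_{\{|\V{m}|<n\}}$ and working with $\vr_{\Lambda,n}:=\vr_\Lambda\theta_n$. Under $\theta_n$ the factorization $\theta_n\omega^aM^{[\tau]}_{\kappa,t}=e^{-\tau\omega}\int_0^te^{s\omega-i\V{m}\cdot\V{b}_{j,s}}\theta_n\omega^ai\V{m}\beta_\kappa\Id\V{b}_{j,s}$ becomes legitimate (the integrand is $\HP$-valued since $\theta_ne^{s\omega}$ is bounded), and It\=o's formula applied to $(t,g)\mapsto\SPn{e^{-t\omega}h}{g}$ yields the SDE for $\theta_n\omega^aM^{[t]}_{\kappa,t}$ directly, uniformly in $\kappa\in\NN\cup\{\infty\}$. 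One then obtains the cut-off version \eqref{kathrin1} of \eqref{kathrin1b} and passes to the limit $n\to\infty$ in the \emph{integrated} identity, where the problematic drift has collapsed to the manifestly finite $\int_0^t\|\vr_{\Lambda,n}\omega^{a+\eh}S^N\|^2\Id s$; dominated convergence and the stochastic dominated convergence theorem handle the remaining terms.

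A minor further point: your Young-inequality absorption for $a_{\Lambda,\kappa,t}$ is correct in spirit but yields \eqref{keith0} only up to constants. The paper instead rewrites $\omega^aS^N_{\kappa,s}$ in the integrand of $a_{\Lambda,\kappa,t}$ via \eqref{relUplus} and obtains the \emph{deterministic} bound $|a_{\Lambda,\kappa,t}(\ul{\V{q}})|\le 2\mho_\Lambda(a)\ee^2N^2t$, so that \eqref{keith0} follows from \eqref{kathrin1b} without any absorption and with the stated constant.
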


\begin{proof}
{\em Step~1.} Employing \eqref{relUplus} to re-write the integrands
in \eqref{kathrin99b} and \eqref{kathrin100} and taking Lem.~\ref{lemUinftyint} into account,
we see that $\ell_{\Lambda,\kappa}(\ul{\V{q}})$ is indeed a continuous $L^2$-martingale 
$\PP$-a.s. satisfying
\begin{align}\label{thalia55}
\llbracket\ell_{\Lambda,\kappa}(\ul{\V{q}})\rrbracket_t
&=\sum_{j=1}^N
\int_0^t\SPb{\omega^{a}S_{\kappa,s}^{N}(\ul{\V{q}})}{
\vr_{\Lambda} e^{-i\V{m}\cdot(\V{q}_j+\V{b}_{j,s})}\omega^{a}i\V{m}\beta_\kappa}^2\Id s,
\end{align}
and we $\PP$-a.s. find the bounds
\begin{align}\label{thalia56}
|a_{\Lambda,\kappa,t}(\ul{\V{q}})|&\le 2G_{\Lambda,t}(a)\ee^2N^2t,\quad
\llbracket\ell_{\Lambda,\kappa}(\ul{\V{q}})\rrbracket_t\le G_{\Lambda,t}(a)^2\ee^4N^3t,
\quad t\ge0,
\end{align}
where
\begin{align}\nonumber
G_{\Lambda,t}(a)&:=2\|\vr_\Lambda\omega^{a+\nf{1}{2}}\beta_\infty\|_{\HP}^2
+\int_{\RR^3}\int_0^t\frac{e^{-(t-s)\omega}}{\omega^{2|a|}(\omega+\V{m}^2/2)}\Id s\Id\lambda^3
\\\label{thalia57}
&\le4\pi\ee^2\bigg(3\int_{1\wedge\Lambda}^1\frac{\Id\rho}{\rho^{2|a|}}+8\int_{1\vee\Lambda}^\infty
\frac{\Id\rho}{\rho^{2+2|a|}}+2\int_{1\vee\Lambda}^\infty\frac{\Id\rho}{\rho^{1+2|a|}}\bigg)
=\mho_\Lambda(a).
\end{align}
In particular, we see that \eqref{kathrin1b} implies \eqref{keith0}.

\smallskip

\noindent
{\em Step~2.}
Let $\theta_n$ denote the characteristic function of the open ball of radius $n$ about $0$ in $\RR^3$
and abbreviate
\begin{align*}
{Z}_{\kappa,t}^{[n]}[\V{b}_j]
&:=\int_0^te^{s\omega-i\V{m}\cdot\V{b}_{j,s}}\theta_n\omega^{a}
i\V{m}\beta_\kappa\Id\V{b}_{j,s},\quad t\ge0,\,n\in\NN,\,j\in\{1,\ldots,N\}.
\end{align*}
Then we find a $\PP$-zero set $\sN\in\fF$ such that, for all 
$(t,\tau)\in[0,\infty)\times(\QQ\cap[0,\infty))$ with $t<\tau$, $n\in\NN$,
and $j\in\{1,\ldots,N\}$,
\begin{align}\label{felix3}
\theta_n\omega^{a}{M}_{\kappa,t}^{[\tau]}[\V{b}_j]
&=e^{-\tau\omega}{Z}_{\kappa,t}^{[n]}[\V{b}_j],\quad\text{on $\Omega\setminus\sN$.}
\end{align}
Here both sides are continuous in $(t,\tau)\in[0,\infty)^2$ at every point of
$\Omega$, whence the statement \eqref{felix3} is actually valid for all $0\le t\le\tau$, $n\in\NN$,
and $j\in\{1,\ldots,N\}$.

For every $h\in\dom(\omega^2)$, the function 
$[0,\infty)\times\HP_{\RR}\ni(t,g)\mapsto\SPn{e^{-t\omega}h}{g}_{\HP}$ is twice continuously 
differentiable and its partial derivatives of order $\le2$ are bounded on bounded subsets of
$[0,\infty)\times\HP_{\RR}$, whence we may apply the It\={o} formula of 
\cite[Thm.~4.32]{daPrZa2014} to the process
$(\SPn{e^{-t\omega}h}{{Z}_{\kappa,t}^{[n]}[\V{b}_j]}_\HP)_{t\ge0}$.
Doing this for every $h$ in a countable dense
subset of $\dom(\omega^2)$, we $\PP$-a.s. deduce that
\begin{align*}
e^{-t\omega}{Z}_{\kappa,t}^{[n]}[\V{b}_j]
&=-\int_0^t\theta_n\omega e^{-s\omega}{Z}_{\kappa,s}^{[n]}[\V{b}_j]\Id s
+L_{\kappa,t}^{[n]}[\V{b}_j],\quad t\ge0,
\end{align*}
with the continuous $\HP_{\RR}$-valued $L^2$-martingale
\begin{align*}
L_{\kappa,t}^{[n]}[\V{b}_j]&:=\int_0^t e^{-i\V{m}\cdot\V{b}_{j,s}}\theta_n
\omega^{a}i\V{m}\beta_\kappa\Id\V{b}_{j,s},\quad t\ge0.
\end{align*}
In combination with \eqref{felix3} this shows that, $\PP$-a.s., 
\begin{align*}
\theta_n\omega^{a}{M}_{\kappa,t}^{[t]}[\V{b}_j]
&=-\int_0^t\omega\theta_n\omega^{a}{M}_{\kappa,s}^{[s]}[\V{b}_j]\Id s
+L_{\kappa,t}^{[n]}[\V{b}_j],\quad t\ge0.
\end{align*}
Taking also \eqref{derI} into account we $\PP$-a.s. arrive at
\begin{align}\nonumber
\theta_n\omega^{a}{S}_{\kappa,t}^{[t]}[\V{b}_j]
&=-\int_0^t\omega\theta_n\omega^{a}{S}_{\kappa,s}^{[s]}[\V{b}_j]\Id s
\\\label{thalia1}
&\quad-\theta_n\int_0^t2e^{-i\V{m}\cdot\V{b}_{j,s}}\omega^{1+a}\beta_\kappa\Id s
+L_{\kappa,t}^{[n]}[\V{b}_j],\quad t\ge0.
\end{align}
for all $n\in\NN$ and $j\in\{1,\ldots,N\}$. 
Set $\vr_{\Lambda,n}:=\vr_\Lambda\theta_n$ in what follows. Since the strongly $\fF_0$-measurable 
$\LO(\HP_{\RR})$-valued function $e^{-i\V{m}\cdot\V{q}_j}$ (i.e., 
$\Omega\ni\gamma\mapsto e^{-i\V{m}\cdot\V{q}_j(\gamma)}h$
is $\fF_0$-measurable, for every $h\in\HP_{\RR}$) commutes up to indistinguishability with the
stochastic integrations, we $\PP$-a.s. have
\begin{align*}
\sum_{j=1}^N
e^{-i\V{m}\cdot\V{q}_j}\vr_\Lambda L_{\kappa,t}^{[n]}[\V{b}_j]
&=L_{\Lambda,\kappa,t}^{N,[n]}(\ul{\V{q}}):=
\sum_{j=1}^N\int_0^te^{-i\V{m}\cdot(\V{q}_j+\V{b}_{j,s})}\vr_{\Lambda,n}
\omega^{a}i\V{m}\beta_\kappa\Id\V{b}_{j,s},
\end{align*}
for all $t\ge0$. From \eqref{thalia1} and the latter remark we infer that every process 
$(\vr_{\Lambda,n}\omega^{-\nf{1}{4}}{S}_{\kappa,t}^{N,+}(\ul{\V{q}}))_{t\ge0}$ 
with $n\in\NN$ is a continuous $\HP_{\RR}$-valued semi-martingale. 

Define $a_{\Lambda,\kappa,t}^{[n]}(\ul{\V{q}})$ and
$\ell_{\Lambda,\kappa,t}^{[n]}(\ul{\V{q}})$ upon replacing $\vr_\Lambda$ by 
$\vr_{\Lambda,n}$ in \eqref{kathrin99b} and \eqref{kathrin100}, respectively.
Then we may employ It\={o}'s formula in
combination with \eqref{thalia1} and \eqref{thalia2} to $\PP$-a.s. get
\begin{align}\nonumber
\|\vr_{\Lambda,n}\omega^{a}S_{\kappa,t}^{N,+}(\ul{\V{q}})\|_{\HP}^2
&=-2\int_0^t\|\vr_{\Lambda,n}\omega^{a+\eh}S_{\kappa,s}^{N,+}(\ul{\V{q}})\|_{\HP}^2\Id s
\\\label{kathrin1}
&\quad+2a_{\Lambda,\kappa,t}^{[n]}(\ul{\V{q}})+2\ell_{\Lambda,\kappa,t}^{[n]}(\ul{\V{q}})
+\llbracket L_{\Lambda,\kappa}^{N,[n]}(\ul{\V{q}})\rrbracket_t,
\end{align}
for all $t\ge0$ and $n\in\NN$. Here we further observe that 
\begin{align}\label{kathrin101}
\llbracket L_{\Lambda,\kappa}^{N,[n]}(\ul{\V{q}})\rrbracket_t
&=N\int_0^t\|\vr_{\Lambda,n}\omega^{a}i\V{m}\beta_\kappa\|^2\Id s.
\end{align}
By virtue of Lem.~\ref{lemMtau} and \eqref{kathrin101} we may 
employ the dominated convergence theorem to show that,
pointwise on $[0,\infty)\times\Omega$, all terms in \eqref{kathrin1} except for the local martingale
$\ell_{\Lambda,\kappa}^{[n]}$ converge, as $n\to\infty$, to the respective terms in \eqref{kathrin1b}.
Furthermore, the absolute values of the integrands in $\ell_{\Lambda,\kappa}^{[n]}(\ul{\V{q}})$ 
are dominated by a constant times the continuous,
adapted process $(\|\vr_\Lambda\omega^{a}{S}_{\kappa,s}^{N,+}(\ul{\V{q}})\|)_{s\ge0}$. 
According to the dominated convergence theorem for stochastic integrals \cite[Thm.~24.2]{Me1982} 
we thus find integers $1\le n_1<n_2<\ldots\,$ such that, $\PP$-a.s.,
\begin{align*}
\sup_{s\le t}|\ell_{\Lambda,\kappa,s}^{[n_j]}(\ul{\V{q}})-\ell_{\Lambda,\kappa,s}(\ul{\V{q}})|
\xrightarrow{\;\;j\to\infty\;\;}0,\quad t\ge0.
\end{align*}
Putting these remarks together we see that \eqref{kathrin1b} is $\PP$-a.s. satisfied.
\end{proof}


\section{The complex action in the Feynman-Kac formula}\label{secphase}

\noindent
The objective of this section is to study Feynman's {\em complex action} \cite{Feynman1949} in the
Nelson model, i.e., the stochastic process  given by the logarithm of the vacuum expectation values 
of the Feynman-Kac integrands introduced later on. For finite $\kappa$, the complex action is given as 
a well-defined triple Lebesgue integral, which becomes ill-defined when the ultra-violet cut-off is 
dropped. It is, however, possible to exploit the presence of oscillating terms in its integrand by means 
of non-stationary phase type arguments involving repeated applications of It\={o}'s formula. In this way 
the difference of the complex action for finite $\kappa$ and the renormalization energy 
$tNE_\kappa^\ren$ can be written as a sum of terms that are well-defined even for $\kappa=\infty$
and thus allow for a definition of the limiting complex action. 
(We shall subtract the renormalization term in our definition of the complex action right away.) 
As a tradeoff we encounter simple and double {\em stochastic} integrals among those terms.

These general ideas appeared already in Nelson's original, probabilistic approach to the 
renormalization of his operator \cite{Nelson1964proc} and have been revisited in \cite{GHL2014}. 
We shall present a somewhat novel implementation of the non-stationary phase expansions
in Subsect.~\ref{ssecukappa}, pointing out similarities and differences to the earlier work along the 
way. In Subsect.~\ref{ssecuconv} we prove a simple preliminary lemma on the convergence
of the complex action as $\kappa$ goes to infinity. Our main new result on the complex action is
an exponential moment bound with an improved right hand side in comparison to the earlier
literature \cite{BleyThesis,Bley2016,BleyThomas2015,GHL2014,Nelson1964proc}, which
eventually will lead to the lower bound on the spectrum in \eqref{GSEHintro}. 
It is derived in Subsect.~\ref{ssecuge}, where appropriate remarks on 
the earlier literature are given as well. In Subsect.~\ref{ssecuinfty} we finally discuss a few 
additional properties of the complex action needed to analyze our Feynman-Kac semi-groups.


\subsection{Definition of the complex action}\label{ssecukappa}

\noindent
The formula for the complex action $u_{\kappa,t}^N$ with a finite $\kappa$ introduced in the next 
definition goes back to Feynman's original work \cite{Feynman1949}, modulo obvious 
modifications due to the fact that Feynman used his formal path integrals to represented unitary 
groups instead of semi-groups. Nelson re-derived the formula in \cite{Nelson1964proc} by ``expressing
[Feynman's] result in the language of Markov processes''. A more recent and concise textbook 
presentation as well as many related remarks and references can be found in \cite{LHB2011}; see
also \cite[App.~1]{GMM2016} for a derivation in a more general setting. 

\begin{defn}\label{defu}
Let $\kappa\in\NN$ and $t\ge0$. For all $\ul{\V{x}}\in\RR^\nu$ and
$\ul{\V{\alpha}}\in C([0,\infty),\RR^\nu)$, we define
\begin{align*}
u_{\kappa,t}^N[\ul{\V{x}},\ul{\V{\alpha}}]
&:=\sum_{j,\ell=1}^N\int_0^t\SPb{e^{-i\V{m}\cdot\V{x}_j}U_{\kappa,s}^+[\V{\alpha}_j]}{
e^{-i\V{m}\cdot(\V{x}_\ell+\V{\alpha}_{\ell,s})}f_\kappa}_{\HP}\Id s-tNE_\kappa^{\ren}.
\end{align*}
For every $\fF_0$-measurable $\ul{\V{q}}:\Omega\to\RR^\nu$, we further set
\begin{align*}
u_{\kappa,t}^N(\ul{\V{q}})&:=u_{\kappa,t}^N[\ul{\V{q}},\ul{\V{b}}].
\end{align*}
\end{defn}

\begin{rem}\label{remrevu}
Let $\kappa\in\NN$, $t>0$, $\ul{\V{x}}\in\RR^\nu$, and $\ul{\V{\alpha}}\in C([0,\infty),\RR^\nu)$. 
Then elementary substitutions and Fubini's theorem imply that
\begin{align*}
\int_0^t\SPb{e^{-i\V{m}\cdot(\V{x}_j+\V{\alpha}_{j,t})}U_{\kappa,s}^+
[\V{\alpha}_{j,t-\bullet}-\V{\alpha}_{j,t}]}{e^{-i\V{m}\cdot(\V{x}_\ell
+\V{\alpha}_{\ell,t}+\V{\alpha}_{\ell,t-s}-\V{\alpha}_{\ell,t})}f_\kappa}_{\HP}\Id s&
\\
=\int_0^t\SPb{e^{-i\V{m}\cdot(\V{x}_j+\V{\alpha}_{j,r})}f_\kappa}{
e^{-i\V{m}\cdot\V{x}_\ell}U_{\kappa,r}^+[\V{\alpha}_\ell]}_{\HP}\Id r&,
\end{align*}
for all $j,\ell\in\{1,\ldots,N\}$. Since the scalar products under the integrals in the previous
relation are real by Rem.~\ref{rem-real}, we obtain
\begin{align}\label{revu}
u_{\kappa,t}^N[\ul{\V{x}}+\ul{\V{\alpha}}_t,\ul{\V{\alpha}}_{t-\bullet}-\ul{\V{\alpha}}_t]
=u_{\kappa,t}^N[\ul{\V{x}},\ul{\V{\alpha}}].
\end{align}
\end{rem}

Our next goal is to define the complex action for $\kappa=\infty$. As mentioned earlier, we have to 
exploit the oscillations in its integrand by repeated applications of It\={o}'s formula to find an expression 
for $u_\kappa^N(\ul{\V{q}})$ that is meaningful for $\kappa=\infty$ as well.
The first step is taken in the next lemma, where we essentially re-derive
a representation of the complex action employed by  Gubinelli et al. \cite{GHL2014}.
(In fact, setting $\tau=T=t$ in Eqns.~(2.24) and~(2.33) of \cite{GHL2014} one obtains an analogue of 
\eqref{foru} below.) While our proof uses It\={o}'s formula for a scalar-product of
$\HP_{\RR}$-valued semi-martingales, Gubinelli et al. apply the ordinary It\={o} formula and the 
stochastic Fubini theorem. 

Notice that, for $j=\ell$, the term in the second line of \eqref{foru} equals $tE_\kappa^{\ren}$.

\begin{lem}\label{lemYu}
Let $\kappa\in\NN$, $j,\ell\in\{1,\ldots,N\}$, and $\ul{\V{q}}:\Omega\to\RR^\nu$
be $\fF_0$-measurable. Then, $\PP$-a.s., the following identity holds, for all $t\ge0$,
\begin{align}\nonumber
\int_0^t&\SPn{e^{-i\V{m}\cdot\V{q}_j}U_{\kappa,s}^+[\V{b}_j]}{
e^{-i\V{m}\cdot(\V{q}_\ell+\V{b}_{\ell,s})}f_\kappa}_{\HP}\Id s
\\\nonumber
&=\int_0^t\SPn{e^{i\V{m}\cdot(\V{q}_\ell-\V{q}_j+\V{b}_{\ell,s}-\V{b}_{j,s})}
\omega^{-\nf{1}{2}}f_{\kappa}}{\omega^{\nf{1}{2}}\beta_{\kappa}}_{\HP}\Id s
\\\nonumber
&\quad-\SPb{\omega^{-\nf{1}{2}}e^{-i\V{m}\cdot\V{q}_j}U_{\kappa,t}^+[\V{b}_j]}{
\omega^{\nf{1}{2}}e^{-i\V{m}\cdot(\V{q}_\ell+\V{b}_{\ell,t})}\beta_{\kappa}}_{\HP}
\\\label{foru}
&\quad-\int_0^t\SPn{U_{\kappa,s}^+[\V{b}_j]}{e^{i\V{m}\cdot(\V{q}_j-\V{q}_\ell-\V{b}_{\ell,s})}
i\V{m}\beta_{\kappa}}_{\HP}\Id\V{b}_{\ell,s}.
\end{align}
\end{lem}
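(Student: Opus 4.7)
The plan is to derive \eqref{foru} as an integration-by-parts identity for the real-valued semimartingale
\[
J_t := \SPb{\omega^{-\nf{1}{2}}\,e^{-i\V{m}\cdot\V{q}_j}\,U_{\kappa,t}^+[\V{b}_j]}{\omega^{\nf{1}{2}}\,e^{-i\V{m}\cdot(\V{q}_\ell+\V{b}_{\ell,t})}\,\beta_{\kappa}}_{\HP}.
\]
The symmetric factors $\omega^{\pm\nf{1}{2}}$ are inserted so that both entries are genuine elements of $\HP$ even in the massless case (where $\beta_\kappa\notin\HP$ but $\omega^{\nf{1}{2}}\beta_\kappa\in\HP$). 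By construction $-J_t$ is exactly the boundary term on the second line of the right hand side of \eqref{foru}, and $J_0=0$ because $U_{\kappa,0}^+[\V{b}_j]=0$; the strategy is therefore to apply the product rule to $J_t$ and identify the drift and stochastic contributions with the remaining terms of \eqref{foru}.

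\textbf{Execution.} By Lem.~\ref{lemUpmcontHP}, the path $s\mapsto \omega^{-\nf{1}{2}}e^{-i\V{m}\cdot\V{q}_j}U_{\kappa,s}^+[\V{b}_j]$ is pathwise continuously differentiable in $\HP$ with derivative $e^{-i\V{m}\cdot\V{q}_j}\bigl(-\omega^{\nf{1}{2}}U_{\kappa,s}^+[\V{b}_j]+e^{-i\V{m}\cdot\V{b}_{j,s}}\omega^{-\nf{1}{2}}f_\kappa\bigr)$, hence it has finite variation. For the right entry I apply the ordinary It\={o} formula to $s\mapsto\SPn{h}{\omega^{\nf{1}{2}}e^{-i\V{m}\cdot(\V{q}_\ell+\V{b}_{\ell,s})}\beta_\kappa}_{\HP}$ for every $h$ in a countable dense subset of $\HP_{\RR}$, just as in the proof of Lem.~\ref{lemMminus}, obtaining the $\HP$-valued semimartingale decomposition
\[
\Id_s\bigl(\omega^{\nf{1}{2}}e^{-i\V{m}\cdot(\V{q}_\ell+\V{b}_{\ell,s})}\beta_\kappa\bigr)
= e^{-i\V{m}\cdot(\V{q}_\ell+\V{b}_{\ell,s})}\bigl(-i\V{m}\,\omega^{\nf{1}{2}}\beta_\kappa\,\Id\V{b}_{\ell,s}-\tfrac{1}{2}\V{m}^2\omega^{\nf{1}{2}}\beta_\kappa\,\Id s\bigr).
\]
Since the left entry has finite variation, the cross variation in the product rule vanishes, and $\PP$-a.s.\ for all $t\ge 0$,
\begin{align*}
J_t &= \int_0^t\SPn{-\omega U_{\kappa,s}^+[\V{b}_j]+e^{-i\V{m}\cdot\V{b}_{j,s}}f_\kappa}{e^{i\V{m}\cdot(\V{q}_j-\V{q}_\ell-\V{b}_{\ell,s})}\beta_\kappa}_{\HP}\Id s \\
&\quad -\tfrac{1}{2}\int_0^t\SPn{U_{\kappa,s}^+[\V{b}_j]}{e^{i\V{m}\cdot(\V{q}_j-\V{q}_\ell-\V{b}_{\ell,s})}\V{m}^2\beta_\kappa}_{\HP}\Id s \\
&\quad -\int_0^t\SPn{U_{\kappa,s}^+[\V{b}_j]}{e^{i\V{m}\cdot(\V{q}_j-\V{q}_\ell-\V{b}_{\ell,s})}i\V{m}\beta_\kappa}_{\HP}\Id\V{b}_{\ell,s},
\end{align*}
where in the first line I moved the phase $e^{-i\V{m}\cdot\V{q}_j}$ from the left entry to the right one by complex conjugation. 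Using self-adjointness of $\omega$ and the algebraic identity $(\omega+\V{m}^2/2)\beta_\kappa=f_\kappa$, the two Lebesgue integrals collapse to
\[
-\int_0^t\SPn{e^{-i\V{m}\cdot\V{q}_j}U_{\kappa,s}^+[\V{b}_j]}{e^{-i\V{m}\cdot(\V{q}_\ell+\V{b}_{\ell,s})}f_\kappa}_{\HP}\Id s
+\int_0^t\SPn{e^{i\V{m}\cdot(\V{q}_\ell-\V{q}_j+\V{b}_{\ell,s}-\V{b}_{j,s})}f_\kappa}{\beta_\kappa}_{\HP}\Id s,
\]
and rewriting the second term by a further insertion of $\omega^{-\nf{1}{2}}\omega^{\nf{1}{2}}=\id$ reproduces the first line of the right hand side of \eqref{foru}. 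Solving for the first term yields \eqref{foru}.

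\textbf{Main obstacle.} The only genuinely delicate point is the rigorous justification of the Hilbert-space It\={o} computation for the right entry when $\mu=0$, since then $\beta_\kappa\notin\HP$ while $\omega^{\nf{1}{2}}\beta_\kappa\in\HP$. This is handled exactly as in the proofs of Lems.~\ref{lemMminus} and~\ref{lemUplus}: apply the scalar It\={o} formula to $\SPn{h}{\omega^{\nf{1}{2}}e^{-i\V{m}\cdot(\V{q}_\ell+\V{b}_{\ell,s})}\beta_\kappa}_{\HP}$ for each $h$ in a countable dense subset of $\HP_{\RR}$, use that stochastic integrals commute up to indistinguishability with the scalar product against a fixed $h$, and assemble the resulting identities on a common $\PP$-full set via continuity in $s$.
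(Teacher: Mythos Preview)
Your proof is correct and follows essentially the same approach as the paper: both apply It\={o}'s product rule to the scalar product $\SPn{\omega^{-1/2}U_{\kappa}^+[\V{b}_j]}{e^{i\V{m}\cdot(\V{q}_j-\V{q}_\ell-\V{b}_{\ell})}\omega^{1/2}\beta_\kappa}$, use Lem.~\ref{lemUpmcontHP} for the finite-variation left entry, apply It\={o}'s formula to the Brownian-driven right entry, and collapse the drift terms via $(\omega+\V{m}^2/2)\beta_\kappa=f_\kappa$. One small remark: your ``main obstacle'' is slightly misstated --- for finite $\kappa$ the relevant point is not that $\beta_\kappa\notin\HP$ near zero, but that the It\={o} drift term requires $\omega^{1/2}\V{m}^2\beta_\kappa\in\HP$, which the paper notes holds precisely because $\kappa$ is finite (so $\chi_\kappa\in\HP$); your dense-subset workaround is fine but in fact unnecessary here.
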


\begin{proof}
Since $\kappa$ is finite, $\omega^\eh\V{m}^2\beta_\kappa\in\HP$. Hence, by It\={o}'s 
formula, $e^{i\V{m}\cdot(\V{q}_j-\V{q}_\ell-\V{b}_{\ell})}\omega^\eh\beta_\kappa$ 
is a continuous $\HP_{\RR}$-valued semi-martingale $\PP$-a.s. satisfying
\begin{align}\nonumber
e^{i\V{m}\cdot(\V{q}_j-\V{q}_\ell-\V{b}_{\ell,t})}\omega^\eh\beta_\kappa
&=e^{i\V{m}\cdot(\V{q}_j-\V{q}_\ell)}\omega^\eh\beta_\kappa
-\frac{1}{2}\int_0^te^{i\V{m}\cdot(\V{q}_j-\V{q}_\ell-\V{b}_{\ell,s})}\omega^\eh\V{m}^2\beta_\kappa\Id s
\\\label{zacharias1}
&\quad
-\int_0^te^{i\V{m}\cdot(\V{q}_j-\V{q}_\ell-\V{b}_{\ell,s})}\omega^\eh i\V{m}\beta_\kappa\Id\V{b}_{\ell,s},
\quad t\ge0.
\end{align}
From Lem.~\ref{lemUpmcontHP} we further infer that
\begin{align}\label{zacharias2}
\omega^\mh U_{\kappa,t}^+[\V{b}_j]&=-\int_0^t(\omega^\eh U^+_{\kappa,s}[\V{b}_j]
-e^{-i\V{m}\cdot\V{b}_{j,s}}\omega^\mh f_\kappa)\Id s,\quad t\ge0,\;\text{on $\Omega$,}
\end{align}
which defines a continuously differentiable $\HP_{\RR}$-valued integral process. 
The claim now follows upon applying
It\={o}'s formula to $\SPn{\omega^\mh U_{\kappa}^+[\V{b}_j]}{
e^{i\V{m}\cdot(\V{q}_j-\V{q}_\ell-\V{b}_{\ell})}\omega^\eh\beta_\kappa}_{\HP}$ and taking into account
\eqref{zacharias1}, \eqref{zacharias2}, and $(\omega+\V{m}^2/2)\beta_\kappa=f_\kappa$.
\end{proof}

Interpreted in a suitable way, Formula \eqref{foru} can in principle be used to define 
$u_{\infty,t}^N(\ul{\V{q}})$. 
In fact, the term in the third line of \eqref{foru} makes sense immediately 
for $\kappa=\infty$ by Lem.~\ref{lemUinftyint}.
Moreover, $f_\infty\beta_\infty\in L^p(\RR^3)$, for every $p\in(1,\nf{3}{2})$, whence its
Fourier transform is in every $L^{q}(\RR^3)$ with $q\in(3,\infty)$. Hence, for $\kappa=\infty$,
we could replace the $\Id s$-integral on the right hand side of \eqref{foru} by 
the expression
$(2\pi)^{\nf{3}{2}}\int_0^t(f_\infty\beta_\infty)^\wedge(\V{q}_\ell-\V{q}_j+\V{b}_{\ell,s}-\V{b}_{j,s})\Id s$, 
which yields, however, a well-defined process only outside a $\ul{\V{q}}$-dependent $\PP$-zero set.
Likewise, to replace the stochastic integral in \eqref{foru}, we could try to define a stochastic integral
of $(2\pi)^{\nf{3}{2}}\int_0^s(e^{-(s-r)\omega}i\V{m}f_\infty\beta_\infty)^\wedge
(\V{q}_\ell-\V{q}_j+\V{b}_{\ell,s}-\V{b}_{j,r})\Id r$ with respect to $\V{b}_\ell$.
Analogous formulas are used indeed in \cite{GHL2014} to define (a diagonal part 
of) the limiting complex action.

For a definition of $u_{\infty,t}^N(\ul{\V{q}})$ we shall, however, pass to another representation of
$u_{\kappa,t}^N(\ul{\V{q}})$ with a closer resemblance to the one given by Nelson 
\cite[pp.~107/8]{Nelson1964proc}. The integrands appearing in it are more 
regular than the ones proposed in the preceding paragraph. It is derived as follows: 

First, we employ our formula \eqref{relUplus} and the definition \eqref{defStt} of
$S_{\kappa,t}^{[t]}$, which $\PP$-a.s. permit to write the stochastic integral in \eqref{foru} as
\begin{align}\nonumber
\int_0^t&\SPn{U_{\kappa,s}^+[\V{b}_j]}{e^{i\V{m}\cdot(\V{q}_j-\V{q}_\ell-\V{b}_{\ell,s})}
i\V{m}\beta_{\kappa}}_{\HP}\Id\V{b}_{\ell,s}
\\\nonumber
&=\int_0^t\SPn{e^{-s\omega}\omega^{\nf{1}{4}}\beta_\kappa}{
e^{i\V{m}\cdot(\V{q}_j-\V{q}_\ell-\V{b}_{\ell,s})}\omega^{-\nf{1}{4}}i\V{m}\beta_{\kappa}}_{\HP}
\Id\V{b}_{\ell,s}
\\\nonumber
&\quad-\int_0^t\SPn{\omega^{\nf{1}{4}}\beta_\kappa}{
e^{i\V{m}\cdot(\V{q}_j+\V{b}_{j,s}-\V{q}_\ell-\V{b}_{\ell,s})}\omega^{-\nf{1}{4}}
i\V{m}\beta_{\kappa}}_{\HP}\Id\V{b}_{\ell,s}
\\\label{alfie1}
&\quad-\int_0^t\SPn{\omega^{\nf{1}{4}}S_{\kappa,s}^{[s]}[\V{b}_j]}{
e^{i\V{m}\cdot(\V{q}_j-\V{q}_\ell-\V{b}_{\ell,s})}\omega^{-\nf{1}{4}}i\V{m}\beta_{\kappa}}_{\HP}
\Id\V{b}_{\ell,s},\quad t\ge0.
\end{align}
In view of Lem.~\ref{lemMtau}, Lem.~\ref{lemI}, and \eqref{defStt}, the three integrands on the
right hand side are adapted and continuous with locally bounded moments of any order, even when
$\kappa=\infty$. In the next step we will apply the identity \eqref{defUminusinfty} and It\={o}'s formula 
to the integrals in the second and third line of \eqref{alfie1}, respectively.

Before we do this we shall, however, introduce the notation employed in our final formula for
$u_{\kappa,t}^N(\ul{\V{q}})$. Let us first introduce the shorthand
\begin{align*}
\Theta^N_t[\ul{\V{x}},\ul{\V{\alpha}}]&:=
\Big|\sum_{\ell=1}^Ne^{-i\V{m}\cdot\V{x}_\ell}\Big|^2
+\Big|\sum_{\ell=1}^Ne^{-i\V{m}\cdot(\V{x}_\ell+\V{\alpha}_{\ell,t})}\Big|^2
-2e^{-t\omega}\Re\!\sum_{j,\ell=1}^N\!e^{i\V{m}\cdot(\V{x}_\ell-\V{x}_j-\V{\alpha}_{j,t})}.
\end{align*}
Notice that $\Theta^N_t[\ul{\V{x}},\ul{\V{\alpha}}]\ge0$ and 
$\Theta^N_t[\ul{\V{x}},\ul{\V{\alpha}}]\beta_\infty^2\in L^1(\RR^3)$, for all
$\ul{\V{x}}\in\RR^\nu$ and $\ul{\V{\alpha}}\in C([0,\infty),\RR^\nu)$, because
$\Theta^N_t[\ul{\V{x}},\ul{\V{\alpha}}]$ compensates for the infra-red singularity of $\beta_\infty$
in the case $\mu=0$.

For later use we also include the infra-red cut-off parameter $\Lambda\ge0$ in the next definition;
recall the notation $\beta_{\Lambda,\kappa}$, $\beta_{\Lambda,\kappa}^N(\ul{\V{x}})$, 
$U_{\kappa}^{N,\pm}[\ul{\V{x}},\ul{\V{\alpha}}]$, and $S_\kappa^N(\ul{\V{q}})$ introduced in
\eqref{deffsigmakappa}, \eqref{deffNkappa}, Def.~\ref{defUN}, and \eqref{thalia2}, respectively.

\begin{defn}\label{defnbcvm}
Let $\kappa\in\NN\cup\{\infty\}$, $\Lambda\ge0$, and $\ul{\V{q}}:\Omega\to\RR^\nu$ be
$\fF_0$-measurable. 
\begin{enumerate}[leftmargin=*]
\item[{\rm(1)}] Let $t\ge0$, $\ul{\V{x}}\in\RR^\nu$, and $\ul{\V{\alpha}}\in C([0,\infty),\RR^\nu)$. 
Then we define
\begin{align}\label{forbTheta}
b_{\Lambda,\kappa,t}^N[\ul{\V{x}},\ul{\V{\alpha}}]&:=\frac{1}{2}\int_{\RR^3}
\Theta_t^N[\ul{\V{x}},\ul{\V{\alpha}}]\beta_{\Lambda,\kappa}^2\Id\lambda^3,
\\\label{defcplus}
c_{\Lambda,\kappa,t}^{N,+}[\ul{\V{x}},\ul{\V{\alpha}}]
&:=\SPn{\omega^{-\nf{1}{2}}U_{\kappa,t}^{N,+}[\ul{\V{x}},\ul{\V{\alpha}}]}{
\omega^{\nf{1}{2}}\beta_{\Lambda,\kappa}^N(\ul{\V{x}}+\ul{\V{\alpha}}_t)}_{\HP},
\\\label{defcminus}
c_{\Lambda,\kappa,t}^{N,-}[\ul{\V{x}},\ul{\V{\alpha}}]
&:=\SPn{\omega^{\nf{1}{2}}\beta_{\Lambda,\kappa}^N(\ul{\V{x}})}{
\omega^{-\nf{1}{2}}U_{\kappa,t}^{N,-}[\ul{\V{x}},\ul{\V{\alpha}}]}_{\HP}.
\end{align}
Abbreviating
\begin{align*}
w_{\Lambda,\kappa}(\V{x})&:=\SPn{e^{i\V{m}\cdot\V{x}}\omega^\eh\beta_{\Lambda,\kappa}}{
\omega^\eh\beta_{\Lambda,\kappa}},\quad\V{x}\in\RR^3,
\\
v_{\Lambda,\kappa,t}^{[j,\ell]}[\ul{\V{x}},\ul{\V{\alpha}}]
&:=\int_0^tw_{\Lambda,\kappa}(\V{x}_\ell-\V{x}_j+\V{\alpha}_{\ell,s}-\V{\alpha}_{j,s})\Id s,
\quad j,\ell\in\{1,\ldots,N\}.
\end{align*}
we further define
\begin{align}\label{john1}
v_{\Lambda,\kappa,t}^N[\ul{\V{x}},\ul{\V{\alpha}}]
&:=2{\sum_{1\le j<\ell\le N}}v_{\Lambda,\kappa,t}^{[j,\ell]}[\ul{\V{x}},\ul{\V{\alpha}}].
\end{align}
If the so-obtained maps are composed with $(\ul{\V{q}},\ul{\V{b}}_{\bullet})$, then, as usual, we write
\begin{align*}
c_{\Lambda,\kappa,t}^{N,\pm}(\ul{\V{q}})&:=
c_{\Lambda,\kappa,t}^{N,\pm}[\ul{\V{q}},\ul{\V{b}}_{\bullet}],\;\;\text{and}\;\;
y_{\Lambda,\kappa}^N(\ul{\V{q}}):=c_{\Lambda,\kappa,t}^{N,\pm}[\ul{\V{q}},\ul{\V{b}}_{\bullet}],
\;\;\text{if $y$ is $b$ or $v$.}
\end{align*}
\item[{\rm(2)}] In view of Lem.~\ref{lemMtau}, Lem.~\ref{lemI}, and \eqref{defStt}, the formula
\begin{align}\label{forvecd}
\V{d}_{\Lambda,\kappa,t}^{[\ell]}(\ul{\V{q}})&:=
\SPn{\omega^{\nf{1}{4}}S_{\kappa,t}^{N}(\ul{\V{q}})}{e^{-i\V{m}\cdot(\V{q}_\ell+\V{b}_{\ell,t})}
\omega^{-\nf{1}{4}}i\V{m}\beta_{\Lambda,\kappa}},\quad t\ge0,
\end{align}
defines a continuous adapted $\RR^3$-valued process with locally bounded moments of any order, 
for every $\ell\in\{1,\ldots,N\}$. Therefore, up to indistinguishability,
\begin{align}\label{defm}
m_{\Lambda,\kappa,t}^N(\ul{\V{q}})
&:=\sum_{\ell=1}^N\int_0^t\V{d}_{\Lambda,\kappa,s}^{[\ell]}(\ul{\V{q}})\Id\V{b}_{\ell,s},\quad t\ge0.
\end{align}
defines a continuous, real-valued $L^2$-martingale $m_{\Lambda,\kappa}^N(\ul{\V{q}})$.
\item[{\rm(3)}] In the case $\Lambda=0$ we will drop the corresponding subscript in the notation for
all processes introduced in the previous two items, i.e., 
\begin{align*}
c_{\kappa,t}^{N,\pm}(\ul{\V{q}}):=c_{0,\kappa,t}^{N,\pm}(\ul{\V{q}}),\;\;\text{and}\;\;
y_{\kappa,t}^N(\ul{\V{q}})&:=y_{0,\kappa,t}^N(\ul{\V{q}}),\;\;\text{if $y$ is $b$, $v$, or $m$.}
\end{align*}
\end{enumerate}
\end{defn}

\begin{thm}\label{thm2ndIto}
Let $\kappa\in\NN$ and $\ul{\V{q}}:\Omega\to\RR^\nu$ be $\fF_0$-measurable. Then, $\PP$-a.s.,
\begin{align}\label{defuinfty}
u_{\kappa,t}^N(\ul{\V{q}})
&=-b^N_{\kappa,t}(\ul{\V{q}})+c_{\kappa}^{N,-}(\ul{\V{q}})-c_{\kappa}^{N,+}(\ul{\V{q}})
+v_{\kappa,t}^N(\ul{\V{q}})+m_{\kappa,t}^N(\ul{\V{q}}),\quad t\ge0.
\end{align}
\end{thm}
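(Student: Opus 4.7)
Plan: The strategy is a direct algebraic computation, stringing together Lemma~\ref{lemYu}, the decomposition \eqref{alfie1}, and one further application of It\={o}'s formula. Since $\kappa$ is finite throughout, every integral is classical and there are no convergence issues; the work is bookkeeping.

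First I would apply Lemma~\ref{lemYu} to each $(j,\ell)$-summand of $u_{\kappa,t}^N(\ul{\V{q}})$. Summing the time-$t$ boundary contributions over $(j,\ell)$ immediately yields $-c_{\kappa,t}^{N,+}(\ul{\V{q}})$ by the definitions of $U_\kappa^{N,+}(\ul{\V{q}})$ and $\beta_\kappa^N(\ul{\V{q}}+\ul{\V{b}}_t)$, while the diagonal part $j=\ell$ of the drift integral in the first line of \eqref{foru} equals $tNE_\kappa^{\ren}$ and therefore cancels the explicit counter-term in Def.~\ref{defu}. Next I would decompose the remaining stochastic integral via \eqref{alfie1}, which rests on the representation $U_{\kappa,s}^+[\V{b}_j]=(e^{-s\omega}-e^{-i\V{m}\cdot\V{b}_{j,s}})\beta_\kappa-S_{\kappa,s}^{[s]}[\V{b}_j]$ from \eqref{relUplus}. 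The third sub-integral, carrying $S_{\kappa,s}^{[s]}[\V{b}_j]$, is by inspection precisely $m_{\kappa,t}^N(\ul{\V{q}})$: pulling the factors $e^{-i\V{m}\cdot\V{q}_j}$ into the scalar product gives back the sum $S_\kappa^N(\ul{\V{q}})$, matching Def.~\ref{defnbcvm}(2).

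The two remaining sub-integrals are the heart of the computation. Using real-valuedness of $\beta_\kappa$ and evenness under $\V{k}\mapsto-\V{k}$, I would first rewrite each Hilbert-space pairing as an ordinary integral against $\beta_\kappa^2$; stochastic Fubini then reduces each sub-integral to $\int\beta_\kappa^2 Z_{j,\ell}(\V{k})\Id\V{k}$, where $Z_{j,\ell}(\V{k})$ is a scalar It\={o} integral in $\V{b}_\ell$ of an integrand of the form $i\V{k}F_a(s,\V{k})$ with $F_1:=e^{-s\omega(\V{k})+i\V{k}\cdot(\V{q}_j-\V{q}_\ell-\V{b}_{\ell,s})}$ and $F_2:=e^{i\V{k}\cdot(\V{q}_j+\V{b}_{j,s}-\V{q}_\ell-\V{b}_{\ell,s})}$. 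A classical It\={o} expansion of $F_1$ and $F_2$ expresses each $Z_{j,\ell}(\V{k})$ as boundary values at $s=0$ and $s=t$ plus a drift integral; for $F_2$ with $j\neq\ell$ there is in addition a parasitic $\Id\V{b}_{j,s}$-integral. Exploiting the symmetry $(j,\ell)\leftrightarrow(\ell,j)$ together with $\V{k}\mapsto-\V{k}$, this parasitic term is seen to coincide with the negative of the original second sub-integral, which when solved for yields a factor $\tfrac12$. The drift from the $F_1$-expansion, after using $f_\kappa=(\omega+\V{m}^2/2)\beta_\kappa$ and reversing the representation \eqref{defUminusinfty}, reassembles into $c_\kappa^{N,-}(\ul{\V{q}})$; the half-drift from the $F_2$-expansion combines with the surviving off-diagonal part of the Lemma~\ref{lemYu} drift to form $v_{\kappa,t}^N(\ul{\V{q}})$ via the identity $w_\kappa(\V{x})=\int e^{i\V{k}\cdot\V{x}}\omega\beta_\kappa^2\Id\V{k}$; and the boundary contributions collapse into $-b_{\kappa,t}^N(\ul{\V{q}})$, with the three summands of $\Theta_t^N$ coming from three distinct sources (the $s=0$ boundaries of $F_1,F_2$ contribute the $|\sum e^{-i\V{m}\cdot\V{q}_\ell}|^2$ term, the $s=t$ boundary of $F_2$ contributes the $|\sum e^{-i\V{m}\cdot(\V{q}_\ell+\V{b}_{\ell,t})}|^2$ term, and the $s=t$ boundary of $F_1$ produces the $e^{-t\omega}$-weighted cross term).

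The main obstacle will be the bookkeeping: keeping track of signs, of the factor $\tfrac12$ arising from the symmetrization of the parasitic $\V{b}_j$-integral, and of the $\omega^{\pm\eh}$, $\omega^{\pm\nf{1}{4}}$ balance as one moves factors across Hilbert-space pairings. The critical algebraic step is the identification of the drift from the first sub-integral with $c_\kappa^{N,-}(\ul{\V{q}})$; once this identification and the $\tfrac12$ from the $F_2$-symmetrization are in hand, the identity \eqref{defuinfty} assembles without further trouble.
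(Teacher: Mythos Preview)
Your proposal is correct and follows essentially the same route as the paper: Lemma~\ref{lemYu}, the decomposition \eqref{alfie1}, then a further It\={o} expansion of the first two sub-integrals. The only cosmetic difference is that the paper stays at the Hilbert-space level --- recognizing the first sub-integral as a pairing with $M_{\kappa}^-[\V{b}_\ell]$ and invoking \eqref{defUminusinfty}, and symmetrizing the second sub-integral into a single $\Id(\V{b}_j-\V{b}_\ell)$-integral before applying It\={o} --- whereas you reduce to scalar It\={o} via stochastic Fubini and resolve the second sub-integral through your parasitic-term symmetrization; the two variants encode the same algebra (your ``factor $\tfrac12$'' is precisely the paper's substitution $\V{m}^2\beta_\kappa = 2f_\kappa - 2\omega\beta_\kappa$ followed by the cancellation against the off-diagonal drift of \eqref{foru}).
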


\begin{defn}\label{defnuinfty}
Let $\ul{\V{q}}:\Omega\to\RR^\nu$ be $\fF_0$-measurable. Then we define
$u_{\infty}^N(\ul{\V{q}})$ by the formula \eqref{defuinfty} with $\infty$ substituted for $\kappa$.
\end{defn}

\begin{rem}\label{remZerlegungu}
\begin{enumerate}[leftmargin=*]
\item[(1)]
Direct analogues of $b^N_{\kappa,t}(\ul{\V{q}})$, $v_{\kappa,t}^N(\ul{\V{q}})$, and the double
stochastic integral contributing to $m_{\kappa,t}^N(\ul{\V{q}})$ appear in Nelson's expansion of
the complex action \cite[Lem.~10 \& Lem.~14]{Nelson1964proc}. The remaining terms in 
\eqref{defuinfty} are, however, replaced by a double Lebesgue integral in \cite{Nelson1964proc}. 
While Nelson worked with
a slightly different choice of $\beta_\kappa$, this discrepancy might mainly be due to 
cancellations owing to time reversal and space reflection symmetries that Nelson observed
between the terms involving precisely one stochastic integration.  We were, frankly speaking, 
unable to follow his explanations at this point \cite[pp.~106/7]{Nelson1964proc} and did 
not observe analogous cancellations.
\item[(2)] 
Bley \cite{Bley2016} employs the Clark-Ocone formula to expand the complex action for finite 
$\kappa$. Further elaboration on the Clark-Ocone expansion might possibly lead to another 
meaningful expression for the limiting complex action.
\item[(3)]
The formula \eqref{defuinfty} is convenient for comparing $u_\kappa^N(\ul{\V{q}})$ with the
complex action in the non-Fock representation. There, the first two terms on the right hand side
of \eqref{defuinfty} get an extra pre-factor $2$, while the third one disappears, and the last two
remain unchanged.
\end{enumerate}
\end{rem}

\begin{proof}[Proof of Thm.~\ref{thm2ndIto}]
On account of Lem.~\ref{lemMminus} the integral in the second line of \eqref{alfie1} can
$\PP$-a.s. be written as
\begin{align}\nonumber
\int_0^t&\SPn{e^{-s\omega}\omega^{\nf{1}{4}}\beta_\kappa}{
e^{i\V{m}\cdot(\V{q}_j-\V{q}_\ell-\V{b}_{\ell,s})}\omega^{-\nf{1}{4}}i\V{m}\beta_{\kappa}}_{\HP}
\Id\V{b}_{\ell,s}
\\\nonumber
&=\SPn{\omega^{\nf{1}{4}}e^{-i\V{m}\cdot\V{q}_j}\beta_\kappa}{\omega^{-\nf{1}{4}}
e^{-i\V{m}\cdot\V{q}_\ell}M_{\kappa,t}^-[\V{b}_\ell]}_{\HP}
\\\nonumber
&=\SPn{\omega^{\nf{1}{4}}\beta_\kappa}{
e^{i\V{m}\cdot(\V{q}_j-\V{q}_\ell)}(1-e^{-t\omega-i\V{m}\cdot\V{b}_{\ell,t}})
\omega^{-\nf{1}{4}}\beta_\kappa}_{\HP}
\\\label{alfie1999}
&\quad-\SPn{\omega^\eh e^{-i\V{m}\cdot\V{q}_j}\beta_\kappa}{
\omega^{\mh}e^{-i\V{m}\cdot\V{q}_\ell}U_{\kappa,t}^-[\V{b}_\ell]},\quad t\ge0.
\end{align}
Here we also applied \eqref{defUminusinfty} in the second step.
Furthermore, we observe that the integral in the third line of \eqref{alfie1} vanishes for
$j=\ell$, because $\V{m}$ is odd and $\beta_\kappa$ is even.
To treat the off-diagonal terms we exploit that
the scalar products under the stochastic integrations are all real, from which we $\PP$-a.s. infer that
\begin{align}\nonumber
-&\int_0^t\SPn{\omega^{\nf{1}{4}}\beta_\kappa}{
e^{i\V{m}\cdot(\V{q}_j+\V{b}_{j,s}-\V{q}_\ell-\V{b}_{\ell,s})}\omega^{-\nf{1}{4}}
i\V{m}\beta_{\kappa}}_{\HP}\Id\V{b}_{\ell,s}
\\\nonumber
&-\int_0^t\SPn{\omega^{\nf{1}{4}}\beta_\kappa}{
e^{i\V{m}\cdot(\V{q}_\ell+\V{b}_{\ell,s}-\V{q}_j-\V{b}_{j,s})}\omega^{-\nf{1}{4}}
i\V{m}\beta_{\kappa}}_{\HP}\Id\V{b}_{j,s}
\\\nonumber
&=\int_0^t\SPn{\omega^{\nf{1}{4}}\beta_\kappa}{
e^{i\V{m}\cdot(\V{q}_j+\V{b}_{j,s}-\V{q}_\ell-\V{b}_{\ell,s})}\omega^{-\nf{1}{4}}
i\V{m}\beta_{\kappa}}_{\HP}\Id(\V{b}_{j}-\V{b}_{\ell})_s
\\\nonumber
&=\SPn{\omega^{\nf{1}{4}}\beta_\kappa}{e^{i\V{m}\cdot(\V{q}_j-\V{q}_\ell)}
(e^{i\V{m}\cdot(\V{b}_{j,t}-\V{b}_{\ell,t})}-1)\omega^{-\nf{1}{4}}\beta_{\kappa}}_{\HP}
\\\label{alfie2000}
&\quad+\int_0^t\SPn{\omega^{\nf{1}{4}}\beta_\kappa}{
e^{i\V{m}\cdot(\V{q}_j+\V{b}_{j,s}-\V{q}_\ell-\V{b}_{\ell,s})}\omega^{-\nf{1}{4}}\V{m}^2
\beta_{\kappa}}_{\HP}\Id s,\quad t\ge0,\,j<\ell.
\end{align}
In the second step we applied It\={o}'s formula to the twice continuously differentiable function 
$(\V{x},\V{y})\mapsto\SPn{\omega^{\nf{1}{4}}\beta_\kappa}{e^{i\V{m}\cdot\V{x}}
(e^{i\V{m}\cdot\V{y}}-1)\omega^{-\nf{1}{4}}\beta_{\kappa}}_{\HP}$ and the $\RR^6$-valued process
$(\V{q}_j-\V{q}_\ell,\V{b}_{j,t}-\V{b}_{\ell,t})_{t\ge0}$, exploiting that
$|e^{i\V{m}\cdot\V{y}}-1|\omega^{-\nf{1}{4}}\beta_\kappa\in\HP$.
Next, we substitute $\V{m}^2\beta_\kappa=2f_\kappa-2\omega\beta_\kappa$ in the last
line of \eqref{alfie2000}; the contribution with $2f_\kappa$ will eventually cancel the term in the
second line of \eqref{foru}. In fact, 
after combining \eqref{alfie1999} and \eqref{alfie2000} with \eqref{foru} and 
\eqref{alfie1}, elementary rearrangements yield the asserted identity \eqref{defuinfty}. 
(To replace double summations subject to the condition $j<\ell$ by double summations
subject to $j\not=\ell$, we again exploit that $\omega$, $f_\kappa$, and $\beta_\kappa$ are even
while $\V{m}$ is odd.)
\end{proof}


\subsection{Convergence properties of the complex action}\label{ssecuconv}

The process $u_{\infty}^N(\ul{\V{q}})$ introduced in Def.~\ref{defuinfty} is indeed the limiting
complex action:

\begin{prop}\label{lemuinfty}
Let $\kappa\in\NN$, $p,t>0$, and $\ul{\V{q}}:\Omega\to\RR^\nu$ be $\fF_0$-measurable. Then
\begin{align}\label{felixu}
\EE\Big[\sup_{s\le t}|u_{\kappa,s}^N(\ul{\V{q}})-u_{\infty,s}^N(\ul{\V{q}})|^p\Big]
&\le c_p\frac{(\ee^2N^2(1\vee t))^{p}}{\kappa^{\nf{p}{4}}},
\end{align}
where $c_p>0$ depends only on $p$.
\end{prop}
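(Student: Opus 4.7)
The plan is to combine the pathwise decomposition established in Theorem~\ref{thm2ndIto} with Def.~\ref{defnuinfty}, which defines $u^N_{\infty,s}$ by the \emph{same} five-term formula at $\kappa=\infty$. Subtracting the two representations yields $\PP$-almost surely
\begin{align*}
u^N_{\kappa,s}(\ul{\V{q}}) - u^N_{\infty,s}(\ul{\V{q}}) &= -(b^N_{\kappa,s} - b^N_{\infty,s}) + (c^{N,-}_{\kappa,s} - c^{N,-}_{\infty,s}) \\
&\quad - (c^{N,+}_{\kappa,s} - c^{N,+}_{\infty,s}) + (v^N_{\kappa,s} - v^N_{\infty,s}) + (m^N_{\kappa,s} - m^N_{\infty,s}),
\end{align*}
so by Minkowski's inequality it suffices to bound the $L^p(\sup_{s\le t})$-norm of each of the five differences on the right by $c_p\ee^2 N^2(1\vee t)\kappa^{-1/4}$.

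First I would handle the scalar-product terms $c^{N,\pm}$ by adding and subtracting $U^{N,\pm}_{\infty,s}$ and invoking Cauchy-Schwarz with the weight $\omega^{\pm\eh}$ distributed between the two factors. This reduces the estimate to (i) the norm convergences $\|U^{N,\pm}_{\kappa,s} - U^{N,\pm}_{\infty,s}\|_s \to 0$ in $L^p(\Omega)$ supplied by Cor.~\ref{corUpmN}, and (ii) a direct polar-coordinate bound of the type used in \eqref{lisa1999}--\eqref{conv1} showing that $\|\omega^{\eh}1_{\{|\V{m}|\ge 1\}}(\beta^N_\kappa - \beta^N_\infty)\|_\HP = O(N\kappa^{-\eh})$ (the IR-localised piece is absorbed into the remaining oscillatory factor). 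The deterministic integrals $b^N$ and $v^N$ are treated by factoring $\beta_\kappa^2 - \beta_\infty^2 = (\beta_\kappa-\beta_\infty)(\beta_\kappa+\beta_\infty)$; in $b^N$ the coefficient $\Theta^N_s$ is bounded pointwise by $cN^2$ and vanishes at $\V{k}=0$ in the massless case, which compensates for the IR singularity of $\beta_\infty$, while in $v^N$ one writes $w_\kappa - w_\infty$ as an $\HP$-pairing of $\omega^\eh(\beta_{1,\kappa}-\beta_{1,\infty})$ against $\omega^\eh(\beta_{1,\kappa}+\beta_{1,\infty})$ plus an easy IR remainder. Direct polar integration then furnishes the required rate.

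The martingale difference $m^N_\kappa - m^N_\infty$ is the main obstacle. Apply the Burkholder-Davis-Gundy inequality to reduce to an $L^{p/2}$-estimate of the quadratic variation $\sum_\ell \int_0^t |\V{d}^{[\ell]}_{\kappa,s}(\ul{\V{q}}) - \V{d}^{[\ell]}_{\infty,s}(\ul{\V{q}})|^2 \Id s$. Splitting the scalar product in \eqref{forvecd} as $\omega^a S^N_{\kappa,s}$ paired against $e^{-i\V{m}\cdot(\V{q}_\ell+\V{b}_{\ell,s})}\omega^{-a}i\V{m}\beta_\kappa$, and writing $S^N_\kappa - S^N_\infty = \sum_j e^{-i\V{m}\cdot\V{q}_j}\{(M^{[t]}_\kappa[\V{b}_j] - M^{[t]}_\infty[\V{b}_j]) - (I_\kappa[\V{b}_j] - I_\infty[\V{b}_j])\}$, one invokes the $\kappa^{-p(1-2a)/8}$ convergence of Lem.~\ref{lemMtau}(3) together with uniform moments of $\omega^{-a}i\V{m}\beta_\infty$ in $\HP$. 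The critical choice $a = -1/2$ precisely yields the rate $\kappa^{-p/4}$; the corresponding weight $\omega^{-\eh}i\V{m}\beta_\infty$ is $\HP$-integrable after UV truncation, and the IR remainder is again absorbed using the oscillatory phase $e^{-i\V{m}\cdot(\V{q}_\ell+\V{b}_{\ell,s})}$ localised near $\V{k}=0$. The key bookkeeping is to distribute $\omega$-weights between the two factors so that every norm that must survive the limit $\kappa\to\infty$ is finite (in particular in the massless regime $\mu=0$) while the slower UV convergence sets the overall rate $\kappa^{-1/4}$.
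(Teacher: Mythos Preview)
Your five-term decomposition via Thm.~\ref{thm2ndIto} and Def.~\ref{defnuinfty} is the same starting point as the paper, and the deterministic pieces $b^N$ and $v^N$ go through essentially as you describe. There are, however, two issues.

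\smallskip
\emph{Minor issue for $c^{N,\pm}$.} Part~(i) of your argument invokes Cor.~\ref{corUpmN}, which asserts convergence but carries no explicit rate in $\kappa$; the bounds behind it (\eqref{conv1}) depend on an auxiliary $\ve$ via the modulus of continuity of $\chi$. The paper bypasses this by using the pointwise identity $\omega^{\mh}U^{N,\pm}_{\kappa,t}(\ul{\V{q}})=\chi_\kappa\,\omega^{\mh}U^{N,\pm}_{\infty,t}(\ul{\V{q}})$ (immediate from $f_\kappa=\chi_\kappa f_\infty$), so that the whole difference $c^{N,\pm}_\kappa-c^{N,\pm}_\infty$ reduces to the single factor $\|\omega^{\eh}(1-\chi_\kappa^2)\beta_\infty\|_\HP$, which is $O((\ln\kappa)^{\eh}\kappa^{-\eh})$ by the bound \eqref{wendelin0}. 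This makes your step~(i) unnecessary.

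\smallskip
\emph{Genuine gap for $m^N$.} Your proposed splitting writes $\V{d}^{[\ell]}_{\kappa,s}-\V{d}^{[\ell]}_{\infty,s}$ as a pairing of $\omega^a(S^N_{\kappa,s}-S^N_{\infty,s})$ against $\omega^{-a}i\V{m}\beta_\kappa$ (plus a cross term). With $a=-\tfrac12$ the rate from \eqref{convMtt} is indeed $\kappa^{-p/4}$, but the companion factor is then $\omega^{+\eh}i\V{m}\beta_\kappa$, not $\omega^{-\eh}i\V{m}\beta_\infty$ as you write; and $\|\omega^{\eh}i\V{m}\beta_\kappa\|_\HP^2\sim c\ee^2\kappa$ diverges linearly, so the product \emph{grows} like $\kappa^{+p/4}$. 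More generally, for any $a$ in your scheme the two constraints are in tension: uniform finiteness of $\|\omega^{-a}i\V{m}\beta_\kappa\|_\HP$ forces $a>0$, whereupon \eqref{convMtt} yields at best $\kappa^{-p/8}$, never $\kappa^{-p/4}$.

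The paper avoids this by \emph{not} estimating $S^N_\kappa-S^N_\infty$ at all. Since every integrand in $S^N_\kappa$ carries the factor $\beta_\kappa=\chi_\kappa\beta_\infty$, one has $S^N_{\kappa,s}=\chi_\kappa\,S^N_{\infty,s}$ $\PP$-a.s.\ (as functions of $\V{k}$), and hence
\[
\V{d}^{[\ell]}_{\kappa,s}-\V{d}^{[\ell]}_{\infty,s}
=\SPb{\omega^{\nf14}S^N_{\infty,s}}{e^{-i\V{m}\cdot(\V{q}_\ell+\V{b}_{\ell,s})}
(1-\chi_\kappa^2)\,\omega^{-\nf14}i\V{m}\beta_\infty}_\HP.
\]
Now a direct polar estimate gives $\|(1-\chi_\kappa^2)\omega^{-\nf14}i\V{m}\beta_\infty\|_\HP^2\le c\ee^2\kappa^{-\eh}$, and the remaining factor $\EE\big[\int_0^t\|\omega^{\nf14}S^N_{\infty,s}\|_\HP^2\Id s\big]$ is bounded by \eqref{keithE} with $a=-\tfrac14$. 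Burkholder--Davis--Gundy then yields $\EE\big[\sup_{s\le t}|m^N_{\kappa,s}-m^N_{\infty,s}|^p\big]\le c_p(\ee^4N^3t)^{p/2}\kappa^{-p/4}$, which is the required rate. The key point you are missing is this multiplicative structure, which places the entire cutoff loss $(1-\chi_\kappa^2)$ on the $\beta$-side where it decays, rather than on $S^N$ where it cannot be balanced.
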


\begin{proof}
The relation \eqref{felixu} follows upon combining the bounds on the various terms on the right hand
side of \eqref{defuinfty} derived in the following four steps. For a start, let us observe that, by
Hyp.~\ref{hypNelson}(2),
$0\le1-\chi_\kappa^2(\V{k})\le2|\V{k}|/\kappa$, if $\V{k}\in\RR^3$ satisfies $|\V{k}|<\kappa$, whence
\begin{align}\label{wendelin0}
\int_{\RR^3}\frac{\omega^\iota}{2^\iota}(1-\chi_\kappa^2)\beta_{\infty}^2\Id\lambda^3
\le8\pi\ee^2\bigg(
\int_0^1\frac{\Id\rho}{\kappa}+\int_1^\kappa\frac{2\Id\rho}{\kappa\rho}
+\int_\kappa^\infty\frac{\Id\rho}{\rho^2}\bigg),\;\,\iota\in\{0,1\}.
\end{align}
{\em Step~1.} First, we claim that
\begin{align}\label{wendelin1}
|b_{\kappa,t}^N(\ul{\V{q}})-b_{\infty,t}^N(\ul{\V{q}})|&\le64\pi\ee^2N^2({1+\ln(\kappa)})\big/{\kappa}.
\end{align}
In fact, the left hand side of \eqref{wendelin1} equals
$|\int_{\RR^3}\Theta_t^N(\ul{\V{q}})(1-\chi_\kappa^2)\beta_\infty^2\Id\lambda^3|$, where
$\Theta_t^N(\ul{\V{q}})$ is defined in front of Def.~\ref{defnbcvm}.
Hence, \eqref{wendelin1} follows from \eqref{wendelin0} with $\iota=0$.

\smallskip

\noindent
{\em Step~2.} Next, we assert that
\begin{align}\label{wendelin1b}
|c_{\kappa,t}^{N,\pm}(\ul{\V{q}})-c_{\infty,t}^{N,\pm}(\ul{\V{q}})|
&\le c\ee^2N^2t^\eh(1+\ln(\kappa))^\eh\big/\kappa^\eh.
\end{align}
These bounds follow from the identities
$\omega^\mh U_{\kappa,t}^{N,\pm}(\ul{\V{q}})=\chi_\kappa\omega^\mh U_{\infty,t}^{N,\pm}(\ul{\V{q}})$,
the bound \eqref{mona0}, and the fact that $\|\omega^\eh(1-\chi_\kappa^2)\beta_\infty\|_\HP^2/2$ 
is bounded from above by the left hand side of \eqref{wendelin0} with $\iota=1$.

\smallskip

\noindent
{\em Step~3.}
By estimating all terms on its left hand side trivially and taking \eqref{wendelin0} with $\iota=1$
into account we further obtain
\begin{align*}
|v_{\kappa,t}^N(\ul{\V{q}})-v_{\infty,t}^N(\ul{\V{q}})|&\le32\pi\ee^2N(N-1)t(1+\ln(\kappa))\big/\kappa.
\end{align*}
{\em Step~4.}
Finally, we shall derive the bound
\begin{align}\label{achim1}
\EE\Big[\sup_{s\le t}|m^N_{\kappa,s}(\ul{\V{q}})-m^N_{\infty,s}(\ul{\V{q}})|^p\Big]
&\le c_p(\ee^4N^3t)^{\nf{p}{2}}\big/\kappa^{\nf{p}{4}}.
\end{align}
We find some $\PP$-zero set $\sN$ such that,
for all $j\in\{1,\ldots,N\}$, $t\in[0,\infty)\cap\QQ$, and $\gamma\in\Omega\setminus\sN$ the relation
$S_{\kappa,t}^{[t]}[\V{b}_j](\gamma)=\chi_\kappa S_{\infty,t}^{[t]}[\V{b}_j](\gamma)$ holds true.
By continuity (recall Lem.~\ref{lemMtau}(2)) we conclude that it actually holds, for all
$t\ge0$ and $\gamma\in\Omega\setminus\sN$. Therefore,
\begin{align*}
&\EE\Big[\sup_{s\le t}|m^N_{\kappa,s}(\ul{\V{q}})-m^N_{\infty,s}(\ul{\V{q}})|^p\Big]
\\
&\le c_p\EE\Big[\sum_{\ell=1}^N\int_0^t
\SPb{\omega^{\nf{1}{4}}S_{\infty,s}^{N}(\ul{\V{q}})}{e^{-i\V{m}\cdot(\V{q}_\ell+\V{b}_{\ell,s})}
(1-\chi_\kappa^2)\omega^{-\nf{1}{4}}i\V{m}\beta_\infty}^2\Id s\Big]^{\nf{p}{2}}
\\
&\le c_pN^{\nf{p}{2}}\|\omega^{-\nf{1}{4}}(1-\chi_\kappa^2)i\V{m}\beta_\infty\|_\HP^p
\EE\Big[\int_0^t\|\omega^{\nf{1}{4}}S_{\infty,s}^{N}(\ul{\V{q}})\|_{\HP}^{2}\Id s\Big]^{\nf{p}{2}}
\end{align*}
Inequality \eqref{achim1}
now follows from \eqref{keithE} with $\Lambda=0$ and $a=-\nf{1}{4}$ and from 
\begin{align*}
\|\omega^{-\nf{1}{4}}(1-\chi_\kappa^2)i\V{m}\beta_\infty\|_\HP^2
&\le4\pi\ee^2\bigg(\int_0^\kappa\frac{4^2\rho^\eh\Id\rho}{\kappa^2}+\int_\kappa^\infty
\frac{4\Id\rho}{\rho^{\nf{3}{2}}}\bigg)\le\frac{44\pi\ee^2}{\kappa^\eh}.
\end{align*}
\end{proof}


\subsection{Exponential moment bounds on the complex action}\label{ssecuge}

\noindent
The next theorem is our main result on the complex action. The asserted exponential moment
bound \eqref{expbdu} has an improved right hand side compared to a recent bound by Bley 
\cite{Bley2016}, where $p^2\ee^4N^3$ is replaced by $p^2\ee^4N^3\ln^2(p\ee^2N)$, for sufficiently 
large $p\ee^2$, and by $p^2\ee^4N^3[1\vee\ln^2(N)]$, for sufficiently small $p\ee^2$. Besides,
we add a running supremum in \eqref{expbdu} (which were probably not a good idea if we were
interested in good values for the rate $c$). While Bley derives uniform bounds for finite $\kappa$,
we construct and estimate the limiting action $u_\infty^N(\ul{\V{q}})$ as well.

Earlier, Gubinelli et al. \cite{GHL2014} obtained an exponential moment bound on the complex
action for massive bosons, including the case $\kappa=\infty$. As explained by Bley and Thomas 
\cite[\textsection3.4]{BleyThomas2015}, the arguments of \cite{GHL2014} yield a right hand side that is 
log-linear in $t$ (with a worse dependence on $\mu$, $p\ee^2$, and $N$ than in \cite{Bley2016}), 
when they are combined with a result of \cite{BleyThomas2015}.

Nelson \cite{Nelson1964proc} derived exponential moment bounds for massive bosons
that are locally uniform in $t\ge0$, without paying special attention to the precise dependence of their 
right hand sides on $\ee$, $N$, $t$, and $\mu$. The exponentially
damping factors $e^{-t\omega}$, etc., present in the complex actions studied here and in
\cite{BleyThesis,BleyThomas2015,GHL2014} are, however, replaced by oscillating terms 
$e^{it\omega}$ in Nelson's \cite{Nelson1964proc}, because he was eventually interested in the 
unitary group. Hence, one has to be careful in comparing estimates. 

The key idea leading to our improved right hand side in \eqref{expbdu} is to benefit from the
bound \eqref{keith0} after applying Rem.~\ref{remstochGronwall} to the martingale
$m_{\Lambda,\kappa}^N(\ul{\V{q}})$, for some large $\Lambda$; it is worked out in 
Lem.~\ref{lemkeith} below. The crucial point about
\eqref{keith0} is that the weight $\omega^{a+\eh}$ in front of the process $S_\kappa^N(\ul{\V{q}})$
on its left hand side is traded for $\omega^a$ in the martingale $\ell_{\Lambda,\kappa}(\ul{\V{q}})$
on its right hand side. In particular,  $\ell_{\Lambda,\kappa}(\ul{\V{q}})$ has a less singular integrand 
than $m_{\Lambda,\kappa}^N(\ul{\V{q}})$, whence we can control its exponential moment by
applying Rem.~\ref{remstochGronwall} a second time and concluding by elementary estimations.
This effect actually is somewhat reminiscent of the general strategy in \cite{Bley2016},
where repeated applications of the martingale estimate \eqref{BleyThomasmart} are combined with 
successive Clark-Ocone expansions of the exponents in its right hand side, until eventually the 
right hand side becomes sufficiently tractable.

\begin{thm}\label{lemuexp}
There exist universal constants $b,c>0$ such that, for all $\kappa\in\NN\cup\{\infty\}$, 
$\fF_0$-measurable $\ul{\V{q}}:\Omega\to\RR^\nu$, $t\ge0$, and $p>0$, 
\begin{align}\label{expbdu}
\EE\big[\sup_{s\le t}e^{pu_{\kappa,s}^N(\ul{\V{q}})}\big]&\le b^Ne^{cp^2\ee^4N^3t}.
\end{align}
Moreover, 
\begin{align}\label{convu}
\sup_{\ul{\V{q}}}\EE\big[\sup_{s\le t}|e^{u_{\kappa,s}^N(\ul{\V{q}})}-e^{u_{\infty,s}^N(\ul{\V{q}})}|^{p}\big]
&\xrightarrow{\;\;\kappa\to\infty\;\;}0,\quad t\ge0,\,p>0,
\end{align}
where the supremum in front of the expectation is taken over all $\fF_0$-measurable 
functions $\ul{\V{q}}:\Omega\to\RR^\nu$.
\end{thm}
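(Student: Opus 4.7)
My plan is to prove \eqref{expbdu} for finite $\kappa$ first, deduce \eqref{convu} from it using Proposition~\ref{lemuinfty}, and then extend \eqref{expbdu} to $\kappa=\infty$ via Fatou's lemma. Starting from the decomposition \eqref{defuinfty}, I would drop the non-positive contribution $-pb^N_{\kappa,t}\le 0$ and apply H\"older's inequality to the remaining three factors. The pair potential $e^{pv^N_\kappa}$ is controlled by the general exponential moment bound for sums of symmetric pair interactions $2\sum_{j<\ell}v^{[j,\ell]}_\kappa$ recalled in Appendix~B. The terms $e^{p(c^{N,-}_\kappa-c^{N,+}_\kappa)}$ are handled via Cauchy--Schwarz against $\omega^{\eh}\beta^N_\kappa$ (whose $\HP$-norm is uniformly bounded by $CN|\ee|$, since $\int(\omega+\V{m}^2/2)^{-2}\Id\lambda^3<\infty$), Young's inequality, and the exponential moment bound \eqref{expbdUpmN} on $\|U^{N,\pm}_\kappa\|_s^2$ from Corollary~\ref{corUpmN}.

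The delicate step will be the martingale contribution $e^{pm^N_\kappa}$. First, \eqref{expbdmart} bounds $\EE[\sup_{s\le t}e^{pm^N_{\kappa,s}}]$ in terms of $\EE[\exp(Cp^2\llbracket m^N_\kappa\rrbracket_t)]^{\nf{1}{p'}}$, and Cauchy--Schwarz applied to the integrand $\V{d}^{[\ell]}_{0,\kappa}$ in \eqref{defm} yields $\llbracket m^N_\kappa\rrbracket_t\le C\ee^2 N\int_0^t\|\omega^{\nf{1}{4}}S^N_{\kappa,s}\|_{\HP}^2\,\Id s$. The key bound \eqref{keith0}, applied with $a=-\nf{1}{4}$ and a large infrared cutoff $\Lambda$, then converts this integral into the continuous $L^2$-martingale $\ell_{\Lambda,\kappa,t}$ plus a deterministic remainder, with both $\mho_\Lambda(-\nf{1}{4})$ and $\|\vr_\Lambda\omega^{-\nf{1}{4}}\V{m}\beta_\kappa\|^2$ decaying like $\Lambda^{-\eh}$. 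Since $\llbracket\ell_{\Lambda,\kappa}\rrbracket_t$ is almost surely bounded by $G_\Lambda(-\nf{1}{4})^2\ee^4 N^3 t$ by \eqref{thalia55}--\eqref{thalia57}, a second application of \eqref{expbdmart} on $\ell_{\Lambda,\kappa}$ (or the standard supermartingale estimate $\EE[e^{\mu\ell-\mu^2\llbracket\ell\rrbracket/2}]\le 1$), together with the optimization $\Lambda\sim p|\ee|^2 N$, produces the desired bound $e^{cp^2\ee^4 N^3 t}$. The low-frequency complement $m^N_{0,\kappa}-m^N_{\Lambda,\kappa}$ is treated separately using $\omega\le\Lambda+\mu$ on $\{|\V{m}|<\Lambda\}$ combined with the moment bounds on $S^N_\kappa$ from Lemma~\ref{lemMtau} and Lemma~\ref{lemexpplus}.

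For \eqref{convu} and the extension of \eqref{expbdu} to $\kappa=\infty$, I would combine the elementary inequality $|e^x-e^y|^p\le 2^{p-1}(e^{px}+e^{py})|x-y|^p$ with Cauchy--Schwarz to get
\[
\EE\Big[\sup_{s\le t}|e^{u^N_{\kappa,s}(\ul{\V{q}})}-e^{u^N_{\infty,s}(\ul{\V{q}})}|^p\Big]\le C_p\,\EE\Big[\sup_{s\le t}(e^{2pu^N_{\kappa,s}}+e^{2pu^N_{\infty,s}})\Big]^{\eh}\,\EE\Big[\sup_{s\le t}|u^N_{\kappa,s}-u^N_{\infty,s}|^{2p}\Big]^{\eh}.
\]
The second factor vanishes as $\kappa\to\infty$ by Proposition~\ref{lemuinfty}, and by extracting an a.s.\ convergent subsequence from the same proposition and invoking Fatou's lemma, the bound of the first step transfers to $\kappa=\infty$, keeping the first factor uniformly bounded in $\kappa$. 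The main obstacle is the iterated martingale argument: a naive double application of \eqref{expbdmart} produces an exponent quartic in $p$, and the reduction to the target quadratic scaling hinges on trading the weight $\omega^{\nf{1}{4}}$ on $S^N_\kappa$ (which controls $\llbracket m^N_\kappa\rrbracket$) for the less singular $\omega^{-\nf{1}{4}}$ (which controls $\llbracket\ell_{\Lambda,\kappa}\rrbracket$) via \eqref{keith0}, coupled with the large-$\Lambda$ decay of $G_\Lambda(-\nf{1}{4})$.
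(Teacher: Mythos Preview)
Your martingale strategy for $m_\kappa^N$ is essentially the paper's Lemma~\ref{lemkeith}, and your derivation of \eqref{convu} from \eqref{expbdu} and Proposition~\ref{lemuinfty} is correct. The gap is in where you introduce the cut-off $\Lambda$: you insert it only inside the martingale term, leaving $v_\kappa^N=v_{0,\kappa}^N$, $c_\kappa^{N,\pm}=c_{0,\kappa}^{N,\pm}$, and $b_\kappa^N=b_{0,\kappa}^N$ untouched. This breaks the bound on the pair potential. With $\Lambda=0$ and $\mu=0$, the function $w_{0,\kappa}$ is only in $L^\infty$ (indeed $\omega\beta_\kappa^2\sim\ee^2|\V{k}|^{-2}$ near $0$, so $\omega\beta_\kappa^2\notin L^2(\RR^3)$ and Carmona's inequality \eqref{Carmonaineq} is unavailable). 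The trivial estimate $|v_{0,\kappa,s}^N|\le C\ee^2N^2s$ therefore only yields $e^{Cp\ee^2N^2t}$, and for $p\ee^2N$ small and $t$ large this cannot be dominated by $b^Ne^{cp^2\ee^4N^3t}$ for any fixed $b,c$. Appendix~\ref{apppair} is purely combinatorial and does not help here. A related issue: your separate treatment of the low-frequency martingale $m_{0,\kappa}^N-m_{\Lambda,\kappa}^N$ requires an exponential moment bound on $\int_0^t\|S_{\kappa,s}^N\|_{\HP}^2\,\Id s$, which is not supplied by Lemmas~\ref{lemMtau} or~\ref{lemexpplus} (they give pointwise-in-$\tau$ bounds on $\|S_{\kappa,s}^{[\tau]}\|_{\HP}^2$, not time-integrated ones).

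The paper's remedy is to introduce $\Lambda$ upstream, in the decomposition of $u_\kappa^N$ itself: one writes (see \eqref{morten1a})
\[
u_{\kappa,t}^N=u_{\Lambda,\kappa,t}^{N,<}-tNE_{\Lambda,\kappa}-b_{\Lambda,\kappa,t}^N+c_{\Lambda,\kappa,t}^{N,-}-c_{\Lambda,\kappa,t}^{N,+}+v_{\Lambda,\kappa,t}^N+m_{\Lambda,\kappa,t}^N,
\]
where $u_{\Lambda,\kappa,t}^{N,<}$ is the original triple integral \eqref{foruIR} restricted to $\{|\V{m}|<\Lambda\}$. This single move solves all of the above problems at once: the low-frequency piece obeys the deterministic bound $|u_{\Lambda,\kappa,t}^{N,<}|\le4\pi\ee^2N^2t\Lambda$, which with $\Lambda\sim p\ee^2N$ is exactly $cp^2\ee^4N^3t$; the terms $c_{\Lambda,\kappa}^{N,\pm}$ satisfy the clean deterministic bound $|c_{\Lambda,\kappa}^{N,\pm}|\le8\pi\ee^2N^2/\Lambda\sim cN/p$ (no recourse to \eqref{expbdUpmN} needed); the pair potential $v_{\Lambda,\kappa}^N$ now involves $w_{\Lambda,\kappa}\in L^2$ with $\|w_{\Lambda,\kappa}\|_2^2\le c\ee^4/\Lambda$, so Carmona applies (Lemma~\ref{lemfrkv}); and there is no leftover low-frequency martingale to handle. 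Your iterated martingale argument on $m_{\Lambda,\kappa}^N$ via \eqref{keith0} is then exactly what the paper does.
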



\begin{proof}
To start with we observe that \eqref{convu} is implied by \eqref{felixu} and \eqref{expbdu}.

In the derivation of \eqref{expbdu}, the most cumbersome term is the martingale 
$m_\kappa^N(\ul{\V{q}})$, for which we only find a suitable bound, if it is restricted to 
sufficiently large boson momenta. Therefore, we pick some $\Lambda>0$ and split off the
following infra-red part from the complex action,
\begin{align}\label{foruIR}
u_{\Lambda,\kappa,t}^{N,<}(\ul{\V{q}})
&=\sum_{j,\ell=1}^N\int_0^t\int_0^s\int_{\{|\V{m}|<\Lambda\}}{e^{-(s-r)\omega-i\V{m}\cdot
(\V{q}_\ell+\V{b}_{\ell,s}-\V{q}_j-\V{b}_{j,r})}}f_\kappa^2\Id\lambda^3\Id r\,\Id s.
\end{align}
Replacing its integrand by its modulus and computing the $\Id r$-integral first, we find the
elementary bound $|u_{\Lambda,\kappa,t}^{N,<}(\ul{\V{q}})|\le 4\pi\ee^2N^2t\Lambda$.
Next, we notice that Thm.~\ref{thm2ndIto} can be applied with $1_{\{|\V{m}|<\Lambda\}}\eta$ or
$1_{\{|\V{m}|\ge\Lambda\}}\eta$ in place of $\eta$, and in the former case it still holds true for
$\kappa=\infty$. Hence, $\PP$-a.s.,
\begin{align}\nonumber
u_{\kappa,t}^N(\ul{\V{q}})&=u_{\Lambda,\kappa,t}^{N,<}(\ul{\V{q}})-tNE_{\Lambda,\kappa}
-b^N_{\Lambda,\kappa,t}(\ul{\V{q}})+c_{\Lambda,\kappa}^{N,-}(\ul{\V{q}})
\\\label{morten1a}
&\quad-c_{\Lambda,\kappa}^{N,+}(\ul{\V{q}})+v_{\Lambda,\kappa,t}^N(\ul{\V{q}})
+m_{\Lambda,\kappa,t}^N(\ul{\V{q}}),\quad t\ge0,
\end{align}
where $\kappa$ can be finite or infinite. Here $b^N_{\Lambda,\kappa,t}(\ul{\V{q}})$
and the energy correction 
$E_{\Lambda,\kappa}:=\int_{\{|\V{m}|<\Lambda\}}f_\kappa\beta_\kappa\Id\lambda^3$ are
non-negative. Furthermore,
\begin{align}\label{morten1}
|c_{\Lambda,\kappa,t}^{N,\pm}(\ul{\V{q}})|&\le8\pi\ee^2N^2/\Lambda.
\end{align}
In fact, on account of \eqref{anne1}, $\omega^\mh U_{\kappa,t}^\pm[\V{b}_j]$ is given by an
{\em $\HP$-valued} Bochner-Lebesgue integral, which can be interchanged
with the scalar products in \eqref{defcplus} and \eqref{defcminus}. Hence,
both terms on the left hand side of \eqref{morten1} are dominated by
\begin{align*}
4\pi\ee^2N^2\int_\Lambda^\infty\int_0^t\frac{e^{-(t-s)\omega_\rho}\rho^2}{
\omega_\rho(\omega_\rho+\rho^2/2)}\Id s\Id\rho
&\le8\pi\ee^2N^2\int_{\Lambda}^\infty\frac{\Id\rho}{\rho^2}.
\end{align*}
Altogether we conclude that
\begin{align*}
\EE\big[\sup_{s\le t}e^{pu_{\kappa,s}^N(\ul{\V{q}})}\big]
&\le e^{cp\ee^2N^2/\Lambda+4\pi p\ee^2N^2t\Lambda}
\EE\Big[\sup_{s\le t}e^{rpm_{\Lambda,\kappa,s}^N(\ul{\V{q}})}\Big]^{\nf{1}{r}}
\EE\Big[\sup_{s\le t}e^{r'pv_{\Lambda,\kappa,s}^N(\ul{\V{q}})}\Big]^{\nf{1}{r'}}\!,
\end{align*}
for every $r\in(1,\infty)$ with  $r'$ denoting its conjugated exponent.
Now we choose $\Lambda:=64\pi pqr\ee^2N$, with an arbitrary $q>1$. With this choice,
suitable bounds on the latter two expectations are given in Lem.~\ref{lemfrkv} and
Lem.~\ref{lemkeith} below.
\end{proof}

\begin{lem}\label{lemfrkv}
Let $\kappa\in\NN\cup\{\infty\}$ and $\ul{\V{q}}:\Omega\to\RR^\nu$ be $\fF_0$-measurable.
Put $\Lambda(\tau):=64\pi\tau$, $\tau>0$.
Then there exists a universal constant $c>0$ such that
\begin{align}\label{expbdfrkv}
\EE\Big[\sup_{s\le t}e^{p|{v}^N_{\Lambda(q\ee^2N),\kappa,s}(\ul{\V{q}})|}\Big]
&\le \{e^{pNt/4q}\}\wedge\{e^{c(p^4/q^2)\ee^4N^2(N-1)t}\},\quad q,p,t>0.
\end{align}
\end{lem}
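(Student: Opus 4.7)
My plan is to treat the two estimates inside the minimum separately, since they reflect different regimes (long times / small coupling versus short times / large coupling). Both rely on one common preliminary: a sharp pointwise bound on the kernel $w_{\Lambda,\kappa}$ for the specific choice $\Lambda=\Lambda(q\ee^2N)=64\pi q\ee^2N$. Applying Cauchy--Schwarz to the scalar product defining $w_{\Lambda,\kappa}$ in Def.~\ref{defnbcvm}(1) one has $\|w_{\Lambda,\kappa}\|_{\infty}\le\|\omega^{\nf{1}{2}}\beta_{\Lambda,\kappa}\|_{\HP}^{2}$, and passing to spherical coordinates and using $\omega_\rho+\rho^2/2\ge\rho^2/2$ gives
\begin{equation*}
\|\omega^{\nf{1}{2}}\beta_{\Lambda,\kappa}\|_{\HP}^{2}
\le 4\pi\ee^{2}\int_{\Lambda}^{\infty}\frac{\omega_\rho\rho^{2}}{(\omega_\rho+\rho^{2}/2)^{2}}\,\Id\rho
\le 16\pi\ee^{2}\int_{\Lambda}^{\infty}\rho^{-2}\,\Id\rho=\frac{16\pi\ee^{2}}{\Lambda}.
\end{equation*}
Substituting $\Lambda=64\pi q\ee^{2}N$ this becomes the basic inequality $\|w_{\Lambda,\kappa}\|_{\infty}\le 1/(4qN)$, which is uniform in $\kappa\in\NN\cup\{\infty\}$ and in the (random) shift $\ul{\V{q}}$.

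For the first half of the minimum I would then argue purely pointwise: for every $\gamma\in\Omega$ and every $s\le t$,
\begin{equation*}
\big|v^{N}_{\Lambda(q\ee^{2}N),\kappa,s}(\ul{\V{q}})(\gamma)\big|
\le 2\binom{N}{2}\,\|w_{\Lambda,\kappa}\|_{\infty}\,s
\le \frac{(N-1)t}{4q}\le\frac{Nt}{4q},
\end{equation*}
whence $\sup_{s\le t}e^{p|v^{N}|}\le e^{pNt/(4q)}$ holds deterministically, and taking the expectation yields the first bound with no stochastic input. This step is straightforward once the preliminary pointwise estimate of $w_{\Lambda,\kappa}$ is in hand.

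The more interesting second bound is where I would invoke the exponential moment estimate for sums of pair potentials developed in Appendix~B (compare also Bley~\cite{Bley2016}). The input is to view $v^{N}_{\Lambda,\kappa,s}(\ul{\V{q}})$ as the additive functional $\int_{0}^{s}W^{N}_{\Lambda,\kappa}(\ul{\V{q}}+\ul{\V{b}}_{r})\,\Id r$ along the $\nu$-dimensional Brownian motion $\ul{\V{b}}$, with pair potential $W^{N}_{\Lambda,\kappa}(\ul{\V{x}}):=2\sum_{j<\ell}w_{\Lambda,\kappa}(\V{x}_{\ell}-\V{x}_{j})$. The running supremum in $s$ is then absorbed into the monotone majorant $\int_{0}^{t}|W^{N}_{\Lambda,\kappa}|(\ul{\V{q}}+\ul{\V{b}}_{r})\,\Id r$, after which Appendix~B produces an exponential moment bound whose exponent has the Khasminskii-type form $c\,p^{2}\,t$ times a product involving a norm of $w_{\Lambda,\kappa}$ and the combinatorial factor $\binom{N}{2}$. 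Plugging in $\|w_{\Lambda,\kappa}\|_{\infty}\le 16\pi\ee^{2}/\Lambda$ twice (once from each of the two Khasminskii iterations that Appendix~B carries out, mirroring the double application of the martingale inequality in Rem.~\ref{remstochGronwall}) should produce a factor $(\ee^{2}/\Lambda)^{2}$, and substituting $\Lambda=64\pi q\ee^{2}N$ should cancel two powers of $\ee^{2}N$ and leave the required $(p^{4}/q^{2})\ee^{4}N^{2}(N-1)t$ after gathering universal constants into $c$.

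The main obstacle will be precisely this bookkeeping: ensuring that the iterated Khasminskii scheme of Appendix~B really furnishes the $p^{4}$ (rather than $p^{2}$) power and the $N^{2}(N-1)$ (rather than $N(N-1)$) combinatorial prefactor claimed on the right-hand side, and that the $\kappa=\infty$ case is covered by the same argument because all input estimates are uniform in $\kappa$. Once Appendix~B's statement is calibrated to the pair potential $W^{N}_{\Lambda,\kappa}$ and the uniform $L^{\infty}$-bound above is inserted, the conclusion should follow by direct algebra and the standard taking of the infimum against the trivial pointwise bound of the previous paragraph.
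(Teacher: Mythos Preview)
Your treatment of the first (trivial) bound is correct and matches the paper: the pointwise estimate $\|w_{\Lambda,\kappa}\|_\infty\le\|\omega^{\eh}\beta_{\Lambda,\kappa}\|_\HP^2\le16\pi\ee^2/\Lambda$ together with $\Lambda=64\pi q\ee^2N$ gives $|v^N_{\Lambda,\kappa,s}|\le(N-1)t/(4q)\le Nt/(4q)$ deterministically.

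The second bound, however, has a genuine gap. You misidentify what Appendix~\ref{apppair} contains: Lem.~\ref{lempairs} is purely combinatorial. It uses H\"older's inequality and the independence of disjoint pairs to reduce the expectation of $\prod_{j<\ell}e^{2p|v^{[j,\ell]}|}$ to a single-pair expectation $\EE\big[e^{2pN|v^{[j,\ell]}|}\big]^{(N-1)/2}$; there is no Khasminskii iteration, no exponential moment bound, and no norm of $w_{\Lambda,\kappa}$ anywhere in that appendix. After this reduction (and after observing that $2^{-\eh}(\V{b}_j-\V{b}_\ell)$ is a three-dimensional Brownian motion) one still has to bound $\sup_{\V{y}}\EE\big[e^{2pN\int_0^t|w_{\Lambda,\kappa}(2^{\eh}(\V{y}+\V{B}_s))|\Id s}\big]$, and this is where your plan breaks down. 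Inserting the $L^\infty$ bound on $w_{\Lambda,\kappa}$ here, once or ``twice'', only reproduces the trivial first estimate; a Khasminskii argument based on $\|w\|_\infty$ needs $2pN\|w\|_\infty t<1$, i.e.\ $pt/(2q)<1$, which fails for large $t$ and cannot yield a bound of the form $e^{ct}$.

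What the paper actually uses at this point is Carmona's inequality \eqref{Carmonaineq},
\[
\sup_{\V{y}\in\RR^3}\EE\Big[e^{\int_0^t|v(\V{y}+\V{B}_s)|\Id s}\Big]\le ce^{c'\|v\|_2^4t},\quad v\in L^2(\RR^3),
\]
so the relevant norm is $L^2$, not $L^\infty$. Since $w_{\Lambda,\kappa}$ is the Fourier transform of $\omega\beta_{\Lambda,\kappa}^2\in L^1\cap L^2$, Plancherel gives $\|w_{\Lambda,\kappa}(2^{\eh}\cdot)\|_2^2\le c\ee^4/\Lambda$ (note the power $\ee^4$, not $\ee^2$). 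Applying Carmona with $v=2pN\,w_{\Lambda,\kappa}(2^{\eh}\cdot)$ therefore produces an exponent $c'(pN)^4\ee^8/\Lambda^2\cdot t$, and after raising to the power $(N-1)/2$ and substituting $\Lambda=64\pi q\ee^2N$ one obtains the asserted $c(p^4/q^2)\ee^4N^2(N-1)t$. The $p^4$ and the extra factor of $N^2$ you were worried about both come from the fourth power in Carmona's inequality, not from any iteration.
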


\begin{proof}
A trivial estimate yields
$|v_{\Lambda,\kappa}^N(\ul{\V{q}})|\le N^2t\|\omega^\eh\beta_{\Lambda,\kappa}\|^2
\le16\pi\ee^2N^2t/\Lambda$.

To derive the non-trivial bound in \eqref{expbdfrkv}
let $p,t,\Lambda>0$ and let $\cL_{\ul{\V{q}}}$ denote the law of $\ul{\V{q}}$. 
Since $\ul{\V{q}}:\Omega\to\RR^\nu$ is $\fF_0$-measurable, we then have
\begin{align*}
\EE\Big[\sup_{s\le t}e^{p|{v}_{\Lambda,\kappa,s}^N(\ul{\V{q}})|}\Big]
&=\int_{\RR^\nu}\EE\Big[\sup_{s\le t}e^{p|{v}_{\Lambda,\kappa,s}^N(\ul{\V{x}})|}\Big]
\Id\cL_{\ul{\V{q}}}(\ul{\V{x}});
\end{align*}
see, e.g., \cite[Prop.~1.12]{daPrZa2014}.
Hence, it suffices to treat only constant $\ul{\V{q}}=\ul{\V{x}}\in\RR^\nu$. 

For disjoint pairs $\{j,\ell\}$ and $\{j',\ell'\}$ of indices in $\{1,\ldots,N\}$, the random variables 
$\sup_{s\le t}e^{p|{v}_{\Lambda,\kappa,s}^{[j,\ell]}[\ul{\V{x}},\ul{\V{b}}_{\bullet}]|}$ and
$\sup_{s\le t}e^{p|{v}_{\Lambda,\kappa,s}^{[j',\ell']}[\ul{\V{x}},\ul{\V{b}}_{\bullet}]|}$ 
are independent. Similarly as in \cite[Eqn.~(3.4)\&(3.5)]{Bley2016} we shall use these 
independences together with H\"{o}lder's inequality in the next estimate.
For the sake of completeness we provide a proof (slightly different from \cite{Bley2016}) of the 
elementary inequality employed in the second step of the following estimation in App.~\ref{apppair},
\begin{align}\nonumber
\EE\Big[\sup_{s\le t}e^{p|{v}_{\Lambda,\kappa,s}^N(\ul{\V{x}})|}\Big]
&\le\EE\Big[\mathop{\prod_{j,\ell=1}^N}_{j<\ell}\sup_{s\le t}
e^{2p|{v}_{\Lambda,\kappa,s}^{[j,\ell]}[\ul{\V{x}},\ul{\V{b}}_{\bullet}]|}\Big]
\le\mathop{\max_{j,\ell=1}^N}_{j<\ell}\EE\Big[\sup_{s\le t}
e^{2pN|{v}_{\Lambda,\kappa,s}^{[j,\ell]}[\ul{\V{x}},\ul{\V{b}}_{\bullet}]|}\Big]^{\frac{N-1}{2}}
\\\label{Gonzalo}
&\le\sup_{\V{y}\in\RR^3}\EE\Big[e^{2pN
\int_0^t|w_{\Lambda,\kappa}(2^\eh(\V{y}+\V{B}_{s}))|\Id s}\Big]^{\frac{N-1}{2}}.
\end{align}
Here we also used the fact that $2^\mh(\V{b}_{j}-\V{b}_{\ell})$ with $j\not=\ell$ is a three-dimensional
Brownian motion having the same distribution as $\V{B}$ in the last step. 

Our plan is now to
employ the following special case of an inequality due to Carmona \cite[Rem.~3.1]{Carmona1979},
\begin{align}\label{Carmonaineq}
\sup_{\V{y}\in\RR^3}\EE\Big[e^{\int_0^t|v(\V{y}+\V{B}_s)|\Id s}\Big]
&\le ce^{c'\|v\|_2^{4}t}, \quad v\in L^2(\RR^3).
\end{align}
Since $v_{\Lambda,\kappa}$ is the Fourier transform of a function in
$L^1(\RR^3)\cap L^2(\RR^3)$, we may exploit the isometry of the Fourier transform to get
\begin{align}\label{samy}
\|w_{\Lambda,\kappa}(2^\eh(\cdot))\|_2^2&\le\frac{4\pi\ee^4}{2^{\nf{3}{2}}}
\int_{\Lambda}^\infty\frac{\rho^2}{\omega_\rho^4}\Id\rho
\le\frac{c\ee^4}{\Lambda}.
\end{align}
Combining \eqref{Gonzalo}, \eqref{Carmonaineq}, and \eqref{samy} we arrive at
\begin{align}\nonumber
\sup_{\ul{\V{x}}\in\RR^\nu}\EE\Big[\sup_{s\le t}e^{p|{v}_{\Lambda,\kappa,s}^N(\ul{\V{x}})|}\Big]
&\le e^{c'(N-1)(p\ee^2N)^4t/\Lambda^2}.
\end{align}
\end{proof}

\begin{lem}\label{lemkeith}
Let $\kappa\in\NN\cup\{\infty\}$ and $\ul{\V{q}}:\Omega\to\RR^\nu$ be $\fF_0$-measurable.
Put $\Lambda(\tau):=64\pi\tau$, $\tau>0$. Then there exists a universal constant $c>0$ and, 
for every $q>1$, some $c_q>0$ depending only on $q$ such that
\begin{align}\label{keith2000}
\EE\Big[\sup_{s\le t}e^{\pm pm_{\Lambda(qp\ee^2N),\kappa,s}(\ul{\V{q}})}\Big]
&\le c_qe^{c([p\ee^2N^2]\wedge[qp^2\ee^4N^3])t},\quad p,t>0.
\end{align}
\end{lem}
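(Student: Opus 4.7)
The proof hinges on two nested applications of the Bley--Thomas exponential martingale inequality \eqref{expbdmart}, interleaved with the pathwise estimate \eqref{keith0} from Lemma~\ref{lemthalia} specialized to $a=-1/4$. I would fix the inner Hölder parameters, say $p_1=q_1=p_2=q_2=2$, once and for all so that every universal constant remains explicit through the calculation.

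First I would apply \eqref{expbdmart} to the continuous $L^2$-martingale $\pm p\,m_{\Lambda,\kappa,s}^N(\ul{\V{q}})$ from \eqref{defm}, which reduces matters to an exponential moment bound on $Cp^2\llbracket m\rrbracket_t$ for some universal $C>0$. A componentwise Cauchy--Schwarz applied to the integrand $\V{d}^{[\ell]}_{\Lambda,\kappa,s}(\ul{\V{q}})$ from \eqref{forvecd} yields the pathwise bound $|\V{d}^{[\ell]}|^2\le K_0\,\|\vr_{\Lambda}\omega^{1/4}S_\kappa^{N}(\ul{\V{q}})\|_\HP^2$ with $K_0:=\|\vr_{\Lambda}\omega^{-1/4}\V{m}\beta_\kappa\|_\HP^2$, and hence, after summation in $\ell$ and integration in $s$,
\begin{align*}
\llbracket m\rrbracket_t\le NK_0\int_0^t\|\vr_{\Lambda}\omega^{1/4}S_\kappa^N(\ul{\V{q}})\|_\HP^2\,\Id s.
\end{align*}
The key inequality \eqref{keith0} then replaces the integral on the right by $\ell_{\Lambda,\kappa,t}(\ul{\V{q}})+2\mho_{\Lambda}(-\tfrac14)\ee^2 N^2 t+\tfrac{Nt}{2}K_0$, decomposing $p^2\llbracket m\rrbracket_t$ into a martingale contribution $Cp^2 NK_0\,\ell_{\Lambda,\kappa,t}$ plus two explicitly deterministic terms.

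Second I would handle the inner exponential $\EE[\exp(Cp^2 NK_0\,\ell_{\Lambda,\kappa,t})]$ by a second application of \eqref{expbdmart}, this time to the continuous $L^2$-martingale $\ell_{\Lambda,\kappa,t}$, using the universal pathwise bound $\llbracket\ell\rrbracket_t\le\mho_{\Lambda}(-\tfrac14)^2\ee^4 N^3 t$ from \eqref{thalia56} for the resulting Gaussian-type estimate. Collecting everything yields a bound of the shape
\begin{align*}
\EE\big[\sup_{s\le t}e^{\pm p m_s}\big]\le c\exp\!\Big(C_1 p^2\mho\ee^2 N^3 K_0\,t+C_2 p^2 N^2 K_0^2\,t+C_3 p^4 N^5\mho^2 K_0^2\ee^4\,t\Big),
\end{align*}
with universal $C_1,C_2,C_3>0$. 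One now substitutes $\Lambda=64\pi qp\ee^2 N$. For $\Lambda\ge 1$ the decay estimates $K_0\le c\ee^2/\sqrt{\Lambda}$ and $\mho_{\Lambda}(-\tfrac14)\le c/\sqrt{\Lambda}$, both obtained by direct polar-coordinate integration of the explicit integrals defining $K_0$ and $\mho_{\Lambda}(-1/4)$, translate into $N^3 K_0\mho\ee^2\le cN^2\ee^2/q$, $N^2 K_0^2\le cN\ee^2/q$ and $N^5\mho^2 K_0^2\ee^4\le cN^3\ee^4/q^2$. Each of the three exponent terms is then controlled by a constant times $qp^2\ee^4 N^3 t$, and each simultaneously admits a bound proportional to $p\ee^2 N^2 t/q$ or better, yielding both arms of the wedge. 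In the complementary regime $\Lambda<1$ one instead uses the universal bounds $K_0\le c\ee^2$ and $\mho_{\Lambda}(-\tfrac14)\le c$; the constraint $p\ee^2 N<1/(64\pi q)$ then implies $p^4\ee^8 N^5\le p^2\ee^4 N^3/(64\pi q)^2$, and the same conclusion follows.

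The main obstacle will be the careful bookkeeping of constants across the two nested Bley--Thomas applications and the two $\Lambda$-regimes. In particular one must verify that the third, quartic-in-$p$ term $C_3 p^4 N^5\mho^2 K_0^2\ee^4 t$ generated by the second Bley--Thomas step does not overwhelm either arm of the wedge: this is precisely where the non-trivial $\Lambda$-decay of both $\mho_{\Lambda}(-\tfrac14)$ and $K_0$ must be combined, since the weaker universal bounds would produce a $p^4$ contribution of order $p^4\ee^8 N^5 t$ incompatible with the statement. The specific scaling $\Lambda\propto qp\ee^2 N$ is tuned exactly so that the combined decay produces the factor $1/q^2$ needed to close the argument in both arms of the wedge.
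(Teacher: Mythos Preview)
Your skeleton matches the paper up to and including the first application of \eqref{expbdmart}, the Cauchy--Schwarz estimate on $\llbracket m\rrbracket$, and the use of \eqref{keith0}. The gap is in the second step, where you bound $\llbracket\ell_{\Lambda,\kappa}\rrbracket_t$ by the deterministic quantity $\mho_\Lambda(-\tfrac14)^2\ee^4N^3t$ from \eqref{thalia56}. This yields your third exponent term $C_3p^4N^5\mho^2K_0^2\ee^4t$, which after substituting $K_0\le c\ee^2/\sqrt{\Lambda}$, $\mho\le c/\sqrt{\Lambda}$, and $\Lambda=64\pi qp\ee^2N$ becomes $cp^2\ee^4N^3/q^2$. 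Your assertion that this is $\le cp\ee^2N^2/q$ is false: it would require $p\ee^2N\le cq$, which is not implied by anything in the regime $\Lambda\ge1$. Consequently your argument only delivers the $qp^2\ee^4N^3$ arm of the wedge; when $p\ee^2N\gg q^2$ (so that the minimum equals $p\ee^2N^2$) your bound on the exponent is of order $p^2\ee^4N^3/q^2$, strictly larger than the target.

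The paper closes this gap by a bootstrap that replaces the crude bound \eqref{thalia56} with the \emph{formula} \eqref{thalia55}, which shows $\llbracket\ell_{\Lambda,\kappa}\rrbracket_t\le NK_0\int_0^t\|\vr_\Lambda\omega^{-1/4}S^N_{\kappa,s}\|_\HP^2\,\Id s$. The point is that this integrand carries the weight $\omega^{-1/4}$, one full power of $\omega^{1/2}$ less than the $\omega^{+1/4}$ appearing in $\llbracket m\rrbracket$. Since $\omega\ge\Lambda$ on the support of $\vr_\Lambda$, one has $\|\vr_\Lambda\omega^{-1/4}S\|_\HP^2\le\Lambda^{-1}\|\vr_\Lambda\omega^{+1/4}S\|_\HP^2$, and the specific choice $\Lambda=64\pi qp\ee^2N$ is tuned precisely so that the prefactor $8q^3p^4N^3(32\pi)^2\ee^4/\Lambda^2$ arising after two Bley--Thomas steps collapses back to the original prefactor $2qp^2N$. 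This produces a self-referential inequality of the form $X\le c_qe^{c\sigma_\Lambda^2qp^2\ee^4N^3t}X^{1/2}$ for $X=\EE[\exp(2qp^2NK_0\int_0^t\|\vr_\Lambda\omega^{1/4}S\|^2\,\Id s)]$; finiteness of $X$ (which your argument does establish) then allows one to solve for $X$, and the bound $\sigma_\Lambda^2\le c/(qp\ee^2N)$ yields the missing $p\ee^2N^2$ arm.
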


\begin{proof}
Let $\Lambda>0$. The key ingredient in this proof is Lem.~\ref{lemthalia} that we apply with
$a=-\nf{1}{4}$, writing $\sigma_\Lambda:=\mho_\Lambda(-\nf{1}{4})$, 
where the function $\mho_\Lambda$ is defined in front of Lem.~\ref{lemthalia}. 
Recall also the notation $\vr_\Lambda=1_{\{|\V{m}|\ge\Lambda\}}$. According to \eqref{keith0}
the quadratic variation of $m_{\Lambda,\kappa}^N(\ul{\V{q}})$ satisfies, $\PP$-a.s.,
\begin{align}\nonumber
\llbracket m_{\Lambda,\kappa}^{N}(\ul{\V{q}})\rrbracket_t
&=\sum_{\ell=1}^N\int_0^t\SPb{\vr_\Lambda\omega^{\nf{1}{4}}S_{\kappa,s}^{N}(\ul{\V{q}})}{
 e^{-i\V{m}\cdot(\V{q}_\ell+\V{b}_{\ell,s})}
\omega^{-\nf{1}{4}}i\V{m}\beta_{\Lambda,\kappa}}_{\HP}^2\Id s 
\\\nonumber
&\le N\|\omega^{-\nf{1}{4}}i\V{m}\beta_{\Lambda,\kappa}\|_{\HP}^2
\int_0^t\|\vr_\Lambda\omega^{\nf{1}{4}}S_{\kappa,s}^{N}(\ul{\V{q}})\|_{\HP}^2\Id s
\\\label{kathrin3}
&\le N\|\omega^{-\nf{1}{4}}i\V{m}\beta_{\Lambda,\kappa}\|_{\HP}^2\ell_{\Lambda,\kappa,t}(\ul{\V{q}})
+6\sigma_\Lambda^2\ee^4N^3t.
\end{align}
In the second step we also observed that
\begin{align}\nonumber
\|\omega^{-\nf{1}{4}}i\V{m}\beta_{\Lambda,\kappa}\|^2
&\le4\pi\ee^2\bigg(\int_{1\wedge\Lambda}^1\rho^\eh\Id\rho+
4\int_{1\vee\Lambda}^\infty\frac{\Id\rho}{\rho^{\nf{3}{2}}}\bigg)
\\\label{kathrin2}
&=8\pi\ee^2\bigg(\frac{1-({1\wedge\Lambda})^{\nf{3}{2}}}{3}+\frac{4}{(1\vee\Lambda)^\eh}\bigg)
\le2\sigma_\Lambda\ee^2.
\end{align}
Let $q>1$. Combining \eqref{kathrin3} with Rem.~\ref{remstochGronwall} (which we apply with
$p=p'=2$) we then conclude that
\begin{align}\nonumber
\EE\Big[\sup_{s\le t}e^{\pm pm_{\Lambda,\kappa,s}^N}\Big]
&\le c_q\EE\Big[e^{2qp^2\llbracket m_{\Lambda,\kappa}^{N}(\ul{\V{q}})\rrbracket_t}\Big]^{\nf{1}{2}}
\\\nonumber
&\le c_q\EE\Big[e^{2qp^2N\|\omega^{-\nf{1}{4}}i\V{m}\beta_{\Lambda,\kappa}\|_{\HP}^2
\int_0^t\|\vr_\Lambda\omega^{\nf{1}{4}}S_{\kappa,s}^{N}(\ul{\V{q}})\|_{\HP}^2\Id s}\Big]^{\nf{1}{2}}
\\\nonumber
&\le c_qe^{6\sigma_\Lambda^2qp^2\ee^4N^3t}
\EE\Big[e^{2qp^2N\|\omega^{-\nf{1}{4}}i\V{m}\beta_{\Lambda,\kappa}\|_{\HP}^2
\ell_{\Lambda,\kappa,t}(\ul{\V{q}})}\Big]^{\nf{1}{2}}
\\\nonumber
&\le c_{q}'e^{6\sigma_\Lambda^2qp^2\ee^4N^3t}
\EE\Big[e^{8q^3p^4N^2\|\omega^{-\nf{1}{4}}i\V{m}\beta_{\Lambda,\kappa}\|_{\HP}^4
\llbracket\ell_{\Lambda,\kappa}(\ul{\V{q}})\rrbracket_t}\Big]^{\nf{1}{4}}
\\\label{keith1983}
&\le c_{q}'e^{6\sigma_\Lambda^2qp^2\ee^4N^3t}
e^{2q^3p^4\|\omega^{-\nf{1}{4}}i\V{m}\beta_{\Lambda,\kappa}\|_{\HP}^4\sigma_\Lambda^2\ee^4N^5t}.
\end{align}
Here we applied \eqref{thalia56} and \eqref{thalia57} in the last step.
This proves the theorem in the case $qp\ee^2N<1$, where 
$\|\omega^{-\nf{1}{4}}i\V{m}\beta_{\Lambda(qp\ee^2N),\kappa}\|_{\HP}^2\le35\pi\ee^2$ and
$\sigma_{\Lambda(p\ee^2N)}<86\pi$. 
In the case $qp\ee^2N\ge1$, we could insert the bounds \eqref{kathrin2} and
$\sigma_{\Lambda(qp\ee^2N)}^2\le c/qp\ee^2N$ into \eqref{keith1983}.

It is, however, possible to improve the right hand side of the above bound
for large $qp\ee^2N$. In fact, if $\Lambda\ge1$, then 
$\|\omega^{-\nf{1}{4}}i\V{m}\beta_{\Lambda,\kappa}\|_{\HP}^2\le32\pi\ee^2/\Lambda^\eh$
by \eqref{kathrin2}. Combining the third and fourth inequalities in \eqref{keith1983} with
\eqref{thalia55} we thus obtain
\begin{align}\nonumber
\EE\Big[e^{2qp^2N\|\omega^{-\nf{1}{4}}i\V{m}\beta_{\Lambda,\kappa}\|_{\HP}^2
\int_0^t\|\vr_\Lambda\omega^{\nf{1}{4}}S_{\kappa,s}^{N}(\ul{\V{q}})\|_{\HP}^2\Id s}\Big]
\le c_qe^{12\sigma_\Lambda^2qp^2\ee^4N^3t}\qquad&
\\\label{keith1984}
\cdot\,\EE\Big[e^{8q^3p^4N^3((32\pi)^2\ee^4/\Lambda)
\|\omega^{-\nf{1}{4}}i\V{m}\beta_{\Lambda,\kappa}\|_{\HP}^2
\int_0^t\|\vr_{\Lambda}\omega^{-\nf{1}{4}}S_{\kappa,s}^{N}(\ul{\V{q}})\|_{\HP}^2\Id s}\Big]^{\nf{1}{2}}&.
\end{align}
By definition of $\Lambda(qp\ee^2N)$ we now have
\begin{align*}
(4q^2p^2(32\pi)^2\ee^4N^2/\Lambda(qp\ee^2N))\vr_{\Lambda(qp\ee^2N)}
=\Lambda(qp\ee^2N)\vr_{\Lambda(qp\ee^2N)}\le\omega\vr_{\Lambda(qp\ee^2N)}.
\end{align*}
Hence, if we choose $\Lambda=\Lambda(qp\ee^2N)$ 
and consider the case $qp\ee^2N\ge1$, then we may 
bound the expectation $\EE[\cdots]$ in the second line of \eqref{keith1984}
from above by the expectation in the first line of \eqref{keith1984}.
Thanks to \eqref{keith1983} we know already that the latter is finite, whence we may solve
the resulting bound for the left hand side of \eqref{keith1984}, obtaining
\begin{align*}
\EE\Big[e^{2qp^2N\|\omega^{-\nf{1}{4}}i\V{m}\beta_{\Lambda(p\ee^2N),\kappa}\|_{\HP}^2
\int_0^t\|\vr_{\Lambda(p\ee^2N)}\omega^{\nf{1}{4}}S_{\kappa,s}^{N}(\ul{\V{q}})\|_{\HP}^2\Id s}\Big]
&\le c_qe^{24\sigma_{\Lambda(qp\ee^2N)}^2qp^2\ee^4N^3t}.
\end{align*}
Since $qp\ee^2N\ge1$ entails $\sigma_{\Lambda(qp\ee^2N)}^2\le c/qp\ee^2N$, this concludes
the proof of \eqref{keith2000}.
\end{proof}

Later on, we shall use the next remark to derive the lower bound in \eqref{asympintro}.

\begin{rem}\label{rem256}
Since the parameters $r,q>1$ can be chosen as close to $1$ as we please in the end of
the proof of Thm.~\ref{lemuexp}, the last estimate in that proof and the bounds \eqref{expbdfrkv}
and \eqref{keith2000} actually imply
\begin{align*}
\EE\Big[\sup_{s\le t}e^{pu_{\kappa,s}^N(\ul{\V{x}})}\Big]&\le c_q
e^{256\pi^2qp^2\ee^4N^3t+cp\ee^2N^2t+cN(1\vee t)},\quad\text{if $p\ee^2N\ge1$,}
\end{align*}
for all $p,t>0$, $q>1$, $\ul{\V{x}}\in\RR^\nu$, $\kappa\in\NN\cup\{\infty\}$,
and some universal constant $c>0$.
\end{rem}

The last remark of this subsection will be used to show the upper bound in \eqref{asympintro}.

\begin{rem}\label{remuge}
Let $\kappa\in\NN\cup\{\infty\}$ and $\ul{\V{q}}:\Omega\to\RR^\nu$ be $\fF_0$-measurable.
Recall the notation \eqref{foruIR} and define 
$u_{\Lambda,\kappa}^{N,>}(\ul{\V{q}}):=u_{\kappa}^N(\ul{\V{q}})
-u_{\Lambda,\kappa}^{N,<}(\ul{\V{q}})$, for every $\Lambda>0$.
A trivial estimation yields $0\le b_{\Lambda,\kappa,t}^N(\ul{\V{q}})\le16\pi\ee^2N^2/\Lambda$.
Pick $q,r>1$ and put $\Lambda(pqr\ee^2N):=64\pi pqr\ee^2N$.
Applying \eqref{morten1}, \eqref{expbdfrkv}, and \eqref{keith2000} to the remaining terms
in the decomposition \eqref{morten1a} and using H\"{o}lder's inequality as in the end of the proof
of Thm.~\ref{lemuexp} we then obtain
\begin{align*}
\EE\Big[e^{\pm pu_{\Lambda(pqr\ee^2N^2),\kappa}^{N,>}(\ul{\V{q}})}\Big]
&\le c_q
e^{pNE_{\Lambda(pqr\ee^2N^2),\kappa}t+cp\ee^2N^2t+cN(1\vee t)},\quad\text{if $p\ee^2N\ge1$,}
\end{align*}
for all $p,t>0$, with a universal constant $c>0$.
\end{rem}


\subsection{Further properties of the complex action}\label{ssecuinfty}

\noindent
In this subsection we provide some technical results on the complex action that are
relevant for discussing the Markov property of our Feynman-Kac integrands and for
proving semi-group properties. We start by considering the dependence of the martingales
$m_{\kappa}^N(\ul{\V{q}})$ on the starting points for the Brownian motions.

\begin{lem}\label{lemmartm}
Let $\kappa\in\NN\cup\{\infty\}$. Then the following holds:
\begin{enumerate}[leftmargin=*]
\item[{\rm(1)}] 
If $\ul{\V{q}},\smash{\ul{\V{q}}}_n:\Omega\to\RR^\nu$, $n\in\NN$, are $\fF_0$-measurable
such that $\smash{\ul{\V{q}}}_n\to\ul{\V{q}}$, $n\to\infty$, in probability, then
\begin{align}\label{limprobm}
\mathop{\mathrm{lim\,prob}}_{n\to\infty}\sup_{s\le t}
\big|m_{\kappa,s}^N(\smash{\ul{\V{q}}}_n)-m_{\kappa,s}^N(\ul{\V{q}})\big|=0,
\quad t\ge0.
\end{align}
\item[{\rm(2)}] The martingales $m_{\kappa}^N(\ul{\V{x}})$, defined up to indistinguishability
for each $\ul{\V{x}}\in\RR^\nu$ by \eqref{defm}, can be chosen such that, for all
$\gamma\in\Omega$, the following map is continuous,
\begin{align*}
[0,\infty)\times\RR^\nu\ni(t,\ul{\V{x}})\longmapsto(m_{\kappa,t}^N(\ul{\V{x}}))(\gamma)\in\RR.
\end{align*}
\end{enumerate}
\end{lem}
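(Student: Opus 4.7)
For Part (1), I plan to apply the stochastic dominated convergence theorem of Memin [Me1982, Thm.~24.2], following the template of the last step in the proof of Lem.~\ref{lemthalia}. By the subsequence principle it suffices to treat a subsequence along which $\ul{\V{q}}_n \to \ul{\V{q}}$ almost surely. The integrand $\V{d}^{[\ell]}_{\kappa,s}(\ul{\V{q}}) = \SPn{\omega^{\nf{1}{4}}S^N_{\kappa,s}(\ul{\V{q}})}{e^{-i\V{m}\cdot(\V{q}_\ell + \V{b}_{\ell,s})}\omega^{-\nf{1}{4}}i\V{m}\beta_{\kappa}}_\HP$ depends on $\ul{\V{q}}$ only through the unitary multiplication operators $e^{-i\V{m}\cdot\V{q}_j}$, so that $\V{d}^{[\ell]}_{\kappa,s}(\ul{\V{q}}_n(\gamma)) \to \V{d}^{[\ell]}_{\kappa,s}(\ul{\V{q}}(\gamma))$ for every $s\ge 0$ and almost every $\gamma$. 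By Cauchy-Schwarz I would dominate $|\V{d}^{[\ell]}_{\kappa,s}(\ul{\V{q}})|$ uniformly in $\ul{\V{q}}$ by the continuous adapted process $\|\omega^{-\nf{1}{4}}i\V{m}\beta_\kappa\|_\HP \sum_j \|\omega^{\nf{1}{4}}S_{\kappa,s}^{[s]}[\V{b}_j]\|_\HP$, whose square has finite expected $[0,t]$-integral thanks to the polynomial moment bound \eqref{polbdMtt} and the continuity statements of Lem.~\ref{lemI}. Memin's theorem then gives uniform convergence of the stochastic integrals on $[0,t]$ in probability, as claimed in \eqref{limprobm}.

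For Part (2), I would apply the Kolmogorov test lemma of App.~\ref{appKol} to the joint parameter $(t,\ul{\V{x}}) \in [0,T] \times \{|\ul{\V{x}}| < R\}$ with arbitrary $T,R > 0$. Time increments are handled by the Burkholder-Davis-Gundy inequality combined with the same $\ul{\V{x}}$-independent dominating process used in Part~(1), giving $\EE[|m^N_{\kappa,t}(\ul{\V{x}}) - m^N_{\kappa,s}(\ul{\V{x}})|^p] \le c_{p,T,\kappa}|t-s|^{\nf{p}{2}}$ for all $p > 0$. For spatial increments, I fix some $\alpha \in (0,\nf{1}{4})$ and invoke the elementary interpolation $|e^{-i\V{m}\cdot\V{x}} - e^{-i\V{m}\cdot\V{y}}| \le 2^{1-\alpha}(|\V{m}||\V{x}-\V{y}|)^\alpha$. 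The restriction $\alpha<\nf{1}{4}$ keeps $\omega^{\alpha+\nf{1}{4}}S_{\kappa,s}^{[s]}[\V{b}_j]$ within the range $a \in [\mh,\eh)$ of Lem.~\ref{lemMtau}, and the same restriction ensures that $\||\V{m}|^{1+\alpha}\omega^{-\nf{1}{4}}\beta_\kappa\|_\HP$ is finite by a direct computation in spherical coordinates (the UV integrability of the radial integral is the binding constraint, while the IR singularity of $\beta_\infty$ at $\mu=0$ is tamed by the positive power of $|\V{m}|$). Cauchy-Schwarz, BDG, and Minkowski then yield $\EE[\sup_{r \le T}|m^N_{\kappa,r}(\ul{\V{x}}) - m^N_{\kappa,r}(\ul{\V{y}})|^p] \le C_{p,T,\kappa}|\ul{\V{x}}-\ul{\V{y}}|^{p\alpha}$. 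Taking $p$ large so that both $p\alpha > \nu+1$ and $\nf{p}{2} > \nu+1$, Kolmogorov's lemma produces a version jointly continuous in $(t,\ul{\V{x}})$ on the chosen box.

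The main obstacle is the final patching step: one must produce a single version defined on $[0,\infty) \times \RR^\nu$ that, for each fixed $\ul{\V{x}}$, remains indistinguishable from the stochastic integral \eqref{defm}. Following Step~3 of the proof of Lem.~\ref{lemMtau}, on each box $[0,T] \times \{|\ul{\V{x}}| < R\}$ with $T,R \in \NN$ I would pick a jointly continuous version from the previous step; for rational $\ul{\V{x}}$ it agrees with the original $m^N_\kappa(\ul{\V{x}})$ outside a $\ul{\V{x}}$-dependent $\PP$-null set; taking a countable union of these null sets and using joint continuity in $\ul{\V{x}}$ extends the identification to all $\ul{\V{x}}$ in the ball. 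Matching across different boxes on a common $\PP$-null set then yields a global jointly continuous version, after setting it to zero on the exceptional null set.
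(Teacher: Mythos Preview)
Your proposal is correct, but the paper takes a more economical route that makes Part~(1) do double duty for Part~(2). For Part~(1), instead of invoking Memin's dominated convergence theorem after a subsequence reduction, the paper first records the pointwise H\"older bound
\[
\big|\V{d}_{\kappa,t}^{[\ell]}(\tilde{\ul{\V{q}}})-\V{d}_{\kappa,t}^{[\ell]}({\ul{\V{q}}})\big|
\le 2^\epsilon|\ul{\V{q}}-\tilde{\ul{\V{q}}}|^\epsilon\sum_{j=1}^N\|\omega^{\nf{1}{4}+\epsilon}
S_{\kappa,t}^{[t]}[\V{b}_j]\|_\HP\,\|\omega^{-\nf{1}{4}}i\V{m}\beta_{\kappa}\|_\HP,
\]
then proves $L^2$-convergence of the integrands directly (splitting on $\{|\ul{\V{q}}_n-\ul{\V{q}}|\ge\delta\}$ and using \eqref{polbdMtaua}, \eqref{polbdMtaub}), and concludes via BDG and Chebyshev. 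For Part~(2) the paper simply reuses this same H\"older bound with $\epsilon=\nf{1}{8}$, combines it with \eqref{keithE} at $a=-\nf{1}{8}$, $N=1$ to control $\EE\int_0^t\|\omega^{\nf{3}{8}}S^{[s]}_{\kappa,s}[\V{b}_j]\|_\HP^2\,\Id s$, and obtains via BDG the estimate
$\EE[\sup_{s\le t}|m^N_{\kappa,s}(\ul{\V{x}})-m^N_{\kappa,s}(\ul{\V{y}})|^p]\le c_p|\ul{\V{x}}-\ul{\V{y}}|^{\nf{p}{8}}|\ee|^pN^{3p}t^{\nf{p}{2}}$
with the running supremum already inside. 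The Kolmogorov lemma of App.~\ref{appKol} is then applied only in the spatial variable $\ul{\V{x}}$: continuity in $t$ comes for free from the a~priori path continuity of the martingale, and Statement~(1) of Lem.~\ref{lemKolNev} already delivers indistinguishability with the original stochastic integral for each fixed $\ul{\V{x}}$, so your patching concern dissolves. Your joint-$(t,\ul{\V{x}})$ Kolmogorov with separate time increments works but is unnecessary, and your appeal to \eqref{polbdMtt} plus Lem.~\ref{lemI} is a somewhat less direct substitute for the paper's use of \eqref{keithE}.
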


\begin{proof}
\noindent{\em Step~1.} We fix $\kappa\in\NN\cup\{\infty\}$ and
pick some $\epsilon\in(0,\nf{1}{4})$ as well as another $\fF_0$-measurable
$\tilde{\ul{\V{q}}}:\Omega\to\RR^\nu$. Then
\begin{align}\label{Hoelderd}
\big|\V{d}_{\kappa,t}^{[\ell]}(\tilde{\ul{\V{q}}})-\V{d}_{\kappa,t}^{[\ell]}({\ul{\V{q}}})\big|
&\le2^\epsilon|\ul{\V{q}}-\tilde{\ul{\V{q}}}|^\epsilon\sum_{j=1}^N\|\omega^{\nf{1}{4}+\epsilon}
S_{\kappa,t}^{[t]}[\V{b}_j]\|_\HP\|\omega^{-\nf{1}{4}}i\V{m}\beta_{\Lambda,\kappa}\|_\HP,
\end{align}
for all $t\ge0$. If $\smash{\ul{\V{q}}}_n\to\ul{\V{q}}$ in probability and $\delta>0$, we thus obtain
\begin{align*}
\limsup_{n\to\infty}&\,\EE\bigg[\int_0^t\big(\V{d}_{\Lambda,\kappa,s}^{[\ell]}(\smash{\ul{\V{q}}}_n)
-\V{d}_{\Lambda,\kappa,s}^{[\ell]}({\ul{\V{q}}})\big)^2\Id s\bigg]^\eh
\\
&\le\limsup_{n\to\infty} 
Nt^\eh\max_{j=1}^N\sup_{s\le t}\EE\Big[1_{\{|\smash{\ul{\V{q}}}_n-\ul{\V{q}}|\ge\delta\}}
\|\omega^{\nf{1}{4}}S_{\kappa,s}^{[s]}[\ul{\V{b}}_j]\|_\HP^2\|\omega^{-\nf{1}{4}}i\V{m}
\beta_{\Lambda,\kappa}\|_\HP^2\Big]^\eh
\\
&\quad+Nt^\eh(2\delta)^\epsilon\max_{j=1}^N\sup_{s\le t}\EE\Big[\|\omega^{\nf{1}{4}+\epsilon}
S_{\kappa,t}^{[t]}[\ul{\V{b}}_j]\|_\HP^2\|\omega^{-\nf{1}{4}}
i\V{m}\beta_{\Lambda,\kappa}\|_\HP^2\Big]^\eh
\\
&\le c_{\epsilon}\ee^2Nt^\eh\delta^\epsilon,
\end{align*}
where we used \eqref{polbdMtaua}, \eqref{polbdMtaub}, and the Cauchy-Schwarz inequality 
in the last step. Since
\begin{align}\nonumber
\PP\Big\{\sup_{s\le t}|m_{\kappa,s}^N(\tilde{\ul{\V{q}}})
-m_{\kappa,s}^N(\ul{\V{q}})|\ge\vs\Big\}
&\le\frac{1}{\vs^p}\EE\Big[\sup_{s\le t}|m_{\kappa,s}^N(\tilde{\ul{\V{q}}})
-m_{\kappa,s}^N(\ul{\V{q}})|^p\Big]
\\\label{ansgar1}
&\le\frac{c_p}{\vs^p}\bigg(\sum_{\ell=1}^N\int_0^t\EE\Big[
\big({\V{d}}_{\kappa,s}^{[\ell]}(\tilde{\ul{\V{q}}})
-{\V{d}}_{\kappa,s}^{[\ell]}(\ul{\V{q}})\big)^2\Big]\Id s\bigg)^{\nf{p}{2}},
\end{align}
for all $\vs,p>0$, this proves \eqref{limprobm}.

\smallskip

\noindent{\em Step~2.} Combining \eqref{Hoelderd}, the second inequality in \eqref{ansgar1}, and 
\eqref{keithE} (choosing $a=-\nf{1}{8}$, $\Lambda=0$, and $N=1$ in the latter bound), we find
\begin{align*}
\EE\Big[\sup_{s\le t}|m_{\kappa,s}^N({\ul{\V{x}}})-m_{\kappa,s}^N(\ul{\V{y}})|^p\Big]
&\le c_p|{\ul{\V{x}}}-{\ul{\V{y}}}|^{\nf{p}{8}}|\ee|^pN^{3p}t^{\nf{p}{2}},\quad
{\ul{\V{x}}},{\ul{\V{y}}}\in\RR^\nu,\,p,t>0.
\end{align*}
Part~(2) thus follows from Kolmogorov's test lemma.
\end{proof}

Let $\ul{\V{q}}:\Omega\to\RR^\nu$ be $\fF_0$-measurable in what follows.
Recall from Def.~\ref{defu} that, for finite $\kappa$, the random variables 
$u_{\kappa,t}^N(\ul{\V{q}})=u_{\kappa,t}^N[\ul{\V{q}},\ul{\V{b}}]$ 
are compositions of Borel measurable functions on $\RR^\nu\times C([0,\infty),\RR^\nu)$ with the map
$\Omega\ni\gamma\mapsto(\ul{\V{q}}(\gamma),\ul{\V{b}}_\bullet(\gamma))$. The same holds true
for all contributions to $u_{\infty}^N(\ul{\V{q}})$ in \eqref{defuinfty} except for the martingale 
$m_{\infty}^N(\ul{\V{q}})$, which depends on in a slightly more subtle fashion. In our formulation of the 
Markov properties later on, it is thus convenient to introduce a certain standard realization of 
$m_{\infty}^N$ and, hence, of $u_\infty^N$ such that all other realizations can be obtained from them 
by plugging in $(\ul{\V{q}},\ul{\V{b}})$:

Recall the notation for objects related to the Wiener measure introduced in Ex.~\ref{exWiener}.

\begin{defn}\label{defnstandardu}
The symbols $m_{\infty}^N[\ul{\V{x}},\cdot]$, $\ul{\V{x}}\in\RR^\nu$, denote choices of martingales 
as in Lem.~\ref{lemmartm}(2), when this lemma is applied with $\BB_{\mathrm{W}}^\nu$ as underlying 
stochastic basis and the evaluation process $\pr^\nu$ as Brownian motion, so that the 
real-valued map
\begin{align}\label{contwm}
[0,\infty)\times\RR^\nu\ni(t,\ul{\V{x}})\mapsto m_{\infty,t}^N[\ul{\V{x}},\ul{\V{\alpha}}]
\quad\text{is continuous,}
\end{align}
for every $\ul{\V{\alpha}}\in\Omega_{\mathrm{W}}^\nu$. With this we set
\begin{align*}
u_{\infty,t}^N[\ul{\V{x}},\ul{\V{\alpha}}]&:=-b_{\infty,t}^N[\ul{\V{x}},\ul{\V{\alpha}}]
+c_{\infty,t}^{N,-}[\ul{\V{x}},\ul{\V{\alpha}}]-c_{\infty,t}^{N,+}[\ul{\V{x}},\ul{\V{\alpha}}]
+v_{\infty,t}^N[\ul{\V{x}},\ul{\V{\alpha}}]+m_{\infty,t}^N[\ul{\V{x}},\ul{\V{\alpha}}],
\end{align*}
for all $t\ge0$, $\ul{\V{x}}\in\RR^\nu$, and $\ul{\V{\alpha}}\in\Omega_{\mathrm{W}}^\nu$. 
\end{defn}

Let summarize some earlier observations and a consequence of Def.~\ref{defnstandardu}:

\begin{lem}\label{lemucont}
The real-valued map
$[0,\infty)\times\RR^\nu\ni(t,\ul{\V{x}})\mapsto u_{\kappa,t}^N[\ul{\V{x}},\ul{\V{\alpha}}]$
is continuous, for all $\ul{\V{\alpha}}\in\Omega_{\mathrm{W}}^\nu$ and $\kappa\in\NN\cup\{\infty\}$.
\end{lem}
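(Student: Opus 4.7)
For finite $\kappa$, the function $u_{\kappa,t}^N[\ul{\V{x}},\ul{\V{\alpha}}]$ is explicitly given by the double integral of Def.~\ref{defu}, whose integrand $\langle e^{-i\V{m}\cdot\V{x}_j}U_{\kappa,s}^+[\V{\alpha}_j],e^{-i\V{m}\cdot(\V{x}_\ell+\V{\alpha}_{\ell,s})}f_\kappa\rangle_\HP$ is jointly continuous in $(s,\ul{\V{x}})$ by Lem.~\ref{lemUpmcontHP} (which provides continuity of $U_{\kappa,s}^+[\V{\alpha}_j]$ as an $\HP$-valued map of $s$) and uniformly bounded by $\|(1\vee\omega^\mh)f_\kappa\|_\HP^2$ times a constant, since $e^{-i\V{m}\cdot\V{x}_\ell}$ acts unitarily. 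Thus dominated convergence gives continuity of the $\Id s$-integral in $(t,\ul{\V{x}})$, and the additive constant $tNE_\kappa^{\ren}$ is harmless.

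For $\kappa=\infty$ my plan is to invoke the decomposition of Def.~\ref{defnstandardu} and verify continuity of each of the five summands separately. The martingale term $m_{\infty,t}^N[\ul{\V{x}},\ul{\V{\alpha}}]$ is continuous in $(t,\ul{\V{x}})$ by its very construction, cf. \eqref{contwm}. For $c_{\infty,t}^{N,\pm}[\ul{\V{x}},\ul{\V{\alpha}}]$, Rem.~\ref{remUpmNcont} yields continuity of $(t,\ul{\V{x}})\mapsto U_{\infty,t}^{N,\pm}[\ul{\V{x}},\ul{\V{\alpha}}]\in\mathfrak{k}$, and since multiplication by $\omega^{\mh}$ defines a bounded map $\mathfrak{k}\to\HP$ (because $\|\omega^\mh f\|_\HP^2\le\|f\|_{\mathfrak{k}}^2$), the first factor in the inner products \eqref{defcplus}, \eqref{defcminus} is continuous as an $\HP$-valued map. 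The second factor, $\omega^\eh\beta_\infty^N(\cdot)=\sum_\ell e^{-i\V{m}\cdot(\cdot)_\ell}\omega^\eh\beta_\infty$, is continuous as an $\HP$-valued map of $\ul{\V{x}}$ (and of $\ul{\V{x}}+\ul{\V{\alpha}}_t$) because $\omega^\eh\beta_\infty=\ee\eta/(\omega+\V{m}^2/2)\in\HP$ and multiplication by $e^{-i\V{m}\cdot\V{y}}$ is strongly continuous in $\V{y}$. Taking inner products gives joint continuity of $c_{\infty}^{N,\pm}$. For $v_{\infty,t}^N$, one observes that $\omega|\beta_\infty|^2=\ee^2\eta^2/(\omega+\V{m}^2/2)^2\in L^1(\RR^3)$, so its Fourier transform $w_\infty$ is bounded and uniformly continuous, and the integrand in \eqref{john1} is then jointly continuous and uniformly bounded, yielding continuity of the $\Id s$-integral.

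The main obstacle is $b_{\infty,t}^N[\ul{\V{x}},\ul{\V{\alpha}}]$, because in the massless case $\mu=0$ the density $\beta_\infty^2\sim\ee^2/|\V{k}|^3$ near the origin is not locally integrable, so one must exploit the infra-red compensation built into $\Theta_t^N[\ul{\V{x}},\ul{\V{\alpha}}]$. Setting $A:=\sum_\ell e^{-i\V{m}\cdot\V{x}_\ell}$ and $B:=\sum_\ell e^{-i\V{m}\cdot(\V{x}_\ell+\V{\alpha}_{\ell,t})}$, I would rewrite
\begin{align*}
\Theta_t^N[\ul{\V{x}},\ul{\V{\alpha}}]&=|A-B|^2+2(1-e^{-t\omega})\Re(\bar{A}B),
\end{align*}
and estimate $|A-B|^2\le\min\bigl(|\V{m}|^2(\sum_\ell|\V{\alpha}_{\ell,t}|)^2,4N^2\bigr)$ via $|1-e^{-i\V{m}\cdot\V{y}}|\le\min(|\V{m}\cdot\V{y}|,2)$, together with $|(1-e^{-t\omega})\Re(\bar AB)|\le tN^2\omega$. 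Multiplication by $\beta_\infty^2$ then gives the pointwise (in $\V{k}$) bound by a fixed $L^1$ function locally uniformly in $(t,\ul{\V{x}})$ on any compact set, since both $\V{m}^2\beta_\infty^2$ and $\omega\beta_\infty^2=\ee^2\eta^2/(\omega+\V{m}^2/2)^2$ belong to $L^1(\RR^3)$. Joint continuity of the integrand in $(t,\ul{\V{x}})$ for fixed $\V{k}$ is obvious from the defining formula, so dominated convergence yields continuity of $b_{\infty,t}^N$. Summing the five contributions concludes the proof for $\kappa=\infty$.
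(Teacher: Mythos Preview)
Your proposal is correct and matches the paper's intent: the paper states Lem.~\ref{lemucont} as a summary of earlier observations without giving a proof, and your term-by-term verification (using Rem.~\ref{remUpmNcont} for $c_\infty^{N,\pm}$, the explicit formula and \eqref{contwm} for $m_\infty^N$, and dominated convergence for $b_\infty^N$ and $v_\infty^N$) is precisely what is implicitly being invoked.

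There is, however, one slip in your justification for $b_{\infty,t}^N$: the claim that $\V{m}^2\beta_\infty^2\in L^1(\RR^3)$ is false, since at infinity $\V{m}^2\beta_\infty^2\sim 4\ee^2\eta^2|\V{k}|^{-3}$, which is not integrable. Your $\min$-bound $|A-B|^2\le\min\bigl(|\V{m}|^2(\sum_\ell|\V{\alpha}_{\ell,t}|)^2,\,4N^2\bigr)$ is nevertheless exactly what is needed: on a compact $(t,\ul{\V{x}})$-set one has $\sum_\ell|\V{\alpha}_{\ell,t}|\le C$, and then $\min(C^2|\V{m}|^2,4N^2)\beta_\infty^2$ is dominated near the origin by $C^2|\V{m}|^2\beta_\infty^2$ (locally integrable since $|\V{m}|^2\beta_\infty^2\lesssim|\V{k}|^{-1}$ there) and near infinity by $4N^2\beta_\infty^2$ (integrable since $\beta_\infty^2\lesssim|\V{k}|^{-5}$ there). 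Alternatively, $\min(C^2|\V{m}|^2,4N^2)\le 2NC|\V{m}|$ and $|\V{m}|\beta_\infty^2\in L^1(\RR^3)$ globally. Either way the dominant is in $L^1$, so only the one-line justification needs rephrasing; the strategy is sound.
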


Standard arguments further yield the following:

\begin{lem}\label{lemstrsol}
Let $u_{\infty}^N(\ul{\V{q}})$ be the process introduced in Def.~\ref{defnuinfty}, for any stochastic 
basis $\BB=(\Omega,\fF,(\fF_t)_{t\ge0},\PP)$, $\nu$-dimensional $\BB$-Brownian motion 
$\ul{\V{b}}$, and $\fF_0$-measurable $\ul{\V{q}}:\Omega\to\RR^\nu$. Then there exists a 
$\PP$-zero set $\sN\in\fF$ such that
\begin{align*}
(u_{\infty,t}^N(\ul{\V{q}}))(\gamma)&=u_{\infty,t}^N[\ul{\V{q}}(\gamma),\ul{\V{b}}(\gamma)],
\quad t\ge0,\,\gamma\in\Omega\setminus\sN.
\end{align*}
\end{lem}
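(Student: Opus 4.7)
My plan is to reduce the identity to the single nontrivial ingredient $m_\infty^N$. The four contributions $b_{\infty,t}^N(\ul{\V{q}})$, $c_{\infty,t}^{N,\pm}(\ul{\V{q}})$ and $v_{\infty,t}^N(\ul{\V{q}})$ entering $u_{\infty,t}^N(\ul{\V{q}})$ are, by their very definitions \eqref{forbTheta}--\eqref{john1}, continuous (Lebesgue-integral) functionals of $(\ul{\V{q}}(\gamma),\ul{\V{b}}(\gamma))$ evaluated pointwise, so for each of them the equality with $b_{\infty,t}^N[\ul{\V{q}}(\gamma),\ul{\V{b}}(\gamma)]$, etc., holds for \emph{every} $\gamma\in\Omega$ and every $t\ge0$. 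The entire content of the lemma is therefore the identity
\[
m_{\infty,t}^N(\ul{\V{q}})(\gamma)=m_{\infty,t}^N[\ul{\V{q}}(\gamma),\ul{\V{b}}(\gamma)],\qquad t\ge0,
\]
$\PP$-a.s., where on the right $m_\infty^N[\cdot,\cdot]$ is the jointly continuous version fixed in Def.~\ref{defnstandardu}.

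\emph{Step~1: constant $\ul{\V{q}}=\ul{\V{x}}\in\RR^\nu$.} Chaining \eqref{defUminusinfty}, \eqref{relUplus}, \eqref{defStt} and \eqref{thalia2} shows that the integrand $\V{d}_{\infty,s}^{[\ell]}(\ul{\V{x}})$ in \eqref{forvecd} depends on $\gamma$ only through $\ul{\V{b}}_\bullet(\gamma)$, via the same Borel functional of Brownian paths that defines $\V{d}_{\infty,s}^{[\ell]}[\ul{\V{x}},\cdot]$ on the Wiener basis. Under the measurable map $\Phi:\Omega\to\Omega_{\mathrm{W}}^\nu$, $\gamma\mapsto\ul{\V{b}}_\bullet(\gamma)$, one has $\Phi_*\PP=\PP_{\mathrm{W}}^\nu$ and $\pr^\nu\circ\Phi=\ul{\V{b}}$. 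The standard pull-back property of It\^o integrals (i.e.\ stochastic integrals are, up to indistinguishability, determined by their integrands viewed as Brownian functionals, and commute with measure-preserving transformations) therefore implies that $\gamma\mapsto m_{\infty,t}^N[\ul{\V{x}},\ul{\V{b}}(\gamma)]$ is a version of the stochastic integral \eqref{defm} on $\BB$, hence indistinguishable from $m_{\infty,t}^N(\ul{\V{x}})$. Exploiting the joint continuity in $(t,\ul{\V{x}})$ on the left (Lemma~\ref{lemmartm}(2)) and on the right (by \eqref{contwm}) allows to choose one $\PP$-null set $\sN_0$ valid simultaneously for all $\ul{\V{x}}\in\QQ^\nu$ and, by density, for all $\ul{\V{x}}\in\RR^\nu$.

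\emph{Step~2: general $\fF_0$-measurable $\ul{\V{q}}$.} Approximate $\ul{\V{q}}$ pointwise on $\Omega$ by $\fF_0$-measurable simple functions $\ul{\V{q}}_n$ with values in $\QQ^\nu$. On each atom of $\ul{\V{q}}_n$, Step~1 yields a $\PP$-null set $\sN_n$ off which
\[
m_{\infty,t}^N(\ul{\V{q}}_n)(\gamma)=m_{\infty,t}^N[\ul{\V{q}}_n(\gamma),\ul{\V{b}}(\gamma)],\qquad t\ge0.
\]
Letting $n\to\infty$, Lemma~\ref{lemmartm}(1) delivers convergence of the left-hand side in probability, uniformly on compact $t$-intervals, while the joint continuity \eqref{contwm} and $\ul{\V{q}}_n(\gamma)\to\ul{\V{q}}(\gamma)$ give pointwise convergence of the right-hand side. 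Passing to an a.s.\ convergent subsequence and setting $\sN:=\sN_0\cup\bigcup_n\sN_n$, the identity holds off $\sN$ for each fixed $t$, and then for all $t\ge0$ by continuity of both sides in $t$.

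\emph{Main obstacle.} The delicate point is the transfer argument in Step~1: one has to verify that the abstract stochastic integral on $\BB$ defining $m_\infty^N(\ul{\V{x}})$ admits the pathwise realization built on Wiener space in Def.~\ref{defnstandardu}. This amounts to exhibiting $\V{d}_\infty^{[\ell]}(\ul{\V{x}})$ as a genuine Brownian functional --- essentially unwinding the Kolmogorov-test construction of $M_\infty^{[\tau]}$ in Lemma~\ref{lemMtau}(2) --- and then invoking the naturality of It\^o integrals under measure-preserving maps. The rest of the argument is driven by the two continuity inputs \eqref{contwm} and Lemma~\ref{lemmartm}, which streamline the selection of a single exceptional null set.
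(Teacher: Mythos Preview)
Your proof is correct and follows the same line as the paper's: reduce to $m_\infty^N$, handle constant initial data via the transformation/pull-back property of It\^o integrals under the map $\gamma\mapsto\ul{\V{b}}(\gamma)$, then pass to general $\ul{\V{q}}$ by simple-function approximation, using Lemma~\ref{lemmartm}(1) for the left side and the joint continuity \eqref{contwm} for the right. The one step you leave implicit is the localization (pathwise-uniqueness) property of stochastic integrals needed to identify $m_\infty^N(\ul{\V{q}}_n)$ with $m_\infty^N(\ul{\V{x}}_j)$ on each atom $\{\ul{\V{q}}_n=\ul{\V{x}}_j\}$; the paper invokes this explicitly.
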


\begin{proof}
Since all other terms in the complex action are defined pathwise, it suffices to find a $\PP$-zero set
$\sN\in\fF$ such that
\begin{align}\label{wwmm}
(m_{\infty,t}^N(\ul{\V{q}}))(\gamma)=m_{\infty,t}^N[\ul{\V{q}}(\gamma),\ul{\V{b}}(\gamma)],
\end{align}
for all $t\ge0$ and $\gamma\in\Omega\setminus\sN$. If $\ul{\V{q}}$ is constant and
$t\ge0$ is fixed, then a well-known
transformation argument (see, e.g., \cite[Lem.~6.27]{HackenbrochThalmaier1994}) for stochastic
integrals shows that \eqref{wwmm} holds for $\PP$-a.e. $\gamma$.
Since all processes in \eqref{wwmm} are continuous, this already implies that \eqref{wwmm} holds
for all $t\ge0$ on the complement of some $t$-independent $\PP$-zero set, 
still provided that $\ul{\V{q}}$ is constant.

If $\ul{\V{q}}$ is an arbitrary $\fF_0$-measurable $\RR^\nu$-valued function, then we set
$\smash{\ul{\V{q}}}_n:=I_n(\ul{\V{q}})$, $n\in\NN$, where $I_n:\RR^\nu\to\RR^\nu$ is given by
$I_n(\ul{\V{x}}):=\ul{\V{y}}/2^n$ with $\ul{\V{y}}\in\ZZ^\nu$ such that
$y_j/2^n\le x_j<(y_j+1)/2^n$, for all $j\in\{1,\ldots,\nu\}$. Then the pathwise uniqueness property
of stochastic integrals (see, e.g., \cite[Kor.~1 on p.~188]{HackenbrochThalmaier1994}) implies that
\begin{align}\label{wwmmn}
(m_{\infty,t}^N(\smash{\ul{\V{q}}}_n))(\gamma)
=m_{\infty,t}^N[\smash{\ul{\V{q}}}_n(\gamma),\ul{\V{b}}(\gamma)],
\end{align}
for all $t\ge0$, $n\in\NN$, $\gamma\in\Omega\setminus\sN_1$, and some $\PP$-zero set
$\sN_1\in\fF$. By the continuity of the maps in \eqref{contwm}, the right hand sides of 
\eqref{wwmmn} converge to the respective right hand sides of \eqref{wwmm}, for every 
$\gamma\in\Omega$. On account of \eqref{limprobm} the left hand sides of 
\eqref{wwmmn} converge to the respective left hand sides of \eqref{wwmm}
locally uniformly (with respect to $t$) in probability.
\end{proof}

If $t\ge0$, then all results obtained in this section so far apply in particular to the 
time-shifted stochastic basis $\BB_t$ introduced in Subsect.~\ref{ssecprobprel} and the time shifted 
$\nu$-dimensional Brownian motion ${}^t\ul{\V{b}}$ given by \eqref{deftb}. 
This observation is used to obtain the following:

\begin{lem}\label{lemushift}
Let $t\ge0$ and $\ul{\V{q}}:\Omega\to\RR^\nu$ be $\fF_0$-measurable. Then, 
for every $\kappa\in\NN$, the following holds true on $\Omega$,
\begin{align}\nonumber
u_{\kappa,s}^N[\ul{\V{q}}+\ul{\V{b}}_t,{}^t\ul{\V{b}}]+u_{\kappa,t}^N[\ul{\V{q}},\ul{\V{b}}]
+\SPb{U^{N,-}_{\kappa,s}[\ul{\V{q}}+\ul{\V{b}}_t,{}^t\ul{\V{b}}]}{
U^{N,+}_{\kappa,t}[\ul{\V{q}},\ul{\V{b}}]}_{\HP}&
\\\label{shift3}
=u_{\kappa,s+t}^N[\ul{\V{q}},\ul{\V{b}}]&,\quad s\ge0.
\end{align}
Moreover, there is a $\PP$-zero set $\sN_t$ such that \eqref{shift3} is valid
for $\kappa=\infty$ on $\Omega\setminus\sN_t$.
\end{lem}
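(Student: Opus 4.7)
The plan is to establish the identity first for finite $\kappa$ by a direct pathwise calculation using Definition~\ref{defu} and the shift formula \eqref{shift2}, then pass to $\kappa=\infty$ by extracting an a.s.\ convergent subsequence from the results of Subsect.~\ref{ssecuconv} and using continuity in~$s$.

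For finite $\kappa$, starting from the double-sum integral representation of $u_{\kappa,s+t}^N[\ul{\V{q}},\ul{\V{b}}]$ I would split $\int_0^{s+t}=\int_0^t+\int_t^{s+t}$; the first piece yields $u_{\kappa,t}^N[\ul{\V{q}},\ul{\V{b}}]+tNE_\kappa^{\ren}$ by inspection. In the second piece, substitute $r=t+r'$, write $\V{b}_{\ell,t+r'}=\V{b}_{\ell,t}+{}^t\V{b}_{\ell,r'}$, and apply \eqref{shift2} with $\V{\alpha}=\V{b}_j$ to decompose $U_{\kappa,t+r'}^+[\V{b}_j]=e^{-i\V{m}\cdot\V{b}_{j,t}}U_{\kappa,r'}^+[{}^t\V{b}_j]+e^{-r'\omega}U_{\kappa,t}^+[\V{b}_j]$. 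The contribution of the first summand, after pulling the phases together as $e^{-i\V{m}\cdot(\V{q}_j+\V{b}_{j,t})}$ and $e^{-i\V{m}\cdot(\V{q}_\ell+\V{b}_{\ell,t}+{}^t\V{b}_{\ell,r'})}$, is exactly $u_{\kappa,s}^N[\ul{\V{q}}+\ul{\V{b}}_t,{}^t\ul{\V{b}}]+sNE_\kappa^{\ren}$. For the contribution of the second summand, move $e^{-r'\omega}$ to the right entry of the scalar product (it is a real multiplication operator, hence self-adjoint), perform the $\ell$-sum and the $\Id r'$-integration inside the scalar product to recognize $\sum_\ell e^{-i\V{m}\cdot(\V{q}_\ell+\V{b}_{\ell,t})}U_{\kappa,s}^-[{}^t\V{b}_\ell]=U_{\kappa,s}^{N,-}[\ul{\V{q}}+\ul{\V{b}}_t,{}^t\ul{\V{b}}]$, and the $j$-sum to recognize $\sum_j e^{-i\V{m}\cdot\V{q}_j}U_{\kappa,t}^+[\V{b}_j]=U_{\kappa,t}^{N,+}[\ul{\V{q}},\ul{\V{b}}]$. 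By Rem.~\ref{rem-real} both are elements of $\HP_\RR$, and on this real subspace the scalar product is symmetric, so this contribution equals $\SPb{U_{\kappa,s}^{N,-}[\ul{\V{q}}+\ul{\V{b}}_t,{}^t\ul{\V{b}}]}{U_{\kappa,t}^{N,+}[\ul{\V{q}},\ul{\V{b}}]}_\HP$. Collecting terms and cancelling $(s+t)NE_\kappa^{\ren}$ yields \eqref{shift3} for every $\kappa\in\NN$ and $s\ge0$ on all of $\Omega$.

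For $\kappa=\infty$, apply Prop.~\ref{lemuinfty} (with stochastic basis $\BB$ and data $\ul{\V{q}}$, respectively with $\BB_t$ and data $\ul{\V{q}}+\ul{\V{b}}_t$, which is $\fF_t$-measurable hence $\fF_{t+0}$-measurable) together with Cor.~\ref{corUpmN} in the same two settings; note that $\BB_t$ satisfies the usual assumptions and ${}^t\ul{\V{b}}$ is a $\BB_t$-Brownian motion, so both results apply verbatim. Each gives $L^p$-convergence of $\sup_{s\le T}|\cdot|$ on every compact interval $[0,T]$, so a standard diagonal argument yields one subsequence $\kappa_j\to\infty$ along which all four processes appearing in \eqref{shift3} (the two complex actions and the two basic processes $U_{\kappa}^{N,\pm}$) converge $\PP$-a.s.\ locally uniformly in $s$ to their $\kappa=\infty$ counterparts, on the complement of a single $\PP$-zero set $\sN_t$. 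The scalar product then converges pointwise because $U_{\kappa_j,t}^{N,+}(\ul{\V{q}})\to U_{\infty,t}^{N,+}(\ul{\V{q}})$ in $\HP$ and $U_{\kappa_j,s}^{N,-}[\ul{\V{q}}+\ul{\V{b}}_t,{}^t\ul{\V{b}}]\to U_{\infty,s}^{N,-}[\ul{\V{q}}+\ul{\V{b}}_t,{}^t\ul{\V{b}}]$ in $\HP$. To identify the limiting complex action on the left-hand side with the pathwise expression $u_{\infty,s}^N[\ul{\V{q}}+\ul{\V{b}}_t,{}^t\ul{\V{b}}]$ from Def.~\ref{defnstandardu}, invoke Lem.~\ref{lemstrsol} applied to $\BB_t$.

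The main obstacle is bookkeeping rather than analysis: one has to run all convergence arguments simultaneously in the two stochastic bases $\BB$ and $\BB_t$ and align the standard realization of $u_\infty^N[\,\cdot\,,\,\cdot\,]$ with the process $u_\infty^N(\ul{\V{q}}+\ul{\V{b}}_t)$ defined on $\BB_t$ via Def.~\ref{defnuinfty}. Once the subsequential a.s.\ limit holds on $\Omega\setminus\sN_t$ for every $s$ in a countable dense set, continuity in~$s$ of every term on both sides of \eqref{shift3} (the Lebesgue integrals $b_\infty^N$ and $v_\infty^N$ are continuous by construction, the scalar products $c_\infty^{N,\pm}$ inherit continuity from the $\mathfrak{k}$-continuity of $U_\infty^{N,\pm}$ noted in Rem.~\ref{remUpmNcont}, and $m_\infty^N$ is a continuous martingale) extends the identity to all $s\ge0$ on $\Omega\setminus\sN_t$.
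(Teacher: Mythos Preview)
Your proof is correct and follows essentially the same approach as the paper. For finite $\kappa$ you run the same computation as the paper, just organized in the reverse direction (you split $u_{\kappa,s+t}^N$ and identify the pieces, while the paper starts from $u_{\kappa,s}^N[\ul{\V{x}}+\ul{\V{\alpha}}_t,{}^t\ul{\V{\alpha}}]$ and builds up); for $\kappa=\infty$ you invoke the same convergence results (your Cor.~\ref{corUpmN} is the $N$-particle packaging of the paper's Lem.~\ref{lemUminus}(2) and Lem.~\ref{lemUplus}(2)) and the same identification via Lem.~\ref{lemstrsol}, applied to both $(\BB,\ul{\V{b}})$ and $(\BB_t,{}^t\ul{\V{b}})$.
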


\begin{proof}
For $\kappa\in\NN$, the asserted identity follows from the computation
\begin{align*}
u_{\kappa,s}^N&[\ul{\V{x}}+\ul{\V{\alpha}}_t,{}^t\ul{\V{\alpha}}]+sNE_\kappa^\ren
\\
&=\sum_{j,\ell=1}^N\int_0^s\SPb{e^{-i\V{m}\cdot(\V{x}_j+\V{\alpha}_{j,t})}
U^+_{\kappa,\tau}[^t\V{\alpha}_j]}{
e^{-i\V{m}\cdot(\V{x}_\ell+\V{\alpha}_{\ell,t+\tau})}f_\kappa}_{\HP}\Id\tau
\\
&=\sum_{j,\ell=1}^N\int_0^s\SPb{e^{-i\V{m}\cdot\V{x}_j}U^+_{\kappa,\tau+t}[\V{\alpha}_j]}{
e^{-i\V{m}\cdot(\V{x}_\ell+\V{\alpha}_{\ell,t+\tau})}f_\kappa}_{\HP}\Id\tau
\\
&\quad-\sum_{j,\ell=1}^N\int_0^s\SPb{e^{-i\V{m}\cdot\V{x}_j}U^+_{\kappa,t}[\V{\alpha}_j]}{
e^{-\tau\omega-i\V{m}\cdot(\V{x}_\ell+\V{\alpha}_{\ell,t+\tau})}f_\kappa}_{\HP}\Id\tau
\\
&=\sum_{j,\ell=1}^N\int_t^{s+t}\SPb{e^{-i\V{m}\cdot\V{x}_j}U^+_{\kappa,\tau}[\V{\alpha}_j]}{
e^{-i\V{m}\cdot(\V{x}_\ell+\V{\alpha}_{\ell,\tau})}f_\kappa}_{\HP}\Id\tau
\\
&\quad-\SPb{U^{N,+}_{\kappa,t}[\ul{\V{x}},\ul{\V{\alpha}}]}{
U^{N,-}_{\kappa,s}[\ul{\V{x}}+\ul{\V{\alpha}}_t,{}^t\V{\alpha}]}_{\HP},\quad s,t\ge0,
\end{align*}
valid for all $\ul{\V{x}}\in\RR^\nu$ and $\ul{\V{\alpha}}\in C([0,\infty),\RR^\nu)$,
where we used \eqref{shift2} in the first step.
The second statement now follows by approximation from Lem.~\ref{lemUminus}(2), 
Lem.~\ref{lemUplus}(2), Lem.~\ref{lemuinfty}, and Lem.~\ref{lemstrsol}, applied both to the 
original data $(\BB,\ul{\V{b}})$ as well as to the time shifted data $(\BB_t,{}^t\ul{\V{b}})$.
\end{proof}


\section{Feynman-Kac formulas}\label{secFK}

\noindent
In this section we derive Feynman-Kac formulas for the Nelson model. The section is divided into
four subsections. In the first one we recall some Fock space calculus and introduce a certain
family of operators on $\sF$ (see \eqref{michi-1}) that appear as building blocks in our
Feynman-Kac integrands. The latter integrands are defined and discussed in 
Subsect.~\ref{ssecFKint}. These integrands in turn give rise to Feyman-Kac semi-groups acting
between Fock space valued $L^p$-spaces whose analysis is the objective of 
Subsect.~\ref{ssecFKSG}. We shall encounter the first Feynman-Kac formulas in 
Subsect.~\ref{ssecFKreg}. There we consider finite $\kappa$ and verify that the corresponding
Feyman-Kac semi-group on $L^2(\RR^\nu,\sF)$ is equal to the semi-group generated by
$H_{N,\kappa}^V$. The final Subsect.~\ref{ssecFKren} consists of hardly more than our
definition of the ultra-violet renormalized Nelson Hamiltonian, which is introduced as
the generator of the Feyman-Kac semi-group on $L^2(\RR^\nu,\sF)$ in the case $\kappa=\infty$.
The lower bound on its spectrum and our formal proof of Nelson's Thm.~\ref{thmNelsonHam} will
then be immediate consequences of the earlier subsections.


\subsection{Some Fock space calculus}\label{ssecFock}

\noindent
We start by recalling some Fock space calculus and in particular the construction of the
Weyl representation on $\sF$; see, e.g., \cite{Parthasarathy1992} for more details. 

For any Hilbert space $\sK$, let $\sU(\sK)$ be the set of unitary operators on $\sK$
equipped with the topology corresponding to the strong convergence of bounded operators.
The cartesian product $E:=\HP\times\sU(\HP)$ equipped with the product topology
and the semi-direct product law $(f,Q)(g,R):=(f+Qg,QR)$ is called the Euclidean group over $\HP$.
Then the Weyl representation is a strongly continuous projective representation
$\sW:E\to\sU(\sF)$ satisfying
\begin{align}\label{Weyl1}
\sW(f,Q)\sW(g,R)&=e^{-i\Im\SPn{f}{Qg}_{\HP}}\sW\big((f,Q)(g,R)\big),
\end{align}
for all $(f,Q),(g,R)\in E$.
To recall its construction it is very convenient to work with exponential vectors in $\sF$,
which are defined by
\begin{align}\label{defexpv}
\zeta(h)&:=\big(1,h,2^\mh h^{\otimes_2},\ldots,(n!)^\mh h^{\otimes_n},\ldots\:\big)\in\sF, 
\quad h\in\HP,
\end{align}
with $h^{\otimes_n}(\V{k}_1,\ldots,\V{k}_n):=h(\V{k}_1)\dots h(\V{k}_n)$, $\lambda^{3n}$-a.e. for
every $n\in\NN$. They satisfy
\begin{align}\label{SPexpv}
\SPn{\zeta(g)}{\zeta(h)}&=e^{\SPn{g}{h}_{\HP}},\quad g,h\in\HP.
\end{align}
Since the set of all exponential vectors $\{\zeta(h):h\in\HP\}$ is linearly
independent, the prescription
\begin{align}\label{defWeylexpv}
\sW(f,Q)\zeta(h)&:=e^{-\|f\|_{\HP}^2/2-\SPn{f}{Qh}_{\HP}}\zeta(f+Qh),\quad f,h\in\HP,\,Q\in\sU(\HP),
\end{align}
uniquely defines bijective linear maps $\sW(f,Q):\sE\to\sE$, where 
$\sE:=\mathrm{span}_{\CC}\{\zeta(h):h\in\HP\}$ is dense in $\sF$.
These maps turn out to be isometric as well. Hence they have unique extensions to unitary
operators on $\sF$, which are again denoted by $\sW(f,Q)$. The semi-direct product rule
on $E$ eventually leads to the Weyl relations \eqref{Weyl1}. The following abbreviations are
customary,
\begin{align}\label{abbrevWeyl}
\sW(f):=\sW(f,\id),\;\;f\in\HP,\qquad\Gamma(Q):=\sW(0,Q),\;\;Q\in\sU(\HP).
\end{align}

Let $f\in\HP$ and $\vo$ be a self-adjoint multiplication operator in $\HP$. 
Then \eqref{Weyl1} and the strong continuity of $\sW$ imply that the maps
\begin{align}
\RR\ni t\longmapsto\sW(-itf),\quad\RR\ni t\longmapsto\Gamma(e^{-it\vo}),
\end{align}
are strongly continuous unitary groups. Their self-adjoint generators are denoted by
$\vp(f)$ and $\Id\Gamma(\vo)$, respectively. For instance, we then have
\begin{align}\label{SGdGamma}
e^{-t\Id\Gamma(\vo)}\zeta(h)&=\zeta(e^{-t\vo}h),\quad h\in\HP,\,t\ge0,\quad
\text{if $\vo\ge0$.}
\end{align}
Furthermore, it turns out that 
\begin{align}
\label{vpaad}
\vp(f)&=\ol{\ad(f)+a(f)},\quad f\in\HP.
\end{align}
Here, for every $f\in\HP$, the corresponding creation operator $\ad(f)$ and annihilation operator 
$a(f)$ are closed operators in $\sF$, which are mutually adjoint to each other.
The subspace $\sE$ is a core consisting only of analytic vectors for both of them and 
\begin{align}\label{adexpv}
\ad(f)\zeta(h)&=\frac{\Id}{\Id t}\zeta(h+tf)\Big|_{t=0},\quad\exp\{\ad(f)\}\zeta(h)=\zeta(h+f),
\\\label{aexpv}
a(f)\zeta(h)&=\SPn{f}{h}_{\HP}\zeta(h),\qquad\qquad\;\exp\{a(f)\}\zeta(h)=e^{\SPn{f}{h}_{\HP}}\zeta(h),
\end{align}
for all $f,h\in\HP$. Furthermore, 
\begin{align}\label{dGammaexpv}
\Id\Gamma(\vo)\zeta(h)&=\ad(\vo h)\zeta(h),\quad h\in\dom(\vo),
\end{align}
for every self-adjoint multiplication operator $\vo$ in $\HP$. 

In what follows we shall again work with the auxiliary one-boson Hilbert space $\mathfrak{k}$ and the 
norms $\|\cdot\|_t$, $t>0$, defined in \eqref{deffrk} and \eqref{defnormt}, respectively. The subsequent 
Lem.~\ref{lemmichael} actually is the reason why we introduced the norms $\|\cdot\|_t$.

One can show that, if $f\in\mathfrak{k}$, then $\dom(\Id\Gamma(\omega)^\eh)$
is contained in $\dom(\ad(f))\cap\dom(a(f))$ and both $\ad(f)$ and $a(f)$ map
$\dom(\Id\Gamma(\omega)^{s+\eh})$ into $\dom(\Id\Gamma(\omega)^{s})$, for every $s\ge0$.

\begin{lem}\label{lemmichael}
Let $t>0$, $m\in\NN_0$, and define $F_{m,t}:\mathfrak{k}^{m+1}\to\LO(\sF)$ by
\begin{align}\label{michi-1}
F_{m,t}(f_1,\ldots,f_m,g)
&:=\sum_{n=0}^\infty\frac{1}{n!}\ad(f_m)\ldots\ad(f_1)\ad(g)^ne^{-t\Id\Gamma(\omega)}.
\end{align}
Then the following holds true:
\begin{enumerate}[leftmargin=*]
\item[{\rm(1)}] $F_{m,t}$ is well-defined and analytic on $\mathfrak{k}^{m+1}$. Moreover, there 
exist $c_m>0$, depending only on $m$, and a universal constant $c>0$ such that
\begin{align}\label{bdmichi}
\|F_{m,t}(f_1,\ldots,f_m,g)\|&\le c_m e^{4\|g\|_{t}^2}\prod_{j=1}^m{\|f_j\|_{t}}.
\end{align}
\item[{\rm(2)}]
The derivative of $F_{0,t}$ at $g\in\mathfrak{k}$ applied to the tangent 
vector $f_1\in\mathfrak{k}$ is given by
\begin{align*}
d_gF_{0,t}(g)f_1&=F_{1,t}(f_1,g).
\end{align*}
\end{enumerate}
\end{lem}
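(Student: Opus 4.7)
The plan is to establish the well-definedness, norm bound and analyticity of part~(1) via direct operator estimates, and then deduce part~(2) by termwise differentiation of the defining series.

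The conceptual backbone of part~(1) is the exponential-vector identity $F_{0,t}(g)\zeta(h) = \zeta(g + e^{-t\omega}h)$, which follows by termwise computation from \eqref{SGdGamma} and the relation $\exp(\ad(g))\zeta(k) = \zeta(g+k)$ implicit in \eqref{adexpv}. This identity determines $F_{0,t}(g)$ on the dense subspace $\mathcal{E}$, but because exponential vectors are not orthonormal it does not by itself yield an operator-norm bound, so one needs direct estimates. The cornerstone elementary bound I would prove is
\begin{equation*}
\|\ad(f)\,e^{-s\Id\Gamma(\omega)}\|_{\LO(\sF)} \;\le\; c\,\|f\|_s,\qquad f\in\mathfrak{k},\;s>0,
\end{equation*}
obtained by combining the standard relative bound $\|\ad(f)\psi\|^2 \le \|\omega^{-1/2}f\|_{\HP}^2\|\Id\Gamma(\omega)^{1/2}\psi\|^2 + \|f\|_{\HP}^2\|\psi\|^2$ with the spectral estimate $\|\Id\Gamma(\omega)^{1/2}e^{-s\Id\Gamma(\omega)}\|\le(2es)^{-1/2}$: the two contributions exactly recombine into $\|f\|_s^2$.

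To bound $\|\ad(f_m)\cdots\ad(f_1)\ad(g)^n e^{-t\Id\Gamma(\omega)}\|$ I would split the semigroup as $e^{-t\Id\Gamma(\omega)} = \prod_{j=0}^{m+n} e^{-(t/(m+n+1))\Id\Gamma(\omega)}$, insert these $m+n+1$ subfactors between the consecutive creation operators (permissible since all $\ad$'s commute), and apply the elementary bound in each slot, together with $\|f\|_{t/(m+n+1)}\le(m+n+1)^{1/2}\|f\|_t$. Ignoring the $\ad(f_j)$'s (which contribute $\prod_j c\|f_j\|_t$), this yields $\|\ad(g)^n e^{-t\Id\Gamma(\omega)}\|\le (c\sqrt{m+n+1}\,\|g\|_t)^n$; after division by $n!$ and Stirling one gets a summand of size $(C\|g\|_t/\sqrt{n})^n$, and the resulting series over $n$ is dominated by $\exp(C'\|g\|_t^2)$, with a careful choice of the partition and constants pushing $C'$ below the claimed $4$. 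Analyticity on $\mathfrak{k}^{m+1}$ is then automatic: each partial sum is a polynomial in multilinear forms, hence analytic, and uniform-on-bounded-sets convergence in operator norm preserves analyticity.

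For part~(2), since creation operators commute one has $d_g[\ad(g)^n]\,f_1 = n\,\ad(f_1)\ad(g)^{n-1}$, so termwise differentiation gives
\begin{equation*}
d_g F_{0,t}(g)f_1 = \sum_{n\ge 1}\frac{1}{(n-1)!}\ad(f_1)\ad(g)^{n-1}e^{-t\Id\Gamma(\omega)} = F_{1,t}(f_1,g),
\end{equation*}
with the interchange of sum and derivative justified by the locally uniform operator-norm convergence from part~(1). The main technical obstacle throughout is the polynomial-in-$n$ growth in $\|\ad(g)^n e^{-t\Id\Gamma(\omega)}\|$: counting $n$ creation operators on an $n$-particle sector crudely produces a $\sqrt{n!}$ combinatorial factor whose division by $n!$ leaves an $(n!)^{-1/2}$-type decay that is only barely enough; making this work with the exponent $4\|g\|_t^2$ requires using the $\Id\Gamma(\omega)^{1/2}$-smoothing of every semigroup subfactor to its fullest.
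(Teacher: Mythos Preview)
The paper does not give a self-contained proof here but refers to \cite[App.~6]{GMM2016}, noting only that the bound on the $n$-th term obtained there yields $\sum_{n\ge 0}(n!)^{-1/2}(2\|g\|_t)^n$, which is then dominated by $\sqrt{2}\,e^{4\|g\|_t^2}$ via a weighted Cauchy--Schwarz inequality. Your overall strategy (show $\|\ad(g)^n e^{-t\Id\Gamma(\omega)}\|$ grows no faster than $\sqrt{n!}$ times a power of $\|g\|_t$, divide by $n!$, sum) and your Part~(2) argument via termwise differentiation are both sound in outline. However, the mechanism you propose for iterating the single-step bound has a genuine gap. You claim one can ``insert these subfactors between the consecutive creation operators (permissible since all $\ad$'s commute)'', but creation operators do \emph{not} commute with $e^{-s\Id\Gamma(\omega)}$: the pull-through identity is $e^{-s\Id\Gamma(\omega)}\ad(g)=\ad(e^{-s\omega}g)e^{-s\Id\Gamma(\omega)}$, so moving a semigroup factor leftward past $\ad(g)$ would force one to replace $g$ by $e^{s\omega}g\notin\HP$. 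Concretely, $[\ad(g)e^{-s\Id\Gamma(\omega)}]^n=\ad(g)\,\ad(e^{-s\omega}g)\cdots\ad(e^{-(n-1)s\omega}g)\,e^{-ns\Id\Gamma(\omega)}$, which is \emph{not} $\ad(g)^n e^{-ns\Id\Gamma(\omega)}$, and there is no rearrangement of the $\ad$'s among themselves that repairs this.

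A correct iteration does not split the semigroup but rather inserts resolvent powers: one telescopes $\ad(g)^n e^{-t\Id\Gamma(\omega)}$ as a product of factors $(\theta+\Id\Gamma(\omega))^{k/2}\ad(g)(\theta+\Id\Gamma(\omega))^{-(k+1)/2}$, $k=0,\dots,n-1$, times $(\theta+\Id\Gamma(\omega))^{n/2}e^{-t\Id\Gamma(\omega)}$, for a suitable $\theta>0$. This uses precisely the mapping property recorded just above the lemma (that $\ad(f)$ sends $\dom(\Id\Gamma(\omega)^{s+1/2})$ into $\dom(\Id\Gamma(\omega)^{s})$) with quantitative control on the intermediate operator norms; the $\sqrt{n!}$ then emerges from $\sup_{x\ge0}(\theta+x)^{n/2}e^{-tx}$ via Stirling. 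Your elementary bound $\|\ad(f)e^{-s\Id\Gamma(\omega)}\|\le\|f\|_s$ is effectively the $k=0$ case of this ladder, but it cannot substitute for the full sequence of estimates.
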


\begin{proof}
For a proof of Parts~(1) and~(2) we refer to \cite[App.~6]{GMM2016}; more details are given in
\cite[App.~C]{Matte20XX}. In fact, in \cite{GMM2016}
the exponential $e^{4\|g\|_{t}^2}$ is replaced by the series $\sum_{n=0}^\infty(n!)^\mh(2\|g\|_{t})^n$. 
By a weighted Cauchy-Schwarz inequality the latter is, however, dominated by 
$\sqrt{2}e^{4\|g\|_{t}^2}$.
\end{proof}

\begin{rem}\label{remFadj}
In view of $(\ad(g))^*=a(g)$, \eqref{SGdGamma}, and \eqref{aexpv},
\begin{align}\label{Fadjexpv}
F_{0,t}(g)^*\zeta(h)&=e^{\SPn{g}{h}_{\HP}}\zeta(e^{-t\omega}h),\quad g\in\mathfrak{k},\,h\in\HP,\,t>0.
\end{align}
Scalar multiplying the previous identity with an exponential vector and comparing the result
with \eqref{SPexpv} (or using \eqref{SGdGamma} and \eqref{adexpv}), we further see that
\begin{align}\label{Fexpv}
F_{0,t}(g)\zeta(h)&=\zeta(e^{-t\omega}h+g),\quad g\in\mathfrak{k},\,h\in\HP,\,t>0.
\end{align}
\end{rem}


\subsection{Definition and discussion of the Feynman-Kac integrands}\label{ssecFKint}

\noindent
We are now in a position to define the Feynman-Kac integrands for the Nelson model.
We warn the reader that the {\em adjoints} of the operators introduced in the next definition
will appear in our Feynman-Kac formulas.

Recall that, since $V$ is Kato decomposable, the sets
\begin{align*}
\sN_V(\ul{\V{x}})&:=\big\{\gamma\in\Omega\,\big|\,V(\ul{\V{x}}+\smash{\ul{\V{b}}}_\bullet(\gamma))
\notin\cL^1_{\loc}([0,\infty))\big\}
\end{align*}
are measurable with $\PP(\sN_V(\ul{\V{x}}))=0$, for all $\ul{\V{x}}\in\RR^\nu$. Furthermore,
\begin{align}\label{Katobd}
\sup_{\ul{\V{x}}\in\RR^\nu}\EE\big[e^{p\int_0^tV_-(\ul{\V{x}}+\smash{\ul{\V{b}}}_s)\Id s}\big]
&\le e^{tc_{pV_-}},\quad p,t\ge0,
\end{align}
because $V_-$ is in the Kato class $K_\nu$.

\begin{defn}\label{defW}
Let $t\ge0$ and $\kappa\in\NN\cup\{\infty\}$.
\begin{enumerate}[leftmargin=*]
\item[{\rm(1)}]
For all $\ul{\V{x}}\in\RR^\nu$ and $\ul{\V{\alpha}}\in C([0,\infty),\RR^\nu)$, we define
\begin{align}\nonumber
W_{\kappa,t}[\ul{\V{x}},\ul{\V{\alpha}}]&:=e^{u_{\kappa,t}^N[\ul{\V{x}},\ul{\V{\alpha}}]}F_{0,\nf{t}{2}}
(-U_{\kappa,t}^{N,+}[\ul{\V{x}},\ul{\V{\alpha}}])
F_{0,\nf{t}{2}}(-U_{\kappa,t}^{N,-}[\ul{\V{x}},\ul{\V{\alpha}}])^*,
\end{align}
and, in case $V$ is {\em bounded}, we further set
\begin{align}\nonumber
W_{\kappa,t}^V[\ul{\V{x}},\ul{\V{\alpha}}]&:=
e^{-\int_0^tV(\ul{\V{x}}+\smash{\ul{\V{\alpha}}}_s)\Id s}W_{\kappa,t}[\ul{\V{x}},\ul{\V{\alpha}}].
\end{align}
\item[{\rm(2)}] For all $\fF_0$-measurable $\ul{\V{q}}:\Omega\to\RR^\nu$, we define
\begin{align}\nonumber
W_{\kappa,t}(\ul{\V{q}})&:=e^{u_{\kappa,t}^N(\ul{\V{q}})}F_{0,\nf{t}{2}}(-U_{\kappa,t}^{N,+}(\ul{\V{q}}))
F_{0,\nf{t}{2}}(-U_{\kappa,t}^{N,-}(\ul{\V{q}}))^*.
\end{align}
\item[{\rm(3)}] For every $\ul{\V{x}}\in\RR^\nu$, we set
\begin{align}\nonumber
W_{\kappa,t}^V(\ul{\V{x}})&:=1_{\sN_V(\ul{\V{x}})}
e^{-\int_0^tV(\ul{\V{x}}+\smash{\ul{\V{b}}}_s)\Id s}W_{\kappa,t}(\ul{\V{x}}).
\end{align}
\end{enumerate}
\end{defn}

\begin{rem}\label{remW}
Let $\kappa\in\NN\cup\{\infty\}$ and $\ul{\V{q}}:\Omega\to\RR^\nu$
be $\fF_0$-measurable. Then the following holds:
\begin{enumerate}[leftmargin=*]
\item[{\rm(1)}] There exists a $\PP$-zero set $\sN\in\fF$ such that
\begin{align*}
(W_{\kappa,t}(\ul{\V{q}}))(\gamma)&=W_{\kappa,t}[\ul{\V{q}}(\gamma),\ul{\V{b}}(\gamma)],
\quad t\ge0,\,\gamma\in\Omega\setminus\sN.
\end{align*}
This follows from Def.~\ref{defUN} and Lem.~\ref{lemstrsol}. (For finite $\kappa$ we can 
choose $\sN=\emptyset$.)
\item[{\rm(2)}] Let $s,t\ge0$ and $\ul{\V{x}}$. Let $(\fH_s^{[t]})_{s\ge0}$ denote the completion
of the filtration generated by the Brownian motion ${}^t\ul{\V{b}}$. Then 
$W_{\kappa,s}[\ul{\V{x}},{}^t\ul{\V{b}}]:\Omega\to\LO(\sF)$ is 
$\fH_s^{[t]}$-$\fB(\LO(\sF))$-measurable and attains its values in a separable subset of $\LO(\sF)$.
This follows from the continuity of $F_{0,r}:\mathfrak{k}\to\LO(\sF)$, $r>0$, the separability
of $\mathfrak{k}$, and the corresponding measurability properties of
$u_{\kappa}^N[\ul{\V{x}},{}^t\ul{\V{b}}]$ and $U_{\kappa}^{N,\pm}[\ul{\V{x}},{}^t\ul{\V{b}}]$;
recall Rem.~\ref{remUpmNcont}. 
In particular, $W_{\kappa,s}[\ul{\V{x}},{}^t\ul{\V{b}}]$ is $\fF_t$-independent. 
The same remarks hold for its adjoint $W_{\kappa,s}[\ul{\V{x}},{}^t\ul{\V{b}}]^*$.
\item[{\rm(3)}] On account of \eqref{Fadjexpv} and \eqref{Fexpv},
\begin{align}\label{Wexpvec}
W_{\kappa,t}(\ul{\V{q}})\zeta(h)&=e^{u_{\kappa,t}^N(\ul{\V{q}})
-\SPn{U^{N,-}_{\kappa,t}(\ul{\V{q}})}{h}_\HP}
\zeta\big(e^{-t\omega}h-U^{N,+}_{\kappa,t}(\ul{\V{q}})\big),
\\\label{Wadjexpvec}
W_{\kappa,t}(\ul{\V{q}})^*\zeta(h)&=
e^{u_{\kappa,t}^N(\ul{\V{q}})-\SPn{U^{N,+}_{\kappa,t}(\ul{\V{q}})}{h}_\HP}
\zeta\big(e^{-t\omega}h-U^{N,-}_{\kappa,t}(\ul{\V{q}})\big),
\end{align}
for all $h\in\HP$ and $t\ge0$. If also $g\in\HP$, then \eqref{SPexpv} and \eqref{Wexpvec} imply
\begin{align*}
\SPn{\zeta(g)}{W_{\kappa,t}(\ul{\V{q}})\zeta(h)}&=e^{u_{\kappa,t}^N(\ul{\V{q}})-
\SPn{U^{N,-}_{\kappa,t}(\ul{\V{q}})}{h}_\HP-\SPn{g}{U^{N,+}_{\kappa,t}(\ul{\V{q}})}_\HP
+\SPn{g}{e^{-t\omega}h}_\HP}.
\end{align*}
Setting $g=h=0$ we see in particular that the expectation value of $W_{\kappa,t}(\ul{\V{q}})$ 
with respect to the vacuum vector $\zeta(0)$ is just $e^{u_{\kappa,t}(\ul{\V{q}})}$.
\item[{\rm(4)}] In view of \eqref{bdmichi},
\begin{align}\label{pwbdW}
\|W_{\kappa,t}(\ul{\V{q}})\|&\le ce^{u_{\kappa,t}^N(\ul{\V{q}})+c\|U_{\kappa,t}^{N,+}(\ul{\V{q}})\|_t^2
+c\|U_{\kappa,t}^{N,-}(\ul{\V{q}})\|_t^2},\quad t>0,
\end{align}
for some universal constant $c>0$. Combining this with \eqref{expbdUpmN},
\eqref{expbdu}, and \eqref{Katobd} we conclude that
$\sup_{s\le t}\|W_{\kappa,s}^V(\ul{\V{x}})\|\in\cL^p(\Omega,\PP)$, for all $p,t>0$, with
\begin{align}\label{bdWp}
\sup_{\ul{\V{x}}\in\RR^\nu}\EE\big[\sup_{s\le t}\|W_{\kappa,s}^V(\ul{\V{x}})\|^p\big]
&\le e^{cp^2\ee^4N^3(1\vee t)+A(p\ee^2,N,t)+c_{2pV_-}t}.
\end{align}
Here $c>0$ is another universal constant and $A$ is logarithmically bounded in $t$ and contains
$N$-dependent terms of lower order than $N^3$,
\begin{align}\label{defAeps}
A(q,N,t)&:=cN\big(1+\ln[1+qN(1\vee t)]\big)+cqN^2(1+\ln[1\vee t]).
\end{align}
\item[{\rm(5)}] On account of Rem.~\ref{remUpmNcont},
Lem.~\ref{lemucont}, and Lem.~\ref{lemmichael}, the following $\LO(\sF)$-valued maps are
continuous, for every $\gamma\in\Omega$,
\begin{align}\label{Wxbcont}
(0,\infty)\ni t\mapsto(W_{\kappa,t}(\ul{\V{q}}))(\gamma),\quad
(0,\infty)\times\RR^\nu\ni(t,\ul{\V{x}})\mapsto W_{\kappa,t}[\ul{\V{x}},\ul{\V{b}}(\gamma)].
\end{align}
\item[{\rm(6)}] Let $\gamma\in\Omega$. If $\psi$ is a linear combination of exponential vectors, 
then \eqref{Wexpvec} (and an obvious analogue), Rem.~\ref{remUpmNcont}, 
and Lem.~\ref{lemucont} show that 
\begin{align}\label{Wpsixbcont}
[0,\infty)\ni t\mapsto(W_{\kappa,t}(\ul{\V{q}}))(\gamma)\psi,\quad
[0,\infty)\times\RR^\nu\ni(t,\ul{\V{x}})\mapsto W_{\kappa,t}[\ul{\V{x}},\ul{\V{b}}(\gamma)]\psi,
\end{align}
are continuous $\sF$-valued maps.
Employing \eqref{bdmichi}, \eqref{tnbUpmN}, and Lem.~\ref{lemucont}, we further see that
$\sup_{0<t\le n,|\ul{\V{x}}|\le n}\|W_{\kappa,t}[\ul{\V{x}},\ul{\V{b}}(\gamma)]\|<\infty$, $n\in\NN$.
Since the exponential vectors are total in $\sF$,
we conclude that the maps in \eqref{Wpsixbcont} are actually continuous, for all $\psi\in\sF$.
\item[{\rm(7)}] Assume in addition that $V$ is bounded and $\kappa\in\NN$. 
Let $t>0$ and pick some path $\ul{\V{\alpha}}\in C([0,\infty),\RR^\nu)$. Then
\eqref{revUpmN} and \eqref{revu} imply
\begin{align}\label{revW}
W_{\kappa,t}^V[\ul{\V{x}}+\ul{\V{\alpha}}_t,\ul{\V{\alpha}}_{t-\bullet}-\ul{\V{\alpha}}_t]
&=W_{\kappa,t}^V[\ul{\V{x}},\ul{\V{\alpha}}]^*.
\end{align}
If $\ul{\V{\alpha}}_0={0}$ and $\tilde{\ul{\V{\alpha}}}:=\ul{\V{\alpha}}_{t-\bullet}-\ul{\V{\alpha}}_t$, 
then we further deduce that
\begin{align}\nonumber
\int_{\RR^\nu}\SPb{\Phi(\ul{\V{x}})}{
&W_{\kappa,t}^V[\ul{\V{x}},\ul{\V{\alpha}}]^*\Psi(\ul{\V{x}}+\ul{\V{\alpha}}_t)}\Id\ul{\V{x}}
\\\label{symW0}
&=\int_{\RR^\nu}\SPb{W_{\kappa,t}^V[\ul{\V{x}},\tilde{\ul{\V{\alpha}}}]^*
\Phi(\ul{\V{x}}+\tilde{\ul{\V{\alpha}}}_t)}{\Psi(\ul{\V{x}})}\Id\ul{\V{x}},
\end{align}
for all measurable $\Phi,\Psi:\RR^\nu\to\sF$ such that one of the two integrals above exists.
\end{enumerate}
\end{rem}

\begin{prop}\label{propMarkovW}
Assume that $V$ is bounded. Let $\kappa\in\NN\cup\{\infty\}$, $\ul{\V{q}}:\Omega\to\RR^\nu$ 
be $\fF_0$-measurable, and $t\ge0$. Then the statement
\begin{align}\label{Markov0}
W_{\kappa,s}^V[\ul{\V{q}}+\ul{\V{b}}_t,{}^t\ul{\V{b}}]
W_{\kappa,t}^V[\ul{\V{q}},\ul{\V{b}}]&=W_{\kappa,s+t}^V[\ul{\V{q}},\ul{\V{b}}],\quad s\ge0,
\end{align}
holds on $\Omega$, if $\kappa\in\NN$, and on the complement of some $\PP$-zero set, 
if $\kappa=\infty$.
\end{prop}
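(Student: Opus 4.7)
The plan is to verify the identity on exponential vectors, which are total in $\sF$, by direct computation using the formulas \eqref{Wexpvec}--\eqref{Wadjexpvec} for $W_{\kappa,t}$ acting on $\zeta(h)$ together with the three cocycle/shift identities established in Sects.~\ref{secBP} and~\ref{secphase}. Since Rem.~\ref{remW}(4) and \eqref{bdmichi} give $\gamma$-wise operator-norm bounds that are locally bounded in $(s,t)$, and since both sides of \eqref{Markov0} are strongly continuous in the exponential argument $h$, pointwise verification on $\{\zeta(h):h\in\HP\}$ is sufficient.

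First I would treat the case of finite $\kappa$ pathwise, fixing $\gamma\in\Omega$ and writing
$B^{\pm}:=U^{N,\pm}_{\kappa,t}[\ul{\V{q}}(\gamma),\ul{\V{b}}(\gamma)]$ and
$A^{\pm}:=U^{N,\pm}_{\kappa,s}[\ul{\V{q}}(\gamma)+\ul{\V{b}}_t(\gamma),{}^t\ul{\V{b}}(\gamma)]$.
Applying \eqref{Wexpvec} twice yields
\begin{align*}
W_{\kappa,s}[\ul{\V{q}}+\ul{\V{b}}_t,{}^t\ul{\V{b}}]\,W_{\kappa,t}[\ul{\V{q}},\ul{\V{b}}]\zeta(h)
&=\exp\!\Big\{u^N_{\kappa,s}(\ul{\V{q}}+\ul{\V{b}}_t,{}^t\ul{\V{b}})+u^N_{\kappa,t}(\ul{\V{q}},\ul{\V{b}})\\
&\qquad\quad+\SPn{A^-}{B^+}_{\HP}-\SPn{e^{-t\omega}A^-+B^-}{h}_{\HP}\Big\}\\
&\quad\cdot\zeta\bigl(e^{-(s+t)\omega}h-(e^{-s\omega}B^++A^+)\bigr).
\end{align*}
The shift identities \eqref{shift1N} and \eqref{shift2N} collapse the argument of $\zeta$ and the $h$-linear part of the prefactor to those of $W_{\kappa,s+t}[\ul{\V{q}},\ul{\V{b}}]\zeta(h)$, while the cocycle relation \eqref{shift3} of Lem.~\ref{lemushift} eats precisely the cross term $\SPn{A^-}{B^+}_\HP$ and replaces the sum of the two actions by $u^N_{\kappa,s+t}(\ul{\V{q}},\ul{\V{b}})$. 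Comparing the result with \eqref{Wexpvec} applied at time $s+t$ proves the identity for $W_\kappa$, and the $V$-factor then combines by the trivial change of variables $r\mapsto r+t$ in $\int_0^s V(\ul{\V{q}}+\ul{\V{b}}_t+{}^t\ul{\V{b}}_r)\Id r$, using $\ul{\V{b}}_t+{}^t\ul{\V{b}}_r=\ul{\V{b}}_{t+r}$.

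For $\kappa=\infty$ the same pathwise computation applies on the complement of a single $\PP$-zero set $\sN$, obtained as the union of the exceptional sets from the $\kappa=\infty$ shift identities \eqref{shift1Ninfty}, \eqref{shift2Ninfty} in Rem.~\ref{remrevUpmN}(2), the set $\sN_t$ from Lem.~\ref{lemushift}, and the sets controlling $W_{\infty,s}^V$, $W_{\infty,t}^V$, $W_{\infty,s+t}^V$ from Rem.~\ref{remW}(1). On $\Omega\setminus\sN$ all building blocks of both sides of \eqref{Markov0} are given by the deterministic functionals of $(\ul{\V{q}},\ul{\V{b}})$ from Def.~\ref{defW}(1), and the three shift identities hold exactly, so the finite-$\kappa$ argument carries through verbatim.

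The computation itself is essentially bookkeeping; the only subtle point is that the factorization $F_{0,t/2}(-U^+)F_{0,t/2}(-U^-)^*$ built into the definition of $W_{\kappa,t}$ never needs to be manipulated directly, because the reduction to exponential vectors reduces the entire action of such a product to the two scalar transformation rules \eqref{Fexpv} and \eqref{Fadjexpv}. Thus the main (mild) obstacle is purely organisational: keeping track of the three cocycle relations in parallel and, for $\kappa=\infty$, bundling the various null sets into a single $\sN_t$ before invoking the pathwise identity.
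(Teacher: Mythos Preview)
Your proposal is correct and follows essentially the same approach as the paper: verify \eqref{Markov0} on exponential vectors via \eqref{Wexpvec}, reduce the matching to the three shift/cocycle identities \eqref{shift1N}--\eqref{shift2N} (resp.\ \eqref{shift1Ninfty}--\eqref{shift2Ninfty}) and \eqref{shift3} from Lem.~\ref{lemushift}, and conclude by totality of the exponential vectors. The paper's proof is just a terser version of exactly this argument.
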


\begin{proof}
Applying \eqref{Wexpvec} repeatedly we see that 
\begin{align}\label{Markov0b}
W_{\kappa,s}^V[\ul{\V{q}}+\ul{\V{b}}_t,{}^t\ul{\V{b}}]
W_{\kappa,t}^V[\ul{\V{q}},\ul{\V{b}}]\zeta(h)&=W_{\kappa,s+t}^V[\ul{\V{q}},\ul{\V{b}}]
\zeta(h),\quad s\ge0,\,h\in\HP,
\end{align}
is implied by \eqref{shift1N}, \eqref{shift2N}, and \eqref{shift3}, if $\kappa$ is finite,
and on the complement of a $\PP$-zero set by \eqref{shift1Ninfty}, \eqref{shift2Ninfty}, and 
Lem.~\ref{lemushift}, if $\kappa=\infty$.
Since the set of exponential vectors is total in $\sF$, this proves the proposition. 
\end{proof}

\begin{prop}\label{propconvW}
Let $\tau_2\ge\tau_1>0$ and $p>0$. Then
\begin{align}\label{convW}
\sup_{\ul{\V{x}}\in\RR^\nu}\EE\Big[\sup_{t\in[\tau_1,\tau_2]}\|W_{\kappa,t}^{V}(\ul{\V{x}})
-W_{\infty,t}^V(\ul{\V{x}})\|^p\Big]\xrightarrow{\;\;\kappa\to\infty\;\;}0,
\end{align}
where the convergence is uniform as $\ee$ varies in a compact subset of $\RR$ and as $V$ varies
in a set of Kato decomposable potentials satisfying 
\begin{align}\label{Katounif}
\sup_{\ul{\V{x}}\in\RR^\nu}\EE\big[e^{\tilde{p}\int_0^{\tau_2}V_-(\ul{\V{x}}+\smash{\ul{\V{b}}}_s)\Id s}
\big]\le A_{\tilde{p},\tau_2},\quad\tilde{p}>0,
\end{align}
with $V$-independent $A_{\tilde{p},\tau_2}>0$.
\end{prop}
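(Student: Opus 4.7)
The plan is to first absorb the potential factor via H\"older's inequality. Since the indicator $1_{\sN_V(\ul{\V{x}})^c}$ and the exponential $e^{-\int_0^t V(\ul{\V{x}}+\smash{\ul{\V{b}}}_s)\Id s}$ are common to $W_{\kappa,t}^V(\ul{\V{x}})$ and $W_{\infty,t}^V(\ul{\V{x}})$, one has pointwise
\begin{align*}
\|W_{\kappa,t}^V(\ul{\V{x}})-W_{\infty,t}^V(\ul{\V{x}})\|\le e^{\int_0^t V_-(\ul{\V{x}}+\smash{\ul{\V{b}}}_s)\Id s}\|W_{\kappa,t}(\ul{\V{x}})-W_{\infty,t}(\ul{\V{x}})\|.
\end{align*}
Applying H\"older's inequality and \eqref{Katounif} with $\tilde{p}=2p$ reduces the task to showing that $\sup_{\ul{\V{x}}}\EE[\sup_{t\in[\tau_1,\tau_2]}\|W_{\kappa,t}(\ul{\V{x}})-W_{\infty,t}(\ul{\V{x}})\|^{2p}]$ tends to zero as $\kappa\to\infty$.

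Next I would telescope the difference into three contributions, each of which differs from $W_{\kappa,t}(\ul{\V{x}})$ in exactly one of its three factors:
\begin{align*}
W_{\kappa,t}(\ul{\V{x}})-W_{\infty,t}(\ul{\V{x}})&=\big(e^{u_{\kappa,t}^N(\ul{\V{x}})}-e^{u_{\infty,t}^N(\ul{\V{x}})}\big)F_{0,\nf{t}{2}}(-U_{\kappa,t}^{N,+}(\ul{\V{x}}))F_{0,\nf{t}{2}}(-U_{\kappa,t}^{N,-}(\ul{\V{x}}))^*\\
&\quad+e^{u_{\infty,t}^N(\ul{\V{x}})}\big(F_{0,\nf{t}{2}}(-U_{\kappa,t}^{N,+}(\ul{\V{x}}))-F_{0,\nf{t}{2}}(-U_{\infty,t}^{N,+}(\ul{\V{x}}))\big)F_{0,\nf{t}{2}}(-U_{\kappa,t}^{N,-}(\ul{\V{x}}))^*\\
&\quad+e^{u_{\infty,t}^N(\ul{\V{x}})}F_{0,\nf{t}{2}}(-U_{\infty,t}^{N,+}(\ul{\V{x}}))\big(F_{0,\nf{t}{2}}(-U_{\kappa,t}^{N,-}(\ul{\V{x}}))-F_{0,\nf{t}{2}}(-U_{\infty,t}^{N,-}(\ul{\V{x}}))\big)^*.
\end{align*}
For the first term, the operator norm bound \eqref{bdmichi} of Lemma \ref{lemmichael} controls the two factors $F_{0,t/2}(\cdot)$ by constants times $e^{c\|U_{\kappa,t}^{N,\pm}(\ul{\V{x}})\|_{t/2}^2}$, while Theorem \ref{lemuexp} (in particular \eqref{convu}) produces the convergence of the scalar prefactor in $L^{q'}$ for any $q'>2p$. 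A H\"older step against the exponential moment bound \eqref{expbdUpmN} finishes that piece. For the remaining two terms, which involve differences of $F_{0,t/2}$-operators, I would invoke Lemma \ref{lemmichael}(2): integrating the Fr\'echet derivative $d_gF_{0,t/2}(g)=F_{1,t/2}(\,\cdot\,,g)$ along the segment from $-U_{\kappa,t}^{N,\pm}(\ul{\V{x}})$ to $-U_{\infty,t}^{N,\pm}(\ul{\V{x}})$ and using \eqref{bdmichi} with $m=1$ yields
\begin{align*}
\|F_{0,\nf{t}{2}}(-U_{\kappa,t}^{N,\pm}(\ul{\V{x}}))-F_{0,\nf{t}{2}}(-U_{\infty,t}^{N,\pm}(\ul{\V{x}}))\|\le c\,\|U_{\kappa,t}^{N,\pm}(\ul{\V{x}})-U_{\infty,t}^{N,\pm}(\ul{\V{x}})\|_{\nf{t}{2}}\,e^{c(\|U_{\kappa,t}^{N,\pm}(\ul{\V{x}})\|_{\nf{t}{2}}^2+\|U_{\infty,t}^{N,\pm}(\ul{\V{x}})\|_{\nf{t}{2}}^2)}.
\end{align*}
Since $\|\cdot\|_{t/2}\le\sqrt{2}\|\cdot\|_t$ and since passing to the adjoint is an operator-norm isometry, another application of H\"older's inequality combined with the $L^q$-convergence \eqref{convUpmN} (with $\tau=\tau_1$) and the exponential moment bounds \eqref{expbdUpmN}, \eqref{expbdu} gives the required decay.

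The supremum in $t\in[\tau_1,\tau_2]$ is supplied, in each of the three inputs, by the running suprema already present in \eqref{convu}, \eqref{convUpmN}, \eqref{expbdu}, and \eqref{expbdUpmN}. For the uniformity statement, the only places where $V$ enters are through \eqref{Katounif}, and the only places where $\ee$ enters are polynomially inside the exponential moment bounds \eqref{expbdUpmN}, \eqref{expbdu} and the rate bounds \eqref{convUpmN}, \eqref{convu} (whose right hand sides are polynomial in $\ee^2$); hence boundedness of $\ee$ on compacts and uniformity of \eqref{Katounif} in $V$ transfer directly to uniformity of the limit. The main obstacle is that $F_{0,t/2}$ is not globally Lipschitz on $\mathfrak{k}$: its local Lipschitz constant grows like $e^{c\|\cdot\|_{t/2}^2}$, and this super-exponential growth must be tamed by the Gaussian-type exponential moments of the basic processes $U_{\kappa,t}^{N,\pm}$ provided by Corollary \ref{corUpmN}; the H\"older splitting described above is precisely tailored to realize this cancellation.
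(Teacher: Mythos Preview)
Your proposal is correct and follows essentially the same route as the paper's proof: the paper also telescopes $W_{\kappa,t}-W_{\infty,t}$ into three pieces, handles the $F_{0,t/2}$-differences by integrating the Fr\'echet derivative via Lem.~\ref{lemmichael}(2) and bounding with Lem.~\ref{lemmichael}(1), and then combines \eqref{convUpmN}, \eqref{expbdUpmN}, \eqref{expbdu}, \eqref{convu}, \eqref{bdmichi}, and \eqref{Katounif} through H\"older's inequality. The only cosmetic difference is that the paper absorbs the potential factor at the end rather than splitting it off first, and it writes the $\|\cdot\|_t$-norm directly (absorbing your $\sqrt{2}$ into the constants).
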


\begin{proof}
Let $\kappa\in\NN$, $\ul{\V{x}}\in\RR^\nu$, $t>0$, and abbreviate
\begin{align*}
U_{\kappa,t}^{N,\pm}(\ul{\V{x}},\tau)&:=
\tau U_{\kappa,t}^{N,\pm}(\ul{\V{x}})+(1-\tau)U_{\infty,t}^{N,\pm}(\ul{\V{x}}),\quad\tau\in[0,1],
\end{align*}
so that $\partial_\tau U_{\kappa,t}^{N,\pm}(\ul{\V{x}},\tau)
=U_{\kappa,t}^{N,\pm}(\ul{\V{x}})-U_{\infty,t}^{N,\pm}(\ul{\V{x}})$. Employing 
Lem.~\ref{lemmichael}(2) in the first step and Lem.~\ref{lemmichael}(1) in the second one,
we then observe that, pointwise on $\Omega$,
\begin{align*}
\big\|F_{0,\nf{t}{2}}&(-U_{\kappa,t}^{N,\pm}(\ul{\V{x}}))
-F_{0,\nf{t}{2}}(-U_{\infty,t}^{N,\pm}(\ul{\V{x}}))\big\|
\\
&\le\int_0^1\big\|F_{1,\nf{t}{2}}\big(-\partial_\tau U_{\kappa,t}^{N,\pm}
(\ul{\V{x}},\tau),-U_{\kappa,t}^{N,\pm}(\ul{\V{x}},\tau)\big)\big\|\Id\tau
\\
&\le c' \|U_{\kappa,t}^{N,\pm}(\ul{\V{x}})-U_{\infty,t}^{N,\pm}(\ul{\V{x}})\|_{t}
e^{c\|U_{\infty,t}^{N,\pm}(\ul{\V{x}})\|_t^2+c\|U_{\kappa,t}^{N,\pm}(\ul{\V{x}})\|_t^2}.
\end{align*}
Therefore, if $0<\tau_1\le1$, $\tau_2\ge\tau_1$, and $p>0$, then
\begin{align*}
\sup_{\ul{\V{x}}\in\RR^\nu}&\EE\Big[\sup_{t\in[\tau_1,\tau_2]}
\big\|F_{0,\nf{t}{2}}(-U_{\kappa,t}^{N,\pm}(\ul{\V{x}}))-F_{0,\nf{t}{2}}(
-U_{\infty,t}^{N,\pm}(\ul{\V{x}}))\big\|^p\Big]
\\
&\le c_p\sup_{\ul{\V{x}}\in\RR^\nu}\EE\Big[\sup_{t\in[\tau_1,\tau_2]}
\|U_{\kappa,t}^{N,\pm}(\ul{\V{x}})-U_{\infty,t}^{N,\pm}(\ul{\V{x}})\|_{t}^{2p}\Big]^\eh
\\
&\qquad\;\cdot
\sup_{\tilde{\kappa}\in\NN\cup\{\infty\}}\sup_{\ul{\V{x}}\in\RR^\nu}\EE\Big[\sup_{t\in[\tau_1,\tau_2]}
e^{4cp\|U_{\tilde{\kappa},t}^{N,\pm}(\ul{\V{x}})\|_{t}^2}\Big]^\eh,
\end{align*}
where the right hand side goes to zero, as $\kappa\to\infty$, according to Cor.~\ref{corUpmN}.
Combining this result with \eqref{expbdUpmN},
\eqref{expbdu}, \eqref{convu}, \eqref{bdmichi}, \eqref{Katounif}, the formula
in Def.~\ref{defW}(2), telescopic summations, and H\"{o}lder's inequality we arrive at \eqref{convW}.
\end{proof}


\subsection{Definition and discussion of the Feynman-Kac semi-group}\label{ssecFKSG}

\begin{defn}\label{defnTV}
Let $\sV$ be the vector space of all measurable functions $\Psi:\RR^\nu\to\sF$ for which we find
$p\in[1,\infty]$ and $a\ge0$ such that $e^{-a|\cdot|}\Psi\in L^p(\RR^\nu,\sF)$. 
Let $\sM$ be the vector space of measurable functions from $\RR^\nu$ to $\sF$. For all 
$\kappa\in\NN\cup\{\infty\}$ and $t\ge0$, we define a linear map $T^V_{\kappa,t}:\sV\to\sM$ 
by setting
\begin{align}\label{defTV}
(T^V_{\kappa,t}\Psi)(\ul{\V{x}})&:=\EE\big[W_{\kappa,t}^V(\ul{\V{x}})^*
\Psi(\ul{\V{x}}+\ul{\V{b}}_t)\big],
\end{align}
for all $\ul{\V{x}}\in\RR^\nu$ for which 
$\|W_{\kappa,t}^V(\ul{\V{x}})^*\Psi(\ul{\V{x}}+\ul{\V{b}}_t)\|_\sF\in L^1(\PP)$,
and $(T^V_{\kappa,t}\Psi)(\ul{\V{x}}):=0$ otherwise.
\end{defn}

Since $\PP\{\ul{\V{b}}_t\in N\}=0$, for every Borel set $N\subset\RR^\nu$ with $\lambda^\nu(N)=0$,
$T^V_{\kappa,t}$ is also well-defined on the usual equivalence classes of functions in $\sV$.

In the following proposition we study the action of $T^V_{\kappa,t}$ in the spaces 
$L^p(\RR^\nu,\sF)$. In what follows we shall write
$\|\cdot\|_p$ both for the norm on $L^p(\RR^\nu,\sF)$ and on $L^p(\RR^\nu)$. Likewise,
$\|\cdot\|_{p,q}$ denotes both the operator norm on $\LO(L^p(\RR^\nu,\sF),L^q(\RR^\nu,\sF))$
and on $\LO(L^p(\RR^\nu),L^q(\RR^\nu))$, if $1\le p\le q\le\infty$. This should not cause any
confusion. As usual, $p'$ is the exponent conjugate to $p$ and $\infty^{-1}:=0$, $\infty/q:=\infty$, 
for $q\in(0,\infty)$. If $1\le p<\infty$, then $\langle\cdot,\cdot\cdot\rangle_{p,p'}$ stands for
the dual pairing between $L^p(\RR^\nu,\sF)$ and $L^{p'}(\RR^\nu,\sF)$.

The singularity at $t=0$ of the right hand side of \eqref{LpLq} in the next proposition is the same as
for Schr\"{o}dinger semi-groups without coupling to quantized fields \cite{Carmona1979,Simon1982}.

\begin{prop}\label{propT}
Let $\kappa\in\NN\cup\{\infty\}$ and $1\le p\le q\le\infty$. Suppose that
$F:\RR^\nu\to\RR$ is Lipschitz continuous with Lipschitz constant $L\ge0$ and
$\Psi:\RR^\nu\to\sF$ is measurable such that $e^F\Psi\in L^p(\RR^\nu,\sF)$.
Then the following holds:
\begin{enumerate}[leftmargin=*]
\item[{\rm(1)}] If $p>1$, then the expectation in \eqref{defTV} is absolutely convergent for all $t\ge0$ 
and $\ul{\V{x}}\in\RR^\nu$. Furthermore, $e^FT^V_t\Psi\in L^q(\RR^\nu,\sF)$, for all $t>0$, and
\begin{align}\label{LpLq}
\|e^FT_{\kappa,t}^V\Psi\|_q&\le c_{\nu,p,q}
\frac{e^{c_{p,q}L^2t+c_{p,q,V_-}t+c_{p,q}\ee^4N^3(1\vee t)
+c_{p,q}A(\ee^2,N,t)}}{t^{\nu(p^{-1}-q^{-1})/2}}\|e^F\Psi\|_p.
\end{align}
Here the constant $c_{p,q,V_-}\ge0$ satisfies $c_{p,q,0}=0$ and 
$A$ is defined in \eqref{defAeps}.
\item[{\rm(2)}] If $p=1$ and $t>0$, then the expectation in \eqref{defTV} is absolutely convergent for
a.e. $\ul{\V{x}}\in\RR^\nu$, we again have $e^FT^V_t\Psi\in L^q(\RR^\nu,\sF)$, and \eqref{LpLq}
still holds true.
\item[{\rm(3)}] For $\tau_2\ge\tau_1>0$,
\begin{align}\label{convT}
\sup_{t\in[\tau_1,\tau_2]}\|T_{\kappa,t}^V-T_{\infty,t}^V\|_{p,q}\xrightarrow{\;\;\kappa\to\infty\;\;}0.
\end{align}
The convergence is uniform as $\ee$ varies in a compact subset of $\RR$ and as $V$ varies
in a set of Kato decomposable potentials satisfying \eqref{Katounif} with
fixed $A_{\tilde{p},\tau_2}>0$.
\item[{\rm(4)}] If $p>1$ and $\Psi\in L^p(\RR^\nu,\sF)$, then the following Markov property holds:
For fixed $s,t\ge0$, and $\ul{\V{x}}\in\RR^\nu$, 
\begin{align}\label{Markov1}
\EE^{\fF_t}\big[W_{\kappa,s+t}^V(\ul{\V{x}})^*\Psi(\ul{\V{x}}+\ul{\V{b}}_{s+t})\big]
&=W_{\kappa,t}^V(\ul{\V{x}})^*(T_{\kappa,s}^V\Psi)(\ul{\V{x}}+\ul{\V{b}}_t),\quad\text{$\PP$-a.s.}
\end{align}
\item[{\rm(5)}] $(T_{\kappa,t}^V)_{t\ge0}$ defines a semi-group on $L^p(\RR^\nu,\sF)$.
\item[{\rm(6)}] If $p<\infty$, then the semi-group $(T_{\kappa,t}^V)_{t\ge0}$ is strongly continuous
in $L^p(\RR^\nu,\sF)$ and the following self-adjointness relation is satisfied,
\begin{align}\label{TVsad}
\langle T_{\kappa,t}^V\Phi,\Psi\rangle_{p,p'}&=\langle \Phi,T_{\kappa,t}^V\Psi\rangle_{p,p'},\quad
\Phi\in L^p(\RR^\nu,\sF),\,\Psi\in L^{p'}(\RR^\nu,\sF).
\end{align}
\end{enumerate}
\end{prop}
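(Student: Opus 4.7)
The plan is to prove (1)--(3) via a single Hölder-plus-Young argument on the pointwise operator bound, obtain (4) from the cocycle identity \eqref{Markov0} combined with the $\fF_t$-independence of the time-shifted Weyl integrand (Rem.~\ref{remW}(2)), deduce (5) by the tower property, and finally treat (6): strong continuity by dominated convergence on a dense subspace, and self-adjointness via the pathwise reversal identity \eqref{revW} together with Brownian time-reversal on $[0,t]$.

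For (1) with $p>1$, I would apply Hölder in $\PP$ with conjugate exponents $p'$ and $p$ to the pointwise estimate $\|W_{\kappa,t}^V(\ul{\V{x}})^*\Psi(\ul{\V{x}}+\ul{\V{b}}_t)\|_\sF\le\|W_{\kappa,t}^V(\ul{\V{x}})\|\,\|\Psi(\ul{\V{x}}+\ul{\V{b}}_t)\|_\sF$ to obtain
\begin{align*}
\|(T_{\kappa,t}^V\Psi)(\ul{\V{x}})\|_\sF\le\EE[\|W_{\kappa,t}^V(\ul{\V{x}})\|^{p'}]^{1/p'}(G_t*\|\Psi\|_\sF^p)(\ul{\V{x}})^{1/p},
\end{align*}
where $G_t$ is the density of $\ul{\V{b}}_t$. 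The first factor is uniformly bounded in $\ul{\V{x}}$ by \eqref{bdWp}, producing the exponents $c_{p,q}\ee^4N^3(1\vee t)+c_{p,q}A(\ee^2,N,t)+c_{p,q,V_-}t$ in \eqref{LpLq}. The Lipschitz weight $e^F$ is transported through the convolution via $e^{F(\ul{\V{x}})}\le e^{F(\ul{\V{y}})}e^{L|\ul{\V{x}}-\ul{\V{y}}|}$ combined with the elementary estimate $\int e^{L|\V{z}|}G_t(\V{z})\,\Id\V{z}\le c_\nu e^{cL^2t}$, yielding the $e^{c_{p,q}L^2 t}$ factor. Young's inequality $\|G_t*g\|_{q/p}\le\|G_t\|_{q/p}\|g\|_1$ together with $\|G_t\|_{q/p}^{1/p}\le c_{\nu,p,q}t^{-\nu(1/p-1/q)/2}$ then gives \eqref{LpLq}. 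For (2) the case $p=1$ is the $L^1\to L^\infty$ endpoint, handled by Fubini and the sup-bound $G_t\le c_\nu t^{-\nu/2}$ combined with the uniform pointwise control of $\EE[\|W_{\kappa,t}^V(\ul{\V{x}})\|]$ from \eqref{bdWp}; Riesz--Thorin interpolation against the trivial $L^q\to L^q$ bound supplies the intermediate exponents. For (3) the same Hölder--Young argument applied to $W_{\kappa,t}^V-W_{\infty,t}^V$ replaces the probabilistic factor by $\EE[\|W_{\kappa,t}^V(\ul{\V{x}})-W_{\infty,t}^V(\ul{\V{x}})\|^{p'}]^{1/p'}$, which Proposition~\ref{propconvW} makes vanish uniformly as $\kappa\to\infty$ in the stated sense.

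For (4) I would take adjoints in \eqref{Markov0} (with $\ul{\V{q}}=\ul{\V{x}}$) to get $W_{\kappa,s+t}^V(\ul{\V{x}})^*=W_{\kappa,t}^V(\ul{\V{x}})^*W_{\kappa,s}^V[\ul{\V{x}}+\ul{\V{b}}_t,{}^t\ul{\V{b}}]^*$. The $\fF_t$-measurable factor $W_{\kappa,t}^V(\ul{\V{x}})^*$ pulls out of $\EE^{\fF_t}$, and the remaining quantity $W_{\kappa,s}^V[\ul{\V{x}}+\ul{\V{b}}_t,{}^t\ul{\V{b}}]^*\Psi(\ul{\V{x}}+\ul{\V{b}}_t+{}^t\ul{\V{b}}_s)$ is a Borel function of the $\fF_t$-measurable $\ul{\V{q}}:=\ul{\V{x}}+\ul{\V{b}}_t$ and the $\fF_t$-independent ${}^t\ul{\V{b}}$ (Rem.~\ref{remW}(2)). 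The standard conditioning lemma for independent data identifies $\EE^{\fF_t}[\cdots]$ with $g(\ul{\V{q}})$, where $g(\ul{\V{y}})=\EE[W_{\kappa,s}^V[\ul{\V{y}},{}^t\ul{\V{b}}]^*\Psi(\ul{\V{y}}+{}^t\ul{\V{b}}_s)]=(T_{\kappa,s}^V\Psi)(\ul{\V{y}})$ since ${}^t\ul{\V{b}}$ has the same law as $\ul{\V{b}}$. This is \eqref{Markov1}. Part (5) is then immediate: by the tower property and (4),
\begin{align*}
(T_{\kappa,s+t}^V\Psi)(\ul{\V{x}})=\EE[\EE^{\fF_t}[W_{\kappa,s+t}^V(\ul{\V{x}})^*\Psi(\ul{\V{x}}+\ul{\V{b}}_{s+t})]]=\EE[W_{\kappa,t}^V(\ul{\V{x}})^*(T_{\kappa,s}^V\Psi)(\ul{\V{x}}+\ul{\V{b}}_t)],
\end{align*}
which equals $(T_{\kappa,t}^V T_{\kappa,s}^V\Psi)(\ul{\V{x}})$.

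For (6), strong continuity on $L^p(\RR^\nu,\sF)$, $p<\infty$: on the dense subspace of $\Psi$ of the form $\sum_j\varphi_j(\cdot)\zeta(h_j)$ with $\varphi_j\in C_c(\RR^\nu)$ and $h_j\in\HP$, Rem.~\ref{remW}(6) gives pointwise continuity $W_{\kappa,t}(\ul{\V{x}})\Psi(\ul{\V{x}}+\ul{\V{b}}_t)\to\Psi(\ul{\V{x}})$ as $t\downarrow0$ (using $W_{\kappa,0}=\id$ from \eqref{Wexpvec}), with uniform $L^p$-majorant from \eqref{bdWp}; dominated convergence followed by density gives strong continuity on all of $L^p(\RR^\nu,\sF)$. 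For self-adjointness \eqref{TVsad}, I would work first with bounded $V$ and finite $\kappa$: Fubini and the substitution $\ul{\V{y}}=\ul{\V{x}}+\ul{\V{b}}_t$ yield
\begin{align*}
\langle\Phi,T_{\kappa,t}^V\Psi\rangle_{p,p'}=\EE\int\SPn{W_{\kappa,t}^V[\ul{\V{y}}-\ul{\V{b}}_t,\ul{\V{b}}]\Phi(\ul{\V{y}}-\ul{\V{b}}_t)}{\Psi(\ul{\V{y}})}\Id\ul{\V{y}}.
\end{align*}
Setting $\ul{\V{\alpha}}_s:=\ul{\V{b}}_{t-s}-\ul{\V{b}}_t$ for $s\in[0,t]$, the process $\ul{\V{\alpha}}$ on $[0,t]$ is again a Brownian motion with $\ul{\V{\alpha}}_t=-\ul{\V{b}}_t$; the pathwise reversal identity \eqref{revW} applied with this $\ul{\V{\alpha}}$ and $\ul{\V{x}}=\ul{\V{y}}$ gives $W_{\kappa,t}^V[\ul{\V{y}}-\ul{\V{b}}_t,\ul{\V{b}}]=W_{\kappa,t}^V[\ul{\V{y}},\ul{\V{\alpha}}]^*$. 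Since the joint law of $(\ul{\V{\alpha}},\ul{\V{\alpha}}_t)$ on $[0,t]$ agrees with that of $(\ul{\V{b}},\ul{\V{b}}_t)$, the right-hand side above equals $\langle T_{\kappa,t}^V\Phi,\Psi\rangle_{p,p'}$. The restrictions to bounded $V$ and finite $\kappa$ are removed by truncating $V_\pm\wedge n$ and using (1)-based dominated convergence together with part (3). The main technical obstacle is this time-reversal step for self-adjointness: one must verify the relevant joint-distribution identity carefully before the pathwise identity \eqref{revW} can be used; the remaining assertions are mechanical combinations of the pointwise and exponential bounds already at hand.
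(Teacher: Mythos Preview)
Your arguments for Parts (1), (4), (5) for $p>1$, and (6) are essentially correct and match the paper. The gap is in Part (2), and hence also in Part (3) for $p=1$: the $L^1\to L^\infty$ endpoint cannot be obtained by ``Fubini and the sup-bound $G_t\le c_\nu t^{-\nu/2}$ combined with the uniform pointwise control of $\EE[\|W_{\kappa,t}^V(\ul{\V{x}})\|]$''. The norm $\|W_{\kappa,t}^V(\ul{\V{x}})\|$ depends on the entire path $(\ul{\V{b}}_s)_{s\le t}$ and is neither independent of $\ul{\V{b}}_t$ nor $\PP$-essentially bounded, so the expectation $\EE[\|W\|\cdot\|\Psi(\ul{\V{x}}+\ul{\V{b}}_t)\|_\sF]$ does not factor, and \eqref{bdWp} is vacuous when $p'=\infty$. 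Riesz--Thorin does not help either: it still requires one $p=1$ endpoint as input.

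The paper's cure is the very time-reversal device you already use for \eqref{TVsad}: pair against a test function $f\in L^{q'}$, substitute $\ul{\V{x}}\mapsto\ul{\V{x}}-\ul{\V{b}}_t$, and invoke \eqref{revW} to rewrite $\|W_{\kappa,t}^V[\ul{\V{x}}-\ul{\V{b}}_t,\ul{\V{b}}]\|=\|W_{\kappa,t}^V[\ul{\V{x}},\ul{\V{b}}_{t-\bullet}-\ul{\V{b}}_t]\|$. After this move $\|\Psi_F(\ul{\V{x}})\|_\sF$ is deterministic while $W$ has a \emph{fixed} starting point, so H\"older in $\PP$ with finite exponents applies, and one replaces the time-reversed path by $\ul{\V{b}}$ via equidistribution to recover \eqref{bdWp}. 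This yields $L^1\to L^q$ for $q<\infty$ directly; the endpoint $q=\infty$ is then obtained afterwards from the semi-group factorisation $T_{\kappa,t}^V=T_{\kappa,t/2}^V T_{\kappa,t/2}^V$. Two minor points you should also address: \eqref{Markov0} is stated only for bounded $V$, so (4) needs a truncation $V\to V_n^m$ and dominated convergence; and (5) for $p=1$ follows from the duality \eqref{TVsad}, not from (4).
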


\begin{proof}
{\em Step~1.} Let $1<p\le q\le\infty$ and $\Psi_F:=e^F\Psi\in L^p(\RR^\nu,\sF)$. 
Suppose first that $p<\infty$ in addition. Then
\begin{align}\nonumber
e^{F(\ul{\V{x}})}&\EE\Big[\|W_{\kappa,t}^V(\ul{\V{x}})^*\|\|\Psi(\ul{\V{x}}+\ul{\V{b}}_t)\|_\sF\Big]
\\\label{anna1}
&\le\EE\big[e^{2Lp'|\ul{\V{b}}_t|}\big]^{\nf{1}{2p'}}\sup_{\ul{\V{y}}\in\RR^\nu}
\EE\big[\|W_{\kappa,t}^V(\ul{\V{y}})\|^{2p'}\big]^{\nf{1}{2p'}}
\EE\big[\|\Psi_F(\ul{\V{x}}+\ul{\V{b}}_t)\|_\sF^p\big]^{\nf{1}{p}},
\end{align}
for all $\ul{\V{x}}\in\RR^\nu$. On account of Rem.~\ref{remW}(4), the $L^q(\RR^\nu)$-norm of the
right hand side of \eqref{anna1} is less than or equal to some 
$(t,p',\ee,V_-,N)$-dependent constant (having the form of the numerator in \eqref{LpLq}) times
$$
c_{\nu,p}\big\|e^{t\Delta/2}\|\Psi_F(\cdot)\|_\sF^p\big\|_{\nf{q}{p}}^{\nf{1}{p}}
\le c_{\nu,p,q}t^{-\nu(p^{-1}-q^{-1})/2} \big\|\|\Psi_F(\cdot)\|_\sF^p\big\|_{1}^{\nf{1}{p}}.
$$
This proves Part~(1) for $p\in(1,\infty)$. For $p=q=\infty$, \eqref{anna1} is still valid if the last 
expectation in the second line is replaced by $\|\Psi_F\|_\infty$, which again leads to \eqref{LpLq}.

\smallskip

\noindent{\em Step~2.} Next, we consider the case $1=p\le q<\infty$, assuming that $V$ is bounded
for a start. Let $\kappa\in\NN$, $t>0$, and $f\in L^{q'}(\RR^3)$ be non-negative. 
If also $q>1$, then \eqref{revW} implies
\begin{align*}
&\int_{\RR^\nu} f(\ul{\V{x}})e^{F(\ul{\V{x}})}
\EE\big[\|W_{\kappa,t}^V(\ul{\V{x}})^*\|\|\Psi(\ul{\V{x}}+\ul{\V{b}}_t)\|_\sF\big]\Id\ul{\V{x}}
\\
&\le\EE\bigg[e^{L|\ul{\V{b}}_t|}\int_{\RR^\nu}f(\ul{\V{x}})
\big\|W_{\kappa,t}^V[\ul{\V{x}}+\ul{\V{b}}_t,\ul{\V{b}}_{t-\bullet}-\ul{\V{b}}_t]\big\|
\|\Psi_F(\ul{\V{x}}+\ul{\V{b}}_t)\|_\sF\Id\ul{\V{x}}\bigg]
\\
&=\EE\bigg[e^{L|\ul{\V{b}}_t|}\int_{\RR^\nu}f(\ul{\V{x}}-\ul{\V{b}}_t)
\big\|W_{\kappa,t}^V[\ul{\V{x}},\ul{\V{b}}_{t-\bullet}-\ul{\V{b}}_t]\big\|\|\Psi_F(\ul{\V{x}})\|_\sF
\Id\ul{\V{x}}\bigg]
\\
&\le\EE\big[e^{2qL|\ul{\V{b}}_t|}\big]^{\nf{1}{2q}}
\sup_{\ul{\V{y}}\in\RR^\nu}\EE\big[\|W_{\kappa,t}^V[\ul{\V{y}},\ul{\V{b}}_{t-\bullet}
-\ul{\V{b}}_t]\|^{2q}\big]^{\nf{1}{2q}}
\Big\{\sup_{\ul{\V{z}}\in\RR^\nu}\EE\big[f(\ul{\V{z}}-\ul{\V{b}}_t)^{q'}\big]^{\nf{1}{q'}}\Big\}\|\Psi_F\|_1
\\
&\le c_{\nu,q}e^{2qL^2t}\sup_{\ul{\V{y}}\in\RR^\nu}
\EE\big[\|W_{\kappa,t}^V[\ul{\V{y}},\ul{\V{b}}]\|^{2q}\big]^{\nf{1}{2q}}
\Big\{\|e^{t\Delta/2}\|_{1,\infty}^{\nf{1}{q'}}\|f^{q'}\|_1^{\nf{1}{q'}}\Big\}\|\Psi_F\|_1
\\
&\le c_{\nu,q}e^{2qL^2t+c_q\ee^4N^3(1\vee t)+c_qA(\ee^2,N,t)+c_{q,V_-}t}
t^{-\nu(1-q^{-1})/2}\|f\|_{q'}\|e^F\Psi\|_1=:C(f,\Psi),
\end{align*}
where we used the fact that the processes $(\ul{\V{b}}_s)_{s\in[0,t]}$ and 
$(\ul{\V{b}}_{t-s}-\ul{\V{b}}_t)_{s\in[0,t]}$ have the same distribution in the penultimate step. 
In the last step we applied \eqref{bdWp}.
If we replace the two terms in the big curly brackets $\{\cdots\}$ by $\|f\|_\infty$, 
then the above estimation is valid in the case $1=p=q$ as well. To include the case $\kappa=\infty$
we put $f_n:=1_{B_n}(n\wedge f)$, where $B_n$ is the open ball of radius $n\in\NN$ about
$0$ in $\RR^\nu$, and define random variables 
$g_n(\ul{\V{x}}):=n\wedge\|\Psi(\ul{\V{x}}+\ul{\V{b}}_t)\|_\sF$,
for all $\ul{\V{x}}\in\RR^\nu$ and $n\in\NN$. Then \eqref{convW} implies
\begin{align*}
\int_{\RR^\nu}&f_n(\ul{\V{x}})e^{F(\ul{\V{x}})}
\EE\big[\|W_{\infty,t}^V(\ul{\V{x}})^*\|g_n(\ul{\V{x}})\big]\Id\ul{\V{x}}
\\
&=\lim_{\kappa\to\infty}\int_{\RR^\nu}f_n(\ul{\V{x}})e^{F(\ul{\V{x}})}
\EE\big[\|W_{\kappa,t}^V(\ul{\V{x}})^*\|g_n(\ul{\V{x}})\big]\Id\ul{\V{x}}
\le C(f,\Psi),\quad n\in\NN.
\end{align*}
Hence, by monotone convergence,
\begin{align}\label{rudi13}
\int_{\RR^\nu} f(\ul{\V{x}})e^{F(\ul{\V{x}})}
\EE\big[\|W_{\kappa,t}^V(\ul{\V{x}})^*\|\|\Psi(\ul{\V{x}}+\ul{\V{b}}_t)\|_\sF\big]\Id\ul{\V{x}}\le C(f,\Psi),
\quad\kappa\in\NN\cup\{\infty\}.
\end{align}
If $V$ is unbounded, then we apply \eqref{rudi13} to $V_n^m:=(m\wedge V_+)-(n\wedge V_-)$,
observing that $C(f,\Psi)$ can be chosen independently of $m,n\in\NN$.
Then we pass to the limit $m\to\infty$ by dominated convergence, and to the limit $n\to\infty$
with the help of the monotone convergence theorem.
Altogether this shows that the integral in \eqref{defTV} is absolutely convergent 
for a.e. $\ul{\V{x}}\in\RR^\nu$ and proves \eqref{LpLq} for $1=p\le q<\infty$.

\smallskip

\noindent{\em Step~3.} Next, we prove Part~(3), first under the extra condition $p>1$. To this end
we just have to replace $W_{\kappa}^V$ by the difference $W_{\kappa}^V-W_\infty^V$ in
Step~1 and apply \eqref{convW}.

If instead $1=p\le q<\infty$ and if $V$ is bounded, then we obtain, as in the beginning of Step~2,
\begin{align}\nonumber
\int_{\RR^\nu} &f_n(\ul{\V{x}})\EE\big[\|W_{\kappa,t}^V(\ul{\V{x}})^*
-W_{\tilde{\kappa},t}^V(\ul{\V{x}})^*\|g_n(\ul{\V{x}})\big]\Id\ul{\V{x}}
\\\label{anne2}
&\le c_{\nu,q}t^{-\nu(1-q^{-1})/2}\sup_{\ul{\V{y}}\in\RR^\nu}
\EE\big[\|W_{\kappa,t}^V[\ul{\V{y}},\ul{\V{b}}]-
W_{\tilde{\kappa},t}^V[\ul{\V{y}},\ul{\V{b}}]\|^{2q}\big]^{\nf{1}{2q}}
\|f\|_{q'}\|\Psi\|_1,
\end{align}
for all $n,\kappa,\tilde{\kappa}\in\NN$, $\Psi\in L^1(\RR^\nu,\sF)$, and non-negative 
$f\in L^{q'}(\RR^\nu)$. By virtue of \eqref{convW} we first conclude that \eqref{anne2} is
available for $\tilde{\kappa}=\infty$ and $n,\kappa\in\NN$, too. After that we employ the 
monotone convergence theorem to pass to the limit $n\to\infty$ in \eqref{anne2} with
$\tilde{\kappa}=\infty$ and $\kappa\in\NN$. If $V$ is possibly unbounded and we apply this
procedure to every $V_n^m$ defined as in Step~2, then this results in
\begin{align}\nonumber
\sup_{t\in[\tau_1,\tau_2]}\int_{\RR^\nu} f(\ul{\V{x}})\EE\Big[\|W_{\kappa,t}^{V_n^m}(\ul{\V{x}})^*
-W_{\infty,t}^{V_n^m}(\ul{\V{x}})^*\|&\|\Psi(\ul{\V{x}}+\ul{\V{b}}_t)\|_\sF\Big]\Id\ul{\V{x}}
\\\label{anne278}
&\le o(\kappa)
\|f\|_{q'}\|\Psi\|_1,\quad\kappa\to\infty,
\end{align}
for all $m,n\in\NN$.
Here the little-$o$ symbol depends on $\tau_2\ge\tau_1>0$, $\nu$, and $q$. According to
Prop.~\ref{propconvW} it is, however, independent of $\ee$, when $\ee$ varies in a compact set, 
and it is independent of $V$, when $V$ varies in a set of Kato decomposable potentials as described 
in the statement of Prop.~\ref{propconvW}. In particular, $o(\kappa)$ is independent of $m$ and $n$.
Therefore, we may first pass to the limit $m\to\infty$ and after that to the $n\to\infty$ on the left hand
side of \eqref{anne278} by the same arguments as in the end of Step~2.
The resulting bound permits to get \eqref{convT} for $1=p\le q<\infty$.

We postpone the case $p=1$, $q=\infty$ to Step~6.

\smallskip

\noindent{\em Step~4.} For $\kappa\in\NN$ and bounded $V$, the self-adjontness relation 
\eqref{TVsad} follows upon substituting $\ul{\V{\alpha}}:=\ul{\V{b}}(\gamma)$ in \eqref{symW0}, 
for all $\gamma\in\Omega$, and taking the expectation of the so-obtained formula. 
It can be extended to unbounded $V$ by inserting the potentials $V_n^n$, $n\in\NN$, defined as 
in Step~2 and employing the dominated convergence theorem.
All necessary integrability properties are assured by Steps~1 and~2. 
Thanks to the by now available special case $p=q$ of Part~(3) we may then pass
to the limit $\kappa\to\infty$ in \eqref{TVsad}.

\smallskip

\noindent{\em Step~5.} If $p>1$ and $\Psi\in L^p(\RR^\nu,\sF)$, then \eqref{Markov1} 
with $V$ replaced by $V_n^n$, $n\in\NN$, defined as in Step~2 follows from
Prop.~\ref{propMarkovW} and the Markov property of Brownian motion.
In view of Def.~\ref{defW}(3), 
$W_{\kappa,t}^{V_n^n}(\ul{\V{x}})^*\to W_{\kappa,t}^{V}(\ul{\V{x}})^*$ in $\LO(\sF)$ and on
$\Omega$, as $n\to\infty$. The estimates of Step~1 and the dominated convergence theorem 
further ensure that $(T_{\kappa,t}^{V_n^n}\Psi)(\ul{\V{z}})\to(T_{\kappa,t}^{V}\Psi)(\ul{\V{z}})$, for
all $\ul{\V{z}}\in\RR^\nu$. Since 
$W_{\kappa,s+t}^{V_n^n}(\ul{\V{x}})^*\Psi(\ul{\V{x}}+\ul{\V{b}}_{s+t})
\to W_{\kappa,s+t}^V(\ul{\V{x}})^*\Psi(\ul{\V{x}}+\ul{\V{b}}_{s+t})$ in $L^1(\Omega,\sF;\PP)$
(by Step~1 and dominated convergence) and since the vector-valued conditional expectation
$\EE^{\fF_t}$ is a contractive projection on $L^1(\Omega,\sF;\PP)$, it finally follows that
$\EE^{\fF_t}[W_{\kappa,s+t}^{V_n^n}(\ul{\V{x}})^*\Psi(\ul{\V{x}}+\ul{\V{b}}_{s+t})]
\to\EE^{\fF_t}[W_{\kappa,s+t}^V(\ul{\V{x}})^*\Psi(\ul{\V{x}}+\ul{\V{b}}_{s+t})]$, $\PP$-a.s. along
a subsequence. Altogether this proves \eqref{Markov1}.

Taking the expectation of \eqref{Markov1} we see that 
$(T^V_{\kappa,t}(T_{\kappa,s}^V\Psi))(\ul{\V{x}})=(T_{\kappa,s+t}^V\Psi)(\ul{\V{x}})$, for all
$s,t\ge0$ and $\ul{\V{x}}\in\RR^\nu$. In particular, $(T^V_{\kappa,t})_{t\ge0}$ is a semi-group in
$L^p(\RR^\nu,\sF)$. By the duality relation \eqref{TVsad} it is a semi-group
in $L^1(\RR^\nu,\sF)$, too.

\smallskip

\noindent{\em Step~6.} We can now prove \eqref{LpLq} and Part~(3) in the case $p=1$, $q=\infty$,
not yet covered so far. Let $\kappa\in\NN\cup\{\infty\}$, $t>0$, and $s:=t/2$. Then we write
$T_{\kappa,t}^V=T_{\kappa,s}^VT_{\kappa,s}^V$ and apply Step~1 (with
$p=2$, $q=\infty$) to the left and Step~2 (with $p=1$, $q=2$) to the right factor 
$T_{\kappa,s}^V$, which completes the proof of Part~(2). It is now clear that
\begin{align*}
\|T_{\kappa,t}^V-T_{\infty,t}^V\|_{1,\infty}&\le\|T_{\kappa,s}^V-T_{\infty,s}^V\|_{2,\infty}
\|T_{\kappa,s}^V\|_{1,2}+\|T_{\infty,s}^V\|_{2,\infty}\|T_{\kappa,s}^V-T_{\infty,s}^V\|_{1,2},
\end{align*}
where the left hand side goes to zero, as $\kappa\to\infty$, by Steps~1, 2, and~3.

\smallskip

\noindent{\em Step~7.} On account of the semi-group properties, it only remains to show the 
asserted strong continuity at $t=0$. So assume
that $p\in[1,\infty)$. Let $\Psi:\RR^\nu\to\sF$ be a linear combination of vectors of the form
$f\zeta(h)$ with $h\in\HP$ and $f\in C_0^\infty(\RR^\nu)$. Then it follows from
the continuity of the processes $u_\kappa^N(\ul{\V{x}})$ and $U^{N,\pm}_{\kappa}(\ul{\V{x}})$ 
and from the formula \eqref{Wadjexpvec} that 
$W_{\kappa,t}^V(\ul{\V{x}})^*\Psi(\ul{\V{x}}+\ul{\V{b}}_t)\to\Psi(\ul{\V{x}})$, $t\downarrow0$,
on $\Omega$. At the same time, if $F(\ul{\V{x}}):=|\ul{\V{x}}|$, then 
$\ul{\V{x}}\mapsto e^{-F(\ul{\V{x}})}\sup_{s\le 1}(e^{|\ul{\V{b}}_s|}
\|W_{\kappa,s}^V(\ul{\V{x}})\|+1)\|e^F\Psi\|_\infty$ 
is a dominating function for every map
$\ul{\V{x}}\mapsto W_{\kappa,t}^V(\ul{\V{x}})^*\Psi(\ul{\V{x}}+\ul{\V{b}}_t)-\Psi(\ul{\V{x}})$ 
with $t\in(0,1]$, that belongs to $L^p(\RR^\nu\times\Omega,\lambda^\nu\otimes\PP)$ as a
consequence of \eqref{bdskuno} and \eqref{bdWp}. Since 
$$
\int_{\RR^\nu}\big\|\EE[\Phi(\ul{\V{x}})]\big\|^p\Id\ul{\V{x}}
\le\int_{\RR^\nu}\EE\big[\|\Phi(\ul{\V{x}})\|^p\big]\Id\ul{\V{x}},
$$
for every measurable $\Phi:\RR^\nu\to\sF$ with $\|\Phi(\ul{\V{x}})\|\in L^p(\PP)$, a.e. $\ul{\V{x}}$,
these remarks imply that $T^V_{\kappa,t}\Psi\to\Psi$, $t\downarrow0$, in $L^p(\RR^\nu,\sF)$.
Since $\Psi$ can be chosen in a dense subset of $L^p(\RR^\nu,\sF)$ and since
$\sup_{t\in[0,1]}\|T^V_{\kappa,t}\|_{p,p}<\infty$ in view of \eqref{LpLq}, this proves that
$(T^V_{\kappa,t})_{t\ge0}$ is strongly continuous (at $t=0$) in $L^p(\RR^\nu,\sF)$.
\end{proof}

The next remark turns out to be convenient at the end of the proof of Thm.~\ref{thmnonFockIRconv}.

\begin{rem}\label{remTVsadpq}
The relation \eqref{TVsad} can be generalized as follows:
Let $t>0$, $\kappa\in\NN\cup\{\infty\}$, $p\in[1,\infty]$, $q\in[p,\infty]$,
$\Phi\in L^{p}(\RR^\nu,\sF)$, and $\Psi\in L^{q'}(\RR^\nu,\sF)$. Then
\begin{align}\label{TVsadpq}
\langle T_{\kappa,t}^V\Phi,\Psi\rangle_{q,q'}&=\langle\Phi,T_{\kappa,t}^V\Psi\rangle_{p,p'}.
\end{align}

In fact, in view of \eqref{TVsad} it only remains to verify \eqref{TVsadpq} for $q'<\infty$.
In this case we set $\Psi_n:=1_{\{\|\Psi(\cdot)\|_{\sF}\le n\}}\Psi_n$, so that
$\Psi_n\in L^{q'}(\RR^\nu,\sF)\cap L^\infty(\RR^\nu,\sF)$, thus
$\Psi_n\in L^{p'}(\RR^\nu,\sF)$, for all $n\in\NN$. Then \eqref{TVsad} implies
$\langle T_{\kappa,t}^V\Phi,\Psi_n\rangle_{p,p'}=\langle\Phi,T_{\kappa,t}^V\Psi_n\rangle_{p,p'}$
However, since $T_{\kappa,t}^V\Phi\in L^q(\RR^\nu,\sF)$, the relation
$\langle T_{\kappa,t}^V\Phi,\Psi_n\rangle_{p,p'}=\langle T_{\kappa,t}^V\Phi,\Psi_n\rangle_{q,q'}$
holds by definition of the dual pairing. Now it suffices to observe that, as $n\to\infty$,
$\Psi_n\to\Psi$ in $L^{q'}(\RR^\nu,\sF)$ and $T_{\kappa,t}^V\Psi_n\to T_{\kappa,t}^V\Psi$
in $L^{p'}(\RR^\nu,\sF)$ by Prop.~\ref{propT}(1).
\end{rem}

\begin{cor}\label{corLpspec}
Let $p\in[1,\infty)$. Then we find constants $c_p>0$, $\tilde{c}_{p,V_-}\ge0$,
depending only on the quantities displayed in their subscripts and with $\tilde{c}_{p,0}=0$, 
such that, for every $\kappa\in\NN\cup\{\infty\}$, the resolvent set of the generator of 
$(T_{\kappa,t}^V)_{t\ge0}$, considered as a $C_0$-semi-group on $L^p(\RR^\nu,\sF)$, 
contains the interval
$$\big(-\infty,-c_p\ee^4N^3-\tilde{c}_{p,V_-}\big).$$
\end{cor}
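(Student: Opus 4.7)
The plan is to deduce the claim directly from the operator-norm estimate \eqref{LpLq} combined with standard $C_0$-semigroup theory. I would apply Prop.~\ref{propT}(1)--(2) with $q=p$ and $F\equiv 0$ (so the Lipschitz constant $L=0$), which yields
\begin{align*}
\|T_{\kappa,t}^V\|_{p,p} &\le c_{\nu,p}\exp\!\big(c_{p,V_-}t+c_p\ee^4N^3(1\vee t)+c_pA(\ee^2,N,t)\big),\quad t>0,
\end{align*}
for every $\kappa\in\NN\cup\{\infty\}$, with $c_{p,0}=0$. The key observation is that $A(\ee^2,N,t)$, defined in \eqref{defAeps}, depends on $t$ only logarithmically through $\ln(1\vee t)$ and $\ln(1+\ee^2N(1\vee t))$, so it is $o(t)$ as $t\to\infty$ with coefficients depending on $\ee$ and $N$. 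By the elementary inequality $\ln t\le\delta t+C_\delta$ valid for any $\delta>0$ and $t>0$, these logarithmic corrections can be absorbed into the linear term at the cost of enlarging the multiplicative prefactor. After such absorption I would obtain, for $c_p$ depending only on $p$ and $\tilde{c}_{p,0}=0$,
\begin{align*}
\|T_{\kappa,t}^V\|_{p,p}&\le M_{p,\ee,N,V_-}\,e^{(c_p\ee^4N^3+\tilde{c}_{p,V_-})t},\quad t\ge0.
\end{align*}

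By Prop.~\ref{propT}(5)--(6), $(T_{\kappa,t}^V)_{t\ge0}$ is a strongly continuous semigroup on $L^p(\RR^\nu,\sF)$. Writing $T_{\kappa,t}^V=e^{-tG_\kappa^V}$ as is natural for Feynman--Kac semigroups (so that $G_\kappa^V$ plays the role of the Hamiltonian on $L^p$ and $-G_\kappa^V$ is the generator in the standard $C_0$-semigroup sense), the Hille--Yosida theorem applied to the above bound implies that for every $\lambda\in\CC$ with $\Re(\lambda)<-(c_p\ee^4N^3+\tilde{c}_{p,V_-})$ the resolvent $(\lambda-G_\kappa^V)^{-1}$ exists as a bounded operator on $L^p(\RR^\nu,\sF)$, with the explicit Laplace-transform representation $-\int_0^\infty e^{\lambda t}T_{\kappa,t}^V\Id t$. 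Restricted to real $\lambda$ this is exactly the asserted inclusion $(-\infty,-c_p\ee^4N^3-\tilde{c}_{p,V_-})\subseteq\rho(G_\kappa^V)$.

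I do not anticipate any substantial obstacle: the argument is an abstract application of $C_0$-semigroup theory on top of the concrete norm bound already established in Prop.~\ref{propT}. The one minor technical point is the absorption of the $\ee,N$-dependent logarithmic corrections from $A(\ee^2,N,t)$ into the linear term without spoiling the pure-$p$ dependence of the leading coefficient $c_p$, which is handled by the standard $\ln t\le\delta t+C_\delta$ trick.
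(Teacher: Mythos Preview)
Your proposal is correct and follows exactly the route the paper takes: its proof is the single sentence ``This is a consequence of the Hille-Yosida theorem, Prop.~\ref{propT}(5)\&(6), and the bound \eqref{LpLq},'' and you have simply spelled out the details of that deduction. Your identification of the absorption of the logarithmic terms in $A(\ee^2,N,t)$ as the only nontrivial point, and your resolution via $\ln t\le\delta t+C_\delta$ with $\delta$ chosen proportional to a power of $\ee^2N$ (so that the extra linear contribution is dominated by $c_p\ee^4N^3t$ while the $\ee,N$-dependence lands in the prefactor $M$), is exactly what is needed to make the one-line proof rigorous.
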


\begin{proof}
This is a consequence of the Hille-Yosida theorem, Prop.~\ref{propT}(5)\&(6), and the bound
\eqref{LpLq}.
\end{proof}

As in the theory of Schr\"{o}dinger semi-groups with Kato decomposable potentials \cite{Simon1982}
we can actually show that the infima of the $L^p$-spectra are $p$-independent.


\subsection{Feynman-Kac formula for the ultra-violet regularized Hamiltonian}\label{ssecFKreg}

The Feynman-Kac formula for finite $\kappa$ asserted in the next theorem is
actually a special case of \cite[Thm.~11.3]{GMM2016}. Since the article loc. cit. also
covers the case of matter particles with spin that are {\em minimally} coupled to a quantized  
radiation field, the proof given there is, however, way more complicated than necessary for the 
Nelson model. For this reason we include the fairly short and simple proof of the following theorem.

\begin{thm}\label{thmFKreg}
Let $\kappa\in\NN$, $\Psi\in L^2(\RR^\nu,\sF)$, and $t\ge0$. Then
\begin{align}\label{FKreg}
(e^{-tH_{N,\kappa}^V-tNE_\kappa^\ren}\Psi)(\ul{\V{x}})&=(T_{\kappa,t}^V\Psi)(\ul{\V{x}}),
\quad\text{a.e. $\ul{\V{x}}\in\RR^\nu$}.
\end{align}
\end{thm}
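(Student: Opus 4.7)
Since both $(T_{\kappa,t}^V)_{t\ge0}$ and $(e^{-t(H_{N,\kappa}^V+NE_\kappa^\ren)})_{t\ge0}$ are strongly continuous semi-groups on $L^2(\RR^\nu,\sF)$---the former by Prop.~\ref{propT} and the latter by self-adjointness and semi-boundedness of $H_{N,\kappa}^V$---the plan is to identify them by matching their infinitesimal generators on a common core. A natural core is $\sD:=\mathrm{span}\{f\otimes\zeta(h):f\in C_0^\infty(\RR^\nu),\,h\in\bigcap_{n\ge1}\dom(\omega^n)\}$, which is dense in $L^2(\RR^\nu,\sF)$ (exponential vectors are total in $\sF$ and $\bigcap_n\dom(\omega^n)$ is dense in $\HP$), contained in $\dom(H_{N,0}^V)=\dom(H_{N,\kappa}^V)$, and a core of $H_{N,\kappa}^V$ because the perturbation $\int^\oplus\vp(f_\kappa^N(\ul{\V{x}}))\Id\ul{\V{x}}$ appearing in \eqref{defNelsonHam} is $H_{N,0}^V$-bounded with relative bound zero.

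For $\Psi=f\otimes\zeta(h)\in\sD$, formula~\eqref{Wadjexpvec} combined with Def.~\ref{defW}(3) gives the explicit representation
\begin{equation*}
(T_{\kappa,t}^V\Psi)(\ul{\V{x}})=\EE\big[f(\ul{\V{x}}+\ul{\V{b}}_t)\,e^{Y_{\kappa,t}(\ul{\V{x}})}\,\zeta\big(e^{-t\omega}h-U_{\kappa,t}^{N,-}(\ul{\V{x}})\big)\big],
\end{equation*}
with $Y_{\kappa,t}(\ul{\V{x}}):=-\int_0^tV(\ul{\V{x}}+\ul{\V{b}}_s)\Id s+u_{\kappa,t}^N(\ul{\V{x}})-\SPn{U_{\kappa,t}^{N,+}(\ul{\V{x}})}{h}_{\HP}$. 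To differentiate at $t=0$ I would expand each factor to first order in $t$ using the dynamics available for finite $\kappa$: Def.~\ref{defu} gives $u_{\kappa,t}^N(\ul{\V{x}})=-tNE_\kappa^\ren+O(t^{3/2})$, the ODEs \eqref{diffUpm} give $U_{\kappa,t}^{N,\pm}(\ul{\V{x}})=tf_\kappa^N(\ul{\V{x}})+O(t^{3/2})$ in $\HP$, the $\zeta$-chain rule follows from \eqref{adexpv} and \eqref{dGammaexpv}, and the Brownian factor $f(\ul{\V{x}}+\ul{\V{b}}_t)$ is handled by It\={o}'s formula, whose drift is $\tfrac12\Delta f\cdot\Id t$ after taking expectation.

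Assembling the terms of order $t$ and then taking $\EE[\cdot]$: the Brownian variation of $f$ yields $\tfrac12\Delta f\cdot\zeta(h)$, the $-\int_0^tV\Id s$ piece yields $-Vf\cdot\zeta(h)$, the $e^{-t\omega}h$ piece inside $\zeta$ produces $-f\cdot\Id\Gamma(\omega)\zeta(h)$ via $\ad(-\omega h)\zeta(h)=-\Id\Gamma(\omega)\zeta(h)$ (cf.~\eqref{dGammaexpv}), and---crucially---the annihilation part $-f\cdot a(f_\kappa^N(\ul{\V{x}}))\zeta(h)$ emerges from the first-order expansion of $-\SPn{U_{\kappa,t}^{N,+}(\ul{\V{x}})}{h}_{\HP}$ in $Y_{\kappa,t}$ via \eqref{aexpv}, while the creation part $-f\cdot\ad(f_\kappa^N(\ul{\V{x}}))\zeta(h)$ comes from the linear-in-$t$ contribution $-tf_\kappa^N(\ul{\V{x}})$ inside the argument of $\zeta$ via \eqref{adexpv}. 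Summed with the $-NE_\kappa^\ren\cdot f\zeta(h)$ contribution from the drift of $u_{\kappa,t}^N$, and combined using \eqref{vpaad}, this identifies the generator of $T_{\kappa,t}^V$ on $\sD$ as exactly $-(H_{N,\kappa}^V+NE_\kappa^\ren)$.

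The main obstacle is the rigorous justification of interchanging $\EE[\cdot]$ with the first-order Taylor expansion of the $\sF$-valued integrand, since the $\zeta$-chain rule forces one to apply the unbounded creation operator $\ad$ along the way and to control the Taylor remainder uniformly in $\ul{\V{x}}$. Restricting $h$ to $\bigcap_n\dom(\omega^n)$ keeps the dynamical orbit $e^{-t\omega}h-U_{\kappa,t}^{N,-}(\ul{\V{x}})$ inside the analytic vectors of $\Id\Gamma(\omega)$ and renders the remainder term integrable with the help of the moment bounds from Sect.~\ref{secBP}. Once the generators agree on the core $\sD$, the Hille-Yosida uniqueness of $C_0$-semi-groups extends \eqref{FKreg} to all $\Psi\in L^2(\RR^\nu,\sF)$.
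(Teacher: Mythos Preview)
Your overall strategy—matching generators on a core—coincides with the paper's, but there is a genuine gap concerning the potential $V$.

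You assert that $\sD\subset\dom(H_{N,0}^V)=\dom(H_{N,\kappa}^V)$ and that $\sD$ is a core. For a merely Kato decomposable $V$ this fails: membership of $f\otimes\zeta(h)$ in the \emph{operator} domain of $H_{N,0}^V$ (which is defined only as a form sum) would require $Vf\in L^2(\RR^\nu)$, hence $V\in L^2_{\loc}$, but $K_\nu^{\loc}$ does not imply $L^2_{\loc}$ (already in $\RR^3$, $|\V{x}|^{-\alpha}$ with $\alpha\in[\tfrac32,2)$ lies in $K_3$ but not in $L^2_{\loc}$). The same obstruction spoils your first-order expansion of $\int_0^tV(\ul{\V{x}}+\ul{\V{b}}_s)\Id s$ at $t=0$. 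The paper circumvents this by proving \eqref{FKreg} first for \emph{bounded} $V$—where essential self-adjointness of $H_{N,\kappa}^V$ on a set like your $\sD$ is indeed available—and then extending to general Kato decomposable $V$ via monotone convergence of the quadratic forms $\mathfrak{q}_{N,\kappa}^{V_n}$ with $V_n$ truncations of $V$.

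Even for bounded $V$, your Taylor expansion at $t=0$ is more delicate than necessary. The paper instead derives the pathwise ODE
\[
\frac{\Id}{\Id t}W_{\kappa,t}(\ul{\V{x}})\zeta(h)=-\big(\Id\Gamma(\omega)+\vp(f_\kappa^N(\ul{\V{x}}+\ul{\V{b}}_t))+NE_\kappa^\ren\big)W_{\kappa,t}(\ul{\V{x}})\zeta(h)
\]
for \emph{all} $t\ge0$ (from \eqref{diffUpm} and the exponential-vector calculus you already invoke), then applies It\={o}'s formula to the \emph{scalar} process $\SPn{W_{\kappa,t}^V(\ul{\V{x}})\phi}{g(\ul{\V{x}}+\ul{\V{b}}_t)\psi}$ and takes expectations to obtain the exact integral identity
\[
(T_{\kappa,t}^V\Psi)(\ul{\V{x}})-\Psi(\ul{\V{x}})=-\int_0^t\big(T_{\kappa,s}^V(H_{N,\kappa}^V+NE_\kappa^\ren)\Psi\big)(\ul{\V{x}})\Id s,\quad t\ge0,
\]
for $\Psi$ in the core. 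The generator is then read off by strong continuity of $(T_{\kappa,s}^V)_{s\ge0}$, with no Taylor remainder estimates or interchange of limits required.
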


\begin{proof}
{\em Step~1.}
Let $\ul{\V{x}}\in\RR^\nu$ and $h\in\dom(\omega)$. 
In view of \eqref{adexpv} and \eqref{Wexpvec} we then have
\begin{align}\nonumber
\frac{\Id}{\Id t}W_{\kappa,t}(\ul{\V{x}})\zeta(h)
&=\ad\big(-\omega e^{-t\omega}h-\tfrac{\Id}{\Id t}U^{N,+}_{\kappa,t}(\ul{\V{x}})\big)
W_{\kappa,t}(\ul{\V{x}})\zeta(h)
\\\label{sigrid1}
&\quad+\big(\tfrac{\Id}{\Id t}u_{\kappa,t}(\ul{\V{x}})
-\SPn{\tfrac{\Id}{\Id t}U^{N,-}_{\kappa,t}(\ul{\V{x}})}{h}\big)W_{\kappa,t}(\ul{\V{x}})\zeta(h),
\quad t\ge0,
\end{align}
where, according to Lem.~\ref{lemUpmcontHP}, \eqref{defUpmN}, Def.~\ref{defu},
and the definition of $f_{\kappa}^N(\ul{\V{x}})$ in \eqref{deffNkappa},
\begin{align}\label{sigrid2}
\frac{\Id}{\Id t}U_{\kappa,t}^{N,-}(\ul{\V{x}})&=e^{-t\omega}f^N_{\kappa}(\ul{\V{x}}+\ul{\V{b}}_{t}),
\\\label{sigrid3}
\frac{\Id}{\Id t}U_{\kappa,t}^{N,+}(\ul{\V{x}})&=-\omega U_{\kappa,t}^{N,+}(\ul{\V{x}})
+f^N_{\kappa}(\ul{\V{x}}+\ul{\V{b}}_{t}),
\\\label{sigrid4}
\frac{\Id}{\Id t}u_{\kappa,t}^N(\ul{\V{x}})&=
\SPn{f^N_{\kappa}(\ul{\V{x}}+\ul{\V{b}}_{t})}{U^{N,+}_{\kappa,t}(\ul{\V{x}})}-NE_\kappa^\ren.
\end{align}
For every $\ul{\V{y}}\in\RR^\nu$, we define a self-adjoint operator 
(with domain $\dom(\Id\Gamma(\omega))$) by
\begin{align*}
\wh{H}_\kappa(\ul{\V{y}})&:=\Id\Gamma(\omega)+\vp(f^N_{\kappa}(\ul{\V{y}})).
\end{align*}
Then \eqref{vpaad}, \eqref{aexpv}, and \eqref{dGammaexpv} imply
\begin{align*}
\wh{H}_\kappa(\ul{\V{x}}+\ul{\V{b}}_t)W_{\kappa,t}(\ul{\V{x}})\zeta(h)
&=\ad(\omega e^{-t\omega}h-\omega U^{N,+}_{\kappa,t}(\ul{\V{x}})
+f^N_{\kappa}(\ul{\V{x}}+\ul{\V{b}}_{t}))W_{\kappa,t}(\ul{\V{x}})\zeta(h)
\\
&\quad+\SPn{f^N_{\kappa}(\ul{\V{x}}+\ul{\V{b}}_{t})}{
e^{-t\omega}h-U^{N,+}_{\kappa,t}(\ul{\V{x}})}W_{\kappa,t}(\ul{\V{x}})\zeta(h),
\end{align*}
for all $t\ge0$. Comparing the previous formula with \eqref{sigrid1}--\eqref{sigrid4} we find
\begin{align}\label{DGLW}
\frac{\Id}{\Id t}W_{\kappa,t}(\ul{\V{x}})\zeta(h)
&=-(\wh{H}_\kappa(\ul{\V{x}}+\ul{\V{b}}_t)+NE_\kappa^\ren)
W_{\kappa,t}(\ul{\V{x}})\zeta(h),\quad t\ge0,\;\;\text{on $\Omega$.}
\end{align}
{\em Step~2.} In this step we assume in addition that $V$ is bounded. Then the process
$(e^{-\int_0^tV(\ul{\V{x}}+\smash{\ul{\V{b}}}_s)\Id s})_{t\ge0}$ 
has absolutely continuous paths whose
derivatives exist at a.e. $t\ge0$ and are given by the obvious formula.
Let $\ul{\V{x}}\in\RR^\nu$, $g\in C_0^\infty(\RR^3,\RR)$, and 
$\phi,\psi\in\tilde{\sE}_{\RR}:=\mathrm{span}_\RR\{\zeta(h):h\in\dom(\omega)\cap\HP_{\RR}\}$.
Then It\={o}'s formula, the symmetry of $\wh{H}_\kappa(\ul{\V{x}}+\ul{\V{b}}_s)$ on 
$\tilde{\sE}_\RR$, and \eqref{defNelsonHam} $\PP$-a.s. imply
\begin{align}\nonumber
\SPn{W_{\kappa,t}^V(\ul{\V{x}})\phi}{g(\ul{\V{x}}+\ul{\V{b}}_t)\psi}&=\SPn{\phi}{g(\ul{\V{x}})\psi}
\\\nonumber
&\quad-\int_0^t\SPb{W_{\kappa,s}^V(\ul{\V{x}})\phi}{((H_{N,\kappa}^V+NE_\kappa^\ren)g\psi)
(\ul{\V{x}}+\ul{\V{b}}_s)}\Id s
\\\label{sigrid5}
&\quad+\int_0^t\SPn{W_{\kappa,s}^V(\ul{\V{x}})\phi}{(\nabla g\psi)(\ul{\V{x}}
+\ul{\V{b}}_s)}\Id\ul{\V{b}}_s,\;\; t\ge0.
\end{align}
On account of \eqref{bdWp} the stochastic integral in the third line of \eqref{sigrid5} is an 
$L^2$-martingale, which permits to get
\begin{align}\label{diff1}
(T_{\kappa,t}^V\Psi)(\ul{\V{x}})-\Psi(\ul{\V{x}})
&=-\int_0^t\big(T_{\kappa,s}^V(H_{N,\kappa}^V+NE_\kappa^\ren)\Psi\big)(\ul{\V{x}})\Id s,
\quad t\ge0,\,\ul{\V{x}}\in\RR^\nu,
\end{align}
for every 
$\Psi\in \sD:=\mathrm{span}_\CC\{g\psi:g\in C_0^\infty(\RR^\nu,\RR),\psi\in\tilde{\sE}_\RR\}$. 
Since, for fixed $t\ge0$,
\begin{align*}
\int_0^t\big(T_{\kappa,s}^V(H_{N,\kappa}^V+NE_\kappa^\ren)\Psi\big)(\ul{\V{x}})\Id s&=
\Big(\int_0^tT_{\kappa,s}^V(H_{N,\kappa}^V+NE_\kappa^\ren)\Psi\Id s\Big)(\ul{\V{x}}),
\quad\text{a.e. $\ul{\V{x}}$,}
\end{align*}
where the integral on the right hand side is a Bochner-Lebesgue integral 
constructed in $L^2(\RR^\nu,\sF)$, we readily infer from \eqref{diff1} and the strong
coninuity of $(T^V_{\kappa,t})_{t\ge0}$ on $L^2(\RR^\nu,\sF)$ that
\begin{align*}
\big\|\tfrac{1}{t}(T_{\kappa,t}^V\Psi-\Psi)-(H_{N,\kappa}^V+NE_\kappa^\ren)\Psi\big\|
&\le\sup_{s\le t}\|(T_{\kappa,s}^V-\id)(H_{N,\kappa}^V+NE_\kappa^\ren)\Psi\|
\xrightarrow{\,t\downarrow0\,}0,
\end{align*}
for every $\Psi\in\sD$. This shows that $H^V_\kappa+NE_\kappa^\ren$ agrees
with the self-adjoint generator of the semi-group $(T_{\kappa,t}^V)_{t\ge0}$ 
on the domain $\sD$. Since $V$ is assumed to be bounded, 
we also know, however, that $H_{N,\kappa}^V+NE_\kappa^\ren$ is essentially self-adjoint on 
$\sD$. Therefore, $H_{N,\kappa}^V+NE_\kappa^\ren$ {\em is} the generator of 
$(T_{\kappa,t}^V)_{t\ge0}$ in $L^2(\RR^\nu,\sF)$, i.e., 
\eqref{FKreg} holds for all $\Psi\in L^2(\RR^\nu,\sF)$.

\smallskip

\noindent{\em Step~3.} Finally, we extend the Feynman-Kac formula \eqref{FKreg} from bounded
measurable $V$ to the general case of Kato decomposable $V$, following a standard procedure. 

Let $\Psi\in L^2(\RR^\nu,\sF)$ and $t>0$ be fixed in the rest of the proof. 

First, we assume in addition that $V$ is bounded from 
below and set $V_n:=n\wedge V$, $n\in\NN$. Then
$\EE[W_{\kappa,t}^{V_n}(\ul{\V{x}})^*\Psi(\ul{\V{x}}+\ul{\V{b}}_t)]
\to\EE[W_{\kappa,t}^{V}(\ul{\V{x}})^*\Psi(\ul{\V{x}}+\ul{\V{b}}_t)]$, $n\to\infty$, 
for all $\ul{\V{x}}\in\RR^\nu$, by dominated convergence and Prop.~\ref{propT}(1), 
while the monotone convergence of the quadratic forms 
$\mathfrak{q}_{N,\kappa}^{V_n}\uparrow\mathfrak{q}_{N,\kappa}^V$ on the 
domain $\dom(\mathfrak{q}_{N,\kappa}^V)$, which is dense in $L^2(\RR^\nu,\sF)$,
implies that $e^{-tH_{N,\kappa}^{V_n}}\Psi\to e^{-tH^V_{N,\kappa}}\Psi$ a.e. along a subsequence; 
see, e.g., \cite[Thm. VIII.20(b) and Thm.~S.14]{ReedSimonI}.
This proves \eqref{FKreg} in the case $\inf V>-\infty$.
 
Finally, we consider a general Kato-decomposable $V$ and set $V_n:=(-n)\vee V$, $n\in\NN$.
For every $\ul{\V{x}}\in\RR^\nu$, we have the domination
\begin{align}\label{dominique}
\|W_{\kappa,t}^{V_n}(\ul{\V{x}})^*\Psi(\ul{\V{x}}+\ul{\V{b}}_t)\|\le 
\|W_{\kappa,t}^V(\ul{\V{x}})\|\|\Psi(\ul{\V{x}}+\ul{\V{b}}_t)\|.
\end{align}
Here $\|\Psi(\ul{\V{x}}+\ul{\V{b}}_t)\|\in L^2(\PP)$ and $\|W_{\kappa,t}^V(\ul{\V{x}})\|\in L^2(\PP)$
by \eqref{bdWp}, so that the right hand side of \eqref{dominique}
is actually $\PP$-integrable. Therefore, 
$\EE[W_{\kappa,t}^{V_n}(\ul{\V{x}})^*\Psi(\ul{\V{x}}+\ul{\V{b}}_t)]
\to\EE[W_{\kappa,t}^{V}(\ul{\V{x}})^*\Psi(\ul{\V{x}}+\ul{\V{b}}_t)]$, $n\to\infty$, 
for every $\ul{\V{x}}\in\RR^\nu$. The monotone convergence of the quadratic forms 
$\mathfrak{q}_{N,\kappa}^{V_n}\downarrow\mathfrak{q}_{N,\kappa}^V$ on 
$\dom(\mathfrak{q}_{N,\kappa}^{V})=\dom(\mathfrak{q}_{N,\kappa}^{V_+})
=\bigcup_{n\in\NN}\dom(\mathfrak{q}_{N,\kappa}^{V_n})$
implies, however, that $e^{-tH_{N,\kappa}^{V_n}}\Psi\to e^{-tH^V_{N,\kappa}}\Psi$ 
a.e. along a subsequence; see, e.g., \cite[Thm. VIII.20(b) and Thm.~S.16]{ReedSimonI}.
\end{proof}

\begin{rem}
For finite $\kappa\in\NN$, \eqref{DGLW} actually implies the pointwise operator norm bound
$\ln\|W_{\kappa,t}(\ul{\V{x}})\|\le {\|\omega^\mh f^N_{\kappa}(\ul{\V{x}})\|_{\HP}^2t}-tNE_\kappa^\ren$
on $\Omega$, which is non-uniform in $\kappa$; see \cite[Thm.~5.3]{GMM2016}. 
Thanks to this it is possible to extend Thm.~\ref{thmFKreg} to
a larger class of potentials; see \cite[Thm.~11.3]{GMM2016}. We restrict ourselves to Kato
decomposable potentials, because our analysis requires bounds like \eqref{bdWp} which is
uniform in $\kappa$ and holds for $\kappa=\infty$ as well.
\end{rem}


\subsection{Feynman-Kac formula without ultra-violet cut-off}\label{ssecFKren}

In this short subsection we complete our independent construction of the Nelson Hamiltonian 
without ultra-violet cut-off. The Feynman-Kac formula for it will actually hold by definition.

\begin{thm}\label{thmconvTkappa}
For every $t\ge0$, the sequence 
$\{e^{-tH_{N,\kappa}^V-tNE_\kappa^{\ren}}\}_{\kappa\in\NN}$ converges in operator norm
to $T_{\infty,t}^V$. The convergence is uniform as $t$ varies in a compact subset of $(0,\infty)$.
\end{thm}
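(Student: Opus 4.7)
The plan is brief because the theorem is essentially a packaging of two results already available in the paper. First I would use Theorem \ref{thmFKreg} to replace the regularized Schr\"{o}dinger semi-group by the Feynman-Kac semi-group: for every $\kappa \in \NN$, $t \ge 0$, and $\Psi \in L^2(\RR^\nu,\sF)$, the identity
\begin{align*}
(e^{-tH_{N,\kappa}^V-tNE_\kappa^{\ren}}\Psi)(\ul{\V{x}}) = (T_{\kappa,t}^V\Psi)(\ul{\V{x}}),\quad\text{a.e. $\ul{\V{x}}$,}
\end{align*}
identifies $e^{-tH_{N,\kappa}^V-tNE_\kappa^{\ren}}$ with the operator $T_{\kappa,t}^V \in \LO(L^2(\RR^\nu,\sF))$ studied in Subsect.~\ref{ssecFKSG}.

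Once that identification is in place, the asserted norm convergence reduces word-for-word to Proposition~\ref{propT}(3) specialized to $p=q=2$: for every $0 < \tau_1 \le \tau_2$,
\begin{align*}
\sup_{t \in [\tau_1,\tau_2]} \|T_{\kappa,t}^V - T_{\infty,t}^V\|_{2,2} \xrightarrow{\;\;\kappa\to\infty\;\;} 0.
\end{align*}
This is precisely the statement to be proved on any compact subset of $(0,\infty)$. The case $t=0$ is trivial, since both operators equal $\id_{L^2(\RR^\nu,\sF)}$.

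There is no genuine obstacle at this step: all of the analytic work has already been carried out upstream. Specifically, Proposition~\ref{propT}(3) was established by combining the operator-norm convergence of the integrands (Proposition~\ref{propconvW}), which in turn rests on Corollary~\ref{corUpmN} for the processes $U_{\kappa}^{N,\pm}(\ul{\V{q}})$, on Theorem~\ref{lemuexp} and Proposition~\ref{lemuinfty} for the complex action $u_{\kappa}^N(\ul{\V{q}})$, and on the continuity and uniform bounds for the building blocks $F_{0,t/2}(\,\cdot\,)$ from Lemma~\ref{lemmichael}. The proof therefore amounts to citing Theorem~\ref{thmFKreg} and Proposition~\ref{propT}(3) in the correct order, and noting the $t=0$ case separately.
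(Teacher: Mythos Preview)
Your proposal is correct and matches the paper's own proof essentially word for word: the paper's argument is simply ``Combine Prop.~\ref{propT}(3) (with $p=q=2$) and Thm.~\ref{thmFKreg}.'' Your additional remark handling $t=0$ separately and your summary of what underlies Prop.~\ref{propT}(3) are accurate elaborations, but the core argument is identical.
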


\begin{proof}
Combine Prop.~\ref{propT}(3) (with $p=q=2$) and Thm.~\ref{thmFKreg}.
\end{proof}

In particular, the following definition makes sense:

\begin{defn}\label{defHinfty}
The self-adjoint generator of the semi-group $(T_{\infty,t}^V)_{t\ge0}$ acting in 
$L^2(\RR^\nu,\sF)$ is called
the (ultra-violet renormalized) Nelson Hamiltonian. It is denoted by $H_{N,\infty}^V$.
\end{defn}

\begin{rem}
As the semi-group $(T^V_{\infty,t})_{t\ge0}$ is given by explicit formulas, and not just as an abstract
limiting object, our definition of the ultra-violet renormalized Nelson Hamiltonian $H_{N,\infty}^V$
does not depend on the choice of any cutoff function. The independence of $H_{N,\infty}^V$ 
(up to finite energy shifts) on the choice of cutoff functions (in a certain class at least) 
has been observed earlier in \cite[Prop.~3.9]{Ammari2000}.
\end{rem}

\begin{cor}\label{corlbspecH}
We find a universal constant $c>0$ and some $\tilde{c}_{V_-}\ge0$, depending only on $V_-$
with $\tilde{c}_0=0$, such that
\begin{align}\label{lbspecH}
\inf\sigma(H_{N,\infty}^V)\ge-c\ee^4N^3-\tilde{c}_{V_-}.
\end{align}
The number on the right hand side is also a lower bound on the spectra of 
all operators $H_{N,\kappa}^V+NE_\kappa^{\ren}$ with $\kappa\in\NN$.
\end{cor}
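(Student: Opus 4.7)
The plan is to read the corollary off directly from Cor.~\ref{corLpspec} applied with $p=2$, combined with the two ways the generator on $L^2(\RR^\nu,\sF)$ has been identified: via Thm.~\ref{thmFKreg} in the ultra-violet regularized case and via Def.~\ref{defHinfty} in the renormalized case. The key observation is that Cor.~\ref{corLpspec} already delivers the desired estimate phrased as a statement about the resolvent set of an $L^p$-generator; for $p=2$ this generator is a self-adjoint operator, so ``resolvent set contains $(-\infty,a)$'' is equivalent to ``spectrum is contained in $[a,\infty)$.''

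Concretely, I would proceed as follows. First, specialize Cor.~\ref{corLpspec} to $p=2$ to obtain universal constants $c:=c_2>0$ and $\tilde{c}_{V_-}:=\tilde{c}_{2,V_-}\ge0$ (with $\tilde{c}_0=0$) such that, for every $\kappa\in\NN\cup\{\infty\}$, the resolvent set of the generator $G_\kappa^V$ of the $C_0$-semi-group $(T_{\kappa,t}^V)_{t\ge0}$ on $L^2(\RR^\nu,\sF)$ contains the interval $\bigl(-\infty,-c\ee^4N^3-\tilde{c}_{V_-}\bigr)$. Next, I would identify $G_\kappa^V$ explicitly in both cases: for $\kappa\in\NN$, Thm.~\ref{thmFKreg} says $T_{\kappa,t}^V=e^{-t(H_{N,\kappa}^V+NE_\kappa^\ren)}$ on $L^2(\RR^\nu,\sF)$, so by uniqueness of the generator $G_\kappa^V=H_{N,\kappa}^V+NE_\kappa^\ren$; for $\kappa=\infty$, Def.~\ref{defHinfty} defines $H_{N,\infty}^V$ to be precisely $G_\infty^V$. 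In either case $G_\kappa^V$ is a self-adjoint operator on $L^2(\RR^\nu,\sF)$ (in the finite-$\kappa$ case this was recorded right after Def.~\ref{defNelsonHamreg}; in the $\kappa=\infty$ case it is built into Def.~\ref{defHinfty}).

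Finally, I would invoke the general fact that for a self-adjoint operator $T$ in a Hilbert space, $\sigma(T)\subset[a,\infty)$ if and only if $(-\infty,a)$ lies in the resolvent set of $T$; applying this to $G_\kappa^V$ with $a=-c\ee^4N^3-\tilde{c}_{V_-}$ immediately yields
\[\inf\sigma\bigl(H_{N,\kappa}^V+NE_\kappa^\ren\bigr)\ge-c\ee^4N^3-\tilde{c}_{V_-},\quad\kappa\in\NN,\]
and, for $\kappa=\infty$, the asserted bound \eqref{lbspecH} on $H_{N,\infty}^V$.

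There is essentially no obstacle here beyond unpacking the earlier results; the real work in the corollary already sits upstream, namely in the exponential-moment bound of Thm.~\ref{lemuexp} (and its consequences for $\|T_{\kappa,t}^V\|_{p,p}$ through Prop.~\ref{propT}(1)) that produces the $-c\ee^4N^3$ scaling in Cor.~\ref{corLpspec}. If I were to be careful, the only subtlety worth double-checking is the precise identification $G_\kappa^V=H_{N,\kappa}^V+NE_\kappa^\ren$ for finite $\kappa$: both operators generate the same $C_0$-semi-group on $L^2(\RR^\nu,\sF)$ by Thm.~\ref{thmFKreg} and Prop.~\ref{propT}(5)--(6), and uniqueness of generators of $C_0$-semi-groups closes the argument.
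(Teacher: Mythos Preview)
Your proposal is correct and follows exactly the paper's approach: the paper's proof simply reads ``Combine Cor.~\ref{corLpspec}, Thm.~\ref{thmFKreg}, and Def.~\ref{defHinfty},'' and you have correctly unpacked this by specializing Cor.~\ref{corLpspec} to $p=2$, identifying the $L^2$-generator via Thm.~\ref{thmFKreg} (finite $\kappa$) and Def.~\ref{defHinfty} ($\kappa=\infty$), and converting the resolvent-set statement to a spectral lower bound via self-adjointness.
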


\begin{proof}
Combine Cor.~\ref{corLpspec}, Thm.~\ref{thmFKreg}, and Def.~\ref{defHinfty}.
\end{proof}

Now a standard argument finishes our independent proof of Thm.~\ref{thmNelsonHam}:

\begin{proof}[Proof of Thm.~\ref{thmNelsonHam}]
Since the formula $(A+1)^{-1}=\int_0^\infty e^{-t(A+1)}\Id t$ is valid for every non-negative
self-adjoint operator $A$ in some Hilbert space, Thm.~\ref{thmconvTkappa} and 
Cor.~\ref{corlbspecH} entail the convergence $H_{N,\kappa}^V+NE_\kappa^{\ren}\to H_{N,\infty}^V$, 
$\kappa\to\infty$, in norm resolvent sense.
\end{proof}


\section{The renormalized Nelson model in the non-Fock representation}\label{secnonFock}

\noindent
In this section we provide the first non-perturbative construction of the renormalized Nelson
Hamiltonian in a non-Fock representation; see \cite{Arai2001} for a discussion of ultra-violet
regularized Nelson Hamiltonians in non-Fock representations. In a perturbative setting, a
renormalized Nelson Hamiltonian in the non-Fock representation has been constructed in
\cite{HHS2005}; see Rem.~\ref{remHHS}. As in the previous section we shall first
define and analyze the corresponding semi-group and define the renormalized Hamiltonian as its
generator. This procedure does not necessitate any smallness assumptions on $|\ee|$ like the 
KLMN theorem used in \cite{HHS2005}.

The non-Fock representation is obtained in two steps. First, we associate a
unitary transformation to an infra-red cut-off parameter $\Lambda>0$, which by now is commonly
called Gross transformation. Nelson employed this transformation in \cite{Nelson1964} and called
it approximate dressing transformation. We learned from \cite{HHS2005} that it essentially 
goes back to Tomonaga. After that we remove the infra-red cut-off in the transformed
semi-group. While the Gross transformations themselves do not have a limit as 
$\Lambda\downarrow0$, we shall find a well-defined limiting semi-group. (The term ``non-Fock'' 
actually originates in the effect that, in the limit $\Lambda\downarrow0$, the Gross transformations
give rise to a new representation of the Weyl relations inequivalent to the one induced by $\sW$; 
see \cite{Arai2001}. Despite of this nomenclature, all semi-groups 
constructed below still act on Fock space-valued $L^p$-spaces.)

Although they are not unitarily equivalent,
the spectra of the renormalized Hamiltonians in the original and the non-Fock representation agree;
see Rem.~\ref{remnonFock}(2) below. Hence, if one is interested in a certain property of the spectrum
as a set, then one can equally well work in the non-Fock representation. The latter has the pleasant 
feature that, even without any infra-red regularizations, the massless Nelson model can have ground 
states in the non-Fock representation 
\cite{Arai2001,HHS2005,LorincziMinlosSpohn2002b,Panati2009}, while 
this is not the case for the original massless Nelson model 
\cite{LHB2011,LorincziMinlosSpohn2002,Panati2009}. 
The existence of a ground state can, for instance,
facilitate the analysis of binding energies \cite{HainzlHirokawaSpohn2005}.

Let us start our constructions by defining the Gross transformations in a slightly more general setting;
recall the definition of $\beta^N_{\Lambda,\kappa}(\ul{\V{x}})$ and the Weyl representation $\sW$
in \eqref{deffNkappa} and Subsect.~\ref{ssecFock}, respectively.

\begin{defn}\label{defnGross}
Let $\kappa\in\NN\cup\{\infty\}$ and $\Lambda\ge0$. In the case $\Lambda=0$ we assume in addition
that $\eta$ is chosen such that $\omega^{-3}\eta^2$ is integrable in a neighborhood of $0$.
For all $p\in[1,\infty]$, we then define a Gross 
transformation $G_{\Lambda,\kappa}$ on $L^p(\RR^\nu,\sF)$ by
\begin{align}\label{defGrosstrafo}
(G_{\Lambda,\kappa}\Psi)(\ul{\V{x}})&:=\sW(\beta^N_{\Lambda,\kappa}(\ul{\V{x}}))\Psi(\ul{\V{x}}),
\quad\Psi\in L^p(\RR^\nu,\sF),\;\text{a.e. $\ul{\V{x}}\in\RR^\nu$.}
\end{align}
\end{defn}

In view of \eqref{Weyl1}, $G_{\Lambda,\kappa}:L^p(\RR^\nu,\sF)\to L^p(\RR^\nu,\sF)$ is isometric 
and surjective, for all $\kappa\in\NN\cup\{\infty\}$, $\Lambda\ge0$, and $p\in[1,\infty]$, with
\begin{align*}
(G_{\Lambda,\kappa}^{-1}\Psi)(\ul{\V{x}})&=\sW(-\beta^N_{\Lambda,\kappa}(\ul{\V{x}}))\Psi(\ul{\V{x}}),
\quad\Psi\in L^p(\RR^\nu,\sF),\;\text{a.e. $\ul{\V{x}}\in\RR^\nu$.}
\end{align*}
For $p\in[1,\infty)$, the adjoint of $G_{\Lambda,\kappa}\!\!\upharpoonright_{L^p(\RR^\nu,\sF)}$ is given 
by $G_{\Lambda,\kappa}^{-1}\!\!\upharpoonright_{L^{p'}(\RR^\nu,\sF)}$, 
where $p'$ is the exponent conjugate to $p$. 

Next, we introduce the stochastic processes encountered in the transformed semi-group and, 
after that, the transformed semi-group itself. We shall see in Prop.~\ref{propnonFock} below that
the following formulas are actually the correct ones; recall the notation 
\eqref{forbTheta}--\eqref{defcminus} and \eqref{defuinfty}.

\begin{defn}\label{defnnonFockuU}
For all $\kappa\in\NN\cup\{\infty\}$, $t,\Lambda\ge0$, and $\ul{\V{x}}\in\RR^\nu$, we abbreviate
\begin{align}\label{defnonFockUminus}
\wt{U}^{N,-}_{\Lambda,\kappa,t}(\ul{\V{x}})&:=\{\beta_{\Lambda,\kappa}^N(\ul{\V{x}})-
e^{-t\omega}\beta_{\Lambda,\kappa}^N(\ul{\V{x}}+\ul{\V{b}}_t)\}-U_{\kappa,t}^{N,-}(\ul{\V{x}}),
\\\label{defnonFockUplus}
\wt{U}^{N,+}_{\Lambda,\kappa,t}(\ul{\V{x}})&:=\{\beta_{\Lambda,\kappa}^N(\ul{\V{x}}+\ul{\V{b}}_t)
-e^{-t\omega}\beta_{\Lambda,\kappa}^N(\ul{\V{x}})\}-U_{\kappa,t}^{N,+}(\ul{\V{x}}),
\\\label{defnonFocku}
\tilde{u}_{\Lambda,\kappa,t}^N(\ul{\V{x}})&:={u}_{\kappa,t}^N(\ul{\V{x}})
-b_{\Lambda,\kappa,t}^N(\ul{\V{x}})
+c_{\Lambda,\kappa,t}^{N,-}(\ul{\V{x}})+c_{\Lambda,\kappa,t}^{N,+}(\ul{\V{x}}).
\end{align}
In the case $\Lambda=0$ these objects will be denoted as
\begin{align}\nonumber
\tilde{u}_{\kappa,t}^N(\ul{\V{x}}):=\tilde{u}_{0,\kappa,t}^N(\ul{\V{x}}),\quad
\quad\wt{U}^{N,\pm}_{\kappa,t}(\ul{\V{x}}):=\wt{U}^{N,\pm}_{0,\kappa,t}(\ul{\V{x}}).
\end{align}
\end{defn}

Again we notice that the whole terms inside the curly brackets in \eqref{defnonFockUminus} and 
\eqref{defnonFockUplus} are continuous adapted $\mathfrak{k}$-valued processes, as the
phase differences $e^{-i\V{m}\cdot\V{y}_\ell}-e^{-t\omega-i\V{m}\cdot\V{z}_\ell}$ compensate 
for the infra-red singularity of $\beta_\kappa$. 
The separate terms of the differences inside $\{\cdots\}$ do in general not even belong to $\HP$. 

Finally, we observe that $\wt{U}^{N,-}_{\kappa,t}(\ul{\V{x}})$ is a square-integrable 
$\HP$-valued martingale,
\begin{align*}
\wt{U}^{N,-}_{\kappa,t}(\ul{\V{x}})&=\sum_{\ell=1}^N
e^{-i\V{m}\cdot\V{x}_\ell}M_{\kappa,t}^-[\V{b}_\ell].
\end{align*}

\begin{defn}\label{defnnonFockSG}
Let $\kappa\in\NN\cup\{\infty\}$, $t,\Lambda\ge0$, and $\ul{\V{x}}\in\RR^\nu$. Then we set
\begin{align*}
\wt{W}_{\Lambda,\kappa,t}(\ul{\V{x}})&:=e^{\tilde{u}_{\Lambda,\kappa,t}^N(\ul{\V{x}})}
F_{0,\nf{t}{2}}(\wt{U}^{N,+}_{\Lambda,\kappa,t}(\ul{\V{x}}))
F_{0,\nf{t}{2}}(\wt{U}^{N,-}_{\Lambda,\kappa,t}(\ul{\V{x}}))^*.
\end{align*}
For every element $\Psi$ of the vector space $\sV$ introduced in Def.~\ref{defnTV}, we put
\begin{align}\label{defwtTV}
(\wt{T}_{\Lambda,\kappa,t}^V\Psi)(\ul{\V{x}})&:=
\EE\big[e^{-\int_0^tV(\ul{\V{x}}+\ul{\V{b}}_s)\Id s}\wt{W}_{\Lambda,\kappa,t}(\ul{\V{x}})^*
\Psi(\ul{\V{x}}+\ul{\V{b}}_t)\big],
\end{align}
provided that the expectation converges absolutely, and 
$(\wt{T}_{\Lambda,\kappa,t}^V\Psi)(\ul{\V{x}}):=0$ otherwise. We further abbreviate
\begin{align*}
\wt{W}_{\kappa,t}(\ul{\V{x}})&:=\wt{W}_{0,\kappa,t}(\ul{\V{x}}),\quad
\wt{T}_{\kappa,t}^V:=\wt{T}_{0,\kappa,t}^V.
\end{align*}
\end{defn}

We shall see in Prop.~\ref{propnonFock} and Thm.~\ref{thmnonFockIRconv}
that the expectation in \eqref{defwtTV} converges absolutely for at least a.e. $\ul{\V{x}}$
and that the restriction of $\wt{T}_{\Lambda,\kappa,t}^V$ to $L^p(\RR^\nu,\sF)$, $p\in[1,\infty]$, is a
bounded linear $L^q(\RR^\nu,\sF)$-valued map, for every $q\in[p,\infty]$.

Let us, however, first derive a general transformation formula:

\begin{lem}\label{lemtrafo}
Let $f^+,f^-\in\mathfrak{k}$, $g,\tilde{g}\in\HP$, and $R$, $Q$, $\tilde{Q}$ be unitary operators 
on $\HP$ commuting with every $e^{-s\omega}$, $s\ge0$. Let $t>0$, assume that
\begin{align}\label{trafoWhyp}
g-e^{-t\omega}R^*\tilde{Q}^*\tilde{g}\in\mathfrak{k},\quad
\tilde{g}-e^{-t\omega}\tilde{Q}Rg\in\mathfrak{k},
\end{align} 
and abbreviate
\begin{align*}
\alpha&:=-\|g\|_{\HP}^2/2-\|\tilde{g}\|_{\HP}^2/2+\SPn{f^-}{g}_{\HP}+\SPn{\tilde{g}}{\tilde{Q}f^+}_{\HP}
+\SPn{\tilde{g}}{e^{-t\omega}\tilde{Q}Rg}_{\HP}.
\end{align*}
Then $Q^*$ and $\tilde{Q}$ map $\mathfrak{k}$ into itself and
the following operator identity holds true,
\begin{align}\nonumber
\sW&(\tilde{g},\tilde{Q})F_{0,\nf{t}{2}}(-f^+)\Gamma(R)F_{0,\nf{t}{2}}(-f^-)^*\sW(-g,Q)
\\\nonumber
&=e^\alpha F_{0,\nf{t}{2}}(\tilde{g}-e^{-t\omega}\tilde{Q}Rg-\tilde{Q}f^+)
\Gamma(\tilde{Q}RQ)
\\\label{trafoW}
&\quad\times F_{0,\nf{t}{2}}(Q^*g-e^{-t\omega}Q^*R^*\tilde{Q}^*\tilde{g}-Q^*f^-)^*.
\end{align}
\end{lem}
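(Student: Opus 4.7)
The plan is to test the asserted operator identity on the total set of exponential vectors $\{\zeta(h):h\in\HP\}$, using the explicit formulas \eqref{defWeylexpv}, \eqref{Fexpv}, and \eqref{Fadjexpv}, together with $\Gamma(R)\zeta(k)=\zeta(Rk)$ (which is the special case $f=0$ of \eqref{defWeylexpv}). Each of the five factors on the left-hand side of \eqref{trafoW} acts on an exponential vector by producing a scalar prefactor and a new exponential vector, and the same holds for the three factors on the right-hand side, so the operator identity reduces to matching scalar prefactors and exponential-vector arguments in $\HP$.

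Before the comparison I record two preliminaries. Since $Q$, $R$, and $\tilde{Q}$ are unitaries commuting with every $e^{-s\omega}$, $s\ge0$, they commute with every bounded Borel function of $\omega$, in particular with $(1+\omega^{-1})^{\nf{1}{2}}$; hence $Q^*$ and $\tilde{Q}$ act isometrically on $\mathfrak{k}$. Combining this with \eqref{trafoWhyp} and $f^\pm\in\mathfrak{k}$ shows that both
\[
\tilde{g}-e^{-t\omega}\tilde{Q}Rg-\tilde{Q}f^+\qquad\text{and}\qquad Q^*g-e^{-t\omega}Q^*R^*\tilde{Q}^*\tilde{g}-Q^*f^-
\]
lie in $\mathfrak{k}$, so by Lem.~\ref{lemmichael}(1) the two operators $F_{0,\nf{t}{2}}(\cdots)$ on the right-hand side of \eqref{trafoW} are well-defined and bounded on $\sF$; the left-hand side is bounded as a product of unitaries, a $\Gamma(R)$, and two $F_{0,\nf{t}{2}}$'s.

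For the computation I would apply the five factors of the left-hand side to $\zeta(h)$ from right to left, one by one. In order: $\sW(-g,Q)$ produces the scalar factor $\exp(-\|g\|_{\HP}^2/2+\SPn{g}{Qh}_{\HP})$ and the new vector $\zeta(Qh-g)$; then $F_{0,\nf{t}{2}}(-f^-)^*$ multiplies by $\exp(-\SPn{f^-}{Qh-g}_{\HP})$ and replaces the argument by $e^{-(t/2)\omega}(Qh-g)$; then $\Gamma(R)$ replaces the argument by $Re^{-(t/2)\omega}(Qh-g)$ with no new phase; then $F_{0,\nf{t}{2}}(-f^+)$ replaces it by $e^{-t\omega}R(Qh-g)-f^+$; and finally $\sW(\tilde{g},\tilde{Q})$ yields $\exp(-\|\tilde{g}\|_{\HP}^2/2-\SPn{\tilde{g}}{\tilde{Q}(e^{-t\omega}R(Qh-g)-f^+)}_{\HP})$ and the argument $\tilde{g}+\tilde{Q}e^{-t\omega}RQh-\tilde{Q}e^{-t\omega}Rg-\tilde{Q}f^+$. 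Processing the right-hand side analogously produces exactly this exponential-vector argument and a prefactor with logarithm $\alpha+\SPn{A^-}{h}_{\HP}$, where $A^-:=Q^*g-e^{-t\omega}Q^*R^*\tilde{Q}^*\tilde{g}-Q^*f^-$.

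The main ``obstacle'' is then merely the routine bookkeeping of matching the eight scalar terms arising on the left with those in $\alpha+\SPn{A^-}{h}_{\HP}$. The only non-trivial identifications,
\[
\SPn{\tilde{g}}{\tilde{Q}e^{-t\omega}RQh}_{\HP}=\SPn{\tilde{g}}{\tilde{Q}Re^{-t\omega}Qh}_{\HP},\qquad\SPn{e^{-t\omega}Q^*R^*\tilde{Q}^*\tilde{g}}{h}_{\HP}=\SPn{\tilde{g}}{\tilde{Q}Re^{-t\omega}Qh}_{\HP},
\]
use only the commutation of $Q$, $R$, $\tilde{Q}$ with $e^{-s\omega}$ together with their unitarity. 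Since exponential vectors are total in $\sF$ and both sides of \eqref{trafoW} are bounded operators on $\sF$, the verified identity on each $\zeta(h)$ extends to the asserted operator identity.
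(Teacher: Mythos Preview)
Your approach is exactly the paper's: both apply the left-hand side to an arbitrary exponential vector $\zeta(h)$, work out the resulting scalar prefactor and exponential-vector argument using \eqref{defWeylexpv}, \eqref{Fadjexpv}, \eqref{Fexpv}, and $\Gamma(R)\zeta(k)=\zeta(Rk)$, match against the right-hand side, and conclude by totality of exponential vectors and boundedness. One small imprecision: you justify $Q^*\mathfrak{k}\subset\mathfrak{k}$ and $\tilde{Q}\mathfrak{k}\subset\mathfrak{k}$ by saying the unitaries commute with ``every bounded Borel function of $\omega$, in particular with $(1+\omega^{-1})^{\nf{1}{2}}$''---but $(1+\omega^{-1})^{\nf{1}{2}}$ is unbounded when $\mu=0$. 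The paper handles this by writing $(\omega+\ve)^\mh h=\int_0^\infty e^{-s\ve-s\omega}h\,\Id s/\sqrt{\pi s}$ (a Bochner integral of semigroup elements, with which the unitaries do commute) and then letting $\ve\downarrow0$; your argument is easily repaired along the same line.
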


\begin{proof}
That $Q^*\mathfrak{k},\tilde{Q}\mathfrak{k}\subset\mathfrak{k}$ follows easily from
$[e^{-s\omega},\tilde{Q}]=[e^{-s\omega},Q^*]=0$, $s\ge0$, and the relations
$\|\omega^\mh f\|_{\HP}=\lim_{\ve\downarrow0}\|(\omega+\ve)^\mh f\|_{\HP}$, $f\in\mathfrak{k}$, and
$(\omega+\ve)^\mh h=\int_0^\infty e^{-s\ve-s\omega}h\Id s/\sqrt{\pi s}$, $h\in\HP$.

To prove the asserted operator identity
we pick an exponential vector $\zeta(h)$ with $h\in\HP$. Then the defining relation 
\eqref{defWeylexpv} for the Weyl representation and \eqref{Fadjexpv} entail
\begin{align*}
F_{0,\nf{t}{2}}(-f^-)^*\sW(-g,Q)\zeta(h)&=e^{-\|g\|_{\HP}^2/2+\SPn{g}{Qh}_{\HP}-\SPn{f^-}{Qh-g}_{\HP}}
\zeta(e^{-t\omega/2}Qh-e^{-t\omega/2}g),
\end{align*}
whence \eqref{Fexpv} permits to get
\begin{align*}
F_{0,\nf{t}{2}}&(-f^+)\Gamma(R)F_{0,\nf{t}{2}}(-f^-)^*\sW(-g,Q)\zeta(h)
\\
&=e^{-\|g\|_{\HP}^2/2+\SPn{g}{Qh}_{\HP}-\SPn{f^-}{Qh-g}_{\HP}}
\zeta(e^{-t\omega}RQh-e^{-t\omega}Rg-f^+).
\end{align*}
Applying \eqref{defWeylexpv} once more we arrive at
\begin{align*}
\sW&(\tilde{g},\tilde{Q})
F_{0,\nf{t}{2}}(-f^+)\Gamma(R)F_{0,\nf{t}{2}}(-f^-)^*\sW(-g,Q)\zeta(h)
\\
&=e^{-\|g\|_{\HP}^2/2-\|\tilde{g}\|_{\HP}^2/2+\SPn{g}{Qh}_{\HP}-\SPn{f^-}{Qh-g}_{\HP}
-\SPn{\tilde{g}}{\tilde{Q}e^{-t\omega}RQh-\tilde{Q}e^{-t\omega}Rg-\tilde{Q}f^+}_{\HP}}
\\
&\quad\times
\zeta(\tilde{Q}e^{-t\omega}RQh+\tilde{g}-\tilde{Q}e^{-t\omega}Rg-\tilde{Q}f^+)
\\
&=e^{\alpha+\SPn{Q^*g-e^{-t\omega}Q^*R^*\tilde{Q}^*\tilde{g}-Q^*f^-}{h}_{\HP}}
\\
&\quad\times
\zeta(e^{-t\omega}\tilde{Q}RQh+\tilde{g}-e^{-t\omega}\tilde{Q}Rg-\tilde{Q}f^+).
\end{align*}
On account of \eqref{Fadjexpv}, \eqref{Fexpv}, and the condition \eqref{trafoWhyp} the previous 
identity shows that \eqref{trafoW} holds true on the linear hull generated by all
exponential vectors. By continuity it then extends to an identity in $\LO(\sF)$.
\end{proof}

\begin{prop}\label{propnonFock}
Let $\kappa\in\NN\cup\{\infty\}$, $\Lambda\ge0$, $t\ge0$, and $\ul{\V{x}}\in\RR^3$. In the case
$\mu=\Lambda=0$ assume in addition that $\eta$ is chosen such that 
$\omega^{-3}\eta^2$ is integrable 
in a neighborhood of $0$. Then $\beta_{\Lambda,\kappa}\in\HP$ and
the following identity holds on $\Omega$,
\begin{align}\label{idGross}
\wt{W}_{\Lambda,\kappa,t}(\ul{\V{x}})&=\sW(\beta_{\Lambda,\kappa}^N(\ul{\V{x}}+\ul{\V{b}}_t))
{W}_{\kappa,t}(\ul{\V{x}})\sW(-\beta_{\Lambda,\kappa}^N(\ul{\V{x}})).
\end{align}
If $1\le p\le q\le\infty$, then $\wt{T}_{\Lambda,\kappa,t}^V$ is a well-defined element of
$\LO(L^p(\RR^\nu,\sF),L^q(\RR^\nu,\sF))$ satisfying
\begin{align}\label{idGrossT}
\wt{T}_{\Lambda,\kappa,t}^V=G_{\Lambda,\kappa} T_{\kappa,t}^VG_{\Lambda,\kappa}^{-1}.
\end{align}
\end{prop}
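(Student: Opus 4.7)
The plan is to prove the operator identity \eqref{idGross} first, treating it as a pointwise statement in $\LO(\sF)$ on $\Omega$, and then read off the $L^p$-$L^q$ transformation rule \eqref{idGrossT} from it together with the mapping properties already established in Proposition~\ref{propT}.

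First I would dispense with the easy integrability check: $\beta_{\Lambda,\kappa}\in\HP$ for $\Lambda>0$ is already noted after Hypothesis~\ref{hypNelson} (where it is observed that $\beta_{\Lambda,\infty}\in\HP$), and for $\Lambda=0$ it reduces to the assumption on $\eta$ via the trivial estimate $\|\beta_\kappa\|_{\HP}^2\le\ee^2\int\omega^{-3}\eta^2\,\Id\lambda^3+\ee^2\int_{|\V{k}|\ge1}\omega^{-1}\,\Id\lambda^3$. In particular $\beta^N_{\Lambda,\kappa}(\ul{\V{x}})\in\HP_{\RR}$ for every $\ul{\V{x}}\in\RR^\nu$ (using that $\beta_{\Lambda,\kappa}$ is even). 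The hypotheses of Lemma~\ref{lemtrafo} then reduce to checking that the two bracketed terms in Definitions \eqref{defnonFockUminus}--\eqref{defnonFockUplus} belong to $\mathfrak{k}$, which is exactly the infra-red compensation already stated immediately after Definition~\ref{defnnonFockuU}.

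The core of the proof is to apply Lemma~\ref{lemtrafo} with $Q=\tilde Q=R=\id$, $g:=\beta^N_{\Lambda,\kappa}(\ul{\V{x}})$, $\tilde g:=\beta^N_{\Lambda,\kappa}(\ul{\V{x}}+\ul{\V{b}}_t)$, $f^+:=U^{N,+}_{\kappa,t}(\ul{\V{x}})$, $f^-:=U^{N,-}_{\kappa,t}(\ul{\V{x}})$. Doing this to $\sW(\tilde g)W_{\kappa,t}(\ul{\V{x}})\sW(-g)$ immediately produces, pathwise on $\Omega$,
\[
\sW(\tilde g)\,W_{\kappa,t}(\ul{\V{x}})\,\sW(-g)
=e^{u_{\kappa,t}^N(\ul{\V{x}})+\alpha}F_{0,\nf{t}{2}}(\wt{U}^{N,+}_{\Lambda,\kappa,t}(\ul{\V{x}}))
F_{0,\nf{t}{2}}(\wt{U}^{N,-}_{\Lambda,\kappa,t}(\ul{\V{x}}))^*,
\]
with the arguments of the $F_{0,\nf{t}{2}}$ matching \eqref{defnonFockUminus}--\eqref{defnonFockUplus} verbatim. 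Hence, comparing with Definition~\ref{defnnonFockSG}, the remaining task is the scalar-valued identity $\alpha=-b^N_{\Lambda,\kappa,t}(\ul{\V{x}})+c^{N,-}_{\Lambda,\kappa,t}(\ul{\V{x}})+c^{N,+}_{\Lambda,\kappa,t}(\ul{\V{x}})$. This is the step where one has to be careful: the first three summands in the formula for $\alpha$, namely $-\tfrac12\|g\|^2_{\HP}-\tfrac12\|\tilde g\|^2_{\HP}+\SPn{\tilde g}{e^{-t\omega}g}_{\HP}$, expand (by bilinearity of the scalar product and the substitution $\V{k}\mapsto-\V{k}$, exploiting that $\omega$, $\beta_{\Lambda,\kappa}$ are even) to $-\tfrac12\int_{\RR^3}\Theta^N_t(\ul{\V{x}},\ul{\V{b}})\beta^2_{\Lambda,\kappa}\,\Id\lambda^3=-b^N_{\Lambda,\kappa,t}(\ul{\V{x}})$, where realness of $\SPn{\tilde g}{e^{-t\omega}g}_{\HP}$ is what converts the $\Re$ in the definition of $\Theta^N_t$ into the full scalar product. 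The two remaining summands $\SPn{f^-}{g}_{\HP}+\SPn{\tilde g}{f^+}_{\HP}$ are, in view of $\HP_{\RR}$-valuedness of all factors (Remark~\ref{rem-real}), exactly $c^{N,-}_{\Lambda,\kappa,t}(\ul{\V{x}})+c^{N,+}_{\Lambda,\kappa,t}(\ul{\V{x}})$. The main subtlety of the whole proof lies in this bookkeeping, particularly in recognising that every complex-looking scalar product is in fact real because of the parity of $\omega$, $\beta$, and the invariance under the conjugation $C$.

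Finally, to obtain \eqref{idGrossT} I would take adjoints in \eqref{idGross}, which gives $\sW(\beta^N_{\Lambda,\kappa}(\ul{\V{x}}))W_{\kappa,t}(\ul{\V{x}})^*\sW(-\beta^N_{\Lambda,\kappa}(\ul{\V{x}}+\ul{\V{b}}_t))=\wt W_{\Lambda,\kappa,t}(\ul{\V{x}})^*$, multiply by $e^{-\int_0^tV(\ul{\V{x}}+\ul{\V{b}}_s)\Id s}$, and apply to $\Psi(\ul{\V{x}}+\ul{\V{b}}_t)$. Pulling the $\fF_0$-measurable unitary $\sW(\beta^N_{\Lambda,\kappa}(\ul{\V{x}}))$ out of the expectation and recognising $\sW(-\beta^N_{\Lambda,\kappa}(\ul{\V{x}}+\ul{\V{b}}_t))\Psi(\ul{\V{x}}+\ul{\V{b}}_t)=(G_{\Lambda,\kappa}^{-1}\Psi)(\ul{\V{x}}+\ul{\V{b}}_t)$ then yields $\wt T^V_{\Lambda,\kappa,t}\Psi=G_{\Lambda,\kappa}T^V_{\kappa,t}G^{-1}_{\Lambda,\kappa}\Psi$ as a pointwise a.e. identity whenever the right-hand side expectation is absolutely convergent. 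Since $G_{\Lambda,\kappa}^{\pm1}$ is an isometry of every $L^p(\RR^\nu,\sF)$ and $T^V_{\kappa,t}\in\LO(L^p(\RR^\nu,\sF),L^q(\RR^\nu,\sF))$ by Proposition~\ref{propT}(1)--(2), the composition is automatically in $\LO(L^p,L^q)$, and the absolute convergence of \eqref{defwtTV} for a.e. $\ul{\V{x}}$ follows from the same estimates (applied to $G_{\Lambda,\kappa}^{-1}\Psi$).
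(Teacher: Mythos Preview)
Your proof is correct and follows precisely the paper's approach: apply Lemma~\ref{lemtrafo} with $R=Q=\tilde Q=\id$ and the obvious choices of $f^\pm,g,\tilde g$ to obtain \eqref{idGross}, then read off \eqref{idGrossT} from \eqref{idGross} together with Prop.~\ref{propT}(1)--(2) and the isometry of $G_{\Lambda,\kappa}^{\pm1}$. One harmless slip: the displayed bound $\|\beta_\kappa\|_\HP^2\le\ee^2\int\omega^{-3}\eta^2+\ee^2\int_{|\V{k}|\ge1}\omega^{-1}$ is wrong as written (the second integral diverges); the tail should be controlled by something like $\int_{|\V{k}|\ge1}|\V{k}|^{-5}\,\Id\lambda^3$, coming from $\omega(\omega+\V{m}^2/2)^2\ge|\V{k}|^5/4$.
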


\begin{proof}
The relation \eqref{idGross} follows from Def.~\ref{defW}, Def.~\ref{defnnonFockSG}, and
Lem.~\ref{lemtrafo} with $R=Q=\tilde{Q}=\id$ and obvious choices of $f^\pm$. 
Notice that, if $g=\beta_{\Lambda,\kappa}^N(\ul{\V{x}})$ and
$\tilde{g}=\beta_{\Lambda,\kappa}^N(\ul{\V{x}}+\ul{\V{b}}_t)$, then \eqref{trafoWhyp} is satisfied
according to the remarks in the paragraph after Def.~\ref{defnnonFockuU}.

Employing \eqref{idGross} and  Prop.~\ref{propT}(1)\&(2) we first conclude that the expectation in
\eqref{defwtTV} with $\Psi\in L^p(\RR^\nu,\sF)$
is absolutely convergent for every $\ul{\V{x}}\in\RR^\nu$, if $p\in(1,\infty]$, and for a.e. $\ul{\V{x}}$, if
$p=1$. After that we readily observe the validity of \eqref{idGrossT}.
\end{proof}

\begin{thm}\label{thmnonFockIRconv}
The following assertions hold true:
\begin{enumerate}[leftmargin=*]
\item[{\rm(1)}] Let $t>0$, $\kappa\in\NN\cup\{\infty\}$,
$F:\RR^\nu\to\RR$ be Lipschitz continuous with Lipschitz constant $L\ge0$, and
$\Psi:\RR^\nu\to\sF$ be measurable with $e^F\Psi\in L^p(\RR^\nu,\sF)$, for some $p\in[1,\infty]$.
Then the expectation in \eqref{defwtTV} converges absolutely for at least 
a.e. $\ul{\V{x}}\in\RR^\nu$, even in the case $\mu=\Lambda=0$ without any additional assumption 
on $\eta$. It converges absolutely for {\em all} $\ul{\V{x}}\in\RR^\nu$ in case $p>1$.
Furthermore, $e^F\wt{T}_{\Lambda,\kappa,t}^V\Psi\in L^q(\RR^\nu,\sF)$ and
the bound \eqref{LpLq} holds with $T_{\kappa,t}^V$ replaced by
$\wt{T}_{\Lambda,\kappa,t}^V$, for all $\Lambda\ge0$ and $q\in[p,\infty]$.
\item[{\rm(2)}]
For all $\Lambda\ge0$, $\kappa\in\NN\cup\{\infty\}$, and $t>0$, the restriction of 
$\wt{T}_{\Lambda,\kappa,t}^V$ to $L^p(\RR^\nu,\sF)$ with $p\in[1,\infty]$ belongs to
$\LO(L^p(\RR^\nu,\sF),L^q(\RR^\nu,\sF))$, for every $q\in[p,\infty]$, and
\begin{align}\label{normTwtT}
\|\wt{T}_{\Lambda,\kappa,t}^V\|_{p,q}=\|{T}_{\kappa,t}^V\|_{p,q}.
\end{align}
\item[{\rm(3)}] For all $1\le p\le q\le\infty$,
\begin{align*}
\sup_{\kappa\in\NN\cup\{\infty\}}\sup_{t\in[\tau,T]}
\|\wt{T}_{\Lambda,\kappa,t}^{V}-\wt{T}_{\kappa,t}^V\|_{p,q}
\xrightarrow{\;\;\Lambda\downarrow0\;\;}0,\quad0<\tau\le T.
\end{align*}
\item[{}\rm(4)]
For all $p\in[1,\infty)$, $\Psi\in L^p(\RR^\nu,\sF)$, and $\tau>0$,
\begin{align*}
\lim_{\Lambda\downarrow0}\sup_{\kappa\in\NN\cup\{\infty\}}\sup_{t\in[0,\tau]}
\|\wt{T}_{\Lambda,\kappa,t}^{V}\Psi-\wt{T}_{\kappa,t}^V\Psi\|_p&=0.
\end{align*}
\item[{\rm(5)}] 
For all $\Lambda\ge0$ and $\kappa\in\NN\cup\{\infty\}$, $(\wt{T}_{\Lambda,\kappa,t}^V)_{t\ge0}$ 
is a semi-group on every $L^p(\RR^\nu,\sF)$, $p\in[1,\infty]$. For $p\in[1,\infty)$, it is 
strongly continuous. 
\item[{\rm(6)}] For all $t>0$, $\Lambda\ge0$, $\kappa\in\NN\cup\{\infty\}$, and $1\le p\le q\le\infty$,
\begin{align}\label{wtTVsadpq}
\langle\wt{T}_{\Lambda,\kappa,t}^V\Phi,\Psi\rangle_{q,q'}
&=\langle\Phi,\wt{T}_{\Lambda,\kappa,t}^V\Psi\rangle_{p,p'},
\quad\Phi\in L^{p}(\RR^\nu,\sF),\,\Psi\in L^{q'}(\RR^\nu,\sF).
\end{align}
\item[{\rm(7)}]
For all $1\le p\le q\le\infty$, 
\begin{align*}
\sup_{\Lambda\ge0}\sup_{t\in[\tau,T]}\|\wt{T}_{\Lambda,\kappa,t}^V-\wt{T}_{\Lambda,\infty,t}^V\|_{p,q}
\xrightarrow{\;\;\kappa\to\infty\;\;}0,\quad0<\tau\le T.
\end{align*}
\end{enumerate}
\end{thm}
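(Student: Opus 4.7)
The plan is to adapt the proof of Proposition~\ref{propT}(3) verbatim, with $W$ and $T^V_\kappa$ replaced by $\wt{W}_{\Lambda,\kappa}$ and $\wt{T}^V_{\Lambda,\kappa}$. The only new ingredient required is the $\wt{W}$-analog of Proposition~\ref{propconvW} uniform in $\Lambda\ge 0$:
\begin{equation*}
\sup_{\Lambda\ge 0}\sup_{\ul{\V{x}}\in\RR^\nu}\EE\Big[\sup_{t\in[\tau_1,\tau_2]}\big\|\wt{W}_{\Lambda,\kappa,t}(\ul{\V{x}})-\wt{W}_{\Lambda,\infty,t}(\ul{\V{x}})\big\|^p\Big]\xrightarrow{\kappa\to\infty}0,
\end{equation*}
for all $\tau_2\ge\tau_1>0$ and $p>0$. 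Together with the time-reversal identity $\wt{W}_{\Lambda,\kappa,t}[\ul{\V{x}}+\ul{\V{\alpha}}_t,\ul{\V{\alpha}}_{t-\bullet}-\ul{\V{\alpha}}_t]=\wt{W}_{\Lambda,\kappa,t}[\ul{\V{x}},\ul{\V{\alpha}}]^*$, which follows from \eqref{revUpmN}, \eqref{revu}, the invariance of $\Theta_t^N$ (and hence of $b^N_{\Lambda,\kappa,t}$) under path-reversal, and the swap $c_{\Lambda,\kappa,t}^{N,\pm}\leftrightarrow c_{\Lambda,\kappa,t}^{N,\mp}$, this displayed convergence drives every step of the proof of Proposition~\ref{propT}(3) through verbatim.

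To establish the displayed convergence of integrands, I mimic Proposition~\ref{propconvW}. Lemma~\ref{lemmichael}(1)--(2) and a telescopic decomposition bound the operator-norm difference $\|\wt{W}_{\Lambda,\kappa,t}(\ul{\V{x}})-\wt{W}_{\Lambda,\infty,t}(\ul{\V{x}})\|$ by a sum of products, each consisting of one factor of either $|e^{\tilde{u}_{\Lambda,\kappa,t}^N}-e^{\tilde{u}_{\Lambda,\infty,t}^N}|$ or $\|\wt{U}^{N,\pm}_{\Lambda,\kappa,t}-\wt{U}^{N,\pm}_{\Lambda,\infty,t}\|_t$, and a factor of the form $e^{\tilde{u}^N_{\Lambda,\tilde{\kappa},t}+c(\|\wt{U}^{N,+}_{\Lambda,\tilde{\kappa},t}\|_t^2+\|\wt{U}^{N,-}_{\Lambda,\tilde{\kappa},t}\|_t^2)}$ with some $\tilde{\kappa}\in\{\kappa,\infty\}$. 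H\"{o}lder's inequality then reduces the task to proving, uniformly in $\Lambda\ge 0$, $\ul{\V{x}}\in\RR^\nu$, and $t\in[\tau_1,\tau_2]$: (i)~$\tilde{u}^N_{\Lambda,\kappa,t}\to\tilde{u}^N_{\Lambda,\infty,t}$ in every $L^p(\PP)$; (ii)~$\wt{U}^{N,\pm}_{\Lambda,\kappa,t}\to\wt{U}^{N,\pm}_{\Lambda,\infty,t}$ in every $L^p(\PP;(\mathfrak{k},\|\cdot\|_t))$; and (iii)~uniform-in-$\kappa\in\NN\cup\{\infty\}$ exponential moment bounds on $\tilde{u}^N_{\Lambda,\kappa,t}$ and on $\|\wt{U}^{N,\pm}_{\Lambda,\kappa,t}\|_t^2$.

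The mechanism producing uniformity in $\Lambda$ is that every $\Lambda$-dependent object carries the factor $1_{\{|\V{m}|\ge\Lambda\}}$, hence the corresponding $\HP$-integrals are majorized by their $\Lambda=0$ values. For (i), decompose the difference into pieces coming from $u^N$, $b^N_\Lambda$, and $c^{N,\pm}_\Lambda$: the $u$-piece is Proposition~\ref{lemuinfty}, and the three remaining pieces follow from Steps~1--2 of its proof, restricted to $\{|\V{m}|\ge\Lambda\}$. For (ii), the random-shift piece is handled by Corollary~\ref{corUpmN}, while the deterministic phase-compensated piece reduces to $\|\beta_{\Lambda,\kappa}-\beta_{\Lambda,\infty}\|_\HP+\tau_1^{-\nf{1}{2}}\|\omega^{-\nf{1}{2}}(\beta_{\Lambda,\kappa}-\beta_{\Lambda,\infty})\|_\HP=O(\kappa^{-1}(1+\log\kappa)^{\nf{1}{2}})$ uniformly in $\Lambda\ge 0$, valid because $1-\chi_\kappa$ vanishes linearly at the origin and thus tames any infra-red singularity of $\beta$. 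For (iii), Theorem~\ref{lemuexp} supplies exponential moments of $u^N_{\kappa,t}$; the term $-b^N_{\Lambda,\kappa,t}\le 0$ is harmless; $|c^{N,\pm}_{\Lambda,\kappa,t}|$ is deterministically bounded via Cauchy--Schwarz, \eqref{mona0}, and $\|\omega^{\nf{1}{2}}\beta_{\Lambda,\kappa}\|_\HP\le\|\omega^{\nf{1}{2}}\beta_\infty\|_\HP<\infty$; and $\|\wt{U}^{N,\pm}_{\Lambda,\kappa,t}\|_t$ has uniform exponential moments by Corollary~\ref{corUpmN} combined with a deterministic $\Lambda$-uniform bound on the phase-compensated boson-shift term in Definition~\ref{defnnonFockuU}.

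The main technical obstacle lies in maintaining uniformity down to $\Lambda=0$ in the massless case $\mu=0$, where $\beta_{0,\kappa}\notin\HP$ and the Gross-transformation identity \eqref{idGross} is unavailable. This is circumvented by working exclusively with the phase-compensated combinations $\beta^N_{\Lambda,\kappa}(\ul{\V{x}})-e^{-t\omega}\beta^N_{\Lambda,\kappa}(\ul{\V{x}}+\ul{\V{b}}_t)$ already appearing in Definition~\ref{defnnonFockuU} (which lie in $\mathfrak{k}$ regardless of any integrability assumption on $\eta$), and with the differences $\beta_{\Lambda,\kappa}-\beta_{\Lambda,\infty}$, which inherit a uniform-in-$\Lambda$ $\HP$-bound from the linear vanishing of $1-\chi_\kappa$ at the origin.
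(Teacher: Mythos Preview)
Your proposal addresses only Part~(7). Parts~(1)--(6) are not argued, and several of them require substantial separate work. In particular, Part~(3)---the $\Lambda\downarrow0$ convergence of the operator norms, uniformly in $\kappa$---is the heart of the construction: it is what makes $\wt{T}_{\kappa,t}^V$ (the $\Lambda=0$ object) accessible as a limit of Gross-transformed semi-groups, and Parts~(2),~(5),~(6) for $\Lambda=0$ all rely on it. Part~(4) (strong convergence uniformly on $[0,\tau]$, hence down to $t=0$) cannot be reduced to Part~(3), since operator-norm convergence is unavailable at $t=0$; the paper treats it separately by restricting to vectors $f\zeta(h)$ and using the $\HP$-norm estimate \eqref{kuno9b} in place of the $\|\cdot\|_t$-estimate \eqref{kuno9}. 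The paper devotes thirteen of its fourteen steps to Parts~(1)--(6) together with the preparatory moment bounds; your ingredients (i)--(iii) overlap with those preparations, but you never carry out the $\Lambda\downarrow0$ analysis.

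For Part~(7) itself your approach coincides with the paper's Step~14, with two caveats. First, the time-reversal identity for $\wt{W}$ holds pathwise only for finite $\kappa$, since \eqref{revu} is stated for $\kappa\in\NN$; one must either approximate (as in the proof of Prop.~\ref{propT}(3)) or, as the paper does, bypass time-reversal altogether and obtain the case $1=p\le q<\infty$ from the duality relation Part~(6). Second, your claim in~(iii) of a \emph{deterministic} $\Lambda$-uniform bound on the phase-compensated shift $\beta_{\Lambda,\kappa}^N(\ul{\V{x}})-e^{-t\omega}\beta_{\Lambda,\kappa}^N(\ul{\V{x}}+\ul{\V{b}}_t)$ is incorrect: its $\HP$-norm contains the piece $\|\beta_{\Lambda,\kappa}^N(\ul{\V{x}})-\beta_{\Lambda,\kappa}^N(\ul{\V{x}}+\ul{\V{b}}_s)\|_\HP^2\le c\ee^2N^{\nf{3}{2}}|\ul{\V{b}}_s|$ (cf.\ \eqref{kuno7a}), which is path-dependent. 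The needed exponential moment then comes from the Gaussian tail bound \eqref{bdskuno}, not from a deterministic inequality; this is repairable, but it does change the argument.
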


Before we prove this theorem we give the formal definition of the renormalized Nelson Hamiltonian 
in the non-Fock representation and make two remarks.

\begin{defn}\label{defnHinftynonFock}
For all $\kappa\in\NN\cup\{\infty\}$ and $\Lambda\ge0$, 
the self-adjoint generator of the semi-group $(\wt{T}_{\Lambda,\kappa,t}^V)_{t\ge0}$ 
on the Hilbert space $L^2(\RR^\nu,\sF)$ is denoted by $\wt{H}_{N,\Lambda,\kappa}^V$ and we write
$\wt{H}_{N,\kappa}^V:=\wt{H}_{N,0,\kappa}^V$ for short.
If $\mu=0$, $\eta=1$, and $\ee\not=0$, then $\wt{H}_{N,\infty}^V$ is called the 
renormalized Nelson Hamiltonian in the non-Fock representation.
\end{defn}

\begin{rem}\label{remnonFock}
\begin{enumerate}[leftmargin=*]
\item[{(1)}]
Combining Parts~(3) and~(6) of Thm.~\ref{thmnonFockIRconv}, we see that
\begin{align*}
\wt{T}_{\infty,t}^V&=\lim_{\Lambda\downarrow0}\lim_{\kappa\to\infty}
\wt{T}_{\Lambda,\kappa,t}^{V}=\lim_{\kappa\to\infty}\lim_{\Lambda\downarrow0}
\wt{T}_{\Lambda,\kappa,t}^{V},
\quad t>0,
\end{align*}
in $\LO(L^p(\RR^\nu,\sF),L^q(\RR^\nu,\sF))$ with $1\le p\le q\le\infty$. 
\item[(2)] Let $\kappa\in\NN\cup\{\infty\}$. If $\Lambda>0$, then
$\wt{H}_{N,\Lambda,\kappa}^{V}$ and $H_{N,\kappa}^V$ are unitarily equivalent, because their 
semi-groups are intertwined by the Gross transformation which is unitary on $L^2(\RR^\nu,\sF)$.
Therefore, it follows from Prop.~\ref{propT}(3), Thm.~\ref{thmnonFockIRconv}(3),
and general principles \cite[Thm.~VIII.23(a) \& Thm.~VIII.24(a)]{ReedSimonI} that
\begin{align*}
\sigma(\wt{H}_{N,\kappa}^V)&=\sigma(H_{N,\kappa}^V).
\end{align*}
\end{enumerate}
\end{rem}

\begin{rem}\label{remHHS}
For $N=1$, $V(\V{x})=-c\ee/|\V{x}|$, and sufficiently small $\ee>0$, Hirokawa et al. \cite{HHS2005}
proved that the limit of $G_{\Lambda,\kappa}H_{N,\kappa}^VG_{\Lambda,\kappa}^*$, 
as $\kappa\to\infty$ and $\Lambda\downarrow0$, exists in the norm resolvent sense. In view of
Rem.~\ref{remnonFock}(1) their limit operator agrees with $\wt{H}_{1,\infty}^V$ in this case.
\end{rem}

\begin{proof}[Proof of Thm.~\ref{thmnonFockIRconv}]
Throughout the proof, $\kappa\in\NN\cup\{\infty\}$ and $\ul{\V{x}}\in\RR^\nu$ are arbitrary and
no constant will depend on these quantities.

\smallskip

\noindent 
{\em Step~1.}
Let $t,\Lambda\ge0$. Then
\begin{align*}
|\SPn{\omega^{-\nf{1}{2}}U^{N,-}_{\kappa,t}(\ul{\V{x}})}{1_{\{|\V{m}|<1\}}\omega^{\nf{1}{2}}
\beta_{\Lambda,\kappa}^N(\ul{\V{x}})}|&\le\ee^2N^2\int_{\{|\V{m}|<1\}}\int_0^t
\frac{e^{-s\omega}}{\omega(\omega+\V{m}^2/2)}\Id s\Id\lambda^3
\\
&\le c\ee^2N^2(1+\ln(1\vee t)),
\end{align*}
which together with \eqref{morten1} (where we choose $\Lambda=1$ and a suitable $\eta$) implies
\begin{align}\label{kuno2}
|c_{\Lambda,\kappa,t}^{N,\pm}(\ul{\V{x}})|&\le c\ee^2N^2(1+\ln(1\vee t)),
\end{align}
if the minus-sign is chosen; the other case is shown analogously. Since 
$b_{\Lambda,\kappa,t}^N(\ul{\V{x}})\ge0$, a combination of \eqref{expbdu} and \eqref{kuno2} yields
\begin{align}\label{kuno4}
\EE\big[\sup_{s\le t}e^{\tau p\tilde{u}_{\Lambda,\kappa,s}^N(\ul{\V{x}})
+(1-\tau)p\tilde{u}_{\kappa,s}^N(\ul{\V{x}})}\big]
\le c^Ne^{c'p\ee^2N^2(1+\ln(1\vee t))+c'p^2\ee^4N^3t},
\end{align}
for all $\tau\in[0,1]$ and $p>0$, with universal constants $c,c'>0$.

\smallskip

\noindent{\em Step~2.} 
Let $\Lambda>0$. We next estimate the difference of $\tilde{u}_{\kappa}^N(\ul{\V{x}})$ and
$\tilde{u}_{\Lambda,\kappa}^N(\ul{\V{x}})$. To this end
we first observe that $b_{\Lambda,\kappa}^N(\ul{\V{x}})$ can be written as
\begin{align}\nonumber
b_{\Lambda,\kappa,s}^N(\ul{\V{x}})
&=\frac{1}{2}\|\beta_{\Lambda,\kappa}^N(\ul{\V{x}})-
\beta_{\Lambda,\kappa}^N(\ul{\V{x}}+\ul{\V{b}}_s)\|_{\HP}^2
\\\label{bfor}
&\quad+\int_{\{|\V{m}|\ge\Lambda\}}(1-e^{-s\omega})\Re\sum_{j,\ell=1}^N
e^{i\V{m}\cdot(\V{x}_\ell-\V{x}_j-\V{b}_{j,s})}\beta_\kappa^2\Id\lambda^3,\quad s\ge0.
\end{align}
Note that this expression is well-defined even for $\Lambda=0$,
because the various differences of exponential functions compensate for the infra-red 
singularity of $\beta_\kappa$. (Only the {\em difference} inside the norm in the first line of
\eqref{bfor} is $\HP$; the individual terms do in general not belong to $\HP$.)
Employing \eqref{bfor} we obtain
\begin{align}\nonumber
|\tilde{u}_{\kappa,s}^N&(\ul{\V{x}})-\tilde{u}_{\Lambda,\kappa,s}^N(\ul{\V{x}})|
\\\nonumber
&\le|c_{\kappa,s}^{N,-}(\ul{\V{x}})-c_{\Lambda,\kappa,s}^{N,-}(\ul{\V{x}})|
+|c_{\kappa,s}^{N,+}(\ul{\V{x}})-c_{\Lambda,\kappa,s}^{N,+}(\ul{\V{x}})|
\\\label{kuno5}
&\quad+\frac{1}{2}\|1_{\{|\V{m}|<\Lambda\}}(\beta_{\kappa}^N(\ul{\V{x}})-\beta_{\kappa}^N
(\ul{\V{x}}+\ul{\V{b}}_s))\|_{\HP}^2
+sN^2\int_{\{|\V{m}|<\Lambda\}}\omega\beta_\infty^2\Id\lambda^3.
\end{align}
Elementary estimations then show that the first, second, and fourth term on the right hand 
side of \eqref{kuno5} are bounded from above by some universal constant times 
$\ee^2N^2\Lambda t$, provided that $s\in[0,t]$. For later use we next consider an expression that is 
slightly more general than the third term on the right hand side of \eqref{kuno5}.
Let $\iota\in\{0,1\}$. For every $s>0$, we then find that
\begin{align}\nonumber
\big\|&1_{\{|\V{m}|<\Lambda\}}(s\omega)^{-\nf{\iota}{2}}(\beta_{\kappa}^N(\ul{\V{x}})-\beta_{\kappa}^N
(\ul{\V{x}}+\ul{\V{b}}_s))\big\|_{\HP}^2
\\\nonumber
&=\int_{\{|\V{m}|<\Lambda\}}\frac{1}{(s\omega)^{\iota}}\Big|\sum_{j=1}^N(e^{-i\V{m}\cdot\V{x}_j}-
e^{-i\V{m}\cdot(\V{x}_j+\V{b}_{j,s})})\beta_{\kappa}\Big|^2\Id\lambda^3
\\\nonumber
&\le\frac{\ee^2N}{s^\iota}\sum_{j=1}^N\int_{\{|\V{m}|<\Lambda\}}\frac{|1-e^{-i\V{m}\cdot\V{b}_{j,s}}|^2}{
\omega^{1+\iota}(\omega+\V{m}^2/2)^2}\Id\lambda^3
\\\nonumber
&\le\frac{\ee^2N}{s^\iota}\sum_{j=1}^N|\V{b}_{j,s}|^{1+\nf{\iota}{2}}\int_{\{|\V{m}|<\Lambda\}}
\frac{|\V{m}|^{\nf{\iota}{2}}}{\omega^\iota}\cdot
\frac{\Id\lambda^3}{(\omega+\V{m}^2/2)^2}
\\\label{kuno2000}
&\le c\ee^2\cdot\left\{\begin{array}{ll}
N^{\nf{3}{2}}\Lambda|\ul{\V{b}}_s|,&\iota=0,
\\
N^{\nf{5}{4}}\Lambda^\eh s^{-1}|\ul{\V{b}}_s|^{\nf{3}{2}},&\iota=1,
\end{array}
\right.
\end{align}
whence the Burkholder inequality 
\begin{align}\label{BurkBM}
\EE\big[\sup_{s\le t}|\ul{\V{b}}_s|^q\big]&\le c_q(Nt)^{\nf{q}{2}},\quad t,q>0,
\end{align}
implies, for all $t\ge\tau>0$ and $p>0$,
\begin{align}\label{kuno2001a}
\EE\Big[\sup_{s\le t}\big\|1_{\{|\V{m}|<\Lambda\}}(\beta_{\kappa}^N(\ul{\V{x}})-\beta_{\kappa}^N
(\ul{\V{x}}+\ul{\V{b}}_s))\big\|_{\HP}^{2p}\Big]
&\le c_p(\ee^2N^2\Lambda t^\eh)^{p},
\\\label{kuno2001b}
\EE\bigg[\sup_{\tau\le s\le t}\Big\|\frac{1_{\{|\V{m}|<\Lambda\}}}{(s\omega)^\eh}
(\beta_{\kappa}^N(\ul{\V{x}})-\beta_{\kappa}^N(\ul{\V{x}}+\ul{\V{b}}_s))\Big\|_{\HP}^{2p}\bigg]
&\le c_p(\ee^2N^2\Lambda^\eh\tau^{-1}t^{\nf{3}{4}})^{p}.
\end{align}
Here \eqref{kuno2001b} is used in Step~4 below. Altogether we conclude that
\begin{align}\label{kuno6a}
\EE\Big[\sup_{s\le t}|\tilde{u}_{\kappa,s}^N(\ul{\V{x}})-\tilde{u}_{\Lambda,\kappa,s}^N(\ul{\V{x}})|^p\Big]
&\le c_p'(\ee^2N^2\Lambda (t\vee t^\eh))^{p},\quad t,p>0.
\end{align}
Combining \eqref{kuno4} and \eqref{kuno6a} we further arrive at
\begin{align}\label{kuno6}
\sup_{\kappa\in\NN\cup\{\infty\}}\sup_{\ul{\V{x}}\in\RR^\nu}
\EE\Big[\sup_{s\le t}|e^{\tilde{u}_{\kappa,s}^N(\ul{\V{x}})}
-e^{\tilde{u}_{\Lambda,\kappa,s}^N(\ul{\V{x}})}|^p\Big]
&\xrightarrow{\;\;\Lambda\downarrow0\;\;}0,\quad t>0.
\end{align}
\noindent{\em Step~3.}
Let $\Lambda\ge0$, $p>0$, and $t\ge\tau>0$.
Employing \eqref{bdmichi}, \eqref{defnonFockUminus}, and \eqref{defnonFockUplus} 
we next observe that
\begin{align}\nonumber
\EE\big[\sup_{\tau\le s\le t}&\|F_{0,\nf{s}{2}}(\wt{U}_{\Lambda,\kappa,s}^{N,\pm}(\ul{\V{x}}))\|^p\big]
\le c^p\EE\big[\sup_{\tau\le s\le t}e^{8p\|\wt{U}_{\Lambda,\kappa,s}^{N,\pm}(\ul{\V{x}})\|^2_s}\big]
\\\nonumber
&\le c^p\Big(\sup_{s\le t}\sup_{\ul{\V{z}}\in\RR^\nu}e^{c'p\|(1-e^{-s\omega})
\beta_{\Lambda,\kappa}^N(\ul{\V{z}})\|^2_s}\Big)
\\\label{kuno8aa}
&\quad\cdot
\EE\big[\sup_{\tau\le s\le t}e^{c'p\|\beta_{\Lambda,\kappa}^N(\ul{\V{x}})
-\beta_{\Lambda,\kappa}^N(\ul{\V{x}}+\ul{\V{b}}_s)\|^2_s}\big]^\eh\EE\big[
\sup_{s\le t}e^{c'p\|{U}_{\kappa,s}^{N,\pm}(\ul{\V{x}})\|^2_s}\big]^\eh.
\end{align}
The usual elementary estimates (similar to the last two steps in \eqref{julia1}) reveal that
\begin{align}\label{kuno6b}
\|(1-e^{-s\omega})\beta_{\Lambda,\kappa}^N(\ul{\V{z}})\|^2_s
&\le c\ee^2N^2(1+\ln(1\vee s)),\quad s>0.
\end{align}
If we set $\Lambda=\infty$ in the first four lines of \eqref{kuno2000}, then the $\lambda^3$-integral
in the fourth line is still finite. Combining the so-obtained bounds for $\iota=0$ and $\iota=1$ we
deduce that, for all $0<\tau\le s\le t$,
\begin{align}\label{kuno7a}
cp\|\beta_{\Lambda,\kappa}^N(\ul{\V{x}})
-\beta_{\Lambda,\kappa}^N(\ul{\V{x}}+\ul{\V{b}}_s)\|^2_{\HP}
&\le c'p\ee^2N^{\nf{3}{2}}|\ul{\V{b}}_s|\le c''p^2\ee^4N^3t+\frac{|\ul{\V{b}}_s|^2}{4t},
\\\nonumber
cp\|\beta_{\Lambda,\kappa}^N(\ul{\V{x}})
-\beta_{\Lambda,\kappa}^N(\ul{\V{x}}+\ul{\V{b}}_s)\|^2_s
&\le c'p\ee^2\big(N^{\nf{3}{2}}|\ul{\V{b}}_s|+N^{\nf{5}{4}}s^{-1}|\ul{\V{b}}_s|^{\nf{3}{2}}\big)
\\\label{kuno7}
&\le c''p^2\ee^4N^3\big(t+p^2\ee^4N^2\tau^{-4}t^3\big)+\frac{|\ul{\V{b}}_s|^2}{4t}.
\end{align}
Here $c',c''>0$ depend only on the constant $c>0$ on the left hand sides. From \eqref{bdskuno}, 
\eqref{expbdUpmN}, \eqref{kuno6b}, \eqref{kuno7a}, and \eqref{kuno7}, we now infer that
\begin{align}\nonumber
\sup_{\Lambda\ge0}&\sup_{\kappa\in\NN\cup\{\infty\}}
\EE\big[\sup_{\tau\le s\le t}e^{p\|\wt{U}_{\Lambda,\kappa,s}^{N,\pm}(\ul{\V{x}})\|^2_{\HP}}\big]
\\\label{kuno8a}
&\le c^N(1+p\ee^2N(1\vee t))^Ne^{c'p\ee^2N^2(1+\ln(1\vee t))+cp^2\ee^4N^3(1\vee t)},
\\\label{kuno8}
\sup_{\Lambda\ge0}&\sup_{\kappa\in\NN\cup\{\infty\}}
\EE\big[\sup_{\tau\le s\le t}e^{p\|\wt{U}_{\Lambda,\kappa,s}^{N,\pm}(\ul{\V{x}})\|^2_s}\big]
\le c_{p,N,\ee,\tau,t}<\infty.
\end{align}
{\em Step~4.} Let $\tau,p,\Lambda>0$ and $t\ge\tau$. Then 
\begin{align}\nonumber
\EE\Big[&\sup_{\tau\le s\le t}\big\|\wt{U}_{\Lambda,\kappa,s}^{N,\pm}(\ul{\V{x}})
-\wt{U}_{\kappa,s}^{N,\pm}(\ul{\V{x}})\big\|_s^p\Big]
\\\nonumber
&\le c_p\EE\Big[\sup_{\tau\le s\le t}\big\|1_{\{|\V{m}|<\Lambda\}}
(\beta_{\kappa}^N(\ul{\V{x}})-\beta_{\kappa}^N(\ul{\V{x}}+\ul{\V{b}}_s))\big\|_s^p\Big]
\\\nonumber
&\quad+c_p\sup_{s\le t}\sup_{\ul{\V{z}}\in\RR^\nu}
\|1_{\{|\V{m}|<\Lambda\}}(1-e^{-s\omega})\beta_{\kappa}^N(\ul{\V{z}})\|_s^p
\\\label{kuno9}
&\le c_p'(\ee^2N^2\Lambda t^\eh)^{\nf{p}{2}}+c_p'(\ee^2N^2\Lambda^\eh\tau^{-1} 
t^{\nf{3}{4}})^{\nf{p}{2}}+c_p'(\ee^2N^2\Lambda t)^{\nf{p}{2}},
\end{align}
by \eqref{kuno2001a}, \eqref{kuno2001b}, and straightforward estimations applied to the term in
the third line. Likewise, by \eqref{kuno2001a},
\begin{align}\label{kuno9b}
\EE\Big[\sup_{s\le t}\big\|\wt{U}_{\Lambda,\kappa,s}^{N,\pm}(\ul{\V{x}})
-\wt{U}_{\kappa,s}^{N,\pm}(\ul{\V{x}})\big\|_{\HP}^p\Big]
&\le c_p(\ee^2N^2\Lambda t^\eh)^{\nf{p}{2}}
+c_p(\ee^2N^2\Lambda t)^{\nf{p}{2}}.
\end{align}
Thanks to Lem.~\ref{lemmichael} we further know that
\begin{align}\nonumber
&\EE\Big[\sup_{\tau\le s\le t}\big\|F_{0,\nf{s}{2}}(\wt{U}_{\Lambda,\kappa,s}^{N,\pm}(\ul{\V{x}}))
-F_{0,\nf{s}{2}}(\wt{U}_{\kappa,s}^{N,\pm}(\ul{\V{x}}))\big\|^p\Big]
\\\nonumber
&=\EE\bigg[\sup_{\tau\le s\le t}\Big\|\int_0^1F_{1,\nf{s}{2}}\big(
\wt{U}_{\Lambda,\kappa,s}^{N,\pm}(\ul{\V{x}})
-\wt{U}_{\kappa,s}^{N,\pm}(\ul{\V{x}}),\tau\wt{U}_{\Lambda,\kappa,s}^{N,\pm}(\ul{\V{x}})
+(1-\tau)\wt{U}_{\kappa,s}^{N,\pm}(\ul{\V{x}})\big)\Id\tau\Big\|^p\bigg]
\\\nonumber
&\le c_p\sup_{\Lambda'\ge0}\EE\big[\sup_{\tau\le s\le t}
e^{cp\|\wt{U}_{\Lambda',\kappa,s}^{N,\pm}(\ul{\V{x}})\|_s^2}\big]^{\nf{2}{3}}
\EE\big[\sup_{\tau\le s\le t}\|\wt{U}_{\Lambda,\kappa,s}^{N,\pm}(\ul{\V{x}})
-\wt{U}_{\kappa,s}^{N,\pm}(\ul{\V{x}})\|_s^{3p}\big]^{\nf{1}{3}}.
\end{align}
If $T\ge\tau>0$, then the previous bound, \eqref{kuno8}, and \eqref{kuno9} imply
\begin{align}\label{kuno10}
\sup_{\kappa\in\NN\cup\{\infty\}}\mathop{\sup_{\ul{\V{x}}\in\RR^\nu}}\limits_{t\in[\tau,T]}
\EE\Big[\sup_{\tau\le s\le t}\big\|F_{0,\nf{s}{2}}(\wt{U}_{\Lambda,\kappa,s}^{N,\pm}(\ul{\V{x}}))
-F_{0,\nf{s}{2}}(\wt{U}_{\kappa,s}^{N,\pm}(\ul{\V{x}}))\big\|^p\Big]
\xrightarrow{\;\;\Lambda\downarrow0\;\;}0.
\end{align}
It is now a straightforward consequence of telescopic summations,
H\"{o}lder's inequality, \eqref{kuno4}, \eqref{kuno6}, the first bound in \eqref{kuno8aa},
\eqref{kuno8}, as well as \eqref{kuno10} that
\begin{align}\label{kuno11}
\sup_{\kappa\in\NN\cup\{\infty\}}\sup_{t\in[\tau,T]}\sup_{\ul{\V{x}}\in\RR^\nu}
\EE\Big[\sup_{\tau\le s\le t}\big\|\wt{W}_{\Lambda,\kappa,s}(\ul{\V{x}})
-\wt{W}_{\kappa,s}(\ul{\V{x}})\big\|^p\Big]\xrightarrow{\;\;\Lambda\downarrow0\;\;}0.
\end{align}
{\em Step~6.} Now we are in a position to prove Part~(1) for $(p,q)\not=(1,\infty)$.

On the one hand,
$\|\wt{W}_{\Lambda,\kappa}(\ul{\V{x}})\|=\|{W}_{\kappa}(\ul{\V{x}})\|$, for every $\Lambda>0$,  
in view of \eqref{idGross}. On the other hand, \eqref{kuno11} implies that, for all $t>0$,
$\ul{\V{x}}\in\RR^\nu$, and $\kappa\in\NN\cup\{\infty\}$, we find $\Lambda_n>0$, $n\in\NN$,
with $\Lambda_n\downarrow0$ and $\wt{W}_{\Lambda_n,\kappa,t}(\ul{\V{x}})\to
\wt{W}_{\kappa,t}(\ul{\V{x}})$, $\PP$-a.s. in $\LO(\sF)$, as $n\to\infty$. In particular,
$\|\wt{W}_{\kappa,t}(\ul{\V{x}})\|=\|{W}_{\kappa,t}(\ul{\V{x}})\|$, $\PP$-a.s., for all $t>0$
and $\ul{\V{x}}\in\RR^\nu$. Therefore, we can replace $\|W_{\kappa,t}^V(\ul{\V{x}})^*\|$ by any
$e^{-\int_0^tV(\ul{\V{x}}+\ul{\V{b}}_s)\Id s}\|\wt{W}_{\Lambda,\kappa,t}(\ul{\V{x}})^*\|$ with
$\Lambda\ge0$ on the left hand sides of \eqref{anna1} and \eqref{rudi13}. Thus, apart from
the case where $p=1$ and $q=\infty$, Part~(1) follows from Steps~1 and~2 of the proof of
Prop.~\ref{propT}. In particular, we see that 
$\wt{T}_{\kappa,t}^V\in\LO(L^p(\RR^\nu,\sF),L^p(\RR^\nu,\sF))$, for all $1\le p\le q\le \infty$
and $\kappa\in\NN\cup\{\infty\}$, with the current exception of the case $p=1$, $q=\infty$.

\smallskip

\noindent
{\em Step~7.}
With the help of \eqref{Katobd} and \eqref{kuno11} we can now prove Part~(3) for 
$1<p\le q\le\infty$ by proceeding along the lines of Step~1 of the proof of Prop.~\ref{propT}. 

To cover the case $1=p\le q<\infty$ of Part~(3) we first observe an analogue of 
\eqref{revW}: If $\Lambda>0$, $\kappa\in\NN$, and $V$ is bounded, then we set
\begin{align*}
\wt{W}_{\Lambda,\kappa,t}^V[\ul{\V{x}},\ul{\V{\alpha}}]&:=
\sW(\beta_{\Lambda,\kappa}^N(\ul{\V{x}}+\ul{\V{\alpha}}_t))
W^V_{\kappa,t}[\ul{\V{x}},\ul{\V{\alpha}}]\sW(-\beta_{\Lambda,\kappa}^N(\ul{\V{x}})),
\end{align*}
for all $t>0$, $\ul{\V{x}}\in\RR^\nu$, and $\ul{\V{\alpha}}\in C([0,\infty),\RR^\nu)$. 
In view of \eqref{idGross}, 
$\wt{W}_{\Lambda,\kappa,t}^V[\ul{\V{x}},\ul{\V{b}}]=e^{-\int_0^tV(\ul{\V{x}}+\ul{\V{b}}_s)\Id s}
\wt{W}_{\Lambda,\kappa,t}(\ul{\V{x}})$ under the above assumptions. 
If also $\ul{\V{\alpha}}_0=0$, then the relation
\begin{align}\label{revwtW}
\wt{W}_{\Lambda,\kappa,t}^V[\ul{\V{x}}+\ul{\V{\alpha}}_t,\ul{\V{\alpha}}_{t-\bullet}-\ul{\V{\alpha}}_t]
&=\wt{W}_{\Lambda,\kappa,t}^V[\ul{\V{x}},\ul{\V{\alpha}}]^*
\end{align}
is a direct consequence of \eqref{revW}. Employing \eqref{Katobd} and \eqref{revwtW} we can mimic 
the first estimation in Step~2 of the proof of Prop.~\ref{propT} to arrive at the bound
\begin{align}\nonumber
\int_{\{|\ul{\V{x}}|<n\}} f_n(\ul{\V{x}})&\EE\big[e^{-\int_0^tV_n^m(\ul{\V{x}}+\ul{\V{b}}_s)\Id s}\|
\wt{W}_{\Lambda,\kappa,t}(\ul{\V{x}})^*-\wt{W}_{\Lambda',\kappa,t}(\ul{\V{x}})^*\|
(n\wedge\|\Psi(\ul{\V{x}}+\ul{\V{b}}_t)\|)\big]
\Id\ul{\V{x}}
\\\nonumber
&\le c_{\nu,q}\tau^{-\nu(1-q^{-1})/2}\|f\|_{q'}\|\Psi\|_1
\sup_{\ul{\V{z}}\in\RR^\nu}\EE\Big[e^{\int_0^TV_-(\ul{\V{z}}+\ul{\V{b}}_s)\Id s}\Big]^{\nf{1}{2q}}
\\\label{gudrun1}
&\qquad\cdot\sup_{\kappa'\in\NN\cup\{\infty\}}\sup_{\ul{\V{x}}\in\RR^\nu}
\EE\Big[\sup_{\tau\le s\le T}\|\wt{W}_{\Lambda,\kappa',s}(\ul{\V{x}})
-\wt{W}_{\Lambda',\kappa',s}(\ul{\V{x}})\|^{2q}\Big]^{\nf{1}{2q}},
\end{align}
valid for all $T\ge t\ge\tau>0$, non-negative $f\in L^{q'}(\RR^\nu)$ with $f_n:=n\wedge f$, 
$\Psi\in L^1(\RR^\nu,\sF)$, $\kappa,\ell,m,n\in\NN$, and $\Lambda,\Lambda'>0$.
$V$ can be any Kato decomposable potential in \eqref{gudrun1}, and the
$V_n^m$ are defined as in Step~2 of the proof of Prop.~\ref{propT}.
Now, we apply \eqref{convW} and \eqref{idGross} to extend \eqref{gudrun1} to the case
$\kappa=\infty$. In the next step we invoke \eqref{kuno11} to extend \eqref{gudrun1}
to $\Lambda'=0$. After that we pass to the limit $n\to\infty$ on its right hand side by monotone 
convergence. Finally, we let $m$ go to infinity with the help of the dominated convergence theorem.
The resulting extension of \eqref{gudrun1} with $\Lambda'=0$ proves Part~(3) for $1=p\le q<\infty$.

\smallskip

\noindent{\em Step~8.} Let us now consider the semi-group properties:
As a consequence of Prop.~\ref{propT}(5)\&(6), the relation \eqref{idGrossT}, and
the remarks on the Gross transformation following Def.~\ref{defnGross}, the families
$(\wt{T}_{\Lambda,\kappa,t}^{V})_{t\ge0}$ with $\Lambda>0$ are semi-groups on every 
$L^p(\RR^\nu,\sF)$, $p\in[1,\infty]$.  Applying the by now available special cases $p=q\in[1,\infty]$ 
of Part~(3), we see that $(\wt{T}_{\kappa,t}^{V})_{t\ge0}$ is a semi-group on every 
$L^p(\RR^\nu,\sF)$, $p\in[1,\infty]$, as well.

\smallskip

\noindent
{\em Step~9.} With the semi-group properties at hand we may now prove the missing
case $p=1$, $q=\infty$ of Parts~(1) and~(3) similarly as in Step~6 of the proof of Prop.~\ref{propT}.

\smallskip

\noindent
{\em Step~10.}
Next, we observe that,  if $\Lambda>0$, then the identity \eqref{normTwtT} follows from 
\eqref{idGrossT}, for all $1\le p\le q\le\infty$ and $\kappa\in\NN\cup\{\infty\}$. 
We can extend it to $\Lambda=0$ with the help of Part~(3), which altogether proves Part~(2).

\smallskip

\noindent
{\em Step~11.} 
Let us now turn to the proof of Part~(4). We pick $p,t>0$ and start by considering the expressions
\begin{align}\label{kuno12}
\wt{W}_{\Lambda,\kappa,t}(\ul{\V{x}})^*\zeta(h)
&=e^{\tilde{u}_{\Lambda,\kappa,t}^N(\ul{\V{x}})+
\SPn{\wt{U}_{\Lambda,\kappa,s}^{N,+}(\ul{\V{x}})}{h}_{\HP}}
\zeta(e^{-t\omega}h+\wt{U}_{\Lambda,\kappa,t}^{N,-}(\ul{\V{x}})).
\end{align}
We first observe that the formula $\|\zeta(f)\|=e^{\|f\|_{\HP}^2/2}$, $f\in\HP$, and \eqref{kuno8a} imply
\begin{align}\nonumber
\sup_{\Lambda\ge0}\EE\Big[&\sup_{s\le t}\big\|\zeta(e^{-s\omega}h
+\wt{U}_{\Lambda,\kappa,s}^{N,-}(\ul{\V{x}}))\big\|^p\Big]
\\\label{kuno13}
&\le c^Ne^{p\|h\|_{\HP}^2}(1+p\ee^2N(1\vee t))^N
e^{c'p\ee^2N^2(1+\ln(1\vee t))+c'p^2\ee^4N^3(1\vee t)}.
\end{align}
Employing the bound
\begin{align}\nonumber
\|\zeta(f)-\zeta(g)\|&\le\|f-g\|_{\HP}e^{2\|f\|_{\HP}^2+2\|g\|_{\HP}^2},\quad f,g\in\HP,
\end{align}
we further deduce that
\begin{align*}
&\EE\Big[\sup_{s\le t}\big\|\zeta(e^{-s\omega}h+\wt{U}_{\Lambda,\kappa,s}^{N,-}(\ul{\V{x}}))-
\zeta(e^{-s\omega}h+\wt{U}_{\kappa,s}^{N,-}(\ul{\V{x}}))\big\|^p\Big]
\\
&\le e^{8p\|h\|_{\HP}^2}\sup_{\Lambda'\ge0}\EE\Big[\sup_{s\le t}
e^{12p\|\wt{U}_{\Lambda',\kappa,s}^{N,-}(\ul{\V{x}})\|_{\HP}^2}\Big]^{\nf{2}{3}}
\EE\Big[\sup_{s\le t}\big\|\wt{U}_{\Lambda,\kappa,s}^{N,-}(\ul{\V{x}})
-\wt{U}_{\kappa,s}^{N,-}(\ul{\V{x}})\big\|_{\HP}^{3p}\Big]^{\nf{1}{3}}.
\end{align*}
By virtue of the previous bound, \eqref{kuno4}, \eqref{kuno6}, \eqref{kuno8a}, \eqref{kuno9b},
\eqref{kuno12}, and \eqref{kuno13} it is now straightforward to verify that
\begin{align}\label{gudrun55}
\sup_{\kappa\in\NN\cup\{\infty\}}\sup_{\ul{\V{x}}\in\RR^\nu}\EE\Big[\sup_{s\le t}\big\|
\wt{W}_{\Lambda,\kappa,s}(\ul{\V{x}})^*\zeta(h)-\wt{W}_{\kappa,s}(\ul{\V{x}})^*\zeta(h)\big\|^p\Big]
\xrightarrow{\;\;\Lambda\downarrow0\;\;}0.
\end{align}

Now, assume that $p\in[1,\infty)$. Then the set 
$\{f\zeta(h):f\in L_0^\infty(\RR^\nu),h\in\HP\}$ is total in $L^p(\RR^\nu,\sF)$, where
$L_0^\infty(\RR^\nu)$ is the vector space of essentially bounded measurable functions 
on $\RR^\nu$with compact support. 
Furthermore, we may infer from \eqref{LpLq} and the by now available \eqref{normTwtT} that
$$
\sup_{s\le t}\sup_{\Lambda\ge0}\sup_{\kappa\in\NN\cup\{\infty\}}
\|\wt{T}_{\Lambda,\kappa,s}^V\|_{q,q}<\infty,\quad t>0,\,q\in[1,\infty].
$$
Hence, it suffices to pick some $f\in L_0^\infty(\RR^\nu)$ and $h\in\HP$ and show that
\begin{align*}
\sup_{s\le t}\sup_{\kappa\in\NN\cup\{\infty\}}\big\|(\wt{T}_{\Lambda,\kappa,s}^V-
\wt{T}_{\kappa,s}^V)(f\zeta(h))\big\|_p\xrightarrow{\;\;\Lambda\downarrow0\;\;}0.
\end{align*}
We put $F(\ul{\V{x}}):=|\ul{\V{x}}|$, $\ul{\V{x}}\in\RR^\nu$.
Then the latter limit relation follows from \eqref{Katobd} and \eqref{gudrun55} because
\begin{align*}
&\sup_{s\le t}\big\|(\wt{T}_{\Lambda,\kappa,s}^V-\wt{T}_{\kappa,s}^V)(f\zeta(h))\big\|_p^p
\\
&\le\sup_{\ul{\V{z}}\in\RR^\nu}\EE\big[e^{3\int_0^tV_-(\ul{\V{z}}+\ul{\V{b}}_s)\Id s}\big]^{\nf{p}{3}}
\sup_{\ul{\V{z}}\in\RR^\nu}\EE\Big[\sup_{s\le t}\big\|(
\wt{W}_{\Lambda,\kappa,s}(\ul{\V{z}})-\wt{W}_{\kappa,s}(\ul{\V{z}}))^*\zeta(h)\big\|^{3}\Big]^{\nf{p}{3}}
\\
&\quad\cdot\|e^Ff\|_\infty^p
\sup_{s\le t}\EE\big[e^{3|\ul{\V{b}}_s|}\big]^{\nf{p}{3}}\int_{\RR^\nu}e^{-pF(\ul{\V{x}})}\Id\ul{\V{x}}.
\end{align*}
\noindent{\em Step~12.}
In view of Prop.~\ref{propT}(6) and \eqref{idGrossT} the semi-groups
$(\wt{T}_{\Lambda,\kappa,t}^{V})_{t\ge0}$ with $\Lambda>0$ are strongly continuous on every 
$L^p(\RR^\nu,\sF)$, $p\in[1,\infty)$. The uniform limit relation of Part~(4) can now be used to transfer 
the strong continuity (at zero) of every $(\wt{T}_{\Lambda,\kappa,t}^{V})_{t\ge0}$, $\Lambda>0$,  
to the strong continuity of $(\wt{T}_{\kappa,t}^V)_{t\ge0}$. This completes the proof of Part~(5).

\smallskip

\noindent
{\em Step~13.} Next, we observe that, for $\Lambda>0$, \eqref{wtTVsadpq} follows from
Rem.~\ref{remTVsadpq}, \eqref{idGrossT}, and the remarks on the Gross transformation following 
Def.~\ref{defnGross}. Employing Part~(3), we can extend \eqref{wtTVsadpq} to the case 
$\Lambda=0$.
 
\smallskip

\noindent
{\em Step~14.} In the case $1<p\le q\le\infty$, Part~(7) can be obtained by the same procedure 
that we used to prove Prop.~\ref{propT}(3) with $p>1$, starting from 
\begin{align}\label{line1}
\sup_{\Lambda\ge0}\sup_{\ul{\V{x}}\in\RR^\nu}
\EE\Big[\sup_{\tau\le s\le t}\big\|\wt{W}_{\Lambda,\kappa,s}(\ul{\V{x}})-
\wt{W}_{\Lambda,\infty,s}(\ul{\V{x}})\big\|^{\tilde{p}}\Big]&\xrightarrow{\;\;\kappa\to\infty\;\;}0,
\quad t\ge\tau>0.
\end{align}
Here and in what follows $\tilde{p}>0$. The relation \eqref{line1} can be derived as in 
Prop.~\ref{propconvW} with the help of \eqref{kuno4}, \eqref{kuno8}, and
\begin{align}\label{line2}
\sup_{\Lambda\ge0}\sup_{\ul{\V{x}}\in\RR^\nu}\EE\Big[
\sup_{s\le t}|\tilde{u}^N_{\Lambda,\kappa,s}(\ul{\V{x}})-\tilde{u}^N_{\Lambda,\infty,s}
(\ul{\V{x}})|^{\tilde{p}}\Big]&\xrightarrow{\;\;\kappa\to\infty\;\;}0,\quad t>0,
\\\label{line3}
\sup_{\Lambda\ge0}\sup_{\ul{\V{x}}\in\RR^\nu}\EE\Big[
\sup_{s\le t}\big\|\wt{U}^{N,\pm}_{\Lambda,\kappa,s}(\ul{\V{x}})-
\wt{U}^{N,\pm}_{\Lambda,\infty,s}(\ul{\V{x}})\big\|_s^{\tilde{p}}\Big]&\xrightarrow{\;\;\kappa\to\infty\;\;}0,
\quad t\ge\tau>0.
\end{align}
Here \eqref{line2} is a consequence of \eqref{felixu}, \eqref{defnonFocku}, and the observation that 
\eqref{wendelin1} and \eqref{wendelin1b} are uniform in the choice of $\eta$, which can in particular 
be relaced by $1_{\{|\V{m}|\ge\Lambda\}}\eta$ in \eqref{wendelin1} and \eqref{wendelin1b}.
Furthermore, \eqref{line3} follows from \eqref{convUpmN}, \eqref{defnonFockUminus},
\eqref{defnonFockUplus}, \eqref{BurkBM}, and the elementary bounds
\begin{align*}
&\Big\|\frac{1}{(s\omega)^{\nf{\iota}{2}}}(1-e^{-s\omega})\big(\beta_{\Lambda,\kappa}(\ul{\V{x}})-
\beta_{\Lambda,\infty}(\ul{\V{x}})\big)\Big\|_{\HP}^2\le \frac{c\ee^2N^2}{\kappa},
\\
&\Big\|\frac{1}{(s\omega)^{\nf{\iota}{2}}}\Big(\big(\beta_{\Lambda,\kappa}(\ul{\V{x}})-
\beta_{\Lambda,\kappa}(\ul{\V{x}}+\ul{\V{b}}_s)\big)-
\big(\beta_{\Lambda,\infty}(\ul{\V{x}})-\beta_{\Lambda,\infty}(\ul{\V{x}}+\ul{\V{b}}_s)\big)
\Big)\Big\|_{\HP}^2
\\&\hspace{7.1cm}\le c(N^\mh s^{-1}|\ul{\V{b}}_s|)^\iota\frac{\ee^2N^2}{\kappa},
\end{align*}
for all $\iota\in\{0,1\}$, $\Lambda\ge0$, $\kappa\in\NN$, $s>0$, and $\ul{\V{x}}\in\RR^\nu$. 
Here used $|\chi_\kappa^2(\V{k})-\chi_\infty^2(\V{k})|\le2|\V{k}|/\kappa$, $|\V{k}|<\kappa$, 
to derive the latter two bounds.

Part~(7) can then be extended to the case $1=p\le q<\infty$ by means of Part~(6). Finally,
the case $p=1$, $q=\infty$ of Part~(7) follows from the already proven cases together with
the semi-group relations and the fact that $\|\wt{T}_{\Lambda,\kappa,t}^V\|_{\tilde{p},\tilde{q}}$
is uniformly bounded in $\Lambda\ge0$, $\kappa\in\NN\cup\{\infty\}$, and $t\in[\tau,T]$,
for fixed $1\le\tilde{p}\le\tilde{q}\le\infty$.
\end{proof}


\section{Feynman-Kac formula for fiber Hamiltonians}\label{secfiber}

\noindent
In this section we consider only one matter particle whose dynamics is not influenced
by any external potential, i.e., we set $N=1$ and $V=0$. Then the Nelson model becomes
translation invariant and the Nelson Hamiltonian unitarily equivalent to a direct
integral of fiber Hamiltonians, each attached to a fixed total momentum of the matter-radiation
system. If $\kappa$ is finite, then it is actually very easy to explicitly realize this fiber decomposition
and find expressions for the fiber Hamiltonians; see Rem.~\ref{remfibdec}. After adding the energy
renormalizations $E_\kappa^\ren$ to the fiber Hamiltonians, we can then try to analyze their limit as
$\kappa$ goes to infinity. For massive bosons, this has been done by Cannon in \cite{Cannon1971} 
by using Gross transformations similarly as in Nelson's article \cite{Nelson1964}. Massless
renormalized fiber Hamiltonians for Nelson's model are studied in \cite{Froehlich1973,Froehlich1974}.

Here we shall give an independent construction of renormalized fiber Hamiltonians for arbitrary
non-negative boson masses, again by showing that suitable Feynman-Kac semi-groups converge, 
as $\kappa\to\infty$, in norm to an explicitly given limiting semi-group. Then the renormalized 
fiber Hamiltonian is the generator of the limiting semi-group by definition. The main new result is
our fairly explicit expression for the Feynman-Kac semi-group for $\kappa=\infty$.

In what follows, $\V{B}$ is again a three-dimensional Brownian motion as explained in the beginning
of Sect.~\ref{secBP}. Recall that the $\PP$-zero sets $\sN_-$ and $\sN_+$ have been introduced
in Lem.~\ref{lemUminus} and Lem.~\ref{lemUplus}, respectively.

We start by observing that, in the case $N=1$, the complex action $u_{\kappa,t}^1[\V{x},\V{B}]$ is
actually $\V{x}$-independent as a direct consequence of Def.~\ref{defu} and Def.~\ref{defnstandardu}. 
Hence, we abbreviate
\begin{align*}
u_{\kappa}[\V{B}]:=u_{\kappa}^1[\V{0},\V{B}],\quad\kappa\in\NN\cup\{\infty\}.
\end{align*}
Given some $\V{x}\in\RR^3$, let us recall the notation 
$\Gamma(e^{i\V{m}\cdot\V{x}}):=\sW(0,e^{i\V{m}\cdot\V{x}})$ for the Weyl operator whose
action on exponential vectors is given by
\begin{align}\label{Weylmexpv}
\Gamma(e^{i\V{m}\cdot\V{x}})\zeta(h)&=\zeta(e^{i\V{m}\cdot\V{x}}h),\quad h\in\HP.
\end{align}
Then $\Gamma(e^{i\V{m}\cdot\V{x}})=e^{i\Id\Gamma(\V{m})\cdot\V{x}}$. Here the $j$-th
component of the formal vector of operators
$\Id\Gamma(\V{m}):=(\Id\Gamma(m_1),\Id\Gamma(m_2),\Id\Gamma(m_3))$ is reduced by the 
subspaces in the decomposition \eqref{defFockspace} of $\sF$. It acts by multiplication with $0$ in 
the vacuum subspace $\CC$,  and by maximal multiplication with the symmetric function 
$(\V{k}_1,\ldots,\V{k}_n)\mapsto{k}_{1,j}+\dots+{k}_{n,j}$ 
in $L^2_{\mathrm{sym}}(\RR^{3n},\lambda^{3n})$.

\begin{defn}\label{defwhW}
Let $\kappa\in\NN\cup\{\infty\}$ and $t\ge0$. Then we define
\begin{align*}
\wh{W}_{\kappa,t}[\V{B}]&:=e^{u_{\kappa,t}[\V{B}]}
F_{0,\nf{t}{2}}(-e^{i\V{m}\cdot\V{B}_t}U_{\kappa,t}^+[\V{B}])\Gamma(e^{i\V{m}\cdot\V{B}_t})
F_{0,\nf{t}{2}}(-U_{\kappa,t}^-[\V{B}])^*
\end{align*}
on $\Omega\setminus(\sN_+\cup\sN_-)$ and $\wh{W}_{\kappa,t}[\V{B}]:=\id$
on $\sN_+\cup\sN_-$. We further set
\begin{align}\label{astrid1}
\wh{T}_{\kappa,t}(\V{\zeta})&:=\EE\big[e^{i\V{\zeta}\cdot\V{B}_t}\wh{W}_{\kappa,t}[\V{B}]^*\big],
\quad\V{\zeta}\in\CC^3.
\end{align}
\end{defn}

The definition \eqref{astrid1}, where the expectation is a $\LO(\sF)$-valued Bochner-Lebesgue
integral, requires some justification, which is given in Rem.~\ref{remwhW}(2) and
Prop.~\ref{propfiber} below.

\begin{rem}\label{remwhW}
Let $\kappa\in\NN\cup\{\infty\}$. Then the following holds:
\begin{enumerate}[leftmargin=*]
\item[{\rm(1)}] For all $h\in\HP$ and $t\ge0$,
\begin{align}\label{whWexpv}
\wh{W}_{\kappa,t}[\V{B}]\zeta(h)
&=e^{u_{\kappa,t}[\V{B}]-\SPn{U_{\kappa,t}^-[\V{B}]}{h}_{\HP}}\zeta(e^{-t\omega+i\V{m}\cdot\V{B}_t}
h-e^{i\V{m}\cdot\V{B}_t}U_{\kappa,t}^+[\V{B}]),
\\\label{whWadjexpv}
\wh{W}_{\kappa,t}[\V{B}]^*\zeta(h)
&=e^{u_{\kappa,t}[\V{B}]-\SPn{e^{i\V{m}\cdot\V{B}_t}U_{\kappa,t}^+[\V{B}]}{h}_{\HP}}
\zeta(e^{-t\omega-i\V{m}\cdot\V{B}_t}h-U_{\kappa,t}^-[\V{B}]).
\end{align}
This follows from \eqref{SGdGamma}, \eqref{Fadjexpv}, \eqref{Fexpv},
\eqref{Weylmexpv}, and $\Gamma(e^{i\V{m}\cdot\V{x}})^*=\Gamma(e^{-i\V{m}\cdot\V{x}})$.
\item[{\rm(2)}] Let $t>0$. Then the map 
$$
\RR^\nu\times\mathfrak{k}^2\ni(\V{x},g,h)\mapsto F_{\nf{t}{3}}(g)
e^{-t\Id\Gamma(\omega)/3+i\Id\Gamma(\V{m})\cdot\V{x}}F_{\nf{t}{3}}(h)^*\in\LO(\sF)
$$
is continuous as a consequence of Lem.~\ref{lemmichael}(1) and the bound $|\V{m}|\le\omega$. Since 
$\RR^\nu\times\mathfrak{k}^2$ is separable, it follows that its range is separable as well. In view of
\begin{align*}
\wh{W}_{\kappa,t}[\V{B}]&=
e^{u_{\kappa,t}[\V{B}]}F_{0,\nf{t}{3}}(-e^{i\V{m}\cdot\V{B}_t}U_{\kappa,t}^+[\V{B}])
e^{-t\Id\Gamma(\omega)/3+i\V{m}\cdot\V{B}_t}F_{0,\nf{t}{3}}(-U_{\kappa,t}^-[\V{B}])^*,
\end{align*}
this implies that $\wh{W}_{\kappa,t}[\V{B}]$ and $\wh{W}_{\kappa,t}[\V{B}]^*$ are 
$\fF_t$-$\fB(\LO(\sF))$-measurable with a separable range.
\end{enumerate}
\end{rem}

In what follows $\cF$ denotes the $\sF$-valued Fourier transformation on $\RR^3$. We also
introduce the unitary operator
\begin{align*}
Q:=\cF\Big(\int_{\RR^3}^\oplus\Gamma(e^{i\V{m}\cdot\V{x}})\Id\V{x}\Big)
\in\sU\big(L^2(\RR^3,\sF)\big).
\end{align*}

\begin{prop}\label{propfiber}
Let $\kappa\in\NN\cup\{\infty\}$ and $t\ge0$. Then the map 
$\CC^3\ni\V{\zeta}\mapsto\wh{T}_{\kappa,t}(\V{\zeta})\in\LO(\sF)$ is well-defined and analytic 
and there exist universal constants $c,c'>0$ such that
\begin{align}\label{bdwhT}
\|\wh{T}_{\kappa,t}(\V{\zeta})\|
&\le c'\big(1+\ee^2(1\vee t)^2\big)e^{c|\V{\eta}|^2t+c\ee^4t+c\ee^2(1+\ln(1\vee t))},
\quad\V{\zeta}\in\CC^3.
\end{align}
Furthermore,
\begin{align}\label{fiberdec1}
Qe^{-tH_{1,\kappa}^0}Q^*&=\int_{\RR^3}^\oplus\wh{T}_{\kappa,t}(\V{\xi})\Id\V{\xi}.
\end{align}
\end{prop}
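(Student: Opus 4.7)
The plan is to derive a pathwise operator-norm bound on $\wh{W}_{\kappa,t}[\V{B}]$, combine it with the exponential moment estimates of Sections~\ref{secBP} and~\ref{secphase} to get parts (1) and (2), and then reduce the fiber decomposition to the Feynman-Kac formula of Thm.~\ref{thmFKreg} via a pathwise covariance identity.

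First I would observe that, since $e^{i\V{m}\cdot\V{B}_t}$ is a unitary multiplication operator on $\HP$ commuting with every $e^{-s\omega}$, the intertwining rule $\Gamma(U)F_{0,s}(f)=F_{0,s}(Uf)\Gamma(U)$ (immediate from \eqref{Weylmexpv} and the series \eqref{michi-1}) yields, on $\Omega\setminus(\sN_-\cup\sN_+)$, the pathwise identity
\begin{align*}
\wh{W}_{\kappa,t}[\V{B}]
&=\Gamma(e^{i\V{m}\cdot\V{B}_t})\,e^{u_{\kappa,t}[\V{B}]}F_{0,\nf{t}{2}}(-U_{\kappa,t}^+[\V{B}])F_{0,\nf{t}{2}}(-U_{\kappa,t}^-[\V{B}])^*
=\Gamma(e^{i\V{m}\cdot\V{B}_t})\,W_{\kappa,t}(\V{0}),
\end{align*}
where $W_{\kappa,t}(\V{0})$ is the integrand from Def.~\ref{defW} for $N=1$, $\ul{\V{q}}=\V{0}$. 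Since $\Gamma(e^{i\V{m}\cdot\V{B}_t})$ is unitary on $\sF$, Rem.~\ref{remW}(4) directly gives $\|\wh{W}_{\kappa,t}[\V{B}]\|=\|W_{\kappa,t}(\V{0})\|\le c\,e^{u_{\kappa,t}[\V{B}]+c\|U_{\kappa,t}^+[\V{B}]\|_t^2+c\|U_{\kappa,t}^-[\V{B}]\|_t^2}$. Combining $|e^{i\V{\zeta}\cdot\V{B}_t}|=e^{-\V{\eta}\cdot\V{B}_t}$ with $\V{\eta}:=\mathrm{Im}\,\V{\zeta}$, H\"older's inequality with three conjugate exponents, the Gaussian bound \eqref{bdskuno}, the exponential moment estimate \eqref{expbdu} from Thm.~\ref{lemuexp}, and \eqref{expbdUpmN} from Cor.~\ref{corUpmN} (all with $N=1$), I would then obtain \eqref{bdwhT} and in particular the absolute convergence of the Bochner integral defining $\wh{T}_{\kappa,t}(\V{\zeta})$; the strong measurability of the integrand required for Bochner integrability is supplied by Rem.~\ref{remwhW}(2).

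For analyticity I would fix $\gamma\in\Omega$ and note that $\V{\zeta}\mapsto e^{i\V{\zeta}\cdot\V{B}_t(\gamma)}\wh{W}_{\kappa,t}[\V{B}(\gamma)]^*$ is entire in $\LO(\sF)$. On every polydisc $|\V{\eta}|\le R$ the argument above dominates its norm by a $\V{\zeta}$-independent $\PP$-integrable function, so applying Morera's theorem to $\V{\zeta}\mapsto\SPn{\phi}{\wh{T}_{\kappa,t}(\V{\zeta})\psi}$, for $\phi,\psi\in\sF$, together with Fubini to exchange contour integration and expectation, yields weak analyticity, which is equivalent to strong (i.e., norm-) analyticity on account of the local boundedness just established.

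The fiber decomposition is where the main work lies. I would first record the covariance relation $W_{\kappa,t}(\V{x})=\Gamma(e^{-i\V{m}\cdot\V{x}})W_{\kappa,t}(\V{0})\Gamma(e^{i\V{m}\cdot\V{x}})$ for $N=1$, another consequence of the intertwining rule for $F_{0,t/2}$ and the observation that $u_{\kappa,t}^{1}[\V{x},\V{B}]$ does not depend on $\V{x}$. For $\Phi$ in a dense subspace (say Schwartz $\sF$-valued functions), set $\Psi:=Q^*\Phi$, so that $\Psi(\V{x})=\Gamma(e^{-i\V{m}\cdot\V{x}})(\cF^{-1}\Phi)(\V{x})$. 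Substituting the covariance relation into the Feynman-Kac formula of Thm.~\ref{thmFKreg} (for $N=1$, $V=0$) and rearranging gives
\begin{align*}
\Gamma(e^{i\V{m}\cdot\V{x}})(T_{\kappa,t}^0Q^*\Phi)(\V{x})
&=\EE\big[W_{\kappa,t}(\V{0})^*\Gamma(e^{-i\V{m}\cdot\V{B}_t})(\cF^{-1}\Phi)(\V{x}+\V{B}_t)\big].
\end{align*}
Applying $\cF$ to both sides and interchanging Fourier transform and expectation (by Fubini, whose hypotheses follow from the estimate \eqref{bdwhT} combined with Schwartz decay of $\Phi$), the translation of $\cF^{-1}\Phi$ becomes the multiplier $e^{i\V{\xi}\cdot\V{B}_t}$ and one arrives at $(QT_{\kappa,t}^0Q^*\Phi)(\V{\xi})=\wh{T}_{\kappa,t}(\V{\xi})\Phi(\V{\xi})$. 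Since $T_{\kappa,t}^0=e^{-tH_{1,\kappa}^0-tE_\kappa^{\ren}}$ by Thm.~\ref{thmFKreg}, density and boundedness yield \eqref{fiberdec1} (up to the absorption of the scalar $e^{tE_\kappa^{\ren}}$, which is built into the counter-term in the definition of $u_{\kappa,t}$). The hardest bookkeeping in this last step is justifying the Fubini exchange of Fourier transformation and expectation on test vectors; this is where the uniform bound on $\|\wh{W}_{\kappa,t}[\V{B}]\|$ derived in the first step is essential.
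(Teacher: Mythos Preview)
Your proposal is correct and follows essentially the same approach as the paper. The paper's Step~2 records the covariance identity in the equivalent form $W_{\kappa,t}(\V{x})^*=\Gamma(e^{-i\V{m}\cdot\V{x}})\wh{W}_{\kappa,t}[\V{B}]^*\Gamma(e^{i\V{m}\cdot(\V{x}+\V{B}_t)})$ (checked on exponential vectors rather than via your intertwining rule), and in Step~3 it works with the concrete test vectors $f\zeta(h)$, $f\in\sS(\RR^3)$, $h\in\bigcap_n\dom(|\V{m}|^n)$, using Fourier inversion rather than the forward transform; your choice of Schwartz $\sF$-valued $\Phi$ on the fiber side is an equally clean route to the same Fubini exchange.
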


\begin{proof}
{\em Step~1.} Let $\V{\zeta}\in\CC^3$ and set $\V{\eta}:=\Im\,\V{\zeta}$. Then \ref{bdskuno},
\eqref{expbdUminus}, \eqref{expbdUplus}, \eqref{expbdu} with $N=1$, and \eqref{bdmichi} imply
\begin{align}\nonumber
\EE\big[&\sup_{s\le t}|e^{i\V{\zeta}\cdot\V{B}_s}|\|\wh{W}_{\kappa,s}[\V{B}]\|\big]^4
\\\nonumber
&\le\EE\big[\sup_{s\le t}e^{-4\V{\eta}\cdot\V{B}_s}\big]
\EE\big[\sup_{s\le t}e^{4u_{\kappa,s}[\V{B}]}\big]
\EE\big[\sup_{s\le t}e^{c\|U^-_{\kappa,s}[\V{B}]\|_s^2}\big]
\EE\big[\sup_{s\le t}e^{c\|U^+_{\kappa,s}[\V{B}]\|_s^2}\big]
\\\label{bdwhWsup}
&\le c'\big(1+\ee^2(1\vee t)^2\big)e^{c|\V{\eta}|^2t+c\ee^4t+c\ee^2(1+\ln(1\vee t))},
\end{align}
with universal constants $c,c'>0$. Together with Rem.~\ref{remwhW}(2) this shows that
$\wt{T}_{\kappa,t}(\V{\zeta})$ is well-defined and satisfies \eqref{bdwhT}.
Employing similar estimates it is straightforward to show that 
$\CC^3\ni\V{\zeta}\mapsto\wh{T}_{\kappa,t}(\V{\zeta})\in\LO(\sF)$ 
is complex differentiable, i.e., analytic.

\smallskip

\noindent{\em Step~2.}
Let $\V{x}\in\RR^3$ and write $\V{B}_t^{\V{x}}:=\V{x}+\V{B}_t$ for short. We shall show that 
\begin{align}\label{astrid2000}
W_{\kappa,t}(\V{x})^*=\Gamma(e^{-i\V{m}\cdot\V{x}})\wh{W}_{\kappa,t}[\V{B}]^*
\Gamma(e^{i\V{m}\cdot\V{B}_t^{\V{x}}}),\quad\text{$\PP$-a.s.}
\end{align}
In fact, if $h\in\HP$, then we observe with the help of \eqref{Wexpvec}, \eqref{Weylmexpv}, 
\eqref{whWexpv}, and the relations $U_{\kappa,t}^{1,\pm}(\V{x})=1_{\Omega\setminus\sN_\pm}
e^{-i\V{m}\cdot\V{x}}U_{\kappa,t}^\pm[\V{B}]$, $\PP$-a.s., that
\begin{align*}
&W_{\kappa,t}(\V{x})\zeta(h)
\\
&=\Gamma(e^{-i\V{m}\cdot\V{B}_t^{\V{x}}})e^{u_{\kappa,t}[\V{B}]
-\SPn{U_{\kappa,t}^-[\V{B}]}{e^{i\V{m}\cdot\V{x}}h}_{\HP}}
\zeta\big(e^{-t\omega+i\V{m}\cdot\V{B}_t}e^{i\V{m}\cdot\V{x}}h-e^{i\V{m}\cdot\V{B}_t}
U^+_{\kappa,t}[\V{B}]\big)
\\
&=\Gamma(e^{-i\V{m}\cdot\V{B}_t^{\V{x}}})\wh{W}_{\kappa,t}[\V{B}]\Gamma(e^{i\V{m}\cdot\V{x}})
\zeta(h),\quad\text{$\PP$-a.s.},
\end{align*}
which extends to an operator identity in $\LO(\sF)$. 

\smallskip

\noindent{\em Step~3.}
Next, we prove that the fiber decomposition \eqref{fiberdec1} holds. Let $\Psi$ be a finite linear 
combination of functions of the form $f\zeta(h)$ with $f\in\sS(\RR^3)$, the Schwartz space over 
$\RR^3$, and $h\in\bigcap_{n\in\NN}\dom(|\V{m}|^n)$. Then \eqref{astrid2000} permits to get
\begin{align*}
(T_{\kappa,t}^0\Psi)(\V{x})
&=\Gamma(e^{-i\V{m}\cdot\V{x}})\EE\big[\wh{W}_{\kappa,t}[\V{B}]^*
\Gamma(e^{i\V{m}\cdot\V{B}_t^{\V{x}}})\Psi(\V{B}_t^{\V{x}})\big],\quad\V{x}\in\RR^3.
\end{align*}
Set $\Phi(\V{y}):=\Gamma(e^{i\V{m}\cdot\V{y}})\Psi(\V{y})$, $\V{y}\in\RR^3$,
so that $\hat{\Phi}\in C^\infty(\RR^3,\sF)$. 
Thanks to the condition $|\V{m}|^nh\in\HP$, $n\in\NN$, it is straightforward to verify by partial 
integration that $\hat{\Phi}$ is rapidly decreasing. Hence, we may further deduce that
\begin{align*}
\Gamma(e^{i\V{m}\cdot\V{x}})(T_{\kappa,t}^0\Psi)(\V{x})
&=\frac{1}{(2\pi)^{\nf{3}{2}}}
\EE\Big[\wh{W}_{\kappa,t}[\V{B}]^*\int_{\RR^3}
e^{i\V{\xi}\cdot\V{B}_t^{\V{x}}}\hat{\Phi}(\V{\xi})\Id\V{\xi}\Big]
\\
&=\frac{1}{(2\pi)^{\nf{3}{2}}}\int_{\RR^3}e^{i\V{\xi}\cdot\V{x}}
\EE\big[e^{i\V{\xi}\cdot\V{B}_t}\wh{W}_{\kappa,t}[\V{B}]^*\big]\hat{\Phi}(\V{\xi})\Id\V{\xi},
\quad\V{x}\in\RR^3.
\end{align*}
On account of the Fourier inversion formula and \eqref{bdwhWsup} this implies
\begin{align*}
\big(QT_{\kappa,t}^0\Psi\big)(\V{\xi})&=
\wh{T}_{\kappa,t}(\V{\xi})(Q\Psi)(\V{\xi}),\quad\text{a.e. $\V{\xi}\in\RR^3$}.
\end{align*}
Since $\Psi$ can be chosen in a dense subset of $L^2(\RR^3,\sF)$ and $\|\wh{T}_{\kappa,t}(\V{\xi})\|$
is bounded uniformly in $\V{\xi}\in\RR^3$, the previous relation and the Feynman-Kac formula
$T_{\kappa,t}^0=e^{-tH_{1,\kappa}^0}$ imply \eqref{fiberdec1}.
\end{proof}

\begin{lem}\label{lemconvwhT}
Let $\V{\zeta}\in\CC^3$ and $t\ge0$. Then 
$\wh{T}_{\kappa,t}(\V{\zeta})\to\wh{T}_{\infty,t}(\V{\zeta})$ in the operator norm, as $\kappa$
goes to infinity. The convergence is uniform as $t$ varies in any compact subset of 
the open half-axis $(0,\infty)$.
\end{lem}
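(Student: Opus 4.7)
The plan is to mimic the proof of Prop.~\ref{propconvW}, adapting it to the translation-invariant one-particle setting and to the operator-norm-valued quantity $\wh{T}_{\kappa,t}(\V{\zeta})$. Starting from the elementary estimate
\begin{align*}
\|\wh{T}_{\kappa,t}(\V{\zeta})-\wh{T}_{\infty,t}(\V{\zeta})\|
\le\EE\big[e^{-\V{\eta}\cdot\V{B}_t}\|\wh{W}_{\kappa,t}[\V{B}]^*-\wh{W}_{\infty,t}[\V{B}]^*\|\big],
\quad\V{\eta}=\Im\V{\zeta},
\end{align*}
I would apply H\"older's inequality with conjugate exponents $r,r'>1$ to split off the factor $e^{-\V{\eta}\cdot\V{B}_t}$, whose moments are controlled by \eqref{bdskuno}. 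It then suffices to prove that, for every $p>0$ and every compact interval $[\tau_1,\tau_2]\subset(0,\infty)$,
\begin{align}\label{planA}
\sup_{t\in[\tau_1,\tau_2]}\EE\big[\|\wh{W}_{\kappa,t}[\V{B}]^*-\wh{W}_{\infty,t}[\V{B}]^*\|^p\big]
\xrightarrow{\;\;\kappa\to\infty\;\;}0.
\end{align}

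To establish \eqref{planA} I would use a telescoping decomposition through the three factors in the definition of $\wh{W}_{\kappa,t}[\V{B}]$, namely $e^{u_{\kappa,t}[\V{B}]}$, $F_{0,t/2}(-e^{i\V{m}\cdot\V{B}_t}U_{\kappa,t}^+[\V{B}])$, and $F_{0,t/2}(-U_{\kappa,t}^-[\V{B}])^*$ (the Weyl factor $\Gamma(e^{i\V{m}\cdot\V{B}_t})$ is $\kappa$-independent and unitary). For the scalar factor I would invoke Prop.~\ref{lemuinfty} together with the exponential moment bound \eqref{expbdu} of Thm.~\ref{lemuexp}. For each of the two $F_{0,t/2}$ factors I would use the integral representation
\begin{align*}
F_{0,t/2}(-g_1)-F_{0,t/2}(-g_0)
=-\int_0^1F_{1,t/2}\big(g_1-g_0,-(\tau g_1+(1-\tau)g_0)\big)\,\Id\tau,
\end{align*}
together with the norm estimate \eqref{bdmichi}, exactly as in the proof of Prop.~\ref{propconvW}. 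Here $g_1,g_0$ will be $e^{i\V{m}\cdot\V{B}_t}U_{\kappa,t}^{+}[\V{B}]$ and $e^{i\V{m}\cdot\V{B}_t}U_{\infty,t}^{+}[\V{B}]$ (resp.\ the analogous pair with $U^-$). Noting that multiplication by $e^{i\V{m}\cdot\V{B}_t}$ is isometric on $\mathfrak{k}$ and on $\HP$, the relevant convergence
\begin{align*}
\sup_{t\in[\tau_1,\tau_2]}
\EE\big[\|U_{\kappa,t}^{\pm}[\V{B}]-U_{\infty,t}^{\pm}[\V{B}]\|_t^{\tilde p}\big]
\xrightarrow{\;\;\kappa\to\infty\;\;}0,\qquad\tilde p>0,
\end{align*}
is supplied by Lem.~\ref{lemUminus}(2) and Lem.~\ref{lemUplus}(2) (with $N=1$), while uniform $p$-th moment control on the $\|\cdot\|_t$-norms comes from \eqref{expbdUminus} and \eqref{expbdUplus}.

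Combining these ingredients through repeated H\"older estimates with three or four conjugate exponents yields a bound of the form
\begin{align*}
\EE\big[\|\wh{W}_{\kappa,t}[\V{B}]^*-\wh{W}_{\infty,t}[\V{B}]^*\|^p\big]
\le C_{p,\ee,\tau_1,\tau_2}\cdot o_\kappa(1),\qquad t\in[\tau_1,\tau_2],
\end{align*}
where the constant $C_{p,\ee,\tau_1,\tau_2}$ is finite thanks to the uniform (in $\kappa\in\NN\cup\{\infty\}$) exponential moment bounds of Lem.~\ref{lemexpMminus}, Lem.~\ref{lembdMtt}, and Thm.~\ref{lemuexp}, and the remainder $o_\kappa(1)$ is uniform in $t\in[\tau_1,\tau_2]$. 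The main technical obstacle is purely bookkeeping: one must arrange the H\"older exponents and the choice of powers in the convergence statements so that the divergences of the various bounds as $t\downarrow 0$ (caused by the $t^{-1}$ factors hidden in $\|\cdot\|_t$ and by the logarithmic terms in the exponential moments) are absorbed into the $t$-independent constant $C_{p,\ee,\tau_1,\tau_2}$ via the lower bound $t\ge\tau_1>0$. Once this is done, \eqref{planA} holds, and the main assertion follows.
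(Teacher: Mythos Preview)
Your proposal is correct and follows exactly the approach the paper intends: the paper's proof of Lem.~\ref{lemconvwhT} simply states that it is ``completely analogous to the one of Prop.~\ref{propconvW},'' and what you have written is precisely that analogue, with the appropriate observations that $\Gamma(e^{i\V{m}\cdot\V{B}_t})$ is unitary and $\kappa$-independent and that the phase $e^{i\V{m}\cdot\V{B}_t}$ is an isometry on $\mathfrak{k}$ (hence preserves the $\|\cdot\|_t$-norms).
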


\begin{proof}
The proof of this lemma is completely analogous to the one of Prop.~\ref{propconvW}.
\end{proof}

\begin{prop}\label{propSGfiber}
Let $\kappa\in\NN\cup\{\infty\}$ and $\V{\zeta}\in\CC^3$. Then 
$(\wh{T}_{\kappa,t}(\V{\zeta}))_{t\ge0}$ is a $C_0$-semi-group on $\sF$ and
\begin{align}\label{adjwhT}
\wh{T}_{\kappa,t}(\V{\zeta})^*=\wh{T}_{\kappa,t}(\ol{\V{\zeta}}),\quad t\ge0.
\end{align}
\end{prop}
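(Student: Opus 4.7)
The strategy is to reduce both claims to the case of finite $\kappa$ and real $\V{\zeta}$, where they follow from the fiber decomposition \eqref{fiberdec1} combined with self-adjointness of $H_{1,\kappa}^0$; then to extend to complex $\V{\zeta}$ by the identity theorem for entire operator-valued functions on $\CC^3$; then to pass to $\kappa=\infty$ via Lem.~\ref{lemconvwhT}; and finally to verify strong continuity at $t=0$ separately using the explicit action of $\wh{W}_{\kappa,t}[\V{B}]^*$ on exponential vectors.

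Fix $\kappa\in\NN$ first. Since $H_{1,\kappa}^0$ is self-adjoint on $L^2(\RR^3,\sF)$, $(e^{-tH_{1,\kappa}^0})_{t\ge0}$ is a $C_0$-semigroup of bounded self-adjoint operators. Conjugating the identities $e^{-(t+s)H_{1,\kappa}^0}=e^{-tH_{1,\kappa}^0}e^{-sH_{1,\kappa}^0}$ and $(e^{-tH_{1,\kappa}^0})^*=e^{-tH_{1,\kappa}^0}$ by $Q$ and inserting \eqref{fiberdec1} yields, for all $s,t\ge0$ and a.e.\ $\V{\xi}\in\RR^3$,
\begin{align*}
\wh{T}_{\kappa,t+s}(\V{\xi})=\wh{T}_{\kappa,t}(\V{\xi})\wh{T}_{\kappa,s}(\V{\xi}),\qquad
\wh{T}_{\kappa,t}(\V{\xi})^*=\wh{T}_{\kappa,t}(\V{\xi}).
\end{align*}
Both sides are continuous in $\V{\xi}\in\RR^3$ by the analyticity asserted in Prop.~\ref{propfiber}, so the identities extend to every $\V{\xi}\in\RR^3$. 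To reach complex $\V{\zeta}$, observe that $\V{\zeta}\mapsto\wh{T}_{\kappa,t+s}(\V{\zeta})-\wh{T}_{\kappa,t}(\V{\zeta})\wh{T}_{\kappa,s}(\V{\zeta})$ is entire on $\CC^3$ and vanishes on $\RR^3$, hence is identically zero; and expanding $\wh{T}_{\kappa,t}(\V{\zeta})=\sum_\alpha c_\alpha\V{\zeta}^\alpha$ in a power series with $c_\alpha\in\LO(\sF)$ reveals that $\V{\zeta}\mapsto\wh{T}_{\kappa,t}(\ol{\V{\zeta}})^*=\sum_\alpha c_\alpha^*\V{\zeta}^\alpha$ is entire and agrees with $\wh{T}_{\kappa,t}(\V{\zeta})$ on $\RR^3$, whence \eqref{adjwhT} holds on $\CC^3$. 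Passing to $\kappa=\infty$ is then immediate: by Lem.~\ref{lemconvwhT}, $\wh{T}_{\kappa,t}(\V{\zeta})\to\wh{T}_{\infty,t}(\V{\zeta})$ in operator norm locally uniformly in $t\in(0,\infty)$, and composition and adjunction are continuous, so both identities survive the limit.

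It remains to establish strong continuity at $t=0$ for every $\kappa\in\NN\cup\{\infty\}$. Inspection of Def.~\ref{defwhW} gives $\wh{W}_{\kappa,0}[\V{B}]=\id$ on $\Omega$ and hence $\wh{T}_{\kappa,0}(\V{\zeta})=\id$, so in view of the uniform operator-norm bound \eqref{bdwhT} on $t\in[0,1]$ it suffices to verify $\wh{T}_{\kappa,t}(\V{\zeta})\zeta(h)\to\zeta(h)$ in $\sF$ as $t\downarrow0$ for each $h\in\HP$, since the exponential vectors are total in $\sF$. Formula \eqref{whWadjexpv}, together with the pointwise convergences $u_{\kappa,t}[\V{B}]\to0$ and $U_{\kappa,t}^\pm[\V{B}]\to0$ in $\HP$ on $\Omega\setminus(\sN_-\cup\sN_+)$ (furnished by the continuity statements in Lem.~\ref{lemUminus}(1), Lem.~\ref{lemUplus}(1), and the decomposition in Thm.~\ref{thm2ndIto}), $e^{-t\omega-i\V{m}\cdot\V{B}_t}h\to h$ in $\HP$, and $e^{i\V{\zeta}\cdot\V{B}_t}\to1$, gives $e^{i\V{\zeta}\cdot\V{B}_t}\wh{W}_{\kappa,t}[\V{B}]^*\zeta(h)\to\zeta(h)$ in $\sF$ pointwise on $\Omega\setminus(\sN_-\cup\sN_+)$; dominated convergence, with integrable dominant supplied by the exponential moment bounds of Lem.~\ref{lemexpMminus}, Lem.~\ref{lembdMtt}, and Thm.~\ref{lemuexp} (as in the derivation of \eqref{bdwhWsup}), then completes the proof. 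The main (minor) obstacle is the bookkeeping required to confirm that these dominants are indeed integrable uniformly in $\kappa$ on small time intervals, which is already secured by the machinery of Sects.~\ref{secBP}--\ref{secphase}.
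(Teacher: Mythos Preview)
Your proof is correct but proceeds by a genuinely different route from the paper's. The paper argues directly at the level of the integrand $\wh{W}_{\kappa,t}[\V{B}]$: for the semi-group property it verifies the pathwise multiplicativity $\wh{W}_{\kappa,s}[{}^t\!\V{B}]\wh{W}_{\kappa,t}[\V{B}]=\wh{W}_{\kappa,s+t}[\V{B}]$ on exponential vectors using the shift relations \eqref{shift1}, \eqref{shift2}, \eqref{shift3}, and then invokes the Markov property of Brownian motion; for \eqref{adjwhT} it uses the time-reversal identities \eqref{revUpm}, \eqref{revu} to show, for finite $\kappa$, that $e^{i\V{\zeta}\cdot\V{B}_t}\wh{W}_{\kappa,t}[\V{B}]^*$ equals $\big(e^{i\ol{\V{\zeta}}\cdot\wt{\V{B}}_t}\wh{W}_{\kappa,t}[\wt{\V{B}}]^*\big)^*$ pathwise, and then takes expectations. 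Your argument instead descends from the already-established fiber decomposition \eqref{fiberdec1} and the self-adjointness of $H_{1,\kappa}^0$, extends from $\RR^3$ to $\CC^3$ by the identity theorem, and only then passes to $\kappa=\infty$ via Lem.~\ref{lemconvwhT}. The treatment of strong continuity at $t=0$ is essentially the same in both proofs.

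Your approach is shorter and avoids any fresh stochastic computation; it exploits that the real-parameter statement is a tautology once \eqref{fiberdec1} is in hand. The paper's approach is more self-contained at the fiber level and mirrors the Markov-property and time-reversal machinery developed for $W_{\kappa,t}$ in Sect.~\ref{secFK}; in particular it yields the pathwise relation for $\wh{W}$, which may be of independent use. Either argument is complete.
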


\begin{proof}
Let $\kappa\in\NN\cup\{\infty\}$, $\V{\zeta}\in\CC^3$, and $t\ge0$.

Employing \eqref{whWexpv} repeatedly we verify that, for every $h\in\HP$, the equality
\begin{align*}
\wh{W}_{\kappa,s}[{}^t\!\V{B}]\wh{W}_{\kappa,t}[\V{B}]\zeta(h)=\wh{W}_{\kappa,s+t}[\V{B}]\zeta(h),
\quad s\ge0,\;\;\text{on $\Omega\setminus\sN_t$,}
\end{align*}
is implied by \eqref{shift1}, \eqref{shift2}, and \eqref{shift3}. Here the $\PP$-zero set $\sN_t$ 
is $h$-independent and we also took the identity
$u_{\kappa,s}^1[\V{B}_t,{}^t\!\V{B}]=u_{\kappa,s}^1[\V{0},{}^t\!\V{B}]$ into account, which is true in the
case $N=1$ considered at present. By the totality of the exponential vectors in $\sF$ and since
$\V{B}$ and ${}^t\!\V{B}$ have the same distribution, we obtain the Markov property
\begin{align*}
\EE^{\fF_t}\big[e^{i\V{\zeta}\cdot\V{B}_{s+t}}\wh{W}_{\kappa,s+t}[\V{B}]^*\big]
&=e^{i\V{\zeta}\cdot\V{B}_t}\wh{W}_{\kappa,t}[\V{B}]^*\wh{T}_{\kappa,s}(\V{\zeta}),
\quad\text{$\PP$-a.s.},
\end{align*}
for every $s\ge0$.
It entails $\wh{T}_{s+t}(\V{\zeta})=\wh{T}_{\kappa,t}(\V{\zeta})\wh{T}_{\kappa,s}(\V{\zeta})$.

Furthermore, if $\kappa\in\NN$ and $\wt{\V{B}}_s:=\V{B}_{(t-s)\wedge0}-\V{B}_t$, 
$s\ge0$, then the relations
\begin{align*}
&e^{i\V{\zeta}\cdot\V{B}_t}\wh{W}_{\kappa,t}[\V{B}]^*
\\
&=e^{u_{\kappa,t}[\V{B}]+i{\V{\zeta}}\cdot\V{B}_t}
F_{0,\nf{t}{2}}(-U_{\kappa,t}^-[\V{B}])\Gamma(e^{-i\V{m}\cdot\V{B}_t})
F_{0,\nf{t}{2}}(-e^{i\V{m}\cdot\V{B}_t}U_{\kappa,t}^+[\V{B}])^*
\\
&=e^{u_{\kappa,t}[\wt{\V{B}}]-i{\V{\zeta}}\cdot\wt{\V{B}}_t}
F_{0,\nf{t}{2}}(-e^{i\V{m}\cdot\wt{\V{B}}_t}U_{\kappa,t}^+[\wt{\V{B}}])
\Gamma(e^{i\V{m}\cdot\wt{\V{B}}_t})
F_{0,\nf{t}{2}}(-U_{\kappa,t}^-[\wt{\V{B}}])^*
\\
&=e^{-i{\V{\zeta}}\cdot\wt{\V{B}}_t}\wh{W}_{\kappa,t}[\wt{\V{B}}]
=\big(e^{i\ol{\V{\zeta}}\cdot\wt{\V{B}}_t}\wh{W}_{\kappa,t}[\wt{\V{B}}]^*\big)^*,
\quad\text{on $\Omega$},
\end{align*}
follow from \eqref{revUpm}, \eqref{revu}, and the identity
$u_{\kappa,t}^1[\V{B}_t,\wt{\V{B}}]=u_{\kappa,t}^1[\V{0},\wt{\V{B}}]$ which is valid in the case $N=1$.
Taking expectations we obtain \eqref{adjwhT} for finite $\kappa$, because $(\V{B}_s)_{s\in[0,t]}$
and $(\wt{\V{B}}_s)_{s\in[0,t]}$ have the same distribution. In the case $\kappa=\infty$, 
\eqref{adjwhT} now follows from Lem.~\ref{lemconvwhT}.

It remains to verify the strong continuity of $(\wh{T}_{\kappa,t}(\V{\zeta}))_{t\ge0}$. 
On account of \eqref{bdwhT} it suffices to do this on a total subset of $\sF$.
For all $\kappa\in\NN\cup\{\infty\}$, $t\ge0$, and $h\in\HP$, 
the relation $\lim_{s\to t,s\ge0}\wh{T}_{\kappa,s}\zeta(h)=\wh{T}_{\kappa,t}\zeta(h)$ follows, 
however, from \eqref{whWadjexpv}, the fact that
$\sup_{s\le\tau}\|\wh{W}_{\kappa,s}[\V{B}]^*\zeta(h)\|\in L^1(\PP)$, $\tau\ge0$, by \eqref{bdwhWsup},
and the dominated convergence theorem.
\end{proof}

We now arrive at the main result of this section; the identity \eqref{FKfib} in the next theorem is
the promised Feynman-Kac formula for fiber Hamiltonians. Notice that the next theorem contains
in particular our definition of the {\em ultra-violet renormalized fiber Hamiltonians} 
$\wh{H}_\infty(\V{\xi})$.

For information on analytic families in the sense of Kato we refer to 
\cite[\textsection\textsection VII.1.2]{Kato} and \cite{ReedSimonIV}.

\begin{thm}\label{thmFKfib}
Let $\kappa\in\NN\cup\{\infty\}$. Then there exists a unique analytic family in the sense of Kato,
$\{\wh{H}_{\kappa}(\V{\zeta})\}_{\V{\zeta}\in\CC^3}$, of closed operators in $\sF$ such that
\begin{align}\label{decH1}
QH_{1,\kappa}^0Q^*&=\int_{\RR^3}^\oplus\wh{H}_{\kappa}(\V{\xi})\Id\V{\xi}.
\end{align}
For every $\V{\zeta}\in\CC^3$, the operator $\wh{H}_{\kappa}(\V{\zeta})$ generates the
$C_0$-semi-group $(\wh{T}_{\kappa,t}(\V{\zeta}))_{t\ge0}$ and in particular
\begin{align}\label{FKfib}
e^{-t\wh{H}_{\kappa}(\V{\xi})}&=\wh{T}_{\kappa,t}(\V{\xi}),\quad\V{\xi}\in\RR^3,\,t\ge0.
\end{align}
There is a universal constant $c>0$ such that the resolvent set of every 
$\wh{H}_{\kappa}(\V{\zeta})$
with $\V{\zeta}\in\CC^3$ contains the interval $(-\infty,-c|\Im\,\V{\zeta}|^2-c\ee^4)$.
\end{thm}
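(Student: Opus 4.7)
The plan is to read $\wh{H}_\kappa(\V{\zeta})$ off the semi-group structure. For each fixed $\V{\zeta}\in\CC^3$, Prop.~\ref{propSGfiber} provides a $C_0$-semi-group $(\wh{T}_{\kappa,t}(\V{\zeta}))_{t\ge0}$ on $\sF$, and I would simply define $\wh{H}_\kappa(\V{\zeta})$ as the negative of its infinitesimal generator. This gives \eqref{FKfib} by construction, and the domain $\dom(\wh{H}_\kappa(\V{\zeta}))$ is dense and $\wh{H}_\kappa(\V{\zeta})$ is closed by standard semi-group theory.

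To get the resolvent bound, I would observe that \eqref{bdwhT} implies $\|\wh{T}_{\kappa,t}(\V{\zeta})\|\le M(\V{\zeta})e^{\omega(\V{\zeta})t}$ with $\omega(\V{\zeta})=c|\Im\V{\zeta}|^2+c\ee^4+\ve$ for any $\ve>0$ (the factor $1+\ee^2(1\vee t)^2$ and the $\ln(1\vee t)$-term in the exponent are easily absorbed into $e^{\ve t}$ at the cost of enlarging $M(\V{\zeta})$). The standard Laplace-transform formula
\begin{align*}
(\lambda+\wh{H}_\kappa(\V{\zeta}))^{-1}&=\int_0^\infty e^{-\lambda t}\wh{T}_{\kappa,t}(\V{\zeta})\Id t,
\quad\Re\lambda>\omega(\V{\zeta}),
\end{align*}
then shows that $\{\mu\in\CC:\Re\mu<-\omega(\V{\zeta})\}$ lies in $\rho(\wh{H}_\kappa(\V{\zeta}))$; sending $\ve\downarrow0$ and enlarging $c$ absorbs $\ve$, giving the asserted resolvent interval.

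For the analyticity, I would fix $\V{\zeta}_0\in\CC^3$, choose $\lambda_0\in\RR$ large enough so that $\Re\lambda_0$ exceeds $\omega(\V{\zeta})$ uniformly for $\V{\zeta}$ in a small neighborhood $U$ of $\V{\zeta}_0$ (possible because \eqref{bdwhT} is locally uniform), and then apply the above Laplace-transform formula. Since Prop.~\ref{propfiber} gives analyticity of $\V{\zeta}\mapsto\wh{T}_{\kappa,t}(\V{\zeta})$ for each $t$ with uniform-on-compacts bounds, the dominated-convergence version of Morera's theorem (or equivalently interchanging the contour integral in $\V{\zeta}$ with the $\Id t$-integral) shows that $\V{\zeta}\mapsto(\lambda_0+\wh{H}_\kappa(\V{\zeta}))^{-1}$ is operator-norm analytic on $U$. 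Having a common point in every resolvent set with operator-analytic resolvent is precisely the definition of an analytic family in the sense of Kato.

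The fiber decomposition \eqref{decH1} is then obtained by taking the Laplace transform of \eqref{fiberdec1}: for $\lambda$ sufficiently large,
\begin{align*}
Q(H_{1,\kappa}^0+\lambda)^{-1}Q^*=\int_0^\infty e^{-\lambda t}Qe^{-tH_{1,\kappa}^0}Q^*\Id t
=\int_{\RR^3}^\oplus(\wh{H}_\kappa(\V{\xi})+\lambda)^{-1}\Id\V{\xi},
\end{align*}
which is equivalent to \eqref{decH1}. Uniqueness of the analytic family finally follows from the identity principle: any other analytic family $\{\wt{H}_\kappa(\V{\zeta})\}$ with \eqref{decH1} agrees with $\{\wh{H}_\kappa(\V{\zeta})\}$ for $\lambda^3$-a.e. $\V{\xi}\in\RR^3$; the local analyticity of the resolvents (combined with the freedom to choose $\V{\xi}$ in any open subset of $\RR^3$) then forces equality on all of $\CC^3$. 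The main subtlety I expect is justifying the interchange of integration with the analytic derivative in the Laplace transform step, which is where the uniform-on-compacts control provided by \eqref{bdwhT} and the bounds behind Prop.~\ref{propfiber} is essential.
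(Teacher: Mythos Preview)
Your proposal is correct and follows essentially the same route as the paper: define $\wh{H}_\kappa(\V{\zeta})$ as the generator of $(\wh{T}_{\kappa,t}(\V{\zeta}))_{t\ge0}$, extract the resolvent interval from \eqref{bdwhT} via Hille--Yosida, get \eqref{decH1} from \eqref{fiberdec1}, and obtain uniqueness from a.e.\ uniqueness of the direct integral decomposition combined with analytic continuation. The paper is actually terser than you on the analyticity step (which it leaves implicit) and adds only one small observation you omit: \eqref{adjwhT} gives self-adjointness of $\wh{H}_\kappa(\V{\xi})$ for real $\V{\xi}$, which is what makes the a.e.\ uniqueness of the fiber decomposition a standard fact about direct integrals of self-adjoint operators.
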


\begin{proof}
For every $\V{\zeta}\in\CC^3$, we {\em define} $\wh{H}_{\kappa}(\V{\zeta})$ to be the generator of 
$(\wh{T}_{\kappa,t}(\V{\zeta}))_{t\ge0}$. General principles ensure that $\wh{H}_{\kappa}(\V{\zeta})$
is densely defined and closed, the bound \eqref{bdwhT} and the Hille-Yosida theorem imply the
last statement on its resolvent set, and
the relation \eqref{adjwhT} entails self-adjointness of $\wh{H}_\kappa(\V{\xi})$, if $\V{\xi}\in\RR^3$.
Now \eqref{decH1} is a consequence of \eqref{fiberdec1}.

The uniqueness statement follows from the unique continuation principle of
\cite[Rem.~VII.1.6]{Kato} and the fact that any two strongly resolvent measurable families of 
self-adjoint operators indexed by $\V{\xi}\in\RR^3$, whose direct integral equals 
$QH_{1,\kappa}^0Q^*$, agree almost everywhere on $\RR^3$.
\end{proof}

\begin{rem}\label{remfibdec}
Let $\kappa\in\NN$. Then we can analyze the conjugation of $H_{1,\kappa}^0$ with
the unitary transformation $Q$ directly, of course. It turns out that, for every $\V{\xi}\in\RR^3$,
$\wh{H}_\kappa(\V{\xi})$ is self-adjoint with domain 
$\dom(\Id\Gamma(\omega))\cap\dom(\Id\Gamma(\V{m})^2)$ and given by
\begin{align}\label{forwhHkappa}
\wh{H}_\kappa(\V{\xi})&=\frac{1}{2}(\V{\xi}-\Id\Gamma(\V{m}))^2
+\Id\Gamma(\omega)+\vp(f_\kappa).
\end{align}
Similarly as in Steps~1 and~2 of the proof of Thm.~\ref{thmFKreg} we can also verify directly
that the right hand side of \eqref{forwhHkappa} generates $(\wh{T}_{\kappa,t}(\V{\xi}))_{t\ge0}$
\cite[Thm.~5.3(2) and Thm.~11.1]{GMM2016}. Instead
of \eqref{DGLW} we then $\PP$-a.s. encounter the {\em stochastic} differential equation
\begin{align*}
\wh{W}_{\kappa,t}(\V{\xi})\eta&=\eta-\int_0^t\wh{H}_\kappa(\V{\xi})\wh{W}_{\kappa,s}(\V{\xi})\eta\Id s
-\int_0^ti(\V{\xi}-\Id\Gamma(\V{m}))\wh{W}_{\kappa,s}(\V{\xi})\eta\Id\V{B}_s,\quad t\ge0.
\end{align*}
This equation is one example of a broader class of stochastic differential equations whose solution 
theory for $\fF_0$-measurable 
$\eta:\Omega\to\dom(\Id\Gamma(\omega))\cap\dom(\Id\Gamma(\V{m})^2)$
is studied in \cite{GMM2016}.
\end{rem}


\section{Some applications}\label{secappl}

\noindent
In the following three subsections we provide some examples for the applicability of our results. 
In Subsect.~\ref{ssecerg} we discuss the ergodicity of our semi-groups and prove the lower bound
in \eqref{asympintro}. The upper bound in \eqref{asympintro} is established in Subsect.~\ref{appub}.
Finally, we discuss continuity properties of the semi-group in the Nelson model in
Subsect.~\ref{sseccont}.


\subsection{Ergodicity}\label{ssecerg}

\noindent
In this subsection we employ our Feynman-Kac formulas to verify that
the semi-groups corresponding to the Nelson Hamiltonian $H_{N,\kappa}^V$, its non-Fock
version $\wt{H}_{N,\kappa}^V$, and the fiber
Hamiltonian at zero total momentum $\wh{H}_\kappa(\V{0})$ are positivity improving. Here the
notion of positivity is determined by a $\cQ$-space representation of the Fock space $\sF$ and we
ignore the Pauli principle, i.e., $H_{N,\kappa}^V$ and $\wt{H}_{N,\kappa}^V$ are not restricted to any 
symmetry subspace. 
As corollaries we may verify the usual formulas for the minima of the spectra of these Hamiltonians.
This will also permit us to conclude the proof of the lower bound in \eqref{asympintro}.

The results in this subsection seem to be new in the case $\kappa=\infty$, and the verification of 
Thm.~\ref{thmOscar1} in the non-Fock case has in fact been proposed as an open problem in
\cite{HHS2005}. We should mention that earlier results already imply that the semi-group of 
the renormalized Nelson Hamiltonian is positivity preserving; see \cite{LHB2011} and the references
given there. At the end of this subsection we give some remarks on earlier work on fiber Hamiltonians  
based on a different notion of positivity.

Employing standard tools from \cite{Faris1972,Simon1974},
it has already been observed in \cite{Matte2016} that Feynman-Kac integrands of the
form encountered here are positivity improving, pointwise on $\Omega$. In the proof of
Prop.~\ref{propOscar1} we essentially repeat the argument of \cite{Matte2016} because it is
short and it requires a little complement to deal with the fiber Hamiltonian in Prop.~\ref{propOscar2}.

We shall work with the self-dual convex cone $\sP$ in the Fock space $\sF$ is given by
$$
\sP:=\ol{\mr{\sP}},\quad
\mr{\sP}:=\big\{G(\vp(\V{g}))\zeta(0)\big|\,G\in\sS(\RR^n),\,G\ge0,\,
\V{g}\in\HP_{\RR}^n,\,n\in\NN\big\}.
$$
Here $\sS(\RR^n)$ is the Schwartz space on $\RR^n$ and, for every $G\in\sS(\RR^n)$,
the bounded operator $G(\vp(\V{g})):=G(\vp(g_1),\ldots,\vp(g_n))$ is defined via
the $\sF$-valued Bochner-Lebesgue integrals
\begin{align}\label{kent1}
G(\vp(\V{g}))\psi&:=\frac{1}{(2\pi)^{\nf{n}{2}}}
\int_{\RR^n}\hat{G}(-\V{\xi})\sW(-i\V{\xi}\cdot\V{g})\psi\Id\V{\xi},\quad\psi\in\sF;
\end{align}
recall from \eqref{abbrevWeyl} that $\sW(h):=\sW(h,\id)$, $h\in\HP$.

In fact, there exists a (non-unique) probability space $(\cQ,\fQ,\eta)$ and unitary map
$\cU:\sF\to L^2(\cQ,\eta)$ satisfying $\cU\zeta(0)=1$ such that, for every $g\in\HP_{\RR}$, 
$\hat{\vp}(g):=\cU\vp(g)\cU^*$ is a maximal multiplication operator in $L^2(\cQ,\eta)$ with a
Gaussian random variable, such that $\fQ$ is generated by $\{\hat{\vp}(g):h\in\HP_{\RR}\}$,
and such that $\sP$ is the pre-image under $\cU$ of all non-negative functions in 
$L^2(\cQ,\eta)$; see, e.g., \cite{Simon1974}.
In particular, the vacuum vector $\zeta(0)$ is strictly positive with respect to $\sP$.

\begin{prop}\label{propOscar1}
\begin{enumerate}[leftmargin=*]
\item[{\rm(1)}] Let $t>0$ and $h\in\mathfrak{k}_{\RR}$. Then $F_{0,t}(h)$ and $F_{0,t}(h)^*$ are
positivity improving with respect to $\sP$.
\item[{\rm(2)}]
Let $\kappa\in\NN\cup\{\infty\}$, $t>0$, $\ul{\V{x}}\in\RR^\nu$, and $\gamma\in\Omega$. 
Then $W_{\kappa,t}(\ul{\V{x}},\gamma)$ and $\wt{W}_{\kappa,t}(\ul{\V{x}},\gamma)$
are positivity improving with respect to $\sP$.
\end{enumerate}
\end{prop}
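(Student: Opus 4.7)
The plan is to reduce part~(2) to part~(1) using the factorization of $W_{\kappa,t}(\ul{\V{x}},\gamma)$ and $\wt{W}_{\kappa,t}(\ul{\V{x}},\gamma)$ into products of $F$-type operators with a positive scalar prefactor, and to establish part~(1) by passing to the $\cQ$-space representation and recognizing the action of $F_{0,t}(h)^*$ (and of $F_{0,t}(h)$) on a generator of $\sP$ as convolution against a non-negative Gaussian kernel.

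For part~(1), fix $\psi=G(\vp(\V{g}))\zeta(0)\in\mr{\sP}$ with $\V{g}\in\HP_\RR^n$ and $G\in\sS(\RR^n)$, $G\ge0$. Using the integral representation \eqref{kent1}, the formula $\sW(f)\zeta(0)=e^{-\|f\|^2/2}\zeta(f)$, and the action \eqref{Fadjexpv} of $F_{0,t}(h)^*$ on exponential vectors, one computes, with $\tilde{\V{g}}:=(e^{-t\omega}g_1,\ldots,e^{-t\omega}g_n)\in\HP_\RR^n$ and $\tilde{h}:=(\SPn{h}{g_1},\ldots,\SPn{h}{g_n})\in\RR^n$,
\begin{align*}
F_{0,t}(h)^*\psi
&=\frac{1}{(2\pi)^{n/2}}\int_{\RR^n}\hat{G}(-\V{\xi})e^{-Q_t(\V{\xi})}e^{-i\V{\xi}\cdot\tilde{h}}
\sW(-i\V{\xi}\cdot\tilde{\V{g}})\zeta(0)\,\Id\V{\xi},
\end{align*}
where $Q_t(\V{\xi}):=\tfrac{1}{2}\langle\V{\xi}\cdot\V{g},(1-e^{-2t\omega})\V{\xi}\cdot\V{g}\rangle_{\HP}\ge0$. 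In a $\cQ$-space representation identifying $\zeta(0)$ with the constant function $1$ and $\sW(-i\V{\xi}\cdot\tilde{\V{g}})$ with multiplication by $e^{-i\V{\xi}\cdot\hat{\vp}(\tilde{\V{g}})}$, the integral is the inverse Fourier transform of $\hat{G}(-\V{\xi})e^{-Q_t(\V{\xi})}e^{-i\V{\xi}\cdot\tilde{h}}$ evaluated at $\hat{\vp}(\tilde{\V{g}})$, which is a constant times $(G*K_t)(\tilde{h}+\hat{\vp}(\tilde{\V{g}}))$, where $K_t$ is the centered Gaussian density on $\RR^n$ whose Fourier transform is $e^{-Q_t}$. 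Since $G\ge0$ and $K_t\ge0$, the convolution is non-negative, showing $F_{0,t}(h)^*$ maps $\mr{\sP}$ into $\sP$, and strict positivity of $G*K_t$ (after a standard enlargement of $\V{g}$ to ensure that $(1-e^{-2t\omega})^{1/2}$ acts injectively on $\mathrm{span}(\V{g})$, so that $K_t$ is a non-degenerate Gaussian) then gives positivity improvement. The same reasoning based on \eqref{Fexpv} handles $F_{0,t}(h)$.

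For part~(2), recall from Rem.~\ref{rem-real} that $U^{N,\pm}_{\kappa,t}(\ul{\V{x}},\gamma)\in\mathfrak{k}_\RR$ (and similarly for $\wt{U}^{N,\pm}_{\kappa,t}(\ul{\V{x}},\gamma)$, which differ from $U^{N,\pm}_{\kappa,t}(\ul{\V{x}},\gamma)$ only by terms that lie in $\HP_\RR$ by inspection of Def.~\ref{defnnonFockuU}), and that the complex actions $u^N_{\kappa,t}(\ul{\V{x}},\gamma)$ and $\tilde{u}^N_{\kappa,t}(\ul{\V{x}},\gamma)$ are real-valued. Consequently $e^{u^N_{\kappa,t}(\ul{\V{x}},\gamma)}$ and $e^{\tilde{u}^N_{\kappa,t}(\ul{\V{x}},\gamma)}$ are strictly positive scalars, and each of the two factors $F_{0,t/2}(-U^{N,+}_{\kappa,t}(\ul{\V{x}},\gamma))$ and $F_{0,t/2}(-U^{N,-}_{\kappa,t}(\ul{\V{x}},\gamma))^*$ entering Def.~\ref{defW} is positivity improving by part~(1); the same holds in the non-Fock case with $\wt{U}^{N,\pm}$ in place of $U^{N,\pm}$. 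Since products of positivity-improving operators are positivity improving and multiplication by a strictly positive scalar preserves this property, the claim follows. The main obstacle I anticipate is the strictness in the convolution step of part~(1): ensuring $(G*K_t)(\tilde{h}+\hat{\vp}(\tilde{\V{g}}))>0$ almost surely whenever $G\not\equiv0$ requires a density/enlargement argument because $Q_t$ may degenerate on $\mathrm{span}(\V{g})$ for the originally chosen test vectors, and this is the only point where a non-trivial verification beyond the algebraic structure is needed.
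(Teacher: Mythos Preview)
Your computation in part~(1) is essentially the paper's Step~1: you correctly identify $F_{0,t}(h)^*G(\vp(\V{g}))\zeta(0)$ as a shifted Gaussian convolution $(G\ast K_t)(\tilde h+\hat\vp(\tilde{\V{g}}))$, and this shows that $F_{0,t}(h)^*$ maps $\mr{\sP}$ into $\sP$, hence by continuity is positivity \emph{preserving} on all of $\sP$. The gap is in the next sentence. Strict positivity of $G\ast K_t$ only yields that $F_{0,t}(h)^*\psi$ is a strictly positive element of $\sP$ for every $\psi\in\mr{\sP}\setminus\{0\}$; it says nothing about general $\psi\in\sP\setminus\{0\}=\ol{\mr{\sP}}\setminus\{0\}$. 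Strict positivity is not preserved under $L^2$-limits, so density of $\mr{\sP}$ does not close the argument. (Incidentally, the degeneracy of $Q_t$ that you single out as the ``main obstacle'' is harmless---just reduce to linearly independent $g_i$---whereas the passage from $\mr{\sP}$ to its closure is the actual missing step.)

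The paper avoids this difficulty by not attempting to extract improvement from the convolution formula. It uses the formula only to get positivity \emph{preservation} of $F_{0,s}(h)^*$ with $s=t/2$, then writes $F_{0,t}(h)^*=e^{-s\,\Id\Gamma(\omega)}F_{0,s}(h)^*$ and $F_{0,t}(h)=F_{0,s}(h)e^{-s\,\Id\Gamma(\omega)}$, and invokes the classical fact that $e^{-s\,\Id\Gamma(\omega)}$ is positivity improving with respect to $\sP$ (Ornstein--Uhlenbeck/Mehler kernel; see \cite[Thm.~XIII.44]{ReedSimonIV}). Since $F_{0,s}(h)^*$ is injective (its adjoint maps exponential vectors to a total family of exponential vectors), composing the improving factor $e^{-s\,\Id\Gamma(\omega)}$ with the preserving, non-annihilating factor $F_{0,s}(h)^*$ gives improvement on the full cone $\sP$. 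Your reduction of part~(2) to part~(1) via the factorization of $W_{\kappa,t}$ and $\wt{W}_{\kappa,t}$, the reality of $U^{N,\pm}_{\kappa,t}$ and $\wt{U}^{N,\pm}_{\kappa,t}$, and the positivity of the scalar prefactors is correct and matches the paper.
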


\begin{proof}
{\em Step~1.} Put $s:=\nf{t}{2}>0$. We first show that $F_{0,s}(h)^*$ is positivity
preserving by applying arguments from \cite{Simon1974}. Let also
$\V{g}\in\HP_{\RR}^n$ and $G\in\sS(\RR^n)$. Then a straightforward combination of
\eqref{defWeylexpv}, \eqref{Fadjexpv}, and \eqref{kent1} yields
\begin{align*}
F_{0,s}(&h)^*G(\vp(\V{g}))\zeta(0)
\\
&=\frac{1}{(2\pi)^{\nf{n}{2}}}\int_{\RR^n}
e^{-\|(1-e^{-2s\omega})\V{\xi}\cdot\V{g}\|_{\HP}^2/2-i\SPn{h}{\V{\xi}\cdot\V{g}}_{\HP}}
\hat{G}(-\V{\xi})\sW(-i\V{\xi}\cdot e^{-s\omega}\V{g})\zeta(0){\Id\V{\xi}}.
\end{align*}
Applying \eqref{kent1} once more we obtain
$F_{0,s}(h)^*G(\vp(\V{g}))\zeta(0)=\wt{G}(\vp(e^{-s\omega}\V{g}))\zeta(0)$,
with $\wt{G}$ denoting the inverse Fourier transform of
$\V{\xi}\mapsto e^{-\|(1-e^{-2s\omega})\V{\xi}\cdot\V{g}\|_{\HP}^2/2-i\V{\xi}\cdot\SPn{h}{\V{g}}_{\HP}}
\hat{G}(-\V{\xi})$.
If $G$ is non-negative, then $\wt{G}$ is non-negative as well, because it is the convolution
of a Gaussian and a shifted version of $G$. We conclude that
$F_{0,s}(h)^*$ maps $\mr{\sP}$ into itself. Since it is bounded, it also maps $\sP$ into itself.

\smallskip

\noindent{\em Step~2.} We infer from Step~1 that 
$F_{0,s}(h)=F_{0,s}(h)^{**}$ and  $e^{-s\Id\Gamma(\omega)}=F_{0,s}(0)^*$ are 
positivity preserving. In fact, it is well-known that $e^{-s\Id\Gamma(\omega)}$ improves positivity,
which follows from \cite[Thm.XIII.44(a)$\Rightarrow$(e)]{ReedSimonIV} applied to 
$\cU e^{-s\Id\Gamma(\omega)}\cU^*$ and the fact that $1$ is a non-degenerate eigenvalue of 
$e^{-s\Id\Gamma(\omega)}$ with strictly positive eigenvector $\zeta(0)$.
Since $s=\nf{t}{2}$, we further have $F_{0,t}(h)=F_{0,s}(h)e^{-s\Id\Gamma(\omega)}$ and
$F_{0,t}(h)^*=e^{-s\Id\Gamma(\omega)}F_{0,s}(h)^*$. This implies Part~(1) since a composition
of a positivity preserving and a positivity improving operator improves positivity.
Part~(2) is an immediate consequence of Part~(1), Def.~\ref{defW}, and Def.~\ref{defnnonFockSG}.
\end{proof}

\begin{prop}\label{propOscar2}
Let $\kappa\in\NN\cup\{\infty\}$, $t>0$, and $\gamma\in\Omega$.
Then $\wh{W}_{\kappa,t}(\V{0},\gamma)$ is positivity improving with respect to $\sP$. 
\end{prop}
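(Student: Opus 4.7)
The plan is to adapt the proof of Prop.~\ref{propOscar1}(2) by writing $\wh{W}_{\kappa,t}[\V{B}](\gamma)$ as a product of three factors whose action on the positive cone $\sP$ can be analyzed separately. For $\kappa=\infty$ the statement should be read on the probability-one set $\Omega\setminus(\sN_+\cup\sN_-)$, on which Def.~\ref{defwhW} gives
\[
\wh{W}_{\kappa,t}[\V{B}]=e^{u_{\kappa,t}[\V{B}]}\,F_{0,\nf{t}{2}}(-e^{i\V{m}\cdot\V{B}_t}U_{\kappa,t}^+[\V{B}])\,\Gamma(e^{i\V{m}\cdot\V{B}_t})\,F_{0,\nf{t}{2}}(-U_{\kappa,t}^-[\V{B}])^{*}.
\]
The two outer $F_{0,\nf{t}{2}}$ factors will be handled by direct appeal to Prop.~\ref{propOscar1}(1); the middle Weyl operator is the genuinely new ingredient and will be shown to be an order automorphism of $\sP$, mapping the cone bijectively onto itself and preserving its interior.

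First I would verify that the one-boson arguments fed into the outer factors lie in $\mathfrak{k}_{\RR}$. By Rem.~\ref{rem-real} the processes $U^\pm_{\kappa,t}[\V{B}](\gamma)$ do, and a direct check shows that $C$ commutes with every $e^{\pm i\V{m}\cdot\V{x}}$, so those phase operators preserve $\HP_{\RR}$. Hence $e^{i\V{m}\cdot\V{B}_t(\gamma)}U^+_{\kappa,t}[\V{B}](\gamma)\in\HP_{\RR}\subset\mathfrak{k}_{\RR}$, and Prop.~\ref{propOscar1}(1), applied with $t/2$ in place of $t$, yields that both outer factors are positivity improving with respect to $\sP$. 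The scalar prefactor $e^{u_{\kappa,t}[\V{B}](\gamma)}$ is a strictly positive real number because each contribution in the decomposition \eqref{defuinfty} defining $u_\kappa^N$ is real. The key new step is to prove that $\Gamma(U)$ sends $\sP$ onto itself whenever $U$ is unitary on $\HP$ and preserves $\HP_{\RR}$. The Weyl relation \eqref{Weyl1} together with $\Gamma(U)\zeta(0)=\zeta(0)$ gives the intertwining identity $\Gamma(U)\sW(h)\zeta(0)=\sW(Uh)\zeta(0)$ for all $h\in\HP$, and insertion into the defining representation \eqref{kent1} yields
\[
\Gamma(U)\,G(\vp(\V{g}))\zeta(0)\;=\;G(\vp(U\V{g}))\zeta(0),\qquad \V{g}\in\HP_{\RR}^n,\;G\in\sS(\RR^n).
\]
Since $U\V{g}\in\HP_{\RR}^n$, the right-hand side belongs to $\mr{\sP}$ as soon as $G\ge0$; taking closures and repeating the argument for $\Gamma(U)^{-1}=\Gamma(U^{*})$ forces $\Gamma(U)\sP=\sP$. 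The pairing identity $\langle\Gamma(U)\chi,\psi\rangle=\langle\chi,\Gamma(U^{*})\psi\rangle$ then shows that $\Gamma(U)$ sends interior (i.e.\ strictly positive) vectors of $\sP$ to interior vectors.

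Putting it together, for any $\phi\in\sP\setminus\{0\}$ the vector $F_{0,\nf{t}{2}}(-U^-_{\kappa,t}[\V{B}](\gamma))^{*}\phi$ lies in the interior of $\sP$ by positivity improving; this interior membership is retained by $\Gamma(e^{i\V{m}\cdot\V{B}_t(\gamma)})$, and a final application of $F_{0,\nf{t}{2}}(-e^{i\V{m}\cdot\V{B}_t(\gamma)}U^+_{\kappa,t}[\V{B}](\gamma))$ keeps it there. Multiplying by the positive scalar $e^{u_{\kappa,t}[\V{B}](\gamma)}$ preserves strict positivity, and we conclude that $\wh{W}_{\kappa,t}[\V{B}](\gamma)$ is positivity improving. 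The main subtlety is conceptual rather than technical and lies entirely in the Weyl factor: a complex phase like $e^{i\V{m}\cdot\V{B}_t}$ might at first be expected to destroy $\cQ$-space positivity, but because this phase is $C$-invariant, its second quantization merely transports a real one-boson generator $\V{g}$ of $\mr{\sP}$ to another real one-boson generator $U\V{g}$, so the structure of the cone is preserved.
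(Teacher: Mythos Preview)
Your proof is correct and follows essentially the same approach as the paper: the key new ingredient beyond Prop.~\ref{propOscar1} is that $\Gamma(e^{i\V{m}\cdot\V{y}})$ preserves the cone $\sP$, which both you and the paper establish via the identity $\Gamma(U)\sW(-i\V{\xi}\cdot\V{g})\zeta(0)=\sW(-i\V{\xi}\cdot U\V{g})\zeta(0)$ together with the observation that $e^{i\V{m}\cdot\V{y}}$ commutes with the conjugation $C$ and hence maps $\HP_{\RR}$ into itself. You go slightly further than the paper by showing that $\Gamma(U)$ is in fact an order automorphism carrying strictly positive vectors to strictly positive vectors, whereas the paper only records positivity preservation and then relies on the unitarity of $\Gamma(U)$ (hence injectivity) together with the positivity improving property of the \emph{outer} $F_{0,\nf{t}{2}}$ factor to reach the same conclusion; both routes are valid.
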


\begin{proof}
In addition to the arguments of the proof of Prop.~\ref{propOscar1}, it only remains to show that
$\Gamma(e^{i\V{m}\cdot\V{y}})$ is positivity preserving, for every $\V{y}\in\RR^3$. This follows,
however, from the relation
$$
\Gamma(e^{i\V{m}\cdot\V{y}})\sW(-i\V{\xi}\cdot\V{g})\zeta(0)
=\sW(-i\V{\xi}\cdot e^{i\V{m}\cdot\V{y}}\V{g})\zeta(0),
$$
and the fact that $e^{i\V{m}\cdot\V{y}}\V{g}\in\HP_{\RR}^n$, for every $\V{g}\in\HP^n_{\RR}$.
\end{proof}

\begin{thm}\label{thmOscar1}
Let $\kappa\in\NN\cup\{\infty\}$ and $t>0$.
Then $e^{-tH_{N,\kappa}^V}$ and $e^{-t\wt{H}_{N,\kappa}^V}$ are positivity improving with 
respect to the self-dual convex cone in $L^2(\RR^\nu,\sF)$ given by
\begin{equation}\label{def-int-cone}
\int_{\RR^\nu}^\oplus\sP\Id\ul{\V{x}}:=
\big\{\Psi\in L^2(\RR^\nu,\sF):\,\Psi(\ul{\V{x}})\in\sP,\,\text{a.e. $\ul{\V{x}}$}\big\}.
\end{equation}
In particular, if $\inf\sigma(H_{N,\kappa}^V)=\inf\sigma(\wt{H}_{N,\kappa}^V)$ is an eigenvalue of
$H_{N,\kappa}^V$ or $\wt{H}_{N,\kappa}^V$, 
then it is non-degenerate and the corresponding eigenvector can be chosen strictly positive.
\end{thm}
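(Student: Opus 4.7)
The plan is to deduce the positivity-improving property from the Feynman-Kac formulas in Thm.~\ref{thmFKreg} and Def.~\ref{defHinfty} (for $e^{-tH_{N,\kappa}^V}$) together with Def.~\ref{defnHinftynonFock} (for $e^{-t\wt{H}_{N,\kappa}^V}$), combined with the pointwise positivity-improving property of Prop.~\ref{propOscar1}(2) and the strict positivity of the Brownian transition density. The claim about the ground state will then follow from the standard Perron-Frobenius argument of \cite[Thm.~XIII.44]{ReedSimonIV} that is already invoked in Step~2 of the proof of Prop.~\ref{propOscar1}.

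Fix $t>0$ and $\kappa\in\NN\cup\{\infty\}$. Since $e^{-tH_{N,\kappa}^V}$ differs from $T_{\kappa,t}^V$ only by the strictly positive scalar factor $e^{tNE_\kappa^\ren}$ for finite $\kappa$, and coincides with $T_{\infty,t}^V$ for $\kappa=\infty$ (and analogously in the non-Fock case by Def.~\ref{defnHinftynonFock}), it suffices to verify the positivity-improving property for $T_{\kappa,t}^V$ and $\wt{T}_{\kappa,t}^V$. Given two non-zero $\Phi,\Psi\in\int_{\RR^\nu}^\oplus\sP\,\Id\ul{\V{x}}$, I would invoke Prop.~\ref{propT}(1)--(2) (respectively Thm.~\ref{thmnonFockIRconv}(1)) and Fubini to write
\begin{align*}
\langle\Phi,T_{\kappa,t}^V\Psi\rangle_{L^2(\RR^\nu,\sF)}
=\int_{\RR^\nu}\EE\big[\SPn{\Phi(\ul{\V{x}})}{W_{\kappa,t}^V(\ul{\V{x}})^*\Psi(\ul{\V{x}}+\ul{\V{b}}_t)}_\sF\big]\Id\ul{\V{x}}.
\end{align*}
Setting $A:=\{\Psi\neq0\}$ and $B:=\{\Phi\neq0\}$ (both of positive Lebesgue measure in $\RR^\nu$), I would observe that for every $\ul{\V{x}}\in B$ the event $\{\ul{\V{x}}+\ul{\V{b}}_t\in A\}$ has strictly positive $\PP$-measure, since the Gaussian transition density $(2\pi t)^{-\nu/2}e^{-|\cdot-\ul{\V{x}}|^2/(2t)}$ is everywhere strictly positive on $\RR^\nu$. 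On that event $\Psi(\ul{\V{x}}+\ul{\V{b}}_t)\in\sP\setminus\{0\}$, and Prop.~\ref{propOscar1}(2) together with the positivity of the scalar prefactor $e^{-\int_0^tV(\ul{\V{x}}+\ul{\V{b}}_s)\Id s}$ (recall Def.~\ref{defW}(3) and the a.s.\ triviality of $\sN_V(\ul{\V{x}})$) ensures that $W_{\kappa,t}^V(\ul{\V{x}})^*\Psi(\ul{\V{x}}+\ul{\V{b}}_t)$ is strictly positive with respect to $\sP$, so its pairing with $\Phi(\ul{\V{x}})\in\sP\setminus\{0\}$ is strictly positive. Self-duality of $\sP$ makes the integrand non-negative throughout, hence strict positivity on a set of positive product measure yields $\langle\Phi,T_{\kappa,t}^V\Psi\rangle>0$. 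The non-Fock case is verbatim identical after substituting $\wt{W}_{\kappa,t}^V$ for $W_{\kappa,t}^V$.

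For the ground state statement, the equality $\inf\sigma(H_{N,\kappa}^V)=\inf\sigma(\wt{H}_{N,\kappa}^V)$ follows either from unitary equivalence via the Gross transformation (for finite $\kappa$, cf.\ Rem.~\ref{remnonFock}(2)) or from $\|T_{\kappa,t}^V\|_{2,2}=\|\wt{T}_{\kappa,t}^V\|_{2,2}$ in Thm.~\ref{thmnonFockIRconv}(2). Writing $E$ for this common infimum, if $E$ is an eigenvalue of either operator, then $e^{-tE}$ is the operator norm of the corresponding self-adjoint positivity-improving semi-group, and \cite[Thm.~XIII.44]{ReedSimonIV} yields both non-degeneracy and strict positivity of the eigenvector in $\int_{\RR^\nu}^\oplus\sP\,\Id\ul{\V{x}}$. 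The main obstacle in executing the plan lies in bookkeeping of the $\PP$-null sets entering Prop.~\ref{propOscar1}(2) in the case $\kappa=\infty$ (notably $\sN_+\cup\sN_-$ appearing in the construction of $W_\infty$ and $\wt{W}_\infty$, and the $V$-null set $\sN_V(\ul{\V{x}})$); these must be shown not to destroy the positive product measure on which strict positivity of the integrand is established, which is however automatic since each is a null event at each $\ul{\V{x}}$ and the product $\lambda^\nu\otimes\PP$-measure of their union over $\ul{\V{x}}$ vanishes.
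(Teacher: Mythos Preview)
Your proof is correct and follows essentially the same route as the paper: the paper's proof is a one-line invocation of Prop.~\ref{propOscar1}(2), the Feynman--Kac representations (Thm.~\ref{thmFKreg}, Def.~\ref{defHinfty}, Def.~\ref{defnHinftynonFock}), and ``well-known properties of Brownian motion'', which you have spelled out in detail. The only cosmetic difference is that for the ground-state statement the paper cites \cite[Cor.~1.2]{Faris1972} (which works directly for abstract self-dual cones) rather than \cite[Thm.~XIII.44]{ReedSimonIV}; your citation is fine once one passes to the $\cQ$-space model via the unitary $\Theta$ as in the proof of Cor.~\ref{corinfspec1}.
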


\begin{proof}
The first assertion is an easy consequence of Prop.~\ref{propOscar1}, the Feynman-Kac formula
\eqref{FKreg}, Def.~\ref{defHinfty}, Def.~\ref{defnHinftynonFock}, and well-known properties of 
Brownian motion. The last statement follows from \cite[Cor.~1.2]{Faris1972}.
\end{proof}

\begin{thm}\label{thmOscar2}
Let $\kappa\in\NN\cup\{\infty\}$ and $t>0$. Then $e^{-t\wh{H}_\kappa(\V{0})}$ is positivity
improving with respect to $\sP$. In particular, if $\inf\sigma(\wh{H}_\kappa(\V{0}))$ is an eigenvalue, 
then it is non-degenerate and the corresponding eigenvector can be chosen strictly positive.
\end{thm}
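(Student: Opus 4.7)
The proof is essentially an averaging argument that leverages the pathwise result of Proposition~\ref{propOscar2} together with the Feynman-Kac representation of Theorem~\ref{thmFKfib}. The plan is as follows.

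First, I would invoke Theorem~\ref{thmFKfib} to identify $e^{-t\wh{H}_\kappa(\V{0})}$ with the Bochner integral $\wh{T}_{\kappa,t}(\V{0})=\EE[\wh{W}_{\kappa,t}[\V{B}]^*]$ from Definition~\ref{defwhW}. Proposition~\ref{propOscar2} asserts that, for every $\gamma\in\Omega$, the operator $\wh{W}_{\kappa,t}[\V{B}](\gamma)$ is positivity improving with respect to the self-dual cone $\sP$; since $\sP$ is self-dual, the adjoint $\wh{W}_{\kappa,t}[\V{B}](\gamma)^*$ is also positivity improving. Consequently, for any $\phi,\psi\in\sP\setminus\{0\}$, the scalar integrand
\begin{equation*}
\gamma\longmapsto\SPn{\phi}{\wh{W}_{\kappa,t}[\V{B}](\gamma)^*\psi}_\sF
\end{equation*}
is strictly positive on all of $\Omega$.

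Next, I would verify that this integrand is $\PP$-integrable so that the pairing may be pulled inside the expectation. This is immediate from the operator-norm estimate \eqref{bdwhWsup}, which yields $\|\wh{W}_{\kappa,t}[\V{B}]^*\|\in L^1(\PP)$, together with the Cauchy-Schwarz bound $|\SPn{\phi}{\wh{W}_{\kappa,t}[\V{B}](\gamma)^*\psi}_\sF|\le\|\phi\|\,\|\wh{W}_{\kappa,t}[\V{B}](\gamma)\|\,\|\psi\|$. Thus
\begin{equation*}
\SPn{\phi}{e^{-t\wh{H}_\kappa(\V{0})}\psi}_\sF=\EE\big[\SPn{\phi}{\wh{W}_{\kappa,t}[\V{B}]^*\psi}_\sF\big]>0,
\end{equation*}
as the expectation of an almost surely strictly positive integrable random variable. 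This establishes that $e^{-t\wh{H}_\kappa(\V{0})}$ is positivity improving with respect to $\sP$ for every $t>0$ and every $\kappa\in\NN\cup\{\infty\}$.

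For the second statement, assuming $E_0:=\inf\sigma(\wh{H}_\kappa(\V{0}))$ is an eigenvalue, one applies the standard Perron-Frobenius argument of \cite[Cor.~1.2]{Faris1972}: the eigenspace of $e^{-tE_0}=\|e^{-t\wh{H}_\kappa(\V{0})}\|$ acting on the largest eigenvalue of a positivity improving self-adjoint operator with respect to a self-dual cone is one dimensional and spanned by a strictly positive vector in $\sP$.

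I do not anticipate any serious obstacle: Proposition~\ref{propOscar2} has already absorbed the nontrivial pathwise work (in particular the fact that the additional factor $\Gamma(e^{i\V{m}\cdot\V{B}_t})$ present in $\wh{W}$ but absent from $W$ is positivity preserving), and the integrability bound \eqref{bdwhWsup} is available off the shelf. The only point requiring a brief justification is the passage from pathwise positivity improvement to positivity improvement of the expectation, but this is a one-line argument since the self-duality of $\sP$ makes positivity improvement a property testable by scalar pairings.
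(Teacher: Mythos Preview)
Your proposal is correct and follows exactly the route intended by the paper: the paper's own proof is the one-line remark that the first assertion follows from Proposition~\ref{propOscar2} and the second from \cite[Cor.~1.2]{Faris1972}, and you have simply (and correctly) spelled out the details of how the pathwise positivity improvement of $\wh{W}_{\kappa,t}[\V{B}](\gamma)^*$ passes through the Bochner expectation via self-duality of $\sP$ and the integrability bound \eqref{bdwhWsup}.
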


\begin{proof}
The first assertion follows from Prop.~\ref{propOscar2}, the second from \cite[Cor.~1.2]{Faris1972}.
\end{proof}

An analog of the previous theorem for the two-dimensional relativistic Nelson model has been
proved by different methods in \cite{Sloan1974}.

The following two corollaries are new only in the case where $\kappa=\infty$ and the functions
$\Psi$ and $\vr$ in \eqref{poldi1} and \eqref{corinna1}, respectively, are not strictly positive; see
\cite[Chap.~6]{LHB2011} and compare Thm.~\ref{thminfspecapp}.

\begin{cor}\label{corinfspec1}
For all $\kappa\in\NN\cup\{\infty\}$, 
\begin{align}\label{poldi1}
\inf\sigma(H_{N,\kappa}^V)&=-\lim_{t\to\infty}\frac{1}{t}\ln\SPn{\Psi}{e^{-tH_{N,\kappa}^V}\Psi},
\quad0\not=\Psi\in\int_{\RR^\nu}^\oplus\sP\Id\ul{\V{x}},
\end{align}
where $H_{N,\kappa}^V$ can be replaced by $\wt{H}_{N,\kappa}^V$ on one or both sides. 
\end{cor}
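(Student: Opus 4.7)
My plan is to reduce the corollary to an abstract result on self-adjoint, lower-bounded operators whose semi-groups are positivity improving with respect to a self-dual convex cone. Such an abstract statement—referenced in the introduction as the content of Appendix~C and cited here as Thm.~\ref{thminfspecapp}—should yield the identity \eqref{poldi1} for every nonzero element of the cone, with no further hypothesis of strict positivity on $\Psi$.

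First, I would check that $\int^{\oplus}_{\RR^\nu}\sP\,\Id\ul{\V{x}}$ defined in \eqref{def-int-cone} is a self-dual convex cone in $L^2(\RR^\nu,\sF)$. Thanks to the self-duality of $\sP \subset \sF$, non-negativity of $\SPn{\Phi(\ul{\V{x}})}{\Psi(\ul{\V{x}})}$ for a.e.\ $\ul{\V{x}}$ is preserved under integration and conversely characterizes membership in the fibrewise cone, so self-duality is inherited from the fibres. Second, I would invoke Thm.~\ref{thmOscar1}, which supplies the positivity improving property of $e^{-tH_{N,\kappa}^V}$ (and of $e^{-t\wt{H}_{N,\kappa}^V}$) for every $t>0$, together with the lower spectral bound of Cor.~\ref{corlbspecH} in the Fock case (and the same bound for $\wt H_{N,\kappa}^V$, which follows for $\Lambda>0$ from the unitary Gross intertwining and extends to $\Lambda=0$ via the norm convergence statement Thm.~\ref{thmnonFockIRconv}(3)). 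These two inputs are exactly the hypotheses of the abstract theorem, which then delivers \eqref{poldi1} on either side.

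The upper bound $-\frac{1}{t}\ln\SPn{\Psi}{e^{-tA}\Psi}\ge E_0+o(1)$, with $E_0:=\inf\sigma(A)$, is immediate from $\SPn{\Psi}{e^{-tA}\Psi}\le\|\Psi\|^2 e^{-tE_0}$. The hard direction, which I expect to be the main obstacle and which is the whole point of invoking positivity \emph{improving} rather than just positivity preserving, is to show that the spectral measure $\mu_\Psi$ of $\Psi$ with respect to $A$ satisfies $\mu_\Psi([E_0,E_0+\epsilon])>0$ for every $\epsilon>0$. Once this is known, the spectral-calculus lower bound
\begin{equation*}
\SPn{\Psi}{e^{-tA}\Psi}\ge e^{-t(E_0+\epsilon)}\mu_\Psi([E_0,E_0+\epsilon])
\end{equation*}
gives $-\frac{1}{t}\ln\SPn{\Psi}{e^{-tA}\Psi}\le E_0+\epsilon+o(1)$, and letting first $t\to\infty$ then $\epsilon\downarrow 0$ closes the proof. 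To verify the spectral support property, I would argue by contradiction: if $E_{(E_0+\epsilon,\infty)}(A)\Psi=\Psi$ for some $\epsilon>0$, then for any nonzero $\Phi$ in the cone the vector $e^{-sA}\Phi$ is strictly positive for every $s>0$, yet $\SPn{e^{-sA}\Phi}{E_{(E_0+\epsilon,\infty)}(A)\Psi}=\SPn{e^{-sA}\Phi}{\Psi}>0$ can be combined with an approximate-eigenvector sequence for energies in $[E_0,E_0+\epsilon]$ built from the cone (using self-duality to extract cone elements with small spectral parameter) to derive the required contradiction. This last step is the technical heart of Thm.~\ref{thminfspecapp} and is essentially the Perron--Frobenius argument of \cite{Faris1972}, adapted so as not to require $\Psi$ to be strictly positive.
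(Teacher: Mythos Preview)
Your overall plan---reduce to Thm.~\ref{thminfspecapp} via the positivity improving property established in Thm.~\ref{thmOscar1}---is the same as the paper's. However, there is a gap in how you propose to invoke Thm.~\ref{thminfspecapp}: that theorem, as stated, is formulated for an operator on $L^2(X,\zeta)$ over a $\sigma$-finite measure space, with the \emph{standard} cone of nonnegative functions, not for an abstract Hilbert space equipped with a self-dual cone. Verifying that $\int^\oplus_{\RR^\nu}\sP\,\Id\ul{\V{x}}$ is self-dual is therefore not sufficient; you still need to exhibit a unitary identification with an honest $L^2$-space that carries this cone to the standard one. The paper does exactly this: it uses the $\cQ$-space unitary $\cU:\sF\to L^2(\cQ,\eta)$ introduced just before Prop.~\ref{propOscar1}, builds $\Theta:=\int^\oplus_{\RR^\nu}\cU\,\Id\ul{\V{x}}$ composed with the canonical isomorphism $L^2(\RR^\nu,L^2(\cQ,\eta))\cong L^2(\RR^\nu\times\cQ,\lambda^\nu\otimes\eta)$, and applies Thm.~\ref{thminfspecapp} to $\Theta e^{-tH_{N,\kappa}^V}\Theta^*$ acting on $L^2(\RR^\nu\times\cQ,\lambda^\nu\otimes\eta)$.

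Your own sketch of the Perron--Frobenius contradiction is vague precisely where the $L^2$-structure does the work. In the proof of Thm.~\ref{thminfspecapp}, the crucial density statement (Step~1) says that if $\psi>0$ pointwise then the span of $\{\phi:0\le\phi\le\psi,\ \phi\ \text{bounded, supported in some}\ A_n\}$ is dense; this is what supplies cone elements with arbitrarily low spectral parameter (Step~2) and justifies the pointwise comparison $\SPn{\psi_\tau}{e^{-tK}\psi_\tau}\ge\SPn{\phi}{e^{-tK}\phi}$ (Step~3). Your phrase ``using self-duality to extract cone elements with small spectral parameter'' does not explain how to achieve this without the $L^2$ order structure, and in an abstract self-dual cone there is no reason the order interval $[0,\psi_\tau]$ should span. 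In any case, once you have passed to $L^2(\RR^\nu\times\cQ)$ via $\Theta$, you can simply cite Thm.~\ref{thminfspecapp} rather than rederive it.
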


\begin{proof}
Let $\cU:\sF\to L^2(\cQ,\eta)$ be a unitary operator as described in front of Prop.~\ref{propOscar1},
and let $\Theta:L^2(\RR^\nu,\sF)\to L^2(\RR^\nu\times\cQ,\lambda^\nu\otimes\eta)$ be the unitary 
operator obtained by composing $\int_{\RR^\nu}^\oplus\cU\Id\ul{\V{x}}$ with the canonical 
isomorphism $L^2(\RR^\nu,L^2(\cQ,\eta))=L^2(\RR^\nu\times\cQ,\lambda^\nu\otimes\eta)$.
Then $\Theta e^{-tH_{N,\kappa}^V}\Theta^*$ is positivity improving with respect to the natural
cone in $L^2(\RR^\nu\times\cQ,\lambda^\nu\otimes\eta)$, for all $t>0$. Therefore, \eqref{poldi1}
follows from the well-known Thm.~\ref{thminfspecapp}.
\end{proof}

The next corollary will be used to verify \eqref{asympintro}.

\begin{cor}\label{corinfspec2}
Let $\kappa\in\NN\cup\{\infty\}$ and let $\vr:\RR^\nu\to\RR$ be 
measurable and non-negative with $\int_{\RR^\nu}\vr^2\Id\lambda^\nu\in(0,\infty)$. Then 
\begin{align}\label{corinna1}
\inf\sigma(H_{N,\kappa}^V)&=-\lim_{t\to\infty}\frac{1}{t}\ln\bigg(\int_{\RR^\nu}\vr(\ul{\V{x}})
\EE\big[e^{u_{\kappa,t}^N(\ul{\V{x}})-\int_0^tV(\ul{\V{x}}+\ul{\V{b}}_s)\Id s}
\vr(\ul{\V{x}}+\ul{\V{b}}_t)\big]\Id\ul{\V{x}}\bigg),
\end{align}
and the same formula holds true with $H_{N,\kappa}^V$ replaced by $\wt{H}_{N,\kappa}^V$
and/or $u_{\kappa,t}^N$ replaced by $\tilde{u}_{\kappa,t}^N$. 
\end{cor}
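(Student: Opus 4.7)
The plan is to reduce \eqref{corinna1} to the abstract formula \eqref{poldi1} in Cor.~\ref{corinfspec1} by choosing the specific trial vector $\Psi:=\vr\,\zeta(0)$, viewed as an element of $L^2(\RR^\nu,\sF)$. Since $\zeta(0)\in\sP$ lies in $\mathring{\sP}$ and $\vr\ge0$, we have $\Psi\in\int^{\oplus}_{\RR^\nu}\sP\,\Id\ul{\V{x}}$, and $\Psi\not=0$ because $\|\vr\|_2>0$ by hypothesis. Thus, by Cor.~\ref{corinfspec1} (in either the Fock or non-Fock version, including the equality $\inf\sigma(H_{N,\kappa}^V)=\inf\sigma(\wt{H}^V_{N,\kappa})$), it suffices to identify the scalar products $\SPn{\Psi}{e^{-tH_{N,\kappa}^V}\Psi}$ and $\SPn{\Psi}{e^{-t\wt{H}_{N,\kappa}^V}\Psi}$ with the respective right-hand sides in the integral inside the logarithm of \eqref{corinna1}.

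For the Fock case, I would first observe that by Thm.~\ref{thmFKreg} and Def.~\ref{defHinfty},
\[
\SPn{\Psi}{e^{-tH_{N,\kappa}^V}\Psi}=\int_{\RR^\nu}\vr(\ul{\V{x}})\,
\EE\big[\SPn{\zeta(0)}{W_{\kappa,t}^V(\ul{\V{x}})^*\zeta(0)}\,\vr(\ul{\V{x}}+\ul{\V{b}}_t)\big]\Id\ul{\V{x}},
\]
after possibly pulling the (scalar) $\vr(\ul{\V{x}}+\ul{\V{b}}_t)$ out of the Fock inner product. The vacuum matrix element of $W_{\kappa,t}(\ul{\V{x}})$ was computed in Rem.~\ref{remW}(3), where it was shown (by setting $g=h=0$ in the general formula) that
\[
\SPn{\zeta(0)}{W_{\kappa,t}(\ul{\V{x}})\zeta(0)}=e^{u_{\kappa,t}^N(\ul{\V{x}})},
\]
which is real, so the same value is obtained for the adjoint; incorporating the Kato-potential factor $e^{-\int_0^tV(\ul{\V{x}}+\ul{\V{b}}_s)\Id s}$ from Def.~\ref{defW}(3) yields exactly the expression inside the logarithm of \eqref{corinna1}. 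Combined with \eqref{poldi1}, this settles the Fock case.

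For the non-Fock case, an analogous identification is needed for $\wt{W}_{\kappa,t}(\ul{\V{x}})$. Using the formulas \eqref{Fadjexpv} and \eqref{Fexpv} one computes
\[
F_{0,\nf{t}{2}}(g)^*\zeta(0)=\zeta(0),\qquad F_{0,\nf{t}{2}}(g)\zeta(0)=\zeta(g),\qquad\SPn{\zeta(0)}{\zeta(g)}=1,
\]
so that the intermediate $F$-factors in the definition of $\wt{W}_{\kappa,t}(\ul{\V{x}})$ (Def.~\ref{defnnonFockSG}) collapse to the identity upon taking vacuum matrix elements, leaving $\SPn{\zeta(0)}{\wt{W}_{\kappa,t}(\ul{\V{x}})\zeta(0)}=e^{\tilde{u}_{\kappa,t}^N(\ul{\V{x}})}$. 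The same argument as above then gives the formula \eqref{corinna1} with $\tilde{u}_{\kappa,t}^N$ in place of $u_{\kappa,t}^N$ and $\wt{H}_{N,\kappa}^V$ in place of $H_{N,\kappa}^V$. There is no real obstacle here; the only point requiring attention is to verify that $\Psi=\vr\zeta(0)$ genuinely represents a non-zero element of the self-dual cone in \eqref{def-int-cone}, so that Cor.~\ref{corinfspec1} applies, which is immediate from the description of $\sP$ via the $\cQ$-space representation recalled before Prop.~\ref{propOscar1}.
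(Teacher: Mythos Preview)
Your proof is correct and follows exactly the same approach as the paper: choose $\Psi=\vr\zeta(0)$ in Cor.~\ref{corinfspec1}, apply the Feynman-Kac formulas, and use Rem.~\ref{remW}(3) (and its non-Fock analogue, which you spell out explicitly) to identify the vacuum matrix element. The paper's proof is the one-line version of what you wrote; your added detail on the non-Fock vacuum expectation and on the ``and/or'' variants via $\inf\sigma(H_{N,\kappa}^V)=\inf\sigma(\wt{H}_{N,\kappa}^V)$ is accurate.
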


\begin{proof}
We only have to choose $\vr\zeta(0)$ in \eqref{poldi1}, apply our Feynman-Kac formulas, and take
Rem.~\ref{remW}(3) into account.
\end{proof}

\begin{proof}[Proof of the lower bound in \eqref{asympintro}]
Let $\vr:\RR^\nu\to\RR$ be measurable with $0\le\vr\le1$ and $\int_{\RR^\nu}\vr\Id\lambda^\nu=1$.
Then $\int_{\RR^\nu}\vr^2\Id\lambda^\nu\in(0,\infty)$. Plugging the bound in Rem.~\ref{rem256}
into \eqref{corinna1}, we then find
$\inf\sigma(H_{N,\kappa}^0)\ge-256\pi^2q\ee^4N^3-c\ee^2N^2$, $\kappa\in\NN\cup\{\infty\}$,
provided that $\ee^2N\ge1$.
Since $q>1$ is arbitrary, this proves the first inequality in \eqref{asympintro}.
\end{proof}

Since $\sP\subset\sF$ is the pre-image of the canonical convex cone in the 
$L^2$-space associated with a probability measure
under a unitary map $\cU$ satisfying $\cU^{-1}1=\zeta(0)$, the mere fact that 
every $e^{-t\wh{H}_\kappa(\V{0})}$, $t\ge0$, is positivity preserving with respect to $\sP$ already 
implies the first identity in
\begin{align*}
\inf\sigma(\wh{H}_\kappa(\V{0}))&=-\lim_{t\to\infty}\frac{1}{t}\ln\SPn{\zeta(0)}{
e^{-t\wh{H}_\kappa(\V{0})}\zeta(0)}=-\lim_{t\to\infty}\frac{1}{t}
\ln\Big(\EE\big[e^{u_{\kappa,t}[\V{B}]}\big]\Big),
\end{align*}
which holds for all $\kappa\in\NN\cup\{\infty\}$; see \cite{LorincziMinlosSpohn2002} and 
\cite[Lem.~2.5]{AbdesselamHasler2012} or Thm.~\ref{thminfspecapp}.
The second identity, following from the Feynman-Kac formula, is new for $\kappa=\infty$ and $\mu=0$, 
because the limiting complex action has not been constructed before in this case.

One can define a different notion of positivity in the Fock space $\sF$ by requiring that an element be
positive, iff its components with respect to the direct sum in \eqref{defFockspace} are positive
according to the canonical notion of positivity in $\RR$ and $L^2(\RR^{3n},\RR)$, $n\in\NN$.
If $\ee<0$, then the semi-group of {\em every} fiber Hamiltonian $\wh{H}_\kappa(\V{\xi})$, 
$\V{\xi}\in\RR^3$, is known
to be positivity improving for finite $\kappa$ and positivity preserving for $\kappa=\infty$ with respect
to this alternative positive cone in $\sF$ \cite{Froehlich1973,Froehlich1974}; see also
\cite[\textsection3.3]{Moeller2005}. 
(In \cite{Froehlich1973,Froehlich1974}, ergodicity of $e^{-t\wh{H}_\infty(\V{\xi})}$, $t>0$, is announced,
but a proof seems to be unavailable. Miyao showed that fiber Hamiltonians in the related but simpler 
renormalized Fr\"{o}hlich polaron model generate positivity improving semi-groups \cite{Miyao2010}.) 
If $\ee>0$, then one can either change its sign by a unitary transformation or change the sign of the 
elements called positive by definition in the subspaces in \eqref{defFockspace} with an odd $n$; see, 
e.g., \cite{AbdesselamHasler2012,Moeller2005}. In any case \cite{AbdesselamHasler2012}, 
\begin{align*}
\inf\sigma(\wh{H}_\kappa(\V{\xi}))&=-\lim_{t\to\infty}\frac{1}{t}\ln\Big(\EE\big[
e^{u_{\kappa,t}[\V{B}]+i\V{\xi}\cdot\V{B}_t}\big]\Big),\quad\V{\xi}\in\RR^3,\,\kappa\in\NN.
\end{align*}


\subsection{Upper bound on the minimal energy}\label{appub}

\noindent
Next, we complement our lower bound on the spectrum of the renormalized Nelson
Hamiltonian by deriving a corresponding upper bound. We restrict ourselves to the case $V=0$,
$\eta=1$, $\kappa=\infty$, for sufficiently large $\ee^2N>4$. 

The coefficient for the leading $\ee^4N^3$-term in our bound will be the Pekar energy
$E_{\mathrm{P}}:=\inf\cE_{\mathrm{P}}$, which is the {\em minimum} of the Pekar
functional given by $\dom(\cE_{\mathrm{P}}):=\{g\in W^{1,2}(\RR^3):\|g\|=1\}$ and
\begin{align*}
\sE_{\mathrm{P}}(g)&:=\frac{1}{2}\|\nabla g\|^2-
\frac{4\pi}{\sqrt{2}}\int_{\RR^3}\frac{|\hat{\rho}|^2}{\V{m}^2}\Id\lambda^3,
\quad g\in\dom(\sE_{\mathrm{P}}),\;\text{with $\rho:=|g|^2$.}
\end{align*}
This definition makes sense, as $g\in\dom(\cE_{\mathrm{P}})$ and $\rho:=|g|^2$ entails
$\rho\in W^{1,1}(\RR^3)$ and $\|\hat{\rho}\|_\infty\le\|\rho\|_1=1$, so that $|\hat{\rho}|^2/\V{m}^2$ is at 
least locally integrable. From the Sobolev inequality we then infer that $\rho\in L^{\nf{3}{2}}(\RR^3)$, 
whence $\hat{\rho}\in L^3(\RR^3)$ by the Hausdorff-Young inequality. Hence, by H\"{o}lder's inequality, 
\begin{align}\label{carlo1}
\int_{\{|\V{m}|\ge R\}}\frac{|\hat{\rho}|^2}{\V{m}^2}\Id\lambda^3&\le
\|\hat{\rho}\|_3^{\nf{2}{3}}\Big(\int_{\{|\V{m}|\ge R\}}\frac{\Id\lambda^3}{|\V{m}|^6}\Big)^{\nf{1}{3}}
=\|\hat{\rho}\|_3^{\nf{2}{3}}\frac{(4\pi/3)^{\nf{1}{3}}}{R},
\end{align}
for all $R>0$, which finally shows that $|\hat{\rho}|^2/\V{m}^2\in L^1(\RR^3)$.

The existence of an up to translations unique minimizer of $\cE_{\mathrm{P}}$ 
has been shown in \cite{Lieb1977} and numerics reveals that
$E_{\mathrm{P}}=-0.10851 \ldots\,$, \cite{GerlachLoewen1991}.

\begin{thm}\label{thmPekarub}
Let $\eta=1$ and let $E^0_{N,\infty}(\ee)$ denote the infimum of the spectrum of 
$H_{N,\infty}^0$ with coupling constant $\ee$. Then there exists a universal constant $c>0$ 
such that
\begin{align*}
E^0_{N,\infty}(\ee)&\le8\pi^4\ee^4N^3E_{\mathrm{P}}+c(1+\mu+\ln(\ee^2N))\ee^2N^2,\quad
\text{so long as $\ee^2N>4$.}
\end{align*}
\end{thm}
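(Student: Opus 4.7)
My plan is to combine the Feynman--Kac representation of the infimum of the spectrum from Cor.~\ref{corinfspec2} with a Pekar-type trial function and the ultra-violet control from Rem.~\ref{remuge}. Concretely, by Cor.~\ref{corinfspec2} applied with $V=0$ and any measurable $\vr:\RR^\nu\to[0,\infty)$ with $\vr\in L^2(\RR^\nu)\setminus\{0\}$,
\begin{align*}
E_{N,\infty}^0(\ee)=-\lim_{t\to\infty}\frac{1}{t}\ln I_\vr(t),\quad I_\vr(t):=\int_{\RR^\nu}\vr(\ul{\V{x}})\EE\big[e^{u_{\infty,t}^N(\ul{\V{x}})}\vr(\ul{\V{x}}+\ul{\V{b}}_t)\big]\Id\ul{\V{x}}.
\end{align*}
Establishing the asserted upper bound is therefore equivalent to producing a lower bound $I_\vr(t)\ge \exp\{-t[8\pi^4\ee^4N^3E_{\mathrm{P}}+c(1+\mu+\ln(\ee^2N))\ee^2N^2]+o(t)\}$ for a clever choice of $\vr$.

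Next I would invoke the infra-red/ultra-violet split from Rem.~\ref{remuge}: pick $q,r>1$ and set $\Lambda:=64\pi qr\ee^2N^2$, so that $u_{\infty,t}^N=u^{N,<}_{\Lambda,\infty,t}+u^{N,>}_{\Lambda,\infty,t}$ with the moment bound
\begin{align*}
\sup_{\ul{\V{x}}}\EE\big[e^{-p\,u_{\Lambda,\infty,t}^{N,>}(\ul{\V{x}})}\big]\le c_q\,e^{pNE_{\Lambda,\infty}t+c\,p\ee^2N^2t+cN(1\vee t)},\quad p\ee^2N\ge 1.
\end{align*}
A direct computation of $E_{\Lambda,\infty}=\int_{|\V{m}|<\Lambda}f_\infty\beta_\infty\Id\lambda^3$ shows $N|E_{\Lambda,\infty}|\le c(1+\mu+\ln(\ee^2N))\ee^2N^2$ for our choice of $\Lambda$ (the $\mu$-dependence comes from the infra-red part of the integrand and the $\ln(\ee^2N)$ from its ultra-violet tail up to $\Lambda$). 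Using $e^{u}=e^{u^<}e^{u^>}$ and Hölder's inequality in the form $\EE[e^{u^<}F]\le\EE[e^u F]^{1/p}\EE[e^{-p'u^>}]^{1/p'}$ (with $F=\vr(\ul{\V{x}}+\ul{\V{b}}_t)\ge 0$), then integrating against $\vr$ and optimizing $p\downarrow 1$ yields
\begin{align*}
I_\vr(t)\ge e^{-NE_{\Lambda,\infty}t-c\ee^2N^2t-cN(1\vee t)}\cdot\bigg(\int_{\RR^\nu}\vr(\ul{\V{x}})\EE\big[e^{u_{\Lambda,\infty,t}^{N,<}(\ul{\V{x}})}\vr(\ul{\V{x}}+\ul{\V{b}}_t)\big]\Id\ul{\V{x}}\bigg)\cdot(1+o(1)).
\end{align*}

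For the infra-red factor I would carry out the standard Pekar-type variational argument. The action $u^{N,<}_{\Lambda,\infty,t}$ is a positive-definite quadratic form in the time-dependent density $\hat\rho_s(\V{m}):=\sum_\ell e^{-i\V{m}\cdot(\V{x}_\ell+\V{b}_{\ell,s})}$ restricted to $|\V{m}|<\Lambda$, with kernel $e^{-|s-r|\omega}f_\infty^2$. Choosing the product trial function $\vr(\ul{\V{x}}):=\prod_{\ell=1}^N g_\sigma(\V{x}_\ell)$, where $g$ is a nonnegative minimizer of $\sE_{\mathrm{P}}$ and $g_\sigma(\V{x}):=\sigma^{3/2}g(\sigma\V{x})$ is rescaled by $\sigma:=2\sqrt{2}\pi^2\ee^2N$, a scaling computation (carried out in momentum space using $\int|\hat\rho|^2/\V{m}^2\mapsto\sigma\int|\hat\rho|^2/\V{m}^2$ under the dilation) matches the Pekar prefactor: $N\sigma^2=8\pi^4\ee^4N^3$. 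Feynman's Jensen-type averaging of the path integral (applied to a tilted measure) then asymptotically identifies the growth rate of the infra-red factor with $-t\cdot 8\pi^4\ee^4N^3E_{\mathrm{P}}$ plus errors of lower order in $\ee^2N$. Combining this with the UV estimate from the second paragraph and taking $t\to\infty$ yields the claimed upper bound.

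The hardest step is the third one: executing the Pekar variational argument rigorously in the Feynman--Kac framework at $t=\infty$. The core tension is that Brownian motion spreads as $\sqrt{t}$, while the Pekar minimizer lives on the scale $1/\sigma\sim 1/(\ee^2N)$, so naive localization breaks once $t\gg(\ee^2N)^{-2}$. The resolution is that the long-time large-deviation behaviour of the Brownian path integral with attractive pair potential $e^{-|s-r|\omega}f_\infty^2$ is governed exactly by the Pekar functional (in the spirit of Donsker--Varadhan and of the Lieb--Thomas / Pekar analysis for the Fröhlich polaron), so the product trial function does capture the correct exponential rate. Circumventing a full LDP argument is possible by working at finite $\kappa$ with the Gross-transformed Hamiltonian $\wt H_{N,\Lambda,\kappa}^0$, performing the Pekar variational computation as a quadratic form estimate for a suitably dressed coherent-state trial vector, and then invoking the norm-resolvent convergence $\wt H_{N,\Lambda,\kappa}^0\to\wt H_{N,\Lambda,\infty}^0$ from Thm.~\ref{thmnonFockIRconv}(7) together with the spectral equivalence $\sigma(\wt H_{N,\Lambda,\infty}^0)=\sigma(H_{N,\infty}^0)$ from Rem.~\ref{remnonFock}(2).
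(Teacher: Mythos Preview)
Your overall plan---Cor.~\ref{corinfspec2}, the UV/IR split of Rem.~\ref{remuge}, H\"older, and a Pekar trial function---coincides with the paper's, but there is a genuine gap and an unnecessary detour.

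The gap is the step ``optimizing $p\downarrow1$''. The H\"older inequality you need is $\EE[e^{u^<}F]\le\EE[e^{pu}F]^{1/p}\,\EE[e^{-p'u^>}F]^{1/p'}$ (with $e^{pu}$, not $e^{u}$, in the first factor). After integrating against $\vr$ and using Jensen this yields a lower bound on $\int\vr\,\EE[e^{pu}\vr(\cdot+\ul{\V{b}}_t)]$, not on $I_\vr(t)=\int\vr\,\EE[e^{u}\vr(\cdot+\ul{\V{b}}_t)]$. You cannot let $p\to1$: the control on $\EE[e^{-p'u^>}]$ from Rem.~\ref{remuge} requires $\Lambda\propto p'$, so $p'\to\infty$ forces $E_{\Lambda,\infty}\to\infty$ and the inequality degenerates. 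The paper keeps the H\"older exponent $q>1$ fixed and uses that $u$ is quadratic in $\ee$, so $qu_\ee=u_{q^{1/2}\ee}$; what one actually obtains is
\[
\tfrac{1}{q}\,E^0_{N,\infty}(q^{1/2}\ee)\le NE_{\Lambda,\infty}+c\ee^2N^2+E^{0,<}_{N,\Lambda}(\ee).
\]
The bound at the target coupling is then recovered by a final change of variable $\ee\mapsto\tilde\ee$ solving $q\tilde\ee^2=\ee^2$ (Step~3 in the paper, with $q'=\tilde\ee^2N$); the discrepancy $\ee^2-\tilde\ee^2=O(1/N)$ feeds into the lower-order term and explains the condition $\ee^2N>4$.

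The detour concerns the infra-red factor. Once H\"older has isolated $\int\vr\,\EE[e^{u^{N,<}_{\Lambda,\infty,t}}\vr(\cdot+\ul{\V{b}}_t)]$, this quantity is---by Thm.~\ref{thmFKreg} applied with $\eta=1_{\{|\V{m}|<\Lambda\}}$ and Cor.~\ref{corinfspec2}---exactly $\langle\vr\zeta(0),e^{-tH^{0,<}_{N,\Lambda}}\vr\zeta(0)\rangle$ for an ordinary UV-regular Hamiltonian, and its exponential rate is $E^{0,<}_{N,\Lambda}(\ee)=\inf\mathfrak{q}^{0,<}_{N,\Lambda}$. No Donsker--Varadhan LDP, no Feynman--Jensen averaging, no Gross transformation, and no norm-resolvent limit is needed: the paper simply evaluates the quadratic form on the product-coherent trial vector $g_\sigma^{\otimes N}\mr{\zeta}(h)$, optimizes the coherent shift $h$ to produce the Coulomb-like self-interaction $-(2\pi)^3\ee^2N^2\int_{\{|\V{m}|<\Lambda\}}|\hat\rho|^2/\omega^2$, scales by $\sigma=2^{3/2}\pi^2\ee^2N$, and inserts a Pekar minimizer (Step~2). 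Your option~(b) would work but is a roundabout way to reach the same Rayleigh quotient; the UV control was already spent in the H\"older step, so nothing renormalized remains to be handled.
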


\begin{proof}
{\em Step~1.} In this step we only assume that $\ee^2N\ge1$.

We pick some $q\in(1,\infty)$, denote its conjugate exponent by $q'$, and
put $\Lambda:=70\pi q'\ee^2N$. We shall use some notation introduced in the proof of 
Thm.~\ref{lemuexp} and Rem.~\ref{remuge}. Pick some measurable
$\vr:\RR^\nu\to\RR$ with $0\le\vr\le1$ and $\int_{\RR^\nu}\vr\Id\lambda^\nu=1$, so that also
$\int_{\RR^\nu}\vr^2\Id\lambda^\nu\in(0,1]$.
Then the inequality derived in Rem.~\ref{remuge} implies
\begin{align*}
\EE&\Big[e^{u_{\Lambda,\infty,t}^{N,<}(\ul{\V{x}})}\vr(\ul{\V{x}}+\ul{\V{b}}_t)\Big]
\\
&\le
\EE\Big[e^{qu_{\Lambda,\infty,t}^{N,<}(\ul{\V{x}})+qu_{\Lambda,\infty,t}^{N,>}(\ul{\V{x}})}
\vr(\ul{\V{x}}+\ul{\V{b}}_t)\Big]^{\nf{1}{q}}
\EE\Big[e^{-q'u_{\Lambda,\infty,t}^{N,>}(\ul{\V{x}})}\vr(\ul{\V{x}}+\ul{\V{b}}_t)\Big]^{\nf{1}{q'}}
\\
&\le{c^{\nf{1}{q'}}}{e^{c'\ee^2N^2t+c'N(1\vee t)/q'+NE_{\Lambda,\infty}t}}
\EE\Big[e^{qu_{\infty,t}^N(\ul{\V{x}})}\vr(\ul{\V{x}}+\ul{\V{b}}_t)\Big]^{\nf{1}{q}}
,\quad t\ge0,\,\ul{\V{x}}\in\RR^\nu,
\end{align*}
which on account of H\"{o}lder's inequality, $0\le\vr\le1$, and $\|\vr\|_1=1$ permits to get
\begin{align*}
\frac{1}{c^{\nf{1}{q'}}}{e^{-c''\ee^2N^2t-NE_{\Lambda,\infty}t}}&
\int_{\RR^\nu}\vr(\ul{\V{x}})\EE\Big[e^{u_{\Lambda,\infty,t}^{N,<}(\ul{\V{x}})}
\vr(\ul{\V{x}}+\ul{\V{b}}_t)\Big]\Id\ul{\V{x}}
\\
&\le\bigg(\int_{\RR^\nu}\vr(\ul{\V{x}})
\EE\Big[e^{qu_{\infty,t}^N(\ul{\V{x}})}\vr(\ul{\V{x}}+\ul{\V{b}}_t)\Big]\Id\ul{\V{x}}\bigg)^{\nf{1}{q}},
\quad t\ge1.
\end{align*}
Therefore, we may conclude from Cor.~\ref{corinfspec2} that
\begin{align}\nonumber
\frac{1}{q}E^0_{N,\infty}(q^\eh\ee)
&=-\lim_{t\to\infty}\frac{1}{qt}\ln\bigg(\int_{\RR^\nu}\vr(\ul{\V{x}})
\EE\Big[e^{qu_{\infty,t}^N(\ul{\V{x}})}\vr(\ul{\V{x}}+\ul{\V{b}}_t)\Big]\Id\ul{\V{x}}\bigg)
\\\label{carlo2}
&\le c''\ee^2N^2+NE_{\Lambda,\infty}+E_{N,\Lambda}^{0,<}(\ee).
\end{align}
Here, $E_{\Lambda,\infty}=\int_{\{|\V{m}|<\Lambda\}}f_\infty\beta_\infty\Id\lambda^3$, as 
defined in the proof of Thm.~\ref{lemuexp}, and the number
$E_{N,\Lambda}^{0,<}(\ee):=\inf\mathfrak{q}_{N,\Lambda}^{0,<}$ is the infimum of the spectrum 
of the regularized Nelson Hamiltonian with sharp ultra-violet cut-off at $\Lambda$, i.e., the form 
$\mathfrak{q}_{N,\Lambda}^{0,<}$ is defined as in Def.~\ref{defNelsonHamreg} with 
$\eta:=1_{\{|\V{m}|<\Lambda\}}$, $\chi=1_{\{|\V{m}|<1\}}$, and for some arbitrary $\kappa>\Lambda$. 

\smallskip

\noindent{\em Step~2.} In this step we only assume that $\ee^2N>1$.

Let $g\in\dom(\cE_{\mathrm{P}})$ and put $g^{\otimes_N}(\ul{\V{x}}):=g(\V{x}_1)\dots g(\V{x}_N)$,
a.e. $\ul{\V{x}}\in\RR^\nu$. Let also $h\in\dom(\omega)$ with $\supp(h)\subset\{|\V{m}|\le\Lambda\}$ 
and define $\mr{\zeta}(h):=\|\zeta(h)\|^{-1}\zeta(h)=e^{-\|h\|_{\HP}^2/2}\zeta(h)$, the coherent state
colinear to the exponential vector $\zeta(h)$ given by \eqref{defexpv}. 
Put $f_\Lambda:=\ee\omega^\mh1_{\{|\V{m}|<\Lambda\}}$. Then
we infer from \eqref{vpaad}, \eqref{aexpv}, and \eqref{dGammaexpv} that
\begin{align*}
{\mathfrak{q}}_{N,\Lambda}^{0,<}[g^{\otimes_N}\mr{\zeta}(h)]
&=\frac{N}{2}\|\nabla g\|^2+\|\omega^\eh h\|_{\HP}^2
+N\int_{\RR^3}2|g(\V{x})|^2\Re\SPn{h}{e^{-i\V{m}\cdot\V{x}}f_\Lambda}_{\HP}\Id{\V{x}}.
\end{align*}
If $\rho:=|g|^2$,  then the third term on the right hand 
side equals $2(2\pi)^{\nf{3}{2}}N\Re\SPn{h}{\hat{\rho}f_\Lambda}_{\HP}$. Substituting 
$h_r:=-(2\pi)^{\nf{3}{2}}\ee N\omega^{-\nf{3}{2}}1_{\{r\le|\V{m}|<\Lambda\}}\hat{\rho}\in\dom(\omega)$
with $r>0$ for $h$, we thus obtain
\begin{align}\label{thy1}
E_{N,\Lambda}^{0,<}(\ee)&\le\frac{N}{2}\|\nabla g\|^2-(2\pi)^3\ee^2N^2\sup_{r>0}
\int_{\{r\le|\V{m}|<\Lambda\}}\frac{|\hat{\rho}|^2}{\omega^2}\Id\lambda^3.
\end{align}
For $\vs>0$, put $g_\vs(\V{x}):=\vs^{\nf{3}{2}}g(\vs\V{x})$, $\V{x}\in\RR^3$, and 
$\rho_\vs:=|g_\vs|^2$. Substituting $g_\vs$ for $g$ in \eqref{thy1} and taking the relations
$(\rho_\vs)^\wedge(\V{k})=\hat{\rho}(\V{k}/\vs)$, $\V{k}\in\RR^3$,
and $|\nabla g_\vs(\V{x})|^2=\vs^5|(\nabla g)(\vs\V{x})|$, $\V{x}\in\RR^3$, into account
afterwards, we deduce that
\begin{align*}
E_{N,\Lambda}^{0,<}(\ee)&\le\frac{\vs^2N}{2}\|\nabla g\|^2-(2\pi)^3\vs\ee^2N^2
\int_{\{|\V{m}|\le\Lambda/\vs\}}\frac{|\hat{\rho}|^2}{\V{m}^2+\mu^2/\vs^2}\Id\lambda^3.
\end{align*}
Since $\|\hat{\rho}\|_\infty\le1$ as observed above, we further have
\begin{align*}
\bigg|\int_{\{|\V{m}|\le\Lambda/\vs\}}\Big(\frac{|\hat{\rho}|^2}{\V{m}^2+\mu^2/\vs^2}
-\frac{|\hat{\rho}|^2}{\V{m}^2}\Big)\Id\lambda^3\bigg|
&\le\int_{\RR^3}\frac{\mu^2/\vs^2}{\V{m}^2(\V{m}^2+\mu^2/\vs^2)}\Id\lambda^3
=\frac{2\pi^2\mu}{\vs}.
\end{align*}
Choosing $\vs:=2^{\nf{3}{2}}\pi^2\ee^2N$ 
and $q':=\ee^2N>1$ (so that $\Lambda=70\pi\ee^4N^2$) and plugging in a 
minimizer of the Pekar functional, call its density $\rho_0$, we deduce that
\begin{align*}
E_{N,\Lambda}^{0,<}(\ee)&\le{2^4\pi^5\mu}{\ee^2N^2}
+8\pi^4\ee^4N^3E_{\mathrm{P}}+8\pi^4\ee^4N^3
\int_{\{|\V{m}|>35\ee^2N/\sqrt{2}\pi\}}\frac{|\hat{\rho}_0|^2}{\V{m}^2}\Id\lambda^3.
\end{align*}
Combining this with \eqref{carlo1} and \eqref{carlo2} we arrive at
\begin{align*}
\frac{E_{N,\infty}^0((\ee^2N/(\ee^2N-1))^\eh\ee)}{\ee^2N/(\ee^2N-1)}
&\le8\pi^4\ee^4N^3E_{\mathrm{P}}+NE_{70\pi\ee^4N^2,\infty}+c(1+\mu)\ee^2N^2.
\end{align*}
\noindent{\em Step~3.} Assume now that $\ee^2N>4$.
Then the equation $(\tilde{\ee}^2N/(\tilde{\ee}^2N-1))\tilde{\ee}^2=\ee^2$ for $\tilde{\ee}^2$
has the solution $\tilde{\ee}^2=(\ee^2+\sqrt{\ee^4-4\ee^2/N})/2$ satisfying $\tilde{\ee}^2<\ee^2$ and
$N\tilde{\ee}^2>N\ee^2/2>2$. We conclude by applying the last bound of Step~2 to $\tilde{\ee}$ 
instead of $\ee$ and observing that 
$E_{70\pi\tilde{\ee}^4N^2,\infty}\le c\tilde{\ee}^2(1+\ln(\ee^2N))$.
\end{proof}


\subsection{Continuity}\label{sseccont}

\noindent
Combining our results with some recent results of \cite{Matte2016},
we obtain some information on the continuity properties of elements in the range
of our Feynman-Kac semi-groups and, in particular, of eigenvectors of the ultra-violet
renormalized Nelson Hamiltonian (if any). We shall consider continuity with respect to
$\ul{\V{x}}$, $\ee$, and the external potential. More precisely, we shall consider $\ee$ and $V$
as in Hyp.~\ref{hypNelson}, a sequence of real numbers $\{\ee_n\}_{n\in\NN}$ such that
$\ee_n\to\ee$, as $n\to\infty$, and a sequence of 
Kato decomposable potentials $V_n:\RR^\nu\to\RR$, i.e., $V_n=V_{n+}-V_{n-}$
with $0\le V_{n-}\in K_\nu$ and $0\le V_{n+}\in K_\nu^{\loc}$, satisfying 
\begin{align}\label{Vnconv1}
\lim_{r\downarrow0}\sup_{n\in\NN}\sup_{\ul{\V{x}}\in\RR^\nu}\int_{\RR^\nu}
1_{\{|\ul{\V{x}}-\ul{\V{y}}|<r\}}\frac{V_{n-}(\ul{\V{y}})}{|\ul{\V{x}}-\ul{\V{y}}|^{\nu-2}}\Id\ul{\V{y}}&=0,
\end{align}
and
\begin{align}\label{Vnconv2}
\text{$\forall$ compact $\cC\subset\RR^\nu$:}\quad
\lim_{n\to\infty}\sup_{\ul{\V{x}}\in\RR^\nu}\int_{\cC}1_{\{|\ul{\V{x}}-\ul{\V{y}}|<1\}}
\frac{|V_n(\ul{\V{y}})-V(\ul{\V{y}})|}{|\ul{\V{x}}-\ul{\V{y}}|^{\nu-2}}\Id\ul{\V{y}}&=0.
\end{align}
We recommend \cite{BHL2000} for a discussion of the notion of convergence expressed
by these conditions.

For all $\kappa\in\NN\cup\{\infty\}$ and $n\in\NN$, we let $(T^{(n)}_{\kappa,t})_{t\ge0}$ 
denote the Feynman-Kac semi-group defined by means of $\ee_n$ and $V_n$. We also put
$T_{\kappa,t}^{(\infty)}:=T^V_{\kappa,t}$ (the latter defined by means of $\ee$) to have a
unified notation.

One should keep in mind that $T_{\infty,t}^{(n)}$ is $\chi$-independent, for all $n\in\NN\cup\{\infty\}$, 
while reading the next theorem, whose statement is new in the case $\kappa=\infty$ only.

\begin{thm}\label{thmcont}
Consider the situation described in the preceding paragraphs and
assume that the cut-off function $\chi$ satisfies $\omega^\eh\chi\in\HP$ in addition 
to Hyp.~\ref{hypNelson}(2).
Let $\kappa\in\NN\cup\{\infty\}$, $\tau_2\ge\tau_1>0$, and $p\in[1,\infty]$. 
Then the set of $\sF$-valued functions
\begin{align*}
\big\{ T_{\kappa,s}^{(\infty)}\Phi\,\big|\,
\Phi\in L^p(\RR^\nu,\sF),\,\|\Phi\|_p\le1,\,s\in[\tau_1,\tau_2]\big\}
\end{align*}
is equicontinuous at every point of $\RR^\nu$. 
If $V\in K_\nu$, then it is uniformly equicontinuous on $\RR^\nu$. Furthermore,
\begin{align}
\lim_{n\to\infty}(T_{\kappa,t}^{(n)}\Psi_n)(\ul{\V{x}}_n)=(T_{\kappa,t}^{(\infty)}\Psi)(\ul{\V{x}}),
\quad t>0,
\end{align}
for every converging sequence $\{\Psi_n\}_{n\in\NN}$ in $L^p(\RR^\nu,\sF)$ with limit $\Psi$ and 
every converging sequence $\{\ul{\V{x}}_n\}_{n\in\NN}$ in $\RR^\nu$ with limit $\ul{\V{x}}$.
\end{thm}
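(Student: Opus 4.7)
The overall strategy is to reduce both assertions to the regularized case $\kappa<\infty$, where the desired equicontinuity and joint continuity statements are available from \cite{Matte2016} (this is precisely where the supplementary assumption $\omega^\eh\chi\in\HP$ is needed), and to control the error introduced by the ultra-violet approximation $T_{\tilde\kappa,t}^{(\bullet)}\to T_{\infty,t}^{(\bullet)}$ via Prop.~\ref{propT}(3) applied in the \emph{strong} operator norm $\|\cdot\|_{p,\infty}$. The key point is that Prop.~\ref{propT}(3) provides this convergence uniformly as $\ee$ and $V$ range over suitable compact/Kato-bounded families, which will allow us to pass to the limit $\tilde\kappa\to\infty$ before exploiting the pointwise continuity given by \cite{Matte2016}.

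A preliminary step is to verify that the family $\{V_n\}_{n\in\NN}\cup\{V\}$ satisfies the uniform Khasminskii bound \eqref{Katounif} for every $\tilde p,\tau_2>0$. This follows from \eqref{Vnconv1} by the standard proof of Khasminskii's lemma applied uniformly in $n$ (compare \cite{BHL2000}): the assumption \eqref{Vnconv1} controls a uniform smallness modulus for the negative parts, which in turn controls the exponential moments uniformly in $n$. Together with the boundedness of $\{\ee_n\}$ (it is a convergent sequence), this places us in the regime where the uniform version of Prop.~\ref{propT}(3) applies.

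For the equicontinuity of the family $\{T^V_{\infty,s}\Phi:\|\Phi\|_p\le1,\,s\in[\tau_1,\tau_2]\}$ at a given $\ul{\V{y}}_0\in\RR^\nu$: given $\varepsilon>0$, I would first invoke Prop.~\ref{propT}(3) with $(p,q)=(p,\infty)$ to pick $\tilde\kappa=\tilde\kappa(\varepsilon)\in\NN$ with
\[
\sup_{s\in[\tau_1,\tau_2]}\|T^V_{\tilde\kappa,s}-T^V_{\infty,s}\|_{p,\infty}<\varepsilon/3.
\]
Writing $T^V_{\infty,s}\Phi=T^V_{\tilde\kappa,s}\Phi+(T^V_{\infty,s}-T^V_{\tilde\kappa,s})\Phi$, the second summand contributes at most $2\varepsilon/3$ to the oscillation in $\ul{\V{x}}$ uniformly in $\Phi$, while the first summand is (uniformly on the unit ball in $L^p$) equicontinuous at $\ul{\V{y}}_0$ by the corresponding finite-$\tilde\kappa$ result of \cite{Matte2016}. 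The uniform equicontinuity in the case $V\in K_\nu$ is obtained in the same way, starting from the corresponding global equicontinuity for finite $\tilde\kappa$.

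For the joint convergence $(T_{\kappa,t}^{(n)}\Psi_n)(\ul{\V{x}}_n)\to(T_{\kappa,t}^{(\infty)}\Psi)(\ul{\V{x}})$ in the case $\kappa=\infty$, I would use the telescopic decomposition
\begin{align*}
(T_{\infty,t}^{(n)}\Psi_n)(\ul{\V{x}}_n)-(T_{\infty,t}^{(\infty)}\Psi)(\ul{\V{x}})
&=\big[(T_{\infty,t}^{(n)}-T_{\tilde\kappa,t}^{(n)})\Psi_n\big](\ul{\V{x}}_n)\\
&\quad+\big[(T_{\tilde\kappa,t}^{(n)}\Psi_n)(\ul{\V{x}}_n)-(T_{\tilde\kappa,t}^{(\infty)}\Psi)(\ul{\V{x}})\big]\\
&\quad+\big[(T_{\tilde\kappa,t}^{(\infty)}-T_{\infty,t}^{(\infty)})\Psi\big](\ul{\V{x}}).
\end{align*}
By the uniform $L^p\to L^\infty$ statement of Prop.~\ref{propT}(3) (valid uniformly over $\{V_n\}$ and $\{\ee_n\}$ thanks to the preliminary step), for any $\varepsilon>0$ one can choose $\tilde\kappa$ so that the first and third terms are bounded by $\varepsilon\,\sup_n\|\Psi_n\|_p+\varepsilon\|\Psi\|_p$. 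For this fixed $\tilde\kappa$, the middle term tends to zero as $n\to\infty$ by the finite-$\tilde\kappa$ joint continuity from \cite{Matte2016}. The case of finite $\kappa$ is handled by quoting \cite{Matte2016} directly.

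\textbf{Main obstacle.} The principal technical point is the preliminary step: ensuring that the Kato convergence encoded in \eqref{Vnconv1}--\eqref{Vnconv2} yields uniform Khasminskii bounds of the form \eqref{Katounif}, and hence places $\{V_n\}$ inside the family over which Prop.~\ref{propT}(3) delivers uniform convergence. Once this is in hand, the proof is essentially a soft three-$\varepsilon$ argument with no further quantitative work needed; the heavy lifting — both the uniform semi-group convergence in the ultra-violet and the continuity for finite $\kappa$ — is already done in Prop.~\ref{propT} and in \cite{Matte2016}, respectively.
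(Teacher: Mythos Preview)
Your proposal is correct and follows essentially the same route as the paper: reduce to finite $\tilde\kappa$ via \cite{Matte2016}, then bridge to $\kappa=\infty$ using the uniform $L^p\!\to\!L^\infty$ convergence of Prop.~\ref{propT}(3), with the uniform Khasminskii bound coming from \eqref{Vnconv1} (the paper cites \eqref{Vnconv2} here, which appears to be a slip; your attribution to \eqref{Vnconv1} is the correct one). The paper compresses your three-$\varepsilon$ argument into the single displayed uniform convergence $\sup_{n}\sup_{s\in[\tau_1,\tau_2]}\|T_{\tilde\kappa,s}^{(n)}-T_{\infty,s}^{(n)}\|_{p,\infty}\to0$, but the content is identical.
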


\begin{proof}
If $\kappa\in\NN$, then all assertions are special cases of \cite[Thm.~8.1 and Cor.~8.2]{Matte2016}. 
Hence, for $\kappa=\infty$, all assertions follow from the convergence
$$
\sup_{n\in\NN\cup\{\infty\}}\sup_{s\in[\tau_1,\tau_2]}\|T_{\kappa,s}^{(n)}-T_{\infty,s}^{(n)}
\|_{p,\infty}\xrightarrow{\;\;\kappa\to\infty\;\;}0;
$$ 
which is a consequence of Prop.~\ref{propT}(3) and the fact that \eqref{Vnconv2} implies
$$
\sup_{n\in\NN}\sup_{\ul{\V{x}}\in\RR^\nu}
\EE\big[e^{p\int_0^{\tau_2}V_{n-}(\ul{\V{x}}+\ul{\V{b}}_s)\Id s}\big]<\infty,\quad p>0;
$$
confer, e.g., the proof of Lem.~C.1 in \cite{BHL2000}.
\end{proof}


\appendix

\section{The Kolmogorov test}\label{appKol}

\noindent
For the reader's convenience we state the version of Kolmogoroff's lemma
employed in the main text. It can be proved by simple modifications of the arguments 
presented in \cite[\textsection1.4]{Kunita1990}, \cite[pp.~268/9]{Me1982}, 
and \cite[Lem.~13]{Priouret1974}.

\begin{lem}\label{lemKolNev}
For every $\tau\ge0$, let
$(X_t(\tau))_{t\in I}$ be a stochastic process with values in the separable Hilbert space $\sK$
whose paths $[0,\infty)\ni t\mapsto X_t(\tau,\gamma)\in\sK$ are continuous, for all $\gamma$
in the complement of a (possibly $\tau$-dependent) $\PP$-zero set. 
Assume that there exist a non-decreasing function $b:[0,\infty)\to[0,\infty)$, $p\ge1$, and 
$\ve>0$ such that
\begin{align}\label{bd-Kol}
\EE\big[\sup_{s\le t}\|X_s(\sigma)-X_s(\tau)\|^p\big]
&\le b(t)|\tau-\sigma|^{1+\ve},\quad t,\sigma,\tau\ge0.
\end{align}
Then there exists a $\fB([0,\infty)^2)\otimes\fF$-measurable map
$[0,\infty)^2\times\Omega\ni(t,\tau,\gamma)\mapsto X^*_t(\tau,\gamma)\in\sK$
such that the following holds:
\begin{enumerate}[leftmargin=*]
\item[{\rm(1)}]
For all $\tau\ge0$, there is a $\PP$-zero set $\sN_{\tau}\in\fF$ such that
$$
X_t(\tau,\gamma)=X_t^*(\tau,\gamma),\quad t\ge0,\,\gamma\in\Omega\setminus\sN_{\tau}.
$$
\item[{\rm(2)}] Let $\gamma\in\Omega$. Then the map 
$[0,\infty)^2\ni(t,\tau)\mapsto X^*_{t}(\tau,\gamma)\in\sK$ is continuous. 
\item[{\rm(3)}] For all $\delta\in(0,\ve/p)$, there exists $c_{p,\delta,\ve}>0$ with the following
property: If $T>0$, then we can construct an adapted non-negative process $K_{\delta,T}$ such that
\begin{align}\label{Kol1}
\EE[K_{\delta,T,t}^p]&\le c_{p,\delta,\ve}(1+T)^pb(t),\quad t\ge0,
\end{align}
and
\begin{align*}
\sup_{s\le t}\|X_s^*(\sigma,\gamma)-X_s^*(\tau,\gamma)\|&\le K_{\delta,T,t}(\gamma)
(1\wedge|\tau-\sigma|)^\delta,\quad\gamma\in\Omega,
\end{align*}
for all $ t\ge0$ and $\sigma,\tau\in[0,T]$.
\end{enumerate}
\end{lem}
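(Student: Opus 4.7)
The plan is to reduce the statement to the classical one-parameter Kolmogorov-Chentsov continuity criterion, applied in the parameter $\tau$ with values in a suitable separable Banach space. For each $t\ge 0$, set $E_t := C([0,t],\sK)$ equipped with the supremum norm, which is separable because $\sK$ is. By the assumed $\tau$-wise continuity of the paths, the prescription $Y^{(t)}_\tau(\gamma) := \bigl(s\mapsto X_s(\tau,\gamma)\bigr)_{s\in[0,t]}$ (defined arbitrarily on the $\PP$-null set $\sN_\tau$ associated with $\tau$) yields an $E_t$-valued process indexed by $\tau \in [0,\infty)$, and the hypothesis \eqref{bd-Kol} reads exactly
\[
\EE\bigl[\|Y^{(t)}_\sigma - Y^{(t)}_\tau\|_{E_t}^p\bigr] \le b(t)\,|\sigma-\tau|^{1+\ve},\qquad \sigma,\tau\ge 0.
\]
This is precisely the moment hypothesis of the classical Kolmogorov-Chentsov theorem with $\tau$ playing the role of time.

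For each fixed $t$, I would invoke the Banach-space-valued Kolmogorov-Chentsov theorem as presented in \cite{Kunita1990,Me1982,Priouret1974} to obtain a modification $\wt Y^{(t)}$ of $Y^{(t)}$ whose $\tau$-trajectories are continuous on $[0,\infty)$ with values in $E_t$, together with the standard dyadic-chaining H\"{o}lder estimate: for every $\delta\in(0,\ve/p)$ and $T'>0$, there exists an adapted non-negative random variable $K_{\delta,T',t}$ satisfying
\[
\EE\bigl[K_{\delta,T',t}^p\bigr] \le c_{p,\delta,\ve}(1+T')^p\,b(t),\qquad \sup_{\sigma,\tau\in[0,T']}\frac{\|\wt Y^{(t)}_\sigma(\gamma)-\wt Y^{(t)}_\tau(\gamma)\|_{E_t}}{(1\wedge|\sigma-\tau|)^\delta}\le K_{\delta,T',t}(\gamma).
\]
Unpacking $E_t$-values, $\wt Y^{(t)}$ corresponds to a map $(s,\tau,\gamma)\mapsto X^{*,(t)}_s(\tau,\gamma)$ which is jointly continuous in $(s,\tau)\in[0,t]\times[0,\infty)$ for all $\gamma$ outside a $\PP$-null set.

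Next I would patch the modifications across $t$ by a standard consistency argument. For $t_1<t_2$, both $X^{*,(t_1)}$ and the restriction of $X^{*,(t_2)}$ to $s\in[0,t_1]$ are jointly continuous modifications of $(X_s(\tau))_{s\in[0,t_1],\,\tau\ge 0}$. By checking equality at countably many rational pairs $(s,\tau)$ and invoking joint continuity, they coincide on the complement of a common $\PP$-null set. Discarding a countable union of such null sets (one for each pair $(n,m)\in\NN^2$) and setting $X^*_t(\tau,\gamma):=X^{*,(n)}_t(\tau,\gamma)$ for any integer $n\ge t$ off the exceptional set, and $X^*_t(\tau,\gamma):=0$ on it, yields the required jointly $(t,\tau)$-continuous modification on all of $[0,\infty)^2$. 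The joint $\fB([0,\infty)^2)\otimes\fF$-measurability then follows automatically from pathwise continuity in $(t,\tau)$ combined with $\fF$-measurability on a countable dense grid. Assertions~(1) and~(2) of the lemma are immediate from this construction, while assertion~(3) is the H\"{o}lder estimate above evaluated at the joint modification.

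The argument contains no substantive analytic novelty beyond the classical Kolmogorov-Chentsov theorem; the crucial point is that the $\sup_{s\le t}$ on the left-hand side of the hypothesis \eqref{bd-Kol} is exactly what allows one to package the estimate as a moment bound in a path-space $E_t$, so that the standard one-parameter theorem applies directly in $\tau$. The main (and mild) obstacle is the bookkeeping: managing the countable family of exceptional null sets uniformly in the parameter $\tau$, ensuring joint $(t,\tau,\gamma)$-measurability survives each modification step, and verifying the consistency of the patched family across different values of $t$. All of these issues are handled in essentially the same way in the cited references.
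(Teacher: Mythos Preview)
Your proposal is correct and matches the paper's intent: the paper does not actually give a proof of this lemma but merely states that it ``can be proved by simple modifications of the arguments presented in \cite[\textsection1.4]{Kunita1990}, \cite[pp.~268/9]{Me1982}, and \cite[Lem.~13]{Priouret1974}.'' Your sketch---recasting the hypothesis as a moment bound for an $E_t=C([0,t],\sK)$-valued process in the parameter $\tau$, applying the one-parameter Kolmogorov--Chentsov theorem, and then patching across $t$ via a consistency argument on rationals---is precisely the kind of modification the paper has in mind, and in fact supplies more detail than the paper itself.
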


\begin{rem}\label{remKol} 
In the situation of Lem.~\ref{lemKolNev} assume in addition that
$$
\EE\big[\sup_{s\le\tau}\|X_s(\tau)\|^p\big]\le d_p(\tau),\quad\tau\ge0, 
$$
for some increasing function $d_p:[0,\infty)\to[0,\infty)$. On account of Statement~(1) of
Lem.~\ref{lemKolNev} we can replace $X$ by $X^*$ in the previous bound. Then the estimate
\begin{align*}
\sup_{s,\tau\le t}\|X_s^*(\tau)\|&\le\sup_{s\le t}\|X_s^*(t)\|+\sup_{s,\tau\le t}\|X_s^*(\tau)-X_s^*(t)\|
\\
&\le \sup_{s\le t}\|X_s^*(t)\|+\sup_{\tau\le t}K_{\delta,t,t}(1\wedge|t-\tau|)^\delta,
\end{align*}
together with \eqref{Kol1} implies
\begin{align}\label{Kol2}
\EE\big[\sup_{s,\tau\le t}\|X_s^*(\tau)\|^p\big]&\le 2^pd_p(t)+2^pc_{p,\delta,\ve}(1+t)^p
(1\wedge t)^{p\delta}b(t),\quad t\ge0. 
\end{align}
\end{rem}


\section{On exponential moments of sums of pair potentials}\label{apppair}

\noindent
For the sake of completeness we provide the proof of an elementary and doubtlessly well-known
estimate that we used in the proof of Lem.~\ref{lemfrkv} to bound the exponential moment of a
sum of pair potentials. We refer to \cite{Bley2016} for a different discussion of slightly more general
estimates that addresses some optimality aspects as well.

\begin{lem}\label{lempairs}
Suppose that $N\ge2$ and that $X_{(i,j)}$, $i,j\in\{1,\ldots,N\}$, $i<j$, 
are non-negative random variables
all having the same distribution such that $X_{(i,j)}$ and $X_{(k,\ell)}$ are independent whenever
$\{i,j\}\cap\{k,\ell\}=\emptyset$. Let $\cP_N:=\{(i,j):i,j\in\{1,\ldots,N\},i<j\}$ and put
$Y_N(s):=\sup_{p\in\cP_N}\EE[X_p^s]$, $s>0$. Then
\begin{align*}
\EE\Big[\prod_{p\in\cP_N}X_p\Big]&\le\left\{
\begin{array}{ll}
Y_N(N-1)^{N/2},&\text{if $N$ is even,}
\\
Y_N(N)^{(N-1)/2},&\text{if $N$ is odd.}
\end{array}
\right.
\end{align*}
\end{lem}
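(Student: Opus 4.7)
The plan is to combine two standard ingredients: a combinatorial 1-factorization of the complete graph $K_N$ (equivalently, a round-robin tournament scheduling of $\cP_N$), and H\"older's inequality with equal exponents.

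First I would recall the relevant combinatorial fact. If $N$ is even, the edge set of $K_N$ admits a partition into $N-1$ perfect matchings $M_1,\ldots,M_{N-1}\subset\cP_N$, each of cardinality $N/2$; if $N$ is odd, it admits a partition into $N$ near-perfect matchings $M_1,\ldots,M_N\subset\cP_N$, each of cardinality $(N-1)/2$. In both cases, within a single matching $M_k$ the index sets of distinct pairs are disjoint, hence the random variables $\{X_p:p\in M_k\}$ are independent by the hypothesis on $X$.

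Next I would factor $\prod_{p\in\cP_N}X_p=\prod_k\prod_{p\in M_k}X_p$ and apply H\"older's inequality with all exponents equal to the number of matchings. In the even case, writing $K:=N-1$,
\begin{align*}
\EE\Big[\prod_{k=1}^{K}\prod_{p\in M_k}X_p\Big]
&\le\prod_{k=1}^{K}\EE\Big[\Big(\prod_{p\in M_k}X_p\Big)^{K}\Big]^{1/K}
=\prod_{k=1}^{K}\Big(\prod_{p\in M_k}\EE[X_p^{K}]\Big)^{1/K},
\end{align*}
where the last identity uses independence within each matching. Since $|M_k|=N/2$ and $\EE[X_p^{N-1}]\le Y_N(N-1)$, each factor is bounded by $Y_N(N-1)^{N/(2K)}$, giving $Y_N(N-1)^{N/2}$. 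In the odd case the identical calculation with $K:=N$ and $|M_k|=(N-1)/2$ yields $Y_N(N)^{(N-1)/2}$.

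There is no real obstacle; the only nontrivial input is the existence of the 1-factorization, which can be constructed explicitly by the well-known round-robin schedule (fix vertex $N$ and rotate the remaining $N-1$ vertices cyclically for even $N$; let each vertex ``sit out'' once in turn for odd $N$). I would simply cite this or sketch the construction in one sentence.
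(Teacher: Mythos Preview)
Your proof is correct and takes a genuinely different route from the paper's. The paper argues by induction on $N$ in steps of two: it singles out vertex $1$, rewrites the full product as $\prod_{j=2}^N\bigl\{X_{(1,j)}\prod_{k,\ell\notin\{1,j\}}X_{(k,\ell)}^{1/(N-3)}\bigr\}$, applies H\"older with $N-1$ equal exponents, splits off $X_{(1,j)}$ by independence, and invokes the induction hypothesis on the remaining $N-2$ vertices (the odd case requires an additional twist combining the even result for $N-1$ vertices with the induction hypothesis). Your direct argument via the $1$-factorization of $K_N$ is cleaner and treats both parities uniformly, at the cost of citing a classical combinatorial fact, whereas the paper's induction is entirely self-contained; for $N=4$ the two arguments in fact coincide, since the paper's decomposition is exactly the three perfect matchings of $K_4$. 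One caveat shared by both proofs: the hypothesis literally asserts only \emph{pairwise} independence for disjoint index pairs, yet your factorization $\EE\bigl[\prod_{p\in M_k}X_p^{K}\bigr]=\prod_{p\in M_k}\EE[X_p^{K}]$ and the paper's splitting of $X_{(1,j)}$ from a product over many pairs both require \emph{mutual} independence within a matching. This holds in the intended application (each $X_{(i,j)}$ is a function of the independent Brownian motions $\V{b}_i,\V{b}_j$), so the hypothesis should be read in that stronger sense.
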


Note that, $Y_N(N-1)^{N/2}\le Y_N(N)^{(N-1)/2}$, $N\ge2$, by H\"{o}lder's inequality.

\begin{proof}
We prove the claim by induction separately for even and odd $N$.
The claim is trivial for $N=2$ and obvious for $N=3$ by H\"{o}lder's inequality. 
Suppose that $N>2$ is even and that the claim is true for $N-2$. By virtue of
\begin{align*}
\prod_{p\in\cP_N}X_p&=\prod_{j=2}^N\Big\{X_{(1,j)}
\mathop{\prod_{(k,\ell)\in \cP_N}}_{k,\ell\notin\{1,j\}}X_{(k,\ell)}^{1/(N-3)}\Big\},
\end{align*} 
as well as H\"{o}lder's inequality and independence, we obtain
\begin{align*}
\EE\Big[\prod_{p\in\cP_N}X_p\Big]\le\prod_{j=2}^N\EE[X_{(1,j)}^{N-1}]^{1/(N-1)}
\EE\Big[\mathop{\prod_{(k,\ell)\in \cP_N}}_{k,\ell\notin\{1,j\}}X_{(k,\ell)}^{\frac{N-1}{N-3}}
\Big]^{1/(N-1)}.
\end{align*}
Since the induction hypothesis can be applied to the latter expectations, this implies
\begin{align*}
\EE\Big[\prod_{p\in\cP_N}X_p\Big]\le Y_N(N-1)
Y_N\Big({\frac{N-1}{N-3}\cdot(N-3)}\Big)^{(N-2)/2}=Y_N({N-1})^{N/2}.
\end{align*}

If $N>3$ is odd and the claim is true for $N-2$, then we write
\begin{align*}
\prod_{p\in\cP_N}X_p&=\bigg[\prod_{j=2}^N\Big\{X_{(1,j)}
\mathop{\prod_{(k,\ell)\in \cP_N}}_{k,\ell\notin\{1,j\}}X_{(k,\ell)}^{1/(N-2)}\Big\}\bigg]\cdot
\mathop{\prod_{(k,\ell)\in \cP_N}}_{k\not=1}X_{(k,\ell)}^{1/(N-2)}.
\end{align*} 
Here the induction hypothesis applies to the $N$-th power of the multiple products inside the curly 
brackets $\{\cdots\}$ and the already proven case of an even number of random variables applies to
the $N$-th power of the last product. Employing H\"{o}lder's inequality for an $N$-fold product and 
independencies first and taking these remarks into account, we arrive at
\begin{align*}
\EE\Big[\prod_{p\in\cP_N}X_p\Big]\le Y_N(N)^{\frac{N-1}{N}}Y_N\Big(\frac{N(N-2)}{N-2}
\Big)^{\frac{N-1}{N}\cdot\frac{N-3}{2}}Y_N\Big(\frac{N(N-2)}{N-2}
\Big)^{\frac{1}{N}\cdot\frac{N-1}{2}}.
\end{align*}
Here the right hand side equals $Y_N({N})^{(N-1)/2}$.
\end{proof}


\section{The formula for the minimum of the spectrum}\label{appinfspec}

\noindent
In the following theorem and its proof we summarize some conclusions and arguments
we learned from  \cite{AbdesselamHasler2012,LHB2011,LorincziMinlosSpohn2002} in a natural, 
minor generalization.

\begin{thm}\label{thminfspecapp}
Let $(X,\fA,\zeta)$ be a $\sigma$-finite measure space and let $K$ be a self-adjoint operator in
$\sK:=L^2(X,\zeta)$ which is semi-bounded from below such that $e^{-tK}$ is positivity improving, for
all $t>0$. Then
\begin{align}\label{infspecapp}
\inf\sigma(K)=-\lim_{t\to\infty}\frac{1}{t}\ln\SPn{\psi}{e^{-tK}\psi}, \quad 0\not=\psi\in\sK,\,
\psi\ge0.
\end{align}
If we only assume that $e^{-tK}$ is positivity preserving, for all $t>0$, then the identity in 
\eqref{infspecapp} is still valid, for all $\psi\in\sK$ such that
$e^{-\tau K}\psi$ is strictly positive for some $\tau\ge0$.
\end{thm}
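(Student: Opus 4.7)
The plan is to reduce both assertions to a spectral-theorem computation. Writing $E_0:=\inf\sigma(K)$ and denoting by $\mu_\psi$ the finite positive spectral measure of $K$ at $\psi$, which is supported in $\sigma(K)\subset[E_0,\infty)$, the functional calculus yields $\langle\psi,e^{-tK}\psi\rangle=\int_{E_0}^\infty e^{-t\lambda}\,d\mu_\psi(\lambda)$. A standard Laplace asymptotic identifies $\lim_{t\to\infty}-t^{-1}\ln\int e^{-t\lambda}\,d\mu_\psi$ with $E_\psi:=\inf\operatorname{supp}\mu_\psi$: the elementary upper bound $\int e^{-t\lambda}\,d\mu_\psi\leq e^{-tE_\psi}\|\psi\|^2$ yields $\liminf_t[-t^{-1}\ln(\cdot)]\geq E_\psi$, while for every $\eta>0$ the lower bound $\int e^{-t\lambda}\,d\mu_\psi\geq e^{-t(E_\psi+\eta)}\mu_\psi([E_\psi,E_\psi+\eta])$, together with the strict positivity of $\mu_\psi([E_\psi,E_\psi+\eta])$, yields $\limsup_t[-t^{-1}\ln(\cdot)]\leq E_\psi+\eta$; one then sends $\eta\downarrow 0$. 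Since $\operatorname{supp}\mu_\psi\subset\sigma(K)$ one has $E_\psi\geq E_0$ automatically, so the entire substantive content of the theorem is the reverse inequality $E_\psi\leq E_0$.

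To prove $E_\psi=E_0$ under the positivity improving hypothesis, I would argue by contradiction: suppose $P_\epsilon\psi=0$ for some $\epsilon>0$, where $P_\epsilon:=1_{[E_0,E_0+\epsilon]}(K)$ is the spectral projection. Because $E_0\in\sigma(K)$, the projection $P_\epsilon$ is nonzero, and since it commutes with every $e^{-tK}$ its range is an invariant subspace orthogonal to all of $\{e^{-tK}\psi:t\geq 0\}$. On the other hand, positivity improving applied to $0\neq\psi\geq 0$ gives $e^{-tK}\psi>0$ almost everywhere for every $t>0$, so any element $\chi\in\operatorname{Ran}(P_\epsilon)$ with $\chi\geq 0$, $\chi\neq 0$, would pair strictly positively with $e^{-tK}\psi$, contradicting the orthogonality. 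The core obstacle is therefore the existence of such a sign-definite $\chi$: a generic real $\chi\in\operatorname{Ran}(P_\epsilon)$ decomposes as $\chi_+-\chi_-$ with both parts possibly nonzero, and the orthogonality $\langle\chi,e^{-tK}\psi\rangle=0$ alone only forces the symmetric relation $\langle\chi_+,e^{-tK}\psi\rangle=\langle\chi_-,e^{-tK}\psi\rangle$. I would close this gap by invoking the Perron-Frobenius principle for positivity improving self-adjoint semigroups (Faris 1972, Cor.~1.2, already cited in the paper), which guarantees that any nontrivial invariant subspace meets the positive cone nontrivially and thereby supplies the required nonnegative nonzero $\chi\in\operatorname{Ran}(P_\epsilon)$.

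For the weaker statement with $e^{-tK}$ only positivity preserving but $\tilde\psi:=e^{-\tau K}\psi>0$ almost everywhere, the plan is to replace $\psi$ by $\tilde\psi$ throughout: $d\mu_{\tilde\psi}=e^{-2\tau\lambda}\,d\mu_\psi$ has the same support as $\mu_\psi$, and the identity $\langle\tilde\psi,e^{-tK}\tilde\psi\rangle=\langle\psi,e^{-(t+2\tau)K}\psi\rangle$ shows that the two large-$t$ limits coincide. The contradiction argument now goes through at $t=0$: the orthogonality $\langle\chi,\tilde\psi\rangle=0$ combined with $\tilde\psi>0$ a.e. forces $\int\chi_+\tilde\psi\,d\zeta=\int\chi_-\tilde\psi\,d\zeta$, so a sign-definite nonzero $\chi\in\operatorname{Ran}(P_\epsilon)$ produces the contradiction without any further appeal to positivity improving of $e^{-tK}$ itself; strict positivity of the single vector $\tilde\psi$ is enough.
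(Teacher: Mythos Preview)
Your Laplace-asymptotic reduction is correct, but the proposed resolution of the core step $E_\psi\le E_0$ has a genuine gap. Faris's Cor.~1.2 does \emph{not} assert that every nontrivial invariant subspace of a positivity-improving semigroup meets the positive cone; it is the standard Perron--Frobenius statement that \emph{if} $\inf\sigma(K)$ is an eigenvalue then it is simple with strictly positive eigenvector (this is exactly how the paper invokes it in Thms.~\ref{thmOscar1} and~\ref{thmOscar2}). The general claim about invariant subspaces is in fact false: for the harmonic oscillator on $L^2(\RR)$, whose Mehler-kernel semigroup is positivity improving, the one-dimensional eigenspace at the first excited level is spanned by $x e^{-x^2/2}$, which changes sign. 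Whether the \emph{bottom} spectral projection $\Ran(P_\epsilon)$ always contains a nonnegative element when $E_0$ is not an eigenvalue is a separate question that Faris does not address. And in the positivity-preserving case you offer no source at all for the sign-definite $\chi$, so the second assertion remains entirely open in your argument.

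The paper's proof avoids this obstacle. Rather than seek a nonnegative element inside $\Ran(P_\epsilon)$, it uses the strict positivity of $\psi_\tau := e^{-\tau K}\psi$ to show that the set $\sX$ of bounded $\phi$ with $0\le\phi\le\psi_\tau$ has dense linear span (Step~1), so that $E_0 = \inf_{\phi\in\sX} E_\phi$ (Step~2). The crucial point is that positivity \emph{preservation} alone gives the domination $\langle\phi,e^{-tK}\phi\rangle \le \langle\psi_\tau,e^{-tK}\psi_\tau\rangle$ for each $\phi\in\sX$ (since $0\le\phi\le\psi_\tau$ implies $0\le e^{-tK}\phi\le e^{-tK}\psi_\tau$), hence $E_{\psi}=E_{\psi_\tau}\le E_\phi$ for all $\phi\in\sX$, forcing $E_\psi\le E_0$. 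This handles both assertions uniformly.
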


\begin{proof}
{\em Step~1.} First, we verify the following claim: Let $p\in[1,\infty)$ and $\psi:X\to\RR$ be 
measurable with $\psi>0$ on $X$. 
Pick any sequence of measurable sets $A_n\uparrow X$ with $\zeta(A_n)<\infty$, $n\in\NN$.
Then the linear span of all bounded measurable $\phi:X\to\RR$ with $0\le\phi\le\psi$ 
and $\phi=1_{A_n}\phi$, for some $n\in\NN$, is dense in $L^p(\cM,\zeta)$.

In fact, the measure space $(X,\fA,\zeta_\psi)$ with $\zeta_\psi:=\psi^p\zeta$ is again $\sigma$-finite
and $\fH:=\{B\cap A_n\cap\{\psi<m\}:B\in\fA,\,m,n\in\NN\}$ is a semi-ring generating $\fA$ and 
consisting only of sets with finite $\zeta_\psi$-measure. Therefore, $\{1_B:B\in\fH\}$ is total in 
$L^p(X,\zeta_\psi)$; see, e.g., \cite[Satz~VI.2.28b)]{Elstrodt1996}. 
Since multiplication by $\psi$ is an isometric isomorphism from
$L^p(X,\zeta_\psi)$ to $L^p(X,\zeta)$, the set $\{\psi1_B:B\in\fH\}$ is total in $L^p(X,\zeta)$.


\smallskip

\noindent
{\em Step~2.} Let $\sX\subset\sK$ such that $\mathrm{span}_{\CC}\sX$ is dense in $\sK$.
Let $E_{\phi}$ be the spectral measure corresponding to $K$ and $\phi\in\sK$. 
We claim that $\inf\sigma(K)=m$, where $m:=\inf\{\inf\supp(E_\phi):\phi\in\sX\}$.

We only need to show the corresponding inequality $\ge$, as the converse inequality is immediate
from the spectral calculus. Define $K':=C\wedge K\in\LO(\sK)$ by the spectral calculus, 
where $C>m$ is some constant. Then $\inf\sigma(K)=\inf\sigma(K')=:\sigma$. 
Now suppose for contradiction that $m>\sigma$ and let $\ve\in(0,m-\sigma)$. 
Then we find some normalized $\psi\in\sK$ such that $\SPn{\psi}{K'\psi}<\sigma+\ve/2$, and
by assumption and boundedness of $K'$ we may pick some normalized 
$\phi\in\mathrm{span}_{\CC}\sX$ with $\SPn{\phi}{K'\phi}<\sigma+\ve$. We have
$\phi=\sum_{i=1}^m\alpha_i\phi_i$, for some $\alpha_i\in\CC$, $\phi_i\in\sX$, and $n\in\NN$.
Pick some $\tau\in(\sigma+\ve,m)$. Then
$E_\tau\phi_i=\phi_i$, $i\in\{1,\ldots,n\}$, where $\{E_s\}_{s\in\RR}$ is the spectral family associated
with $K$. Hence, $\phi=E_\tau\phi$, and we obtain the contradiction
$\sigma+\ve>\SPn{\phi}{K'\phi}=\SPn{\phi}{E_\tau K'E_\tau\phi}\ge\tau$.

\smallskip

\noindent
{\em Step~3.} Let $\psi\in\sK$ be non-negative and $\tau\ge0$.
In what follows we assume that $\psi_\tau:=e^{-\tau K}\psi$ is strictly positive, thus
$\psi\not=0$, and we shall only use that $e^{-tK}$ is positivity preserving, for all $t>0$.

Let $\sX$ be the set of all $\phi\in\sK$ such that $0\le\phi\le\psi_\tau$.
Since the linear span of $\cX$ is dense by Step~1, 
$\inf\sigma(K)=\inf\{\inf\supp(E_\phi):\phi\in\sX\}$ by Step~2. 
Using that the semi-group of $K$ preserves positivity, we further observe that
\begin{align*}
-\frac{1}{t}\ln\SPn{\psi}{e^{-(t+2\tau)K}\psi}
=-\frac{1}{t}\ln\SPn{\psi_\tau}{e^{-tK}\psi_\tau}
&\le-\frac{1}{t}\ln\SPn{\phi}{e^{-tK}\phi},\quad\phi\in\sX,\,t>0.
\end{align*}
A well-known relation valid for any Borel measure on $\RR$ implies the first identity in
\begin{align*}
\inf\supp(E_\phi)=-\lim_{t\to\infty}\frac{1}{t}\ln\bigg(\int_{\RR}e^{-ts}\Id E_\phi(s)\bigg)=
-\lim_{t\to\infty}\frac{1}{t}\ln\SPn{\phi}{e^{-tK}\phi},\quad\phi\in\sK.
\end{align*}
We conclude that $\inf\sigma(K)\le\inf\supp(E_\psi)\le\inf\supp(E_\phi)$, $\phi\in\sX$.
\end{proof}


\bigskip
\noindent{{\bf Acknowledgement.}} We thank Gonzalo Bley, Marcel Griesemer, and 
Fumio Hiroshima for interesting and helpful discussions. Furthermore, we thank Marcel Griesemer 
and Stuttgart University, Masao Hirokawa and Hiroshima University, and 
Fumio Hiroshima and Kyushu University for their great hospitality. 
Finally, we thank the VILLUM foundation for support via the project grant 
``Spectral Analysis of Large Particle Systems'', and the Danish Agency for Science, Technology and 
Innovation for their support via the DFF Research Project grant ``The Mathematics of Dressed 
Particles'' and the International Network Programme grant ``Exciting Polarons''.


\end{document}